\title{\bf Cluster-robust inference with a single treated cluster using the \textbf{\textit{t}}-test\thanks{\mbox{We thank helpful comments from Stéphane Bonhomme, Ivan Canay, Azeem Shaikh, and Alex Torgovitsky.}}}
\author{Chun Pong Lau\thanks{\mbox{Kenneth C. Griffin Department of Economics, 
The University of Chicago, 
\href{mailto:ccplau@uchicago.edu}{\texttt{ccplau@uchicago.edu}}.}}
\and
Xinran Li\thanks{Department of Statistics, The University of Chicago, \href{mailto:xinranli@uchicago.edu}{\texttt{xinranli@uchicago.edu}}.}}
\newcommand{\Dt}{\overset{d}{\longrightarrow}}  
\newcommand{\cv}{\mathrm{cv}_{m, \alpha, k, \rho}}
\newcommand{\cSS}{\mathcal{S}_m(k, \rho)}
\newcommand{\tm}{\theta_{m+1}}
\xpatchcmd{\proof}{\itshape}{\normalfont\proofnamefont}{}{}
\newcommand{\proofnamefont}{\bfseries\itshape}
\def\bs{\bm}
\def\deri{\text{d}}
\def\I{\bm{1}}
\def\converge{\stackrel{}{\rightarrow}}
\def\convergeas{\stackrel{\text{a.s.}}{\longrightarrow}}
\def\converged{\stackrel{d}{\longrightarrow}}
\def\nob{\psi}
\def\cp{\theta}
\newcommand{\ohh}{\overline H_m(c, \rho)}
\newcommand{\hd}{\widehat{\delta}}
\def\Var{\text{Var}}
\def\bs{\bm}
\def\deri{\text{d}}
\def\I{\bm{1}}
\def\converge{\stackrel{}{\rightarrow}}
\def\convergeas{\stackrel{\text{a.s.}}{\longrightarrow}}
\def\converged{\stackrel{d}{\longrightarrow}}
\def\nob{\psi}
\def\cp{\theta}
\def\hd{\widehat{\theta}}
\theoremstyle{definition}
\newtheorem*{theorem*}{Theorem}
\newtheorem{theorem}{Theorem}
\newtheorem*{rmk*}{Remark}
\begin{document}
\fontfamily{ppl}

\maketitle

\begin{abstract}
This paper considers inference when there is a single treated cluster and a fixed number of control clusters, a setting that is common in empirical work, especially in difference-in-differences designs. 
We use the $t$-statistic and develop suitable critical values to conduct valid inference under weak assumptions allowing for unknown dependence within clusters. 
In particular, our inference procedure does not involve variance estimation. It only requires specifying the relative heterogeneity between the variances from the treated cluster and some, but not necessarily all, control clusters. 
Our proposed test works for any significance level when there are at least two control clusters.
When the variance of the treated cluster is bounded by those of all control clusters up to some prespecified scaling factor, the critical values for our $t$-statistic can be easily computed without any optimization for many conventional significance levels and numbers of clusters. 
In other cases, one-dimensional numerical optimization is needed and is often computationally efficient. 
We have also tabulated common critical values in the paper 
so researchers can use our test readily. 
We illustrate our method in simulations and empirical applications.
\par
\vspace{10pt}
\noindent \textbf{Keywords:} Cluster-robust inference, single treated cluster, $t$-test, simultaneous inference, difference-in-differences. \par
\end{abstract}

\newpage

\onehalfspacing

\section{Introduction}

In difference-in-differences designs, it is common for researchers to conduct inference using cluster-robust methods to account for correlation within clusters. However, inference becomes challenging when there is only a single treated cluster. Intuitively, this is because researchers are faced with only one estimate from the treated cluster, making its uncertainty difficult to quantify. Existing methods either impose strong assumptions on having a large number of clusters, require the variances to be homogeneous or estimable, or only work for specific significance levels depending on the number of clusters. In this paper, we develop a $t$-test associated with a suitable critical value to conduct valid inference that relaxes these assumptions.

Having a single treated cluster and a finite number of control clusters is common in empirical work. For instance, this can happen when researchers use data in the United States to examine the impact of a policy that takes place in one particular state but not the other states. The nearby states or the remaining states are commonly used as the control group. In these scenarios, it is reasonable for researchers to believe they are not in the scenario with a ``large'' number of control clusters.
Some recent examples in this context include 
\citet{wangburke2022jfe} on the effect of payday loan regulations in Texas, \citet{harrislarsen2023jhr} on the effect of Hurricane Katrina on student outcomes in New Orleans, 
\citet{dillenderetal2023jpube} on the effect of change in health care reimbursement rates in Illinois,
\citet{alpertetal2024aejep} on the impact of Kentucky's prescription drug monitoring programs on opioid prescribing, and
\citet{kumarliang2024aejep} on the labor market effects of constitutional amendments in Texas.
\citet{hagemann2024wp} has documented earlier related examples in the context of a single treated cluster. In the next section of this paper, we demonstrate that our test is also applicable to other empirical designs, in addition to difference-in-differences.

This paper contributes to the literature on cluster-robust inference. We assume that the number of clusters is fixed, as in some related work in this literature such as \citet{besteretal2011joe}, \citet{ibraimovmuller2010jbes, ibraimovmuller2016restat}, \citet{canayetal2017ecta}, \citet{hagemann2022wp}, and \citet{lau2025wp}. Although the tests in the aforementioned papers are valid when there is a finite number of clusters, they are not suitable for the problem with a single treated cluster. The first test requires certain homogeneity conditions, and the other tests cannot be applied when there is a single treated cluster.  
\citet{canaysantosshaikh2021restat} has showed that Wild cluster bootstrap popularized by \citet{cameronetal2008restat} can be valid under strong homogeneity conditions when there is a fixed number of clusters. See, for instance, \citet{cameronmiller2015jhr}, \citet{conleyetal2018jar},  \citet{mackinnonetal2023joe}, and \citet{alvarezetal2025wp} for some surveys on the literature of conducting inference with a fixed number of clusters.

Several tests have been developed to conduct inference when there is a single treated cluster, but with assumptions that can be strong in practice. \citet{conleytaber2011restat} assume homogeneous errors. \citet{fermanpinto2019restat} relax the homogeneity assumption and allow for known heteroskedasticity. \citet{alvarezferman2023wp} relax the homogeneity assumption and allows for spatial correlation. All these tests assume that there is an infinite number of control clusters. Our $t$-test neither assumes the variances are known nor assumes an infinite number of control clusters. Recently, \citet{hagemann2024wp} develops a novel rearrangement test that relaxes the assumptions mentioned earlier. \citet{hagemann2024wp} is the most related paper in that his test relies on a relative heterogeneity condition between the treated and control clusters. Both our test and \citet{hagemann2024wp} work with a fixed number of clusters, allow for arbitrary correlation within clusters, and do not assume that we can estimate or know the cluster variances. He shows his test is valid when the relative heterogeneity condition holds with all or all but one control cluster. However, the validity of his test depends on the number of clusters, the significance level, and the relative heterogeneity condition. 
For example, 
when the standard deviation of the treated cluster is bounded by $2$ times all but one of the standard deviations of the control clusters, 
\citet{hagemann2024wp} requires at least 14 control clusters to conduct a one-sided test at the 2.5\% level using his rearrangement test, and at least 17 control clusters are needed for a 1\% level test\footnote{Equivalently, to conduct two-sided tests, there have to be at least 14 control clusters for a 5\% level test.}.
This potentially limits the applicability of his test. Among the cases where his test is valid, he computes weights for his rearrangement test using numerical optimization that lead to valid tests.

There are several important distinctions between our work and  \citet{hagemann2024wp}. First, we show that our test can be valid for
any choice of significance levels and heterogeneity parameters when there are at least two control clusters. For many conventional combinations of significance levels and number of clusters, we derive a closed-form critical value for the $t$-test. No optimization is necessary for these cases, so researchers can apply the test readily. For other cases, we can derive critical values that lead to valid tests through one-dimensional optimizations. Therefore, our test can have broader applicability. Second, we relax the relative heterogeneity condition in \citet{hagemann2024wp}. Intuitively, the condition restricts the relative variance between the treated and control clusters. \citet{hagemann2024wp} requires such condition to hold between the treated and all (or all but one) control clusters to show the validity of his test. We weaken this condition by allowing the researcher to impose this restriction between the treated cluster and any number of control clusters.
This allows researchers to bound relative heterogeneity on variances depending on their choice on how to restrict the amount of heterogeneity between the treated and control clusters. Third, we allow for simultaneous inference across all the relative heterogeneity assumptions that we impose.
Specifically, under the assumption of no treatment effect, we can infer the minimum amount of relative heterogeneity required to explain away the observed association between treatment and outcomes.  If the true relative heterogeneity is unlikely to exceed this amount, the treatment is likely to have a significant nonzero effect.

While the $t$-test has been used in \citet{Bakirov:2006aa} and \citet{ibraimovmuller2010jbes, ibraimovmuller2016restat}, our proof on the validity of the $t$-test in the current context requires different proof strategies.
This is because we only have a single treated cluster and the relative heterogeneity assumption imposes additional structure on the parameter space.

We confirm the performance of our proposed test in simulations. We find that our test controls size under a wide range of significance levels and  number of clusters. We also have favorable power performance when compared to other tests in the literature.

The rest of the paper is organized as follows. Section \ref{sec:examples} outlines some popular empirical designs that our test can be applied to. Section \ref{sec:procedure-main} presents our inference procedure. Readers interested in applying the test can refer to Algorithm \ref{algo:t-test}. Section \ref{sec:theory} presents our main theory. Section \ref{sec:simu} explains how our test can be easily used for simultaneous inference. Section \ref{sec:sims} presents the simulation results. Section \ref{sec:emp} contains two empirical studies. Section \ref{sec:conclusion} concludes. All proofs can be found in the supplementary material.

\section{Motivating examples} \label{sec:examples}

We start by describing several empirically relevant designs that are common in applied work and are related to the issue of having a single treated cluster. These examples show that our method applies more broadly apart from standard difference-in-differences designs. The first three examples have also been discussed in \citet{hagemann2024wp}. \par 

In the following, $i \in \cI \equiv \{1, \ldots, n\}$ indices units, $j \in \cJ \equiv \{1, \ldots, m + 1\}$ indices clusters, and $t \in \cT \equiv \{ 1, \ldots, T\}$ indices time. Only cluster $(m+1)$ is treated, and the remaining clusters are controls. Thus, $\cJ$ can be partitioned as $\cJ_0 \cup \cJ_1$, where $\cJ_0 \equiv \{1, \ldots, m\}$ and $\cJ_1 \equiv \{m + 1\}$ denote the control clusters and treated cluster respectively. Let $D_j \equiv \ind[j \in \cJ_1]$ be a cluster-level treatment indicator across $j \in \cJ$. \par

As to be described in Section \ref{sec:procedure-main}, we require researchers to be able to run regression cluster by cluster, and write the estimator for the target parameter $\Delta$ in terms of the estimator for parameters from cluster-level regressions, denoted by $\{\widehat\cp_j\}^{m+1}_{j=1}$. The following examples explain how to connect $\{\widehat\cp_j\}^{m+1}_{j=1}$ with the estimator for the target parameters in common empirical designs.

\begin{eg}[Clustered regression] Consider the following model
\[
	Y_{ij} = \beta_0 + \Delta D_j + U_{ij},
\]
where $\beta_0, \Delta, U_{ij} \in \bR$. Under the assumption that $\bE[U_{ij}| D_j] = 0$, $\Delta$ can be estimated by
\[
	\widehat\Delta
	= \hd_{m+1}
	- \frac{1}{m} \sum^m_{j=1} \hd_j,
\]
where $\hd_j$ is the sample mean of $Y_{ij}$ for each $j \in \cJ$.
\end{eg}

\begin{eg}[Difference-in-differences]\label{eg:did}
Suppose $T = 2$ and let $\text{Post}_t \equiv \ind[t = 2]$ for all $t \in \cT$. Consider the following model
\begin{equation}
	\label{eq:did1}
	Y_{jt}
	= 
        \alpha_j
	+
	\beta \text{Post}_{t}
	+
	\Delta D_j \text{Post}_{t}
	+ 
	U_{jt},
\end{equation}
where $U_{jt} \in \bR$, and $\{\alpha_{j}\}_{j=1}^{m+1}$ are the cluster fixed effects. Under the assumption that $\bE[U_{jt} | D_j, \text{Post}_t] = 0$, $\Delta$ can be estimated by
\[
	\widehat\Delta
	= 
	\hd_{m+1}
	-
	\frac{1}{m} \sum_{j=1}^m \hd_j,
\]
where $\{\hd_j\}^{m+1}_{j=1}$ are the differences in the outcomes before and after treatment. They can be obtained as the estimates of $\{\cp_j\}^{m+1}_{j=1}$ from the following cluster-level regressions
\begin{align}
	\label{eq:did2}
	Y_{jt} 
	=
	\alpha_j
	+
	\cp_j \text{Post}_t
	+
	\epsilon_{jt}
\end{align}
for each $j \in \cJ$.
\end{eg}

\begin{eg}[Two-way fixed effects] \label{eg:twfe}
Consider the following two-way fixed effects model:
\begin{equation}
	\label{eq:twfe1}
	Y_{jt}
	= 
	\alpha_j
	+ \beta_t
	+ \Delta D_j \text{Post}_t
	+ U_{jt},
\end{equation}
where $T \geq 2$, $\{\alpha_{j}\}_{j=1}^{m+1}$ are the cluster fixed effects, $\{\beta_t\}_{t=1}^T$ are the time fixed effects, and $\text{Post}_t \equiv \ind[t \geq t_0]$ for some $1 < t_0 \leq T$. Under the assumption that $\bE[U_{jt} | D_j, \text{Post}_t]=0$, $\Delta$ can be estimated by
\[
	\widehat\Delta
	=
	\hd_{m+1} - \frac{1}{m} \sum^m_{j=1} \hd_j,
\]
using the same cluster-level regression as in \eqref{eq:did2}.
\end{eg}

For more discussion about recent advances in difference-in-differences and two-way fixed effects, see, for instance, the surveys by \citet{dechaisemartindhaultfœuille2023ej},  \citet{rothetal2023joe}, and \citet{bakeretal2025wp}.

\begin{eg}[Triple differences]
Let $C_{ij}, D_{ij} \in \{0, 1\}$ be binary indicators that depend on the individual $i \in \cI$ and cluster $j \in \cJ$. Set $D_{ij} = \ind[j = {m+1}]$ and define $\text{Post}_t\equiv \ind[t = 2]$ as in Example \ref{eg:did} with two periods. Assume for each $j \in \cJ$, there exist units with $C_{ij} = 1$ and units with $C_{ij} = 0$. Consider the following triple differences/difference-in-difference-in-differences model:
\begin{align}
	\label{eq:didid1}
	\begin{split}
	Y_{ijt}
	& =
	\beta_0 + \beta_1 C_{ij} + \beta_2 D_{ij} + \beta_3 \text{Post}_t
	+ 
	\beta_4 C_{ij} D_{ij} 	\\
	&\quad \quad + \beta_5 C_{ij} \text{Post}_t + \beta_6 D_{ij} \text{Post}_t
	+ \Delta C_{ij} D_{ij} \text{Post}_t + U_{ijt}.
	\end{split}
\end{align}
Under the assumption that $\bE[U_{ijt} | C_{ij}, D_{ij}, \text{Post}_t] = 0$, $\Delta$ can be estimated by
\[
	\widehat\Delta
	= 
	\hd_{m+1}
	-
	\frac{1}{m} \sum^m_{j=1} \hd_j,
\]
where $\{\hd_j\}^{m+1}_{j=1}$ are the estimates of $\{\cp_j\}^{m+1}_{j=1}$ from the following cluster-level regressions
\[
	Y_{ijt}
	= 
	\alpha_j
	+
	\gamma_{C,j}
	C_{ij}
	+
	\gamma_{\text{Post},j}
	\text{Post}_t
	+
	\cp_j
        C_{ij} 
	\text{Post}_t
	+
	\epsilon_{ijt},
\]
for each $j \in \cJ$. See \citet{oldenmoen2023ej} for a recent survey on triple difference estimators.
\end{eg}

\section{Inference procedure} \label{sec:procedure-main}

In this section, we present the main assumptions and our algorithm of conducting inference using the $t$-test with a single treated cluster. Readers who are interested in applying our test can directly apply Algorithm \ref{algo:t-test}.  We also present some brief intuition for the validity of our test in Section \ref{sec:3.3} to facilitate the theoretical discussion in Section \ref{sec:theory}.

\subsection{Assumptions} \label{sec:procedure-main-assu}

We follow the notation used in Section \ref{sec:examples}. Let $\{\hd_j\}^{m+1}_{j=1}$ be the cluster-level estimators, $\cJ_0 = \{1, \ldots, m\}$ index the control clusters and $\cJ_1 =\{m + 1\}$ index the treated cluster. For simplicity, we make the dependence of $\{\hd_j\}^{m+1}_{j=1}$ on the cluster size implicit.

Our procedure requires two main assumptions. First, we assume that an appropriate central limit theorem applies to $\{\hd_j\}^{m+1}_{j=1}$ as the sample size within each cluster, denoted by $n$, goes to infinity. 
A similar assumption is also imposed in related papers on a fixed number of clusters, such as \citet{ibraimovmuller2010jbes, ibraimovmuller2016restat}, \citet{canayetal2017ecta}, \citet{hagemann2022wp, hagemann2024wp} and \citet{lau2025wp}.
Intuitively, this holds when each cluster has a large number of units or consists of a panel with a long time periods. This high-level assumption is formalized as follows.

\begin{assu}
\label{assu:CLT}
The following holds as the sample size within each cluster $n \longrightarrow \infty$:
\begin{align}\label{eq:CLT}
	\sqrt{n} 
         \begin{pmatrix}
	\hd_1 - \mu_0 \\ 
	\vdots \\ 
	\hd_m - \mu_0 \\  
	\hd_{m+1} - \mu_1
	\end{pmatrix}
	\Dt 
	\cN(0, \Sigma),
\end{align}
where $\Sigma \equiv \text{diag}(\sigma_1^2, \ldots, \sigma_m^2, \sigma_{m+1}^2)$ is an diagonal matrix.
\end{assu}

Below we give a few remarks regarding Assumption \ref{assu:CLT}. 
First,
it is not necessary for all clusters to have the same size.  
In cases where clusters have varying sizes, we can define $n$ as the size of the smallest cluster, and our inference essentially requires the sample sizes in all clusters to be large. 
Second, we assume that the cluster-level estimators are independent, at least asymptotically, so that the asymptotic covariance matrix $\Sigma$ in \eqref{eq:CLT} is diagonal. 
This is trivially satisfied if the units are independent across clusters. We emphasize that, while we require independence across clusters, we allow for unknown dependence structure within clusters.
Third, we assume that the estimators from the control clusters are consistent for a common parameter $\mu_0$, which can often be justified by model assumptions or study designs as illustrated in Section \ref{sec:examples}. 

Second, we impose the following relative heterogeneity assumption, which generalizes the relative heterogeneity assumption in \citet{hagemann2024wp}. 
Let $\sigma_{(1)} \leq \sigma_{(2)} \leq \cdots \leq \sigma_{(m)}$ be the ordered values of $\{\sigma_j\}^m_{j=1}$ in Assumption \ref{assu:CLT}.

\begin{assu}
\label{assu:relative_heter}
For a given $\rho \ge 0$ and a given $k \in \{1, \ldots, m\}$, $\sigma_{m+1}	\leq \rho \sigma_{(k)}$. 
\end{assu}

The above assumption does not require the variances to be known.
It only restricts the relative heterogeneity of the standard deviations between the treated and control clusters. 
In particular, for any $\rho \geq 0$ and $k \in \{1, \ldots, m\}$, it requires the standard deviation of the treated cluster to be less than or equal to $\rho$ times the standard deviations of at least $(m - k + 1)$ control clusters. For example, $k = 1$ requires that the standard deviation of the treated cluster is smaller than or equal to $\rho$ times the standard deviations of all control clusters; when $k=m$, this means that the standard deviation of the treated cluster is smaller than or equal to $\rho$ times the largest standard deviation from the control clusters.

Assumption \ref{assu:relative_heter} is a relaxed version of \citet{hagemann2024wp}'s maximum relative heterogeneity assumption. His assumption is equivalent to Assumption \ref{assu:relative_heter} with $k = 1$ or $k = 2$. 
Our test allows a general choice of $k \in \{1, \ldots, m\}$. More importantly, as demonstrated in Section \ref{sec:simu}, the inference can be simultaneously valid for all $k \in \{1, \ldots, m\}$. 
In other words, with additional choices of $k$, our test can provide more evidences against the null hypothesis than \citet{hagemann2024wp}'s test, which focuses on $k=1$ or $2$.  

Before introducing our algorithm, we end this subsection with some discussion on choosing $(\rho, k)$ in Assumption \ref{assu:relative_heter}. If the researcher believes the standard deviation of the treated cluster cannot be larger than the standard deviations of the control clusters, then the researcher can set $k = 1$ and $\rho = 1$.

On the other hand, if the researcher does not want to commit to a particular value of $(\rho, k)$, the researcher can perform simultaneous inference and report the largest value of $\rho$ such that the null can be rejected for each $k$. 
This approach can also be interpreted as finding the largest $\rho$ such that the conclusion changes, which is related to the idea of finding breakdown points/frontiers in econometrics \citep{horowitzmanski1995ecta, klinesantos2013qe, mastenpoirier2020qe}.
We demonstrate this through two applications in Section \ref{sec:emp}.

\subsection{Algorithm}
The goal is to test the following hypothesis:
\begin{equation}
    \label{eq:algor-hypo-test}
    H_0\text{:}~\mu_1 = \mu_0
    \qquad \text{vs} \qquad
    H_1\text{:}~\mu_1 \neq \mu_0.
\end{equation}

Our procedure of conducting inference with a single treated cluster is as follows.
\begin{algo} \label{algo:t-test} \text{ } \par
\noindent \underline{Inputs:} 
\begin{itemize}
    \item Significance level $\alpha \in (0, \frac{1}{2})$.
    \item The parameters on relative heterogeneity from Assumption \ref{assu:relative_heter}, i.e., $(\rho, k)$.
    \item The cluster-level estimators $\{\hd_j\}^{m+1}_{j=1}$.
\end{itemize}
\noindent \underline{Steps:}
\begin{description} 
    \item[Step 1 (Test statistic):] Compute the $t$-statistic:
    \begin{align}\label{eq:tstat_cl_est}
	\widehat{T}_m 
        \equiv 
        \frac{\widehat\cp_{m+1} - \overline{\widehat{\cp}}_m}{\widehat S_m},
    \end{align}
    where $\overline{\widehat{\cp}}_m \equiv \frac{1}{m} \sum^m_{j=1} \hd_j$ and $\widehat S_m^2  \equiv \frac{1}{m-1} \sum^m_{j=1} (\hd_j - \overline{\widehat{\cp}}_m)^2$ represent the sample average and sample variance of the estimators from the control clusters, respectively.
    \item[Step 2 (Critical value):] We have tabulated many empirically-relevant critical values in Table \ref{tab:cv}. These critical values can be immediately used. More generally, the critical value $\cv$ can be computed as follows.
    \begin{itemize}[leftmargin=*]
        \item \textbf{Case 1:} If $k = 1$ and $\alpha$ is below the thresholds in Table \ref{tab:largest_alpha}, use the closed-form expression in \eqref{eq:max_sig_level_k1}. Compute the closed-form critical value as 
        \[
            \cv
            \equiv 
            \sqrt{\rho^2 + \frac{1}{m}} \ t_{m-1, \frac{\alpha}{2}},
        \]
         where $t_{ m-1, \frac{\alpha}{2}}$ is the $(1-\frac{\alpha}{2})$-th quantile of the $t$-distribution with $(m-1$) degrees of freedom.
        \item \textbf{Case 2:} Search $\cv$ such that $p_m(\cv; k, \rho)$ defined in Theorem \ref{thm:max_rej_prob} ahead is at most $\alpha$. This requires solving one-dimensional optimization problems.  \par
    \end{itemize}
    \item[Step 3 (Decision):] Reject \eqref{eq:algor-hypo-test} if $|\widehat{T}_m| > \cv$. \hfill $\blacksquare$
\end{description}
    
\end{algo}

\subsection{Computing \textit{p}-value and confidence interval} \label{sec:ci-p}

To compute the $p$-value, the researcher does not need to run the entire Algorithm \ref{algo:t-test}. In particular, the researcher only needs to compute the test statistic $\widehat T_m$ in Step 1 of Algorithm \ref{algo:t-test} and substitute it into the $p_m$ function in Step 2 of Algorithm \ref{algo:t-test} and return the $p$-value as $p_m(\widehat T_m; k, \rho)$. \par 

The researcher can take the critical value $\cv$ from Step 2 of Algorithm \ref{algo:t-test} and return 
$(\widehat\theta_{m+1} - \overline{\widehat{\cp}}_m) \pm \cv \widehat S_m$
as an $(1 - \alpha)$ confidence interval for $\mu_1 - \mu_0$.

\subsection{Discussion} \label{sec:3.3}

Algorithm \ref{algo:t-test} is a computationally simple three-step procedure. 
The first step is to compute the usual $t$-statistic using the cluster-level estimators. \par 

The second step finds the critical value that depends on $(m, \alpha, k, \rho)$. This step aims to find the critical value $\cv$ such that the procedure controls size for any configurations of $\{\sigma_j\}^{m+1}_{j=1}$ that satisfy Assumption \ref{assu:relative_heter}. 
 In the first case, a closed-form critical value is available when $\alpha$ is below a certain threshold in Table \ref{tab:largest_alpha}. If $\alpha$ does not satisfy the condition, we can compute the critical value $\cv$ in the second step through one-dimensional optimization. \par 

The reason for having a closed-form expression for the critical value in case 1 of Step 2 of Algorithm \ref{algo:t-test} is as follows.  
For $k=1$ and any given $(m, \rho)$, 
for a sufficiently small significance level $\alpha$ (including many conventional choices),
we find that the maximum rejection probability over all configurations of $\{\sigma_j\}^{m+1}_{j=1}$ is achieved when $\sigma_{m+1}=\rho \sigma_j$ for all $j = 1, \ldots, m$. The proof is nontrivial and we explain the technical details in Sections  \ref{sec:theory}. Knowing when the maximum rejection probability is achieved, we are able to derive a closed-form expression for the critical value using properties of the $t$-distribution. 

Case 2 of Step 2 is in fact a general version that holds for any $(m, \alpha, k, \rho)$. For a general value of $k \in \{1, \ldots, m\}$, although we cannot find the exact configuration of $\{\sigma_j\}^{m+1}_{j=1}$ that achieves the desired level of rejection probability, we are able to substantially reduce the set of possible values. In particular, there are at most $m^2$ possible cases for the worst-case configuration of $\{\sigma_j\}^{m+1}_{j=1}$, each of which involves at most one unknown parameter. Therefore, we can find the maximum rejection probability through one-dimensional optimization. In addition, we have a good initial guess for the critical value. To facilitate the use of the test in practice, we have tabulated many critical values for researchers' use in Table \ref{tab:cv}.

The last step rejects if the absolute value of the test statistic is above the critical value.

\section{Theory} \label{sec:theory}
In this section, we develop the theory that justifies the validity of Algorithm \ref{algo:t-test}. In Section \ref{sec:theory-large}, we show that the large-sample behavior of the test can be studied via a fixed number of normal random variables. In Section \ref{sec:theory-validity}, we show the general theory on computing the maximum rejection probability of the test statistic. This result is used to search for a critical value such that the test is valid. We show in Section \ref{sec:theory-prob-k=1} that we can get a closed-form expression for the critical value as in Step 2 of Algorithm \ref{algo:t-test} when $k = 1$ and the significance level is not ``too large.'' In Section \ref{sec:theory-power}, we discuss the power of our $t$-test.

\subsection{General framework with normal means} \label{sec:theory-large}

In this subsection, we show that under Assumption \ref{assu:CLT}, studying the behavior of the $t$-statistic in equation \eqref{eq:tstat_cl_est} of Algorithm \ref{algo:t-test} when $n$ is large is equivalent to studying the $t$-statistic of suitably defined $(m+1)$ normal random variables. Formally, consider the following assumption on the normal random variables.

\begin{assu}
\label{assu:normal}
Let $\{\nob_j\}^{m+1}_{j=1}$ be $m+1$ independent random variables, where $\nob_j \sim \cN(0, \sigma_j^2)$ for $1\le j \le m$ and $\nob_{m+1} \sim \cN(\delta, \sigma_{m+1}^2)$. 
\end{assu} 

Compared to the notation in Section \ref{sec:procedure-main-assu}, $\{\psi_j\}^m_{j=1}$ above correspond to the $m$ control clusters and $\psi_{m+1}$ corresponds to the single treated cluster. As before, the above assumption does not require us to know the variances $\{\sigma_j\}^{m+1}_{j=1}$. The variances can be arbitrarily heterogeneous as long as they satisfy the relative heterogeneity assumption stated in Assumption \ref{assu:relative_heter}.
Analogous to \eqref{eq:tstat_cl_est}, we define the following $t$-statistic based on these normal random variables: 
\begin{equation}
	\label{eq:pop-test-1}
	T_m
	\equiv 
	\frac{\nob_{m+1} - \overline\nob_m}{S_m},
\end{equation}
where $\overline\nob_m \equiv \frac{1}{m} \sum^m_{j=1} \nob_j$ and $S_m^2 \equiv \frac{1}{m-1} \sum^m_{j=1} (\nob_j - \overline\nob_m)^2$ denote the sample mean and variance for the control clusters, respectively.

The theorem below shows that, to achieve the desired size asymptotically, it suffices to consider a stylized setting with normally distributed random variables. 
 Note that $\widehat{T}_m$ in \eqref{eq:tstat_cl_est} depends on the sample size within each cluster $n$ implicitly.
In addition, we allow $\mu_1$ and $\mu_0$ to vary with the sample size $n$ as well, which can facilitate power investigation under local alternatives.

\begin{thm}\label{thm:size_asymp}
	Let $m\ge 2$.
	Suppose that Assumption \ref{assu:CLT} holds, 
	$\sqrt{n}(\mu_1 - \mu_0) \longrightarrow \delta$ as $n\longrightarrow \infty$ for some $\delta \in \mathbb{R}$, 
	and the variances $\{\sigma_j^2\}^{m+1}_{j=1}$ are not all zero.
	Then, for any $c > 0$ and $c\ne m^{-1/2}$,\footnote{We impose $c\ne m^{-1/2}$ to avoid the case where $|T_m|$ has a positive point mass at $c$.
	This technical requirement excluding a single value of $c$ generally does not affect the practical use of our test, since the desired critical value at a usual significance level is greater than $m^{-1/2}$.
	Specifically, as discussed later in Remark \ref{re:c_greater_1_over_sq_m}, the rejection probability $\bP[|T_m| > c]$  for any $c<m^{-1/2}$ can be as large as $1$, 
	and the rejection probability $\bP[|T_m| > c]$ for $c$ close to but larger than $m^{-1/2}$ can at least be about $0.5$. 
	}
	\begin{align*}
		\lim_{n\to\infty} \bP[|\widehat{T}_m| > c]
		= 
		\bP[|T_m| > c],
	\end{align*} 
	where $T_m$ is the $t$-statistic defined in \eqref{eq:null_normal}.
\end{thm}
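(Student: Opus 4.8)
The plan is to express $\widehat{T}_m$ as a single scale- and location-invariant function of the normalized cluster estimators, invoke Assumption \ref{assu:CLT} to obtain joint convergence of those estimators, and then transfer the convergence to $\widehat{T}_m$ via the continuous mapping theorem, treating separately the degenerate variance configurations on which that theorem fails. The reduction step is as follows. Set $W_{nj} \equiv \sqrt{n}(\hd_j - \mu_0)$ for $1 \le j \le m$, $W_{n,m+1} \equiv \sqrt{n}(\hd_{m+1} - \mu_1)$, and $g_n \equiv \sqrt{n}(\mu_1 - \mu_0)$. Because the numerator $\hd_{m+1} - \overline{\widehat{\cp}}_m$ and the standard deviation $\widehat S_m$ both scale like $1/\sqrt{n}$ once the control estimators are recentered at $\mu_0$, the common factor cancels and
\[
  \widehat{T}_m
  = \frac{g_n + W_{n,m+1} - \frac{1}{m}\sum_{j=1}^m W_{nj}}
         {\sqrt{\frac{1}{m-1}\sum_{j=1}^m \left(W_{nj} - \frac{1}{m}\sum_{i=1}^m W_{ni}\right)^2}}
  \equiv h(W_n; g_n),
\]
where $h$ is a fixed function, continuous off the set $D \equiv \{w : w_1 = \cdots = w_m\}$ on which its denominator vanishes. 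This identity is routine but is the crux that decouples the sample size $n$ from the fixed-$m$ structure.

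Next I would pass to the limit. By Assumption \ref{assu:CLT}, $W_n \converged \Psi^0 \equiv (\psi_1,\dots,\psi_m,\psi_{m+1}^0)$ with independent $\psi_j \sim \cN(0,\sigma_j^2)$ for $j \le m$ and $\psi_{m+1}^0 \sim \cN(0,\sigma_{m+1}^2)$, while $g_n \to \delta$ is deterministic; hence $(W_n, g_n) \converged (\Psi^0, \delta)$ jointly. Observe that $h(\Psi^0;\delta)$ is exactly the statistic $T_m$ in \eqref{eq:pop-test-1} once $\delta$ is absorbed into the treated coordinate, i.e.\ $\psi_{m+1} = \psi_{m+1}^0 + \delta \sim \cN(\delta, \sigma_{m+1}^2)$. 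Provided $\bP[\Psi^0 \in D] = 0$, the continuous mapping theorem gives $\widehat{T}_m \converged T_m$, and then $\lim_n \bP[|\widehat{T}_m| > c] = \bP[|T_m| > c]$ at every continuity point $c$ of the law of $|T_m|$. The condition $\bP[\Psi^0 \in D] = 0$ holds whenever at least two of $\sigma_1,\dots,\sigma_m$ are nonzero, and also (by a zero-probability argument) whenever exactly one is nonzero, so the only configuration it fails for is $\sigma_1 = \cdots = \sigma_m = 0$.

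The remaining work, and the main obstacle, is to certify that the fixed $c$ is a continuity point of $|T_m|$ for every $c > 0$ with $c \neq m^{-1/2}$, since the conclusion is a statement about tail probabilities rather than weak convergence per se. I would show that $|T_m|$ has a continuous law whenever either $\sigma_{m+1} > 0$ (conditioning on the controls, $T_m$ is an affine, nonconstant function of the continuous $\psi_{m+1}$, hence has a density) or at least two control variances are positive (then $T_m$ is a nonconstant smooth function on a $\ge 2$-dimensional Gaussian space, so its level sets are null). The single exceptional configuration is exactly one positive control variance together with a degenerate treated cluster ($\sigma_{m+1}=0$) and $\delta = 0$: taking $\sigma_1 > 0$ and $\sigma_2 = \cdots = \sigma_m = 0$, a direct computation gives $\overline{\psi}_m = \psi_1/m$ and $S_m = |\psi_1|/\sqrt{m}$, whence $|T_m| = m^{-1/2}$ almost surely. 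This is precisely the atom flagged in the footnote, and it is confined to $m^{-1/2}$; therefore any $c > 0$ with $c \neq m^{-1/2}$ is a continuity point and the conclusion follows in all nondegenerate cases.

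Finally I would dispatch the configuration $\sigma_1 = \cdots = \sigma_m = 0$ by hand, since there the continuous mapping theorem is unavailable ($\Psi^0 \in D$ almost surely). Under this configuration the requirement that the variances are not all zero forces $\sigma_{m+1}^2 > 0$, so the numerator satisfies $\sqrt{n}(\hd_{m+1} - \overline{\widehat{\cp}}_m) \converged \psi_{m+1} \sim \cN(\delta, \sigma_{m+1}^2)$, which is nonzero almost surely, while $\sqrt{n}\,\widehat S_m \convergep 0$ because its square is the control sample variance of $W_n$, which collapses. A Slutsky argument then yields $|\widehat{T}_m| \convergep \infty$, and likewise $|T_m| = \infty$ almost surely, so both $\bP[|\widehat{T}_m| > c]$ and $\bP[|T_m| > c]$ equal $1$ for every finite $c > 0$. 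Collecting the nondegenerate and degenerate cases completes the proof.
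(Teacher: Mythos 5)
Your overall reduction is the same as the paper's: recenter the cluster estimators by $\sqrt{n}$, invoke Assumption \ref{assu:CLT} together with $\sqrt{n}(\mu_1-\mu_0)\to\delta$, and pass to the normal limit, so that the only substantive work is certifying that $c$ is a continuity point of the law of $|T_m|$. Where you genuinely diverge is in how that step is executed. The paper never applies the continuous mapping theorem to the ratio itself: it rewrites $\{|\widehat{T}_m|>c\}$ as the event that the everywhere-continuous quadratic form $(\tilde{\cp}_{n,m+1}-\overline{\tilde{\cp}}_{n,m}+\delta_n)^2-c^2\tilde S_{n,m}^2$ is positive, and then proves in Lemma \ref{lem:rej_prob_point_mass} (via an eigendecomposition of the coefficient matrix and an analysis of its characteristic polynomial) that the limiting quadratic form has no point mass at $0$ whenever $c\ne m^{-1/2}$ and the variances are not all zero. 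That device sidesteps the singularity at $\widehat S_m=0$ and absorbs your degenerate all-controls-constant case, so no separate Slutsky argument is needed. Your route, continuous mapping off the null set $D$ plus a case analysis on which variances are positive, is workable and is more transparent about where the atom at $m^{-1/2}$ comes from, but it is the riskier bookkeeping exercise, and as written it has two holes.

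First, your atomlessness case analysis misses the configuration $\sigma_{m+1}=0$, exactly one positive control variance, and $\delta\ne 0$: it is covered by neither of your two continuity arguments, and your explicit computation of the exceptional configuration assumes $\delta=0$. The conclusion is still true there, since with $\sigma_1>0$ the event $|T_m|=c$ forces $(\delta-\psi_1/m)^2=c^2\psi_1^2/m$, a polynomial in $\psi_1$ with constant term $\delta^2\ne 0$, hence with finitely many roots, but you must say so. Second, the principle ``nonconstant smooth function on a $\ge 2$-dimensional Gaussian space has null level sets'' is false as stated (a smooth nonconstant function can be constant on a set of positive measure). What actually saves the step is that $\{|T_m|=c\}\cap\{S_m>0\}$ lies in the zero set of the polynomial $(\psi_{m+1}-\overline{\psi}_m+\delta)^2-c^2S_m^2$, and this polynomial is not identically zero on the support of the Gaussian vector: for two control indices $i\ne j$ with positive variances, the coefficient of $\psi_i\psi_j$ equals $\frac{2}{m^2}+\frac{2c^2}{m(m-1)}>0$. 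A nonzero polynomial has Lebesgue-null zero set, which gives the claim. This is essentially the content of the paper's Lemma \ref{lem:rej_prob_point_mass}, which you should either invoke or reprove at this level of care.
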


Note that in the asymptotics of Theorem \ref{thm:size_asymp}, the limit is taken as $n$ goes to infinity, while the number of clusters $(m+1)$ is fixed. 
In addition, we allow a general $\delta$ for the difference between $\mu_1$ and $\mu_0$ to facilitate the later discussion on the power of the test. To derive valid tests, it suffices to consider the case where $\mu_1 = \mu_0$ and $\delta = 0$. 
Specifically, to derive large-sample valid $t$-test with a single treated cluster and a finite number of control clusters, it suffices to test the following null hypothesis in the stylized setting with exactly normal observations: 
\begin{equation}
	\label{eq:null_normal}
	\overline{H}_0: \delta = 0
	\quad 
	\text{ vs } \quad 
	\overline{H}_1: \delta \neq 0.
\end{equation}
In the remainder of this section, we will study valid $t$-test for \eqref{eq:null_normal} in this stylized setting.

\begin{re}\label{re:one_sided}
	If we are interested in an one-sided alternative, such as $\mu_1 > \mu_0$, we can reject the null hypothesis if $\widehat{T}_m > c$ for some $c > 0$. 
	Moreover, under the same condition as in Theorem \ref{thm:size_asymp}, 
	$\lim_{n\to\infty} \bP[\widehat{T}_m > c]
	= 
	\bP[T_m > c] = \frac{1}{2}\bP[|T_m| > c]$. 
	Therefore, the critical value for a level-$\alpha$ one-sided test can be derived from the critical value for a level-($2\alpha$) two-sided test, for any $\alpha\in [0, \frac{1}{2}]$.
\end{re}

\begin{re}
	The setup in this paper also works when there is a single control cluster and at least two treated clusters. All the results would follow by labeling the first $m$ clusters as the treated clusters and the $(m+1)$th cluster as the control cluster. 
\end{re}

\begin{re}
Although we focus on a fixed $m$, we discuss in Appendix \ref{sec:4.6} on the behavior of the test when $m$ is large, where we derive a closed-form valid test and approximate its power.
\end{re}

\subsection{Valid $t$-test} \label{sec:theory-validity}

The key to constructing a valid $t$-test is to find an appropriate critical value $\cv$ for Step 2 of Algorithm \ref{algo:t-test}. This critical value $\cv$ has to be chosen such that for any $\{\sigma_j\}^{m+1}_{j=1}$ that satisfy Assumption \ref{assu:relative_heter}, $\bP[|T_m|>\cv]$ under the null hypothesis \eqref{eq:null_normal} is less than or equal to a given significance level $\alpha$. In the following, we will consider the maximum rejection probability at any critical value $c$. \par 

Note that when the treated standard deviation is much larger than the control standard deviations, i.e., $\sigma_{m+1} \gg \max_{1\le j \le m}\sigma_j$, 
the rejection probability $\bP[|T_m|>c]$ will be close to $1$ for any $c>0$, under which we cannot derive a meaningful $t$-test.
Therefore, Assumption \ref{assu:relative_heter} on the relative heterogeneity of $\{\sigma_j\}^{m+1}_{j=1}$ is in some sense necessary.

For descriptive convenience, we introduce $\cSS$ to denote all possible standard deviations that satisfy Assumption \ref{assu:relative_heter} for a given $m$, $\rho \geq 0$ and $k \in \{1, \ldots, m\}$\footnote{With a slight abuse of notation, we also use $\{\sigma_j\}_{j=1}^{m+1}$ to denote generic values of the standard deviations. Note that $\{\sigma_j\}_{j=1}^{m+1}$ in Assumptions \ref{assu:CLT} and \ref{assu:relative_heter} represent the true (asymptotic) standard deviations for all the clusters.}:
\begin{align}\label{eq:S_k_rho}
    	\cSS
	\equiv
	\{ (\sigma_1, \ldots, \sigma_m, \sigma_{m+1}) \in \bR^{m+1}_{\ge 0}:
	\sigma_{m+1}
	\leq
	\rho \sigma_{(k)}
	\}.
\end{align}
Using the above notation, our goal of finding the maximum rejection probability as described above can be formalized as follows 
\begin{align}\label{eq:max_rej_prob}
	p_m(c; k, \rho) & \equiv \sup_{(\sigma_1, \ldots, \sigma_m, \sigma_{m+1}) \in \cSS} \bP_0 [|T_m| > c],
\end{align}
where the $\bP_0$ notation with subscript $0$ indicates that the null hypothesis \eqref{eq:null_normal} holds and $\delta=0$. 
To conduct a valid test at any given significance level $\alpha\in(0,1)$, it suffices to find the critical value $c$ such that $p_m(c; k, \rho) \le \alpha$ and ideally $p_m(c; k, \rho) = \alpha$. On the other hand, if the goal is to compute a $p$-value, then one can directly compute $p_m(|T_m|; k, \rho)$ without searching for the required critical value as discussed in Section \ref{sec:ci-p}.
That is,
when we set $c$ to be the observed absolute value of the $t$-statistic, $p_m(c; k, \rho)$ gives a valid $p$-value for testing the null hypothesis in \eqref{eq:null_normal}. 

In the following, we consider two cases depending on the standard deviation for the treated cluster $\sigma_{m+1}$. 
Section \ref{sec:sigma_treat_0} considers $\sigma_{m+1} = 0$ and shows that the corresponding maximum rejection probability has a closed-form solution that can be easily computed. 
Section \ref{sec:sigma_treat_0_positive} considers $\sigma_{m+1} > 0$ and shows that we have an integral representation for the rejection probability. This facilitates its numerical calculation at any given values of $\{\sigma_j\}_{j=1}^{m+1}$. However, directly evaluating the optimization problem in \eqref{eq:max_rej_prob} generally results in solving an $m$-dimensional optimization problem. We discuss how this can be reduced to solving multiple one-dimensional optimization problems in Section \ref{sec:theory-max-rej-prob-detail}.

\subsubsection{Case 1: $\sigma_{m+1} = 0$}\label{sec:sigma_treat_0}

When $\sigma_{m+1} = 0$, 
regardless of the choices of the relative heterogeneity parameters $\rho$ and $k$, the control standard deviations $\{\sigma_j\}^m_{j=1}$ can take arbitrary values in $\mathbb{R}^m_{\ge 0}$. 
Moreover, in this case, except for a multiplicative constant scaling factor of $\sqrt{m}$ in the $t$-statistic, our $t$-statistic essentially reduces to a one-sample $t$-statistic,  and our $t$-test is equivalent to
testing whether the mean of the control clusters is equal to zero. 
From \citet{Bakirov:2006aa}, we know that the maximum rejection probability with a zero $\sigma_{m+1}$ has the following form:
\begin{align}\label{eq:max_rej_prob_zero}
	p_{m,0}(c) 
    & = \sup_{\substack{(\sigma_1, \ldots, \sigma_m) \in \mathbb{R}^m_{\ge 0} \\ \text{ and } \sigma_{m+1} = 0}} \bP_0 [|T_m| > c]
    =\max_{R_m(c) < j \le m} \bP\left[ |t_{j-1}| > \sqrt{\frac{(j-1) R_m(c)}{j-R_m(c)}} \right], 
\end{align}
where $R_m(c) \equiv \frac{m^2c^2}{mc^2+m-1}$, and $t_{j-1}$ denotes a random variable following the $t$-distribution with degrees of freedom $(j-1)$. 
In \eqref{eq:max_rej_prob_zero}, the maximum rejection probability is obtained when the control standard deviations $\{\sigma_j\}_{j=1}^m$ are either zero or take some common positive value. It can be efficiently computed by calculating the tail probabilities of at most $(m-1)$ $t$-distributions with various degrees of freedom.

It follows that when $\rho = 0$, we must have $\sigma_{m+1}=0$, and the maximum rejection probability of our $t$-test becomes the same as \eqref{eq:max_rej_prob_zero}, i.e., $p_m(c; k, 0) = p_{m,0}(c)$ for any $c>0$ and $k \in \{1, \ldots, m\}$. 
Thus, in the remainder of this section, 
we focus on $\rho > 0$.

\begin{re}\label{re:c_greater_1_over_sq_m}
From \eqref{eq:max_rej_prob_zero} and as commented in \citet{Bakirov:2006aa}, $p_{m,0}(c) = 1$ for $0<c<m^{-1/2}$ and $p_{m,0}(c) = 0.5$ for $c = m^{-1/2}$. 
Furthermore, by the right-continuous property of distribution functions, we can verify that $p_{m,0}(c) \longrightarrow 0.5$ as $c$ approaches $m^{-1/2}$ from the right. These observations mean that, regardless of the values of $\rho$ and $k$, the maximum rejection probability of our $t$-test in \eqref{eq:max_rej_prob} is $1$ when $0<c<m^{-1/2}$, at least $0.5$ when $c = m^{-1/2}$, and at least about $0.5$ when $c$ is greater than but close to $m^{-1/2}$. 
Thus, it is generally innocuous to assume 
$c\ne m^{-1/2}$ (or even $c > m^{-1/2}$) as in Theorems \ref{thm:size_asymp} and later in \ref{thm:max_rej_prob} for most conventional significance levels. 
\end{re}

\subsubsection{Case 2: $\sigma_{m+1} > 0$} \label{sec:sigma_treat_0_positive}

We now consider the case where $\sigma_{m+1} > 0$.
As shown in the following lemma, the rejection probability $\bP_0[|T_m|>c]$ can be written as an integral. This not only facilitates numerical computation, but is also crucial for our later theoretical investigation.

\begin{lem}\label{lem:rej_prob_integral_main}
Suppose that $\sigma_{m+1} > 0$, and define $\gamma_i \equiv \frac{\sigma_i}{\sigma_{m+1}}$ for $i = 1, 2, \ldots, m$. 
The rejection probability $\bP_0[|T_m|>c]$ can be written as:
\begin{align}\label{eq:rej_prob_integral_main}
	\bP_0[|T_m|>c] = 
	\overline{p}_m(c; \gamma_1, \ldots, \gamma_m)
	\equiv \frac{1}{\pi} 
		 \int^{|\theta_{m+1}|}_0	\frac{s^{\frac{m-1}{2}}}{
         [- g_c(-s)]^{\frac{1}{2}}
         } \ \mathrm{d}s.
\end{align}
where $\theta_{m+1} \in [-m - \max_{1\le i \le m}\gamma_i^2, \  -m]$ is the unique negative root of $g_c(\theta)$ defined below,  
\begin{align}\label{eq:g_fun_char}
	g_c(\theta) \equiv -(m+ \theta)
        \prod_{i=1}^m ( \kappa\gamma_i^2 - \theta)  
        +
        \left( \kappa 
        + \frac{\kappa+1}{m} \theta\right)
        \cdot 
        \sum_{i=1}^m 
        \left[
        \gamma_i^2 \prod_{j\ne i} ( \kappa\gamma_j^2 - \theta ) 
        \right],
\end{align}
and $ \kappa \equiv \frac{m c^2}{m-1}$.
\end{lem}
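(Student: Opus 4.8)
The plan is to recast the event $\{|T_m|>c\}$ as the positivity of a Gaussian quadratic form, identify $g_c$ with the characteristic polynomial of the associated matrix, and then evaluate the probability by Gil--Pelaez inversion, collapsing the contour onto a single branch cut.

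First I would note that under $\bP_0$ we have $\psi=(\psi_1,\dots,\psi_{m+1})^\top\sim\cN(0,\Sigma)$, and $|T_m|>c$ is equivalent to $(m-1)(\psi_{m+1}-\overline\psi_m)^2-c^2\sum_{j=1}^m(\psi_j-\overline\psi_m)^2>0$, i.e. $\{\psi^\top A\psi>0\}$ with $A=(m-1)aa^\top-c^2B$, where $a_j=-1/m$ for $j\le m$, $a_{m+1}=1$, and $B$ centers the first $m$ coordinates and annihilates the last. Since $c^2=(m-1)\kappa/m$, one has $A=\tfrac{m-1}{m}N$ with $N=m\,aa^\top-\kappa B$, so the sign is unchanged and it suffices to study $\bP[\psi^\top N\psi>0]$. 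Because the statistic is scale invariant in $\sigma_{m+1}$, I standardize $\psi=\sigma_{m+1}DZ$ with $D=\operatorname{diag}(\gamma_1,\dots,\gamma_m,1)$ and $Z\sim\cN(0,I_{m+1})$, so that $\psi^\top N\psi$ has the sign of $Z^\top MZ$ with $M=DND$. A Schur-complement and matrix-determinant-lemma computation of $\det(N-\theta D^{-2})$, followed by the substitution $\theta\mapsto-\vartheta$, then yields $g_c(\vartheta)=(-1)^{m+1}\det(M+\vartheta I)$; in particular $g_c(-s)=\det(sI-M)$, so the roots of $g_c$ are exactly the negatives of the eigenvalues of $M$.

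Next I would pin down the inertia of $M$. Since the first block of $a$ is constant we have $Ba=0$, hence $a^\top Na=m\,(a^\top a)^2>0$, so $N$ (and $M$, by Sylvester's law of inertia) has at least one positive eigenvalue; and as $N$ is a rank-one positive update of the negative-semidefinite $-\kappa B$, Weyl's inequality gives at most one. Thus $M$ has exactly one positive eigenvalue $\lambda_+$, together with $m-1$ negative eigenvalues $-\mu_1,\dots,-\mu_{m-1}$ and a single zero. Consequently $g_c$ has a unique negative root $-\lambda_+$, which identifies $\theta_{m+1}=-\lambda_+$ and $|\theta_{m+1}|=\lambda_+$, and gives $-g_c(-s)=s(\lambda_+-s)\prod_{i=1}^{m-1}(s+\mu_i)>0$ for $s\in(0,\lambda_+)$. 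The localization $\theta_{m+1}\in[-m-\max_i\gamma_i^2,-m]$ I would obtain separately by evaluating $g_c$ at the two endpoints (one checks $g_c(-m)\le0$ directly) and using the sign change together with uniqueness of the negative root.

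Finally, writing $Z^\top MZ=\lambda_+V_0^2-\sum_{i=1}^{m-1}\mu_iV_i^2$ for iid standard normals, its characteristic function is $\phi(t)=\prod_\ell(1-2it\lambda_\ell)^{-1/2}$, and Gil--Pelaez inversion gives $\bP[Z^\top MZ>0]=\tfrac12+\tfrac1\pi\int_0^\infty t^{-1}\operatorname{Im}\phi(t)\,dt$. The single positive eigenvalue produces exactly one branch point of $\phi$ in the lower half-plane, at $t=-i/(2\lambda_+)$, while the negative eigenvalues keep their factors positive there. I would rewrite the integral as a principal-value contour integral of $\phi(t)/t$ over the real line and push the contour down onto this branch cut; for $m\ge2$ the arc at infinity vanishes, and the jump of $(1-2\eta\lambda_+)^{-1/2}$ across the cut $t=-i\eta$, $\eta>1/(2\lambda_+)$, leaves a real integral. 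The substitution $s=1/(2\eta)$, mapping $\eta\in(1/(2\lambda_+),\infty)$ to $s\in(0,\lambda_+)$, then turns the integrand into $\pi^{-1}s^{(m-1)/2}[-g_c(-s)]^{-1/2}$, giving exactly $\overline p_m$. The main obstacle is the contour bookkeeping in this last step: justifying the deformation, showing the large arc contributes nothing, and---most delicately---verifying that the origin's half-residue cancels the additive $\tfrac12$ so that no constant survives. As a consistency check, letting $c\to\infty$ keeps $\lambda_+\in[m,m+\max_i\gamma_i^2]$ bounded while $\mu_i\to\infty$, so the integral tends to $0$, matching $\bP[|T_m|>c]\to0$ and confirming the absence of a constant. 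A secondary point is the degenerate case where some $\gamma_j=0$ and $D$ is singular, which I would handle by a direct dimension-reduction or continuity argument; the polynomial $g_c$ already carries the extra factor of $\vartheta$ needed there.
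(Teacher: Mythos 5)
Your proposal is correct in outline and shares the paper's skeleton --- rewrite $\{|T_m|>c\}$ as the sign of a Gaussian quadratic form, identify $g_c$ (up to the sign $(-1)^{m+1}$) with the characteristic polynomial of the variance-weighted coefficient matrix, and reduce everything to the unique negative root --- but it departs from the paper in both substantive steps. First, you obtain the signature (exactly one eigenvalue of the rejection-favoring sign, the rest of the opposite sign or zero) from a rank-one-update/Weyl argument together with $a^\top N a>0$; the paper instead reads the root structure directly off the polynomial via sign changes of $g$ at $\kappa\gamma_{(i)}^2$, at $0$, and on $[-\tau^{-1},0)$ (Lemma \ref{lem:roots}), and localizes $\theta_{m+1}\in[-m-\max_i\gamma_i^2,-m]$ from the secular equation $1=\sum_i (1+\tau\kappa\gamma_i^2)/(\kappa\gamma_i^2-\theta_{m+1})$ (Lemmas \ref{lem:neg_root_abs_lower} and \ref{lem:neg_root_abs_upper}) rather than by endpoint sign checks; both routes work, yours is arguably cleaner for the eigenvalue count, while the paper's interlacing information is reused elsewhere. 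Second, and more importantly, the paper performs no contour integration at all: having one negative and $m$ nonnegative roots, it writes the event as $\xi_{m+1}^2/\sum_i(\theta_i/|\theta_{m+1}|)\xi_i^2>1$ and quotes a closed-form integral for this probability (Theorem 2 of Makshanov and Shalaevskii, 1977), after which the elementary substitution $s=|\theta_{m+1}|/(1+t)$ yields \eqref{eq:rej_prob_integral_main} (Lemma \ref{lem:rej_prob_integral}). Your Gil--Pelaez/Imhof inversion collapsed onto the single branch cut at $t=-i/(2\lambda_+)$ effectively reproves that cited result from scratch; it is self-contained, but it concentrates all the real work in exactly the places you flag (vanishing arc for $m\ge 2$, the half-residue at the origin cancelling the additive $\tfrac12$, branch bookkeeping), none of which arise on the paper's route. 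You would also need to patch the Sylvester step when some $\gamma_j=0$ makes $D$ singular, as you note; the paper handles the analogous degeneracy by polynomial continuity in Lemma \ref{lem:character_sigma}.
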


From Lemma \ref{lem:rej_prob_integral_main}, when $\sigma_{m+1}>0$, the rejection probability in \eqref{eq:rej_prob_integral_main} depends only on the ratios between the control standard deviations and the treated standard deviation. 
The relative heterogeneity assumption in Assumption \ref{assu:relative_heter} equivalently assumes that at least $(m-k+1)$ of $\{\gamma_j\}^m_{j=1}$ is greater than or equal to $\rho^{-1}$. 
To find the maximum rejection probability, we need to solve an $m$-dimensional optimization over $(\gamma_1, \ldots, \gamma_m) \in [0, \infty)^{k-1} \times [\rho^{-1}, \infty)^{m-k+1}$.\footnote{Note that the rejection probability is invariant to any permutations of the $\{\gamma_j\}^m_{j=1}$. We can, for example, assume that the last $(m-k+1)$ of them is no less than $\rho^{-1}$ without loss of generality.}
Optimizing this integral directly can be computationally challenging even for a moderate $m$. As demonstrated in the next subsection, for any $m\ge 2$, we can substantially simplify the $m$-dimensional optimization problem into multiple one-dimensional optimization problems. The reformulated problem can often be efficiently solved numerically.

\subsubsection{Reformulating the problem of computing the maximum rejection probability} \label{sec:theory-max-rej-prob-detail}

In this subsection, we discuss how to reduce the optimization for the maximum rejection probability in \eqref{eq:max_rej_prob} into multiple one-dimensional optimization problems. We summarize the key ideas here and defer the technical lemmas to Appendix \ref{app:theory-max-rej-prob-detail} of this main text.

First, Lemma \ref{lemma:max_at_finite} shows that the maximum rejection probability in \eqref{eq:max_rej_prob} must be achieved at some finite values of $(\sigma_1, \ldots, \sigma_m, \sigma_{m+1}) \in \cSS$. This means we do not need to worry about the boundary case where some of $\{\sigma_j\}_{j=1}^{m+1}$ approach infinity. 
Next, Lemma \ref{lem:first_second_deriv_gamma} shows the necessary conditions for any $(\sigma_1, \ldots, \sigma_m, \sigma_{m+1}) \in \cSS$ to be a maximizer. 
We summarize the implications of Lemma \ref{lem:first_second_deriv_gamma} in the following theorem.

\begin{thm}\label{thm:form_maximizer_main}
    For any given $m\ge 2$, $k \in \{1, \ldots, m\}$, $\rho \ge 0$, $c>0$ and  $c\ne m^{-1/2}$, the maximum rejection probability in \eqref{eq:max_rej_prob} under Assumption \ref{assu:relative_heter} must be attained at some $\{\sigma_j\}_{j=1}^{m+1} \in \cSS$ that satisfy one of the following forms:\footnote{The result in (i) follows from \citet{Bakirov:2006aa} and implies the maximum rejection probability when $\sigma_{m+1}=0$ as shown in \eqref{eq:max_rej_prob_zero}.}
    \begin{itemize}
        \item[(i)] $\sigma_{m+1} = 0$, and for $1\le j \le m$, $\sigma_j$ is 
        either $0$ or some common value;
        \item[(ii)] $\sigma_{m+1} = \rho$, and for $1\le j \le m$, $\sigma_j$ is 
        either $0$, or $1$, or some common value. 
    \end{itemize}
\end{thm}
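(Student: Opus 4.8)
The plan is to combine the two preceding lemmas---Lemma \ref{lemma:max_at_finite}, which guarantees that the maximum in \eqref{eq:max_rej_prob} is attained at finite standard deviations, and Lemma \ref{lem:first_second_deriv_gamma}, which records the first- and second-order stationarity conditions at a maximizer---into the claimed structural characterization. First I would reduce to a compact problem using scale invariance. Under the null ($\delta=0$), the statistic $T_m$ in \eqref{eq:pop-test-1} is invariant to scaling $(\sigma_1,\ldots,\sigma_{m+1})$ by a common positive constant, and $\cSS$ is a scaling-invariant cone, so any maximizer of $\bP_0[|T_m|>c]$ over $\cSS$ may be normalized. I then split on the treated standard deviation. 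If $\sigma_{m+1}=0$, the $t$-statistic reduces, up to the fixed factor $\sqrt{m}$, to a one-sample $t$-statistic in the controls, so the result of \citet{Bakirov:2006aa} recorded in \eqref{eq:max_rej_prob_zero} gives that the maximizing control standard deviations are each $0$ or a common value; this is exactly form (i). If $\sigma_{m+1}>0$ (which forces $\rho>0$), I rescale so that $\sigma_{m+1}=\rho$. The constraint $\sigma_{m+1}\le\rho\sigma_{(k)}$ then reads $\sigma_{(k)}\ge 1$, i.e.\ at least $(m-k+1)$ controls satisfy $\sigma_j\ge 1$ while the remaining (at most) $(k-1)$ are free in $[0,\infty)$. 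This is precisely the normalization under which the two relevant lower boundaries for a control are $\sigma_j=0$ and $\sigma_j=1$.

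Next I would classify each control coordinate at the maximizer using the stationarity conditions of Lemma \ref{lem:first_second_deriv_gamma}. Since $\bP_0[|T_m|>c]$ is permutation-symmetric in the controls, I may assume the $(m-k+1)$ constrained controls are the last ones, turning the feasible region (for a fixed such labeling) into a box. For each coordinate $j$, optimality then forces exactly one of: $\sigma_j$ sits at its lower boundary ($0$ for a free control, $1$ for a constrained control), or $\sigma_j$ is an interior point at which the coordinatewise first derivative of the rejection probability vanishes; the possibility $\sigma_j\to\infty$ is excluded by Lemma \ref{lemma:max_at_finite}. This already limits the admissible control values to $0$, $1$, and interior stationary points, so the only remaining task is to show that all interior stationary coordinates share a single common value.

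The main obstacle---and the step where Lemma \ref{lem:first_second_deriv_gamma} does the real work---is ruling out two distinct interior values. I would argue by an extremal principle: among maximizers of the above form, take one with the fewest distinct interior control values and suppose, for contradiction, that it has interior coordinates $i$ and $j$ with $\sigma_i\ne\sigma_j$. Restricting the rejection probability to $(\sigma_i,\sigma_j)$ with the remaining coordinates fixed yields a symmetric function $F(s,t)=F(t,s)$ that is stationary at $(\sigma_i,\sigma_j)$. Using the explicit derivative expressions from Lemma \ref{lem:first_second_deriv_gamma}, I would show that the coordinatewise first-order condition is a polynomial equation in $\sigma_j^2$ whose roots compatible with the second-order local-maximum condition are unique; equivalently, that the restricted Hessian fails to be negative semidefinite at any off-diagonal critical point, producing an ascent direction (or a perturbation holding $F$ at its maximal value while strictly reducing the number of distinct values). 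Either way the minimal maximizer is contradicted, so all interior coordinates coincide at one common value, and together with the boundaries $0$ and $1$ this gives form (ii). I expect the delicate part to be verifying this definiteness claim, since the rejection probability is available only through the integral representation of Lemma \ref{lem:rej_prob_integral_main}: the stationarity analysis must be pushed through the auxiliary polynomial $g_c$ in \eqref{eq:g_fun_char} and its implicitly defined negative root $\theta_{m+1}$, and tracking how $\theta_{m+1}$ and the integrand respond as a single $\sigma_j$ varies while extracting a clean sign for the second derivative is where the bulk of the computation should concentrate.
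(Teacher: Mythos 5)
Your overall architecture matches the paper's proof: reduce by scale invariance, dispatch $\sigma_{m+1}=0$ via the Bakirov--Sz\'ekely result in \eqref{eq:max_rej_prob_zero}, and for $\sigma_{m+1}>0$ rule out two distinct non-boundary control values by a second-order argument based on Lemma \ref{lem:first_second_deriv_gamma}. The case split, the normalization $\sigma_{m+1}=\rho$ turning the constraint into $\sigma_{(k)}\ge 1$, and the identification of $0$ and $1$ as the relevant boundary values are all exactly what the paper does.

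The gap is in how you plan to execute the key exclusion step. You propose to work with coordinatewise first-order conditions and then show that the restricted Hessian of the symmetric function $F(\sigma_i,\sigma_j)$ fails to be negative semidefinite at an off-diagonal critical point, and you explicitly flag that "tracking how $\theta_{m+1}$ and the integrand respond as a single $\sigma_j$ varies" is where the computation would concentrate. But Lemma \ref{lem:first_second_deriv_gamma} is not a coordinatewise statement: it is stated for the one-dimensional perturbation in which $\gamma_1$ varies and $\gamma_2$ is adjusted so that the negative root $\theta_{m+1}$ of $g_c$ stays fixed. This is precisely the device that removes the difficulty you flag --- along that curve the upper limit $|\theta_{m+1}|$ of the integral in \eqref{eq:rej_prob_integral_main} and the factor involving $\theta_{m+1}$ do not move, so the sign of the second derivative can be extracted cleanly. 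The paper's argument is then immediate: at an interior maximizer with $\gamma_1\ne\gamma_2$, the directional derivative along this feasible curve vanishes and the second derivative along it must be nonpositive, contradicting the lemma's conclusion that it is strictly positive. No extremal principle over "fewest distinct interior values" is needed, and no full Hessian computation is required. As written, your plan defers exactly the claim that the lemma already supplies, and the coordinatewise route you sketch (with $\theta_{m+1}$ varying) is not obviously tractable; the fix is simply to apply Lemma \ref{lem:first_second_deriv_gamma} in the constrained parametrization for which it is stated.
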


Note that $0$ is always a boundary value of $\sigma_j$ for $1\le j \le m$, 
and $1$ is also a boundary value for some $\sigma_j$ when $\sigma_{m+1} = \rho$ and Assumption \ref{assu:relative_heter} holds. 
Therefore, Theorem \ref{thm:form_maximizer_main} essentially indicates that, at the maximizer of the rejection probability in \eqref{eq:max_rej_prob}, each $\sigma_j$ must either lie on the boundary or share a common value for all $1 \le j \le m$.
Importantly, 
Theorem \ref{thm:form_maximizer_main} explains why we can simplify the optimization for the rejection probability into multiple one-dimensional optimization problems. This is because for all possible maximizers shown in Theorem \ref{thm:form_maximizer_main}, there is only one unknown value, which is the common value of the $\{\sigma_j\}^m_{j=1}$ that are not on the boundary.
Note that the maximum rejection probability in case (i) with $\sigma_{m+1} = 0$ can be efficiently computed as shown in \eqref{eq:max_rej_prob_zero}. 
In the following, we will therefore focus on the optimization for the rejection probability in case (ii) with $\sigma_{m+1} > 0$.

Specifically, under Assumption \ref{assu:relative_heter} with any $k \in \{1, \ldots, m\}$ and $\rho > 0$, if $(\sigma_1, \ldots, \sigma_m,$ $\sigma_{m+1}) \in \cSS$ is the maximizer for the rejection probability and $\sigma_{m+1}>0$, then, for all $1\le j \le m$, $\gamma_j \equiv \frac{\sigma_j}{\sigma_{m+1}}$ is either on the boundary (equals $0$ or $\rho^{-1}$ ), or takes some common value $\gamma \ge 0$. 
Motivated by this, with a slight abuse of notation, we introduce $\overline{p}_m(c; \rho, \gamma; m_1, m_0)$ to denote the value of $\overline{p}_m(c; \gamma_1, \ldots, \gamma_m)$ in  \eqref{eq:rej_prob_integral_main} when $m_1$ of $\{\gamma_j\}_{j=1}^m$ equal $\rho^{-1}$, $m_0$ of them equal zero, and the remaining equal a common value $\gamma$. That is, 
\begin{align}\label{eq:p_m_given_num_gamma}
	\overline{p}_m(c; \rho, \gamma; m_1, m_0) 
	\equiv 
	\overline{p}_m(c; \rho^{-1} \bs{1}_{m_1}^\top, 
	\bs{0}_{m_0}^\top, 
	\gamma \bs{1}_{m-m_1-m_0}^\top),
\end{align}
where $0 \leq m_0, m_1 \leq m$ and $m_0 + m_1 \leq m$.

Next, define the following as the supremum of $\overline p_m(c; \rho, \gamma; m_1, m_0)$ over $\gamma$, with the support of $\gamma$ depending on $m_1$ and $k$:
\begin{align}\label{eq:max_rej_m1_m0}
   \widetilde{p}_m(c; k, \rho; m_1, m_0) 
   \equiv \sup_{\gamma \in [\underline\rho, \infty)}
   \ \overline{p}_m(c; \rho, \gamma; m_1, m_0),
\end{align}
where $\underline\rho = 0$ if $m_1 \ge m-k+1$ and $\underline\rho = \rho^{-1}$ if $m_1 < m-k+1$.

With the above notations, we now state our main theorem of how to evaluate the maximum rejection probability for a given critical value $c$, heterogeneity parameters $(k, \rho)$ and the number of clusters $m$.
\begin{thm}\label{thm:max_rej_prob}
For any given $m\ge 2$, $k \in \{1, \ldots, m\}$, $c>0$ and  $c\ne m^{-1/2}$, the maximum rejection probability in \eqref{eq:max_rej_prob} under Assumption \ref{assu:relative_heter} can be written as\footnote{
From Remark \ref{re:c_greater_1_over_sq_m}, $p_m(c; k, \rho) = 1$ when $0 < c < m^{-1/2}$. For convenience, we also define $p_m(c; k, \rho) = 1$ at $c = 0$ and $c = m^{-1/2}$, so that $p_m(c; k, \rho)$ upper bounds the rejection probability for all $c \ge 0$.
}
\begin{align*}
	p_m(c; k, \rho) 
        &
        \equiv
        \begin{cases}
            \displaystyle\max\Big\{ 
            p_{m,0}(c),
            \max_{0\le m_0 \le k-1, 0\le m_1 \le m-m_0} \widetilde{p}_m(c; k, \rho; m_1, m_0) \Big\}
            & \text{ if $\rho > 0$} \\
            p_{m,0}(c)
            & \text{ if $\rho = 0$} 
        \end{cases},
\end{align*}
where $p_{m,0}(c)$ is defined in \eqref{eq:max_rej_prob_zero} and $\widetilde{p}_m(c; k, \rho; m_1, m_0)$ is defined in \eqref{eq:max_rej_m1_m0}.
\end{thm}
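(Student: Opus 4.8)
The plan is to prove Theorem \ref{thm:max_rej_prob} as essentially a bookkeeping consequence of the structural characterization already established in Theorem \ref{thm:form_maximizer_main}, which isolates the worst-case configurations of $\{\sigma_j\}_{j=1}^{m+1}$. I would first dispose of the degenerate branch $\rho = 0$: since Assumption \ref{assu:relative_heter} then forces $\sigma_{m+1} \le 0$, we must have $\sigma_{m+1} = 0$, so the feasible set $\cSS$ reduces to the $\sigma_{m+1}=0$ problem of \citet{Bakirov:2006aa}, and the supremum in \eqref{eq:max_rej_prob} equals exactly $p_{m,0}(c)$ by \eqref{eq:max_rej_prob_zero}. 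This gives the second line of the claimed formula.

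For the main branch $\rho > 0$, I would invoke Theorem \ref{thm:form_maximizer_main}, which guarantees that the supremum is attained at finite $\{\sigma_j\}_{j=1}^{m+1}\in\cSS$ (finiteness by Lemma \ref{lemma:max_at_finite}) of form (i) or (ii). Form (i) has $\sigma_{m+1}=0$ and contributes precisely $p_{m,0}(c)$, handled separately because the ratio representation breaks down there. For form (ii), since Lemma \ref{lem:rej_prob_integral_main} shows the rejection probability depends on $\{\sigma_j\}_{j=1}^{m+1}$ only through the ratios $\gamma_j \equiv \sigma_j/\sigma_{m+1}$, I would pass to these ratios; under the normalization $\sigma_{m+1}=\rho$, each $\gamma_j$ equals $0$, $\rho^{-1}$, or a common value $\gamma$ according to whether $\sigma_j$ is $0$, $1$, or the shared value. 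Writing $m_0$ for the number of zero entries and $m_1$ for the number of $\rho^{-1}$ entries, the rejection probability at such a point is exactly $\overline{p}_m(c; \rho, \gamma; m_1, m_0)$ from \eqref{eq:p_m_given_num_gamma}.

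The core of the argument is to translate Assumption \ref{assu:relative_heter} into admissible ranges for $(m_0, m_1, \gamma)$. As noted after Lemma \ref{lem:rej_prob_integral_main}, the constraint $\sigma_{m+1}\le\rho\sigma_{(k)}$ is equivalent to requiring at least $m-k+1$ of the $\gamma_j$ to be $\ge \rho^{-1}$. Since the $m_0$ zero entries lie strictly below $\rho^{-1}$ and can never be counted, this forces $m_0 \le k-1$; together with $m_0 + m_1 \le m$ it yields the index ranges $0 \le m_0 \le k-1$ and $0 \le m_1 \le m-m_0$ appearing in the theorem. I would then split on whether the $m_1$ boundary entries alone meet the quota: if $m_1 \ge m-k+1$ the common value $\gamma$ is unconstrained, so $\underline\rho = 0$, whereas if $m_1 < m-k+1$ the common entries must themselves clear the threshold, forcing $\gamma \ge \rho^{-1}$ and hence $\underline\rho = \rho^{-1}$. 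Optimizing over $\gamma$ in the appropriate range produces $\widetilde p_m(c;k,\rho;m_1,m_0)$ of \eqref{eq:max_rej_m1_m0}, and maximizing over the finitely many $(m_0,m_1)$ together with the form-(i) value $p_{m,0}(c)$ gives the stated maximum.

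I expect the only delicate step to be verifying that this enumeration is both sound and complete: every admissible $(m_0,m_1,\gamma)$ in the stated ranges yields a genuine element of $\cSS$ (so the max never overshoots the true supremum), and conversely every form-(ii) maximizer supplied by Theorem \ref{thm:form_maximizer_main} maps to one such tuple (so the max never misses it). Both directions reduce to counting how many coordinates reach $\rho^{-1}$ in each sub-case, and the boundary coincidences $\gamma=\rho^{-1}$ or $\gamma=0$ merely create harmless redundancy in the outer maximum. The attainment of each inner supremum over $\gamma$ at a finite value is inherited from Lemma \ref{lemma:max_at_finite}, so no separate compactness argument is required.
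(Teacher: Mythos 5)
Your proposal is correct and follows essentially the same route as the paper's own proof: dispose of $\rho=0$ via the reduction to the $\sigma_{m+1}=0$ problem of \citet{Bakirov:2006aa}, then for $\rho>0$ invoke Theorem \ref{thm:form_maximizer_main} to restrict to configurations where each $\gamma_j\in\{0,\rho^{-1},\gamma\}$, count the multiplicities $(m_0,m_1)$, and translate $\gamma_{(k)}\ge\rho^{-1}$ into the ranges $0\le m_0\le k-1$, $m_1\le m-m_0$, with $\gamma\ge\rho^{-1}$ exactly when $m_1<m-k+1$. The soundness/completeness bookkeeping you flag at the end is handled the same way (implicitly) in the paper, so no further work is needed.
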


From Theorem \ref{thm:max_rej_prob}, the key to obtaining the maximum rejection probability is to solve the optimization problem in \eqref{eq:max_rej_m1_m0} for all combinations of $(m_1, m_0)$. 
Importantly, 
for each given $(m_1, m_0)$, the optimization in \eqref{eq:max_rej_m1_m0} is an one-dimensional optimization problem, which is computationally much simpler than the original optimization in \eqref{eq:max_rej_prob}, and we can solve it using numerical optimization. 
In particular, for any given $k \in \{1, \ldots, k\}$, we need to solve at most $\frac{1}{2}k(2m+1-k)$ one-dimensional optimization problems\footnote{When $m_1+m_0 = m$, \eqref{eq:p_m_given_num_gamma} no longer depends on $\gamma$, and thus no optimization over $\gamma$ is needed.}, which is $m$ when $k=1$, $2m-1$ when $k=2$, \ldots, and $\frac{1}{2}m(m+1)$ when $k=m$.
Moreover, at a given $m$, $\rho \geq 0$ and $c > 0$, to find the maximum rejection probability over all $k \in \{1, \ldots, m\}$, we need to solve at most $m(m+1)$ one-dimensional optimization problems of the form \eqref{eq:max_rej_m1_m0}, because the optimizations required for different values of $k$ overlap with each other.

We illustrate the above theorem through the following examples. Example \ref{eg:1} is about $k = 1$ and Example \ref{eg:2} is about $k = 2$.

\begin{eg} \label{eg:1}
Let $m=6$, $\rho = 2$, $\alpha = 0.05$, $k = 1$, and $c = t_{m-1, 1-\frac{\alpha}{2}} \sqrt{\rho^2 + \frac{1}{m}}$ where $t_{m-1, 1 - \frac{\alpha}{2}}$ is the $(1- \frac{\alpha}{2})$-th quantile of the $t$-distribution with $(m-1)$ degrees of freedom.
We show the various functions from Theorem \ref{thm:max_rej_prob} in Figure \ref{fig:thm-eg-1}.

\begin{figure}[!ht]
    \centering
    \includegraphics[scale=1]{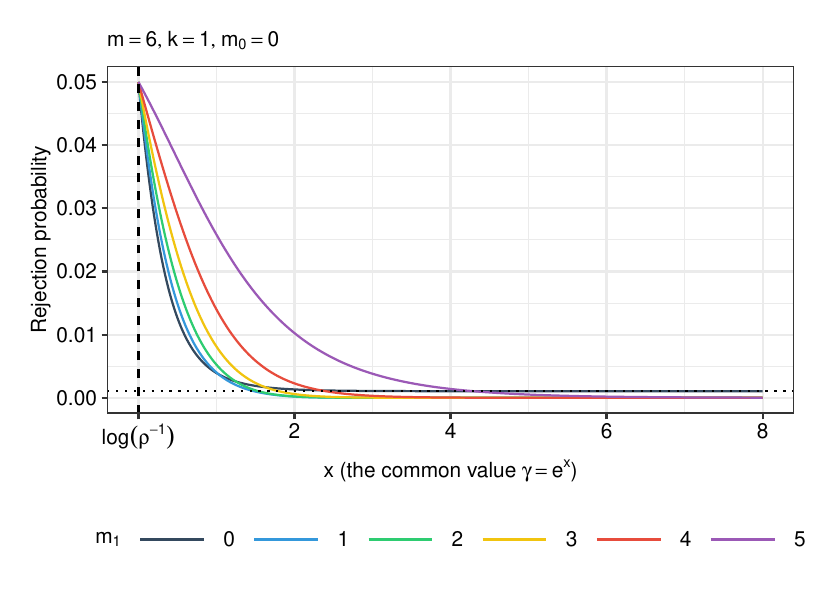}
    \caption{This above shows $\overline{p}_m(c; \rho, \gamma; m_1, m_0)$ against $\gamma$ for various values of $m_1$ with $m = 6$, $k = 1$ and $m_0 = 0$. The vertical dashed line represents $\gamma = \rho^{-1}$. The horizontal dotted line represents  $p_{m,0}(c)$. See Theorem \ref{thm:max_rej_prob} for the definitions of the functions.}
    \label{fig:thm-eg-1}
\end{figure}

A few observations from Figure \ref{fig:thm-eg-1} are as follows. First, the curves show $\overline{p}_m(c; \rho, \gamma; m_1, m_0)$ as defined in \eqref{eq:p_m_given_num_gamma} against $\gamma$, for $m_1= 0, \ldots, 5$. Recall this function means that among $\{\gamma_j\}^6_{j=1}$, $m_1$ of them equals $\rho^{-1}$ and the remaining of them equals $\gamma$. We set $\gamma = e^x$ in order to display the behavior of large $\gamma$. Each colored curve corresponds to a specific value of $m_1$. It can be seen that $\overline{p}_m(c; \rho, \gamma; m_1, m_0)$ decreases as $\gamma$ increases for each $m_1$.

The black dotted horizontal line plots $p_{m,0}(c)$ defined in \eqref{eq:max_rej_prob_zero}. It shows the maximum rejection probability when $\sigma_{m+1}=0$. Here, it is much smaller than 0.05.

As $\gamma \longrightarrow \infty$, $\overline{p}_m(c; \rho, \gamma; m_1, m_0)$ converges to values less than or equal to $p_{m,0}(c)$. 
This is not surprising, as in such scenarios, $\sigma_{m+1}$ is much smaller than some control standard deviations, essentially approximating the case where $\sigma_{m+1} = 0$.
\end{eg}

\begin{eg} \label{eg:2}
Consider the same setup as in Example \ref{eg:1} but with $k = 2$ and set the value of $c$ such that the maximum rejection probability equals $\alpha = 0.05$. 
Figure \ref{fig:thm-eg-2} shows the rejection probability under different combinations of $\gamma$, $m_0$ and $m_1$, where $m_0$ here can only take the values 0 or 1.

\begin{figure}[!ht]
    \centering
    \includegraphics[scale=1]{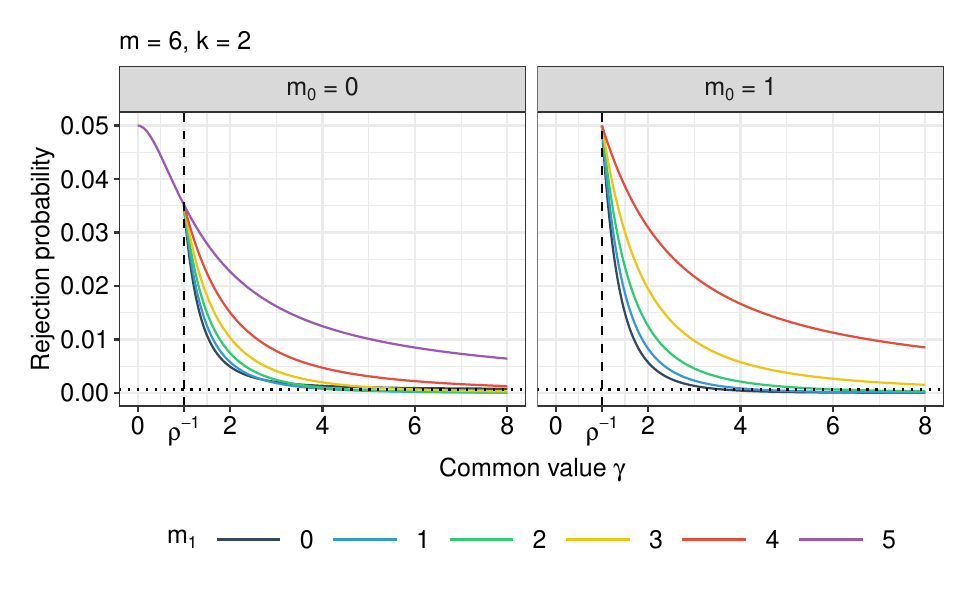}
    \caption{This above shows $\overline{p}_m(c; \rho, \gamma; m_1, m_0)$ against $\gamma$ for various $m_0$ and $m_1$ with $m = 6$ and $k = 1$. See the caption for Figure \ref{fig:thm-eg-1} for more details.}
    \label{fig:thm-eg-2}
\end{figure}

As can be seen from the figure, the maximum rejection probability is achieved when one of the $\{\gamma_j\}^m_{j=1}$ equals 0 and the remaining $(m-1)$ terms from $\{\gamma_j\}^m_{j=1}$ equals $\rho^{-1}$. For instance, in the $m_0 = 0$ panel, the purple line represents $m_1 = 5$ terms of $\{\gamma_j\}^m_{j=1}$ equals $\rho^{-1}$. For this line, the maximum rejection probability is achieved when the remaining $m - m_1 = 1$ term equals $\gamma = 0$. Similarly, in the $m_0 = 1$ panel, exactly one of $\{\gamma_j\}^m_{j=1}$ is equal to 0. Each of the lines shows that the maximum rejection probability is achieved when all the remaining terms equal $\rho^{-1}$.
\end{eg}

\begin{re}\label{rmk:guess_candidate}
As discussed in Appendix \ref{sec:4.6}, when $m\longrightarrow \infty$, under certain regularity conditions, the maximum rejection probability $p_m(c; k, \rho)$ for any $c, k$ and $\rho>0$ is achieved when $\sigma_{m+1} = 1$, $(k-1)$ of $\{\sigma_j\}_{j=1}^m$ are $0$, and the remaining $(m-k+1)$ of $\{\sigma_j\}_{j=1}^m$ are equal to $\rho^{-1}$. 
This is indeed the case for both Examples \ref{eg:1} and \ref{eg:2}.
Based on this intuition, we can first use this configuration of $\{\sigma_j\}_{j=1}^{m+1}$ to get a candidate threshold $c$ such that $\bP_0 [|T_m| > c]=\alpha$, where $\alpha$ is the significance level of interest. 
We can then verify whether this candidate threshold $c$ achieves the desired type-I error control by solving the optimization in Theorem \ref{thm:max_rej_prob}.
\end{re}

Finally, we report the critical values for different numbers of control clusters $m$ and heterogeneity parameters $\rho$ for $\alpha = 0.01$ and $\alpha = 0.05$ in Table \ref{tab:cv} when $k = 1$. We report the critical values for $k = 2$ in the supplementary material.

\begin{table}[!ht]
    \centering
    \caption{Critical values for different values of $\alpha$, $m$ and $\rho$ for $k = 1$.}
    \label{tab:cv}
    {\small \setlength\tabcolsep{3.9pt}

\resizebox{\textwidth}{!}{
\begin{tabular}[t]{lrrrrrrrrrrrr}
\toprule
\multicolumn{1}{c}{ } & \multicolumn{6}{c}{$\alpha = 0.01$} & \multicolumn{6}{c}{$\alpha = 0.05$} \\
\cmidrule(l{3pt}r{3pt}){2-7} \cmidrule(l{3pt}r{3pt}){8-13}
$\rho \backslash m$ & 5 & 10 & 15 & 20 & 25 & 50 & 5 & 10 & 15 & 20 & 25 & 50\\
\midrule
0.2 & 2.256 & 1.216 & 0.972 & 0.858 & 0.791 & 0.656 & 1.360 & 0.846 & 0.700 & 0.628 & 0.584 & 0.492\\
0.4 & 2.762 & 1.657 & 1.417 & 1.311 & 1.251 & 1.137 & 1.666 & 1.153 & 1.021 & 0.959 & 0.923 & 0.853\\
0.6 & 3.445 & 2.204 & 1.944 & 1.832 & 1.769 & 1.652 & 2.078 & 1.534 & 1.401 & 1.340 & 1.305 & 1.239\\
0.8 & 4.220 & 2.796 & 2.502 & 2.376 & 2.306 & 2.177 & 2.545 & 1.946 & 1.803 & 1.739 & 1.702 & 1.633\\
1.0 & 5.044 & 3.408 & 3.074 & 2.932 & 2.852 & 2.707 & 3.041 & 2.373 & 2.215 & 2.145 & 2.105 & 2.030\\
\addlinespace
1.2 & 5.896 & 4.033 & 3.654 & 3.492 & 3.403 & 3.238 & 3.556 & 2.807 & 2.633 & 2.555 & 2.511 & 2.428\\
1.4 & 6.767 & 4.664 & 4.238 & 4.056 & 3.955 & 3.771 & 4.081 & 3.247 & 3.053 & 2.967 & 2.919 & 2.828\\
1.6 & 7.649 & 5.300 & 4.825 & 4.622 & 4.510 & 4.305 & 4.613 & 3.689 & 3.476 & 3.381 & 3.328 & 3.228\\
1.8 & 8.539 & 5.939 & 5.413 & 5.189 & 5.065 & 4.839 & 5.150 & 4.134 & 3.900 & 3.796 & 3.738 & 3.628\\
2.0 & 9.436 & 6.580 & 6.003 & 5.758 & 5.622 & 5.373 & 5.690 & 4.581 & 4.325 & 4.212 & 4.148 & 4.029\\
\addlinespace
2.2 & 10.336 & 7.223 & 6.594 & 6.326 & 6.179 & 5.908 & 6.233 & 5.028 & 4.751 & 4.628 & 4.559 & 4.430\\
2.4 & 11.240 & 7.867 & 7.186 & 6.896 & 6.736 & 6.443 & 6.778 & 5.476 & 5.177 & 5.045 & 4.971 & 4.831\\
2.6 & 12.146 & 8.512 & 7.778 & 7.466 & 7.294 & 6.978 & 7.325 & 5.925 & 5.604 & 5.462 & 5.382 & 5.233\\
2.8 & 13.055 & 9.157 & 8.371 & 8.036 & 7.851 & 7.513 & 7.873 & 6.374 & 6.031 & 5.879 & 5.794 & 5.634\\
3.0 & 13.965 & 9.804 & 8.964 & 8.607 & 8.409 & 8.049 & 8.421 & 6.824 & 6.458 & 6.296 & 6.205 & 6.035\\
\addlinespace
3.2 & 14.876 & 10.450 & 9.557 & 9.177 & 8.968 & 8.584 & 8.971 & 7.274 & 6.886 & 6.714 & 6.617 & 6.437\\
3.4 & 15.789 & 11.097 & 10.150 & 9.748 & 9.526 & 9.120 & 9.521 & 7.725 & 7.313 & 7.132 & 7.029 & 6.838\\
3.6 & 16.702 & 11.744 & 10.744 & 10.319 & 10.085 & 9.655 & 10.072 & 8.175 & 7.741 & 7.549 & 7.441 & 7.240\\
3.8 & 17.616 & 12.392 & 11.338 & 10.890 & 10.643 & 10.191 & 10.623 & 8.626 & 8.169 & 7.967 & 7.854 & 7.642\\
4.0 & 18.531 & 13.040 & 11.932 & 11.462 & 11.202 & 10.727 & 11.175 & 9.077 & 8.597 & 8.385 & 8.266 & 8.043\\
\addlinespace
4.2 & 19.447 & 13.688 & 12.526 & 12.033 & 11.760 & 11.262 & 11.727 & 9.528 & 9.025 & 8.803 & 8.678 & 8.445\\
4.4 & 20.362 & 14.336 & 13.121 & 12.604 & 12.319 & 11.798 & 12.279 & 9.979 & 9.453 & 9.221 & 9.091 & 8.847\\
4.6 & 21.279 & 14.985 & 13.715 & 13.176 & 12.878 & 12.334 & 12.832 & 10.430 & 9.882 & 9.639 & 9.503 & 9.248\\
4.8 & 22.195 & 15.633 & 14.310 & 13.747 & 13.437 & 12.869 & 13.385 & 10.882 & 10.310 & 10.057 & 9.915 & 9.650\\
5.0 & 23.112 & 16.282 & 14.904 & 14.319 & 13.996 & 13.405 & 13.938 & 11.333 & 10.738 & 10.476 & 10.328 & 10.052\\
\bottomrule
\end{tabular}
}
}
\end{table}

\subsection{Closed-form valid $t$-test when $k = 1$} \label{sec:theory-prob-k=1}

In this subsection, we consider the case where $k = 1$ in Assumption \ref{assu:relative_heter}. This means that $\sigma_{m+1} \leq \rho \sigma_j$ for $j = 1, \ldots, m$, i.e., the treated standard deviation is smaller than or equal to $\rho$ times each of the control standard deviations. \par 

In this case,  we can obtain closed-form solutions for the maximum rejection probability when the threshold $c$ is large enough. Equivalently, this corresponds to testing at a significance level that is not ``too large.'' 
The theorem is stated as follows:

\begin{thm}\label{thm:max_rej_prob_k_1_simple_closed_form}
For any given $m \ge 4$, $\rho > 0$, $c > \sqrt{\frac{3(m-1)}{m(m-3)}}$, define:
\begin{align*}
	\ohh  
	& \equiv 
        \max\left\{
        \frac{3(m\rho^2 + 1)}{ m\rho^2 + \kappa + 1}, 
        \frac{2\kappa + 3}{\kappa + 1}
        \right\}
        \notag \\
    & \qquad 
    + 
    \frac{1-\tau}{1-\tau +\min\{ (1 - 2\tau)\kappa \underline Z - \frac{1}{2} , 0\}}
    -\frac{m\kappa}{m\rho^2 + \kappa+1}
	-  1,
\end{align*}
where $\kappa \equiv \frac{m c^2}{m-1}$ and $\tau \equiv \frac{\kappa+1}{m\kappa}$ are determined by $(m, c)$,  and $\underline{Z} \equiv
    \frac{1}{2 \cdot \max\{m\rho^2 + 1, \kappa + 2\}}$ is determined by $(m, c, \rho)$. Under the above conditions and notations, the following statements hold.
\begin{itemize}
    \item[(i)] $\ohh$ is decreasing in $c$, and $\lim_{c\to \infty} \ohh < 0$. 
    \item[(ii)] Let $\underline{c}_{m, \rho} \equiv \inf\{c > \sqrt{\frac{3(m-1)}{m(m-3)}}: \ohh \le 0 \}$, which must be finite.
    Then, for any $c\ge \underline{c}_{m, \rho}$, the maximum rejection probability $p_m(c; 1, \rho)$ under Assumption \ref{assu:relative_heter} with $k=1$ and the given $(\rho, m, c)$ has the following equivalent form:
	\begin{align*}
		p_m(c; 1, \rho) = 
		\bP \left[ |t_{m-1}|\sqrt{\rho^2+\frac{1}{m}} > c \right].
	\end{align*}
\end{itemize}
\end{thm}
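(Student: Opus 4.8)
The plan is to identify the single variance configuration that attains the maximum and to verify, once $c$ is large enough, that no other admissible configuration can exceed it. The candidate is the configuration in which all control standard deviations coincide and the treated cluster sits exactly on the relative-heterogeneity boundary, i.e.\ $\sigma_1 = \cdots = \sigma_m$ with $\sigma_{m+1} = \rho\sigma_{(1)}$, equivalently $\gamma_1 = \cdots = \gamma_m = \rho^{-1}$ (the point flagged in Remark \ref{rmk:guess_candidate}). First I would compute the rejection probability there directly from Assumption \ref{assu:normal}: taking $\delta=0$, $\sigma_j\equiv 1$, $\sigma_{m+1}=\rho$, the numerator $\nob_{m+1}-\overline\nob_m$ is $\cN(0,\rho^2+\tfrac1m)$ and is independent of $S_m^2$, while $(m-1)S_m^2\sim\chi^2_{m-1}$. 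Hence $T_m/\sqrt{\rho^2+\tfrac1m}\sim t_{m-1}$, so $\bP_0[|T_m|>c]=\bP[|t_{m-1}|\sqrt{\rho^2+\tfrac1m}>c]$; call this value $V$, the claimed closed form.

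Next I would invoke Theorem \ref{thm:max_rej_prob} to reduce the supremum over $\cSS$ to finitely many one-dimensional problems. Because $k=1$ forces $m_0=0$, the maximum rejection probability equals $\max\{p_{m,0}(c),\ \max_{0\le m_1\le m}\widetilde{p}_m(c;1,\rho;m_1,0)\}$. For $m_1=m$ there is no free coordinate and the value is exactly $V$. For each $m_1<m$ the free coordinates range over $\gamma\in[\rho^{-1},\infty)$, and the key observation is that at the left endpoint $\gamma=\rho^{-1}$ all $m$ ratios equal $\rho^{-1}$, so $\overline{p}_m(c;\rho,\rho^{-1};m_1,0)=V$ for every $m_1$. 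Thus it suffices to prove two inequalities: (a) $\gamma\mapsto\overline{p}_m(c;\rho,\gamma;m_1,0)$ attains its supremum on $[\rho^{-1},\infty)$ at the left endpoint, so that $\widetilde{p}_m(c;1,\rho;m_1,0)=V$; and (b) $p_{m,0}(c)\le V$.

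For (b) I would use the $t$-distribution form of $p_{m,0}(c)$ in \eqref{eq:max_rej_prob_zero}: once $c$ is large the constraint $R_m(c)<j\le m$ leaves only $j=m$, and a short computation gives $p_{m,0}(c)=\bP[|t_{m-1}|>\sqrt{m}\,c]$; since $\sqrt{m}\,c\ge c/\sqrt{\rho^2+\tfrac1m}$ is equivalent to $m\rho^2+1\ge1$, this tail is no larger than $V$. The substance of the theorem is (a), the monotonicity of the rejection probability in the common ratio $\gamma$. Here I would differentiate the integral representation of $\overline{p}_m$ from Lemma \ref{lem:rej_prob_integral_main} and feed in the derivative identities of Lemma \ref{lem:first_second_deriv_gamma}, together with the localization of the root $\theta_{m+1}\in[-m-\max_{i}\gamma_i^2,\,-m]$ and the lower bound $\underline Z$. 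The quantity $\ohh$ is engineered as an upper bound for a suitably normalized form of $\tfrac{\partial}{\partial\gamma}\overline{p}_m$ on $[\rho^{-1},\infty)$; the condition $\ohh\le0$ then forces the derivative to be nonpositive, giving the monotone decrease and hence the endpoint maximizer. Establishing this derivative bound, by controlling the ratios of integrals uniformly over $\gamma\ge\rho^{-1}$ and over $m_1$, is the main obstacle.

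Finally, for part (i) I would treat $\ohh$ as an explicit function of $\kappa=\tfrac{mc^2}{m-1}$ (with $\tau$ and $\underline Z$ likewise determined by $\kappa$ at fixed $\rho$) and argue monotonicity term by term in $c$. Sending $c\to\infty$ gives $\kappa\to\infty$, $\tau\to\tfrac1m$, and $\kappa\underline Z\to\tfrac12$; the max-term tends to $2$, the fractional term to $\tfrac{m-1}{m-2}$, and $\tfrac{m\kappa}{m\rho^2+\kappa+1}\to m$, so $\lim_{c\to\infty}\ohh=2+\tfrac{m-1}{m-2}-m-1=-\tfrac{(m-1)(m-3)}{m-2}<0$ precisely for $m\ge4$. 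Combined with monotonicity in $c$, this makes $\underline{c}_{m,\rho}=\inf\{c:\ohh\le0\}$ finite, and part (ii) then applies for all $c\ge\underline{c}_{m,\rho}$.
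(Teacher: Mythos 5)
Your architecture matches the paper's: reduce via Theorem \ref{thm:max_rej_prob} to the one-dimensional problems with $m_0=0$, identify the corner $\gamma_1=\cdots=\gamma_m=\rho^{-1}$ as the maximizer, verify the exact $t_{m-1}$ law there, compare against $p_{m,0}(c)$, and for part (i) compute the limit of $\ohh$ term by term (your value $-\tfrac{(m-1)(m-3)}{m-2}$ agrees with the paper's $\tfrac{(1-m)(m-3)}{m-2}$). However, there is a genuine gap at exactly the point you flag as ``the main obstacle'': you assert that $\ohh$ is ``engineered as an upper bound for a suitably normalized form of $\tfrac{\partial}{\partial\gamma}\overline{p}_m$'' and that $\ohh\le 0$ forces the derivative to be nonpositive, but you do not establish this, and it is the entire analytic content of the theorem. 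In the paper this is a two-stage argument: first, Lemma \ref{lem:sign_deriv_gamma} (not Lemma \ref{lem:first_second_deriv_gamma}, which you cite --- that lemma gives the second-derivative sign at interior critical points and is used for Theorem \ref{thm:form_maximizer_main}, not here) shows via the integral representation, the localization of $\theta_{m+1}$, and the positivity Lemma \ref{lem:integral_product_bound} that a certain explicit functional $H_m<0$ makes the constrained partial derivative over $\gamma_1^2$ strictly negative, so that any configuration with a coordinate strictly above $\rho^{-1}$ can be improved; second, Theorem \ref{thm:max_rej_prob_k_1_closed_form} and Lemmas \ref{lemma:lb_h1}--\ref{lemma:lb_h2} bound that functional uniformly over $\gamma>\rho^{-1}$ by the $\gamma$-free quantity $\ohh$. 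Note also that the mechanism is not literal monotonicity of $\gamma\mapsto\overline{p}_m(c;\rho,\gamma;m_1,0)$: the perturbation in Lemma \ref{lem:sign_deriv_gamma} holds $\theta_{m+1}$ fixed and moves a single coordinate (compensated by another), which excludes every $\gamma>\rho^{-1}$ as a local maximizer rather than proving the whole curve decreases.

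Two smaller points. First, your justification of $p_{m,0}(c)=\bP[|t_{m-1}|>\sqrt{m}\,c]$ --- that the index set $\{j: R_m(c)<j\le m\}$ in \eqref{eq:max_rej_prob_zero} collapses to $\{m\}$ --- requires $c>(m-1)/\sqrt{m}$, which is strictly stronger than the hypothesis $c>\sqrt{\tfrac{3(m-1)}{m(m-3)}}$ for $m\ge 5$; the correct statement is that the maximum over $j$ is \emph{attained} at $j=m$ in this range (a result of Bakirov and Sz\'ekely that the paper cites directly), not that the other indices are excluded. The inequality $\sqrt{m}\,c\ge c/\sqrt{\rho^2+1/m}$ is fine. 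Second, the monotonicity of $\ohh$ in $c$ is not quite ``term by term'' automatic for the middle fraction: one has to check that $(1-2\tau)\kappa\underline{Z}-\tfrac12$ is increasing in $\kappa$ and split on whether the $\min$ is active, as the paper does in Lemma \ref{lem:mon_h2}. Neither of these is fatal, but the deferred derivative bound is where the proof actually lives.
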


Theorem \ref{thm:max_rej_prob_k_1_simple_closed_form} gives a closed-form solution for the maximum rejection probability under the relative heterogeneity assumption with $k=1$ and any given $\rho > 0$. The maximum rejection probability is achieved when $\sigma_1 = \sigma_2 = \ldots = \sigma_m = \rho^{-1}$ and $\sigma_{m+1} = 1$. 
Importantly, for any given $m$ and $\rho>0$, 
the cutoff $\underline{c}_{m, \rho}$ can be easily computed numerically, due to the monotonicity of the function $\ohh$ in $c$. 
Accordingly, 
Theorem \ref{thm:max_rej_prob_k_1_simple_closed_form} gives a closed-form critical value of our $t$-test for any significance level less than or equal to  
\begin{align}\label{eq:max_sig_level_k1}
    \underline{\alpha}_{m, \rho} \equiv \bP \left[ |t_{m-1}|\sqrt{\rho^2+\frac{1}{m}} > \underline{c}_{m, \rho} \right].
\end{align}
That is, for any significance level $\alpha \in (0, \underline{\alpha}_{m, \rho}]$, a valid critical value for our two-sided $t$-test is 
$\sqrt{\rho^2 + m^{-1}} \ t_{m-1,1- \frac{\alpha}{2}}$, where $t_{ m-1,1- \frac{\alpha}{2}}$ is the $(1-\frac{\alpha}{2})$ quantile of the $t$-distribution with degree of freedom $m-1$.

Table \ref{tab:largest_alpha} reports the largest significance level $\underline{\alpha}_{m, \rho}$ such that our two-sided $t$-test has a simple closed-form expression for the critical value, under Assumption \ref{assu:relative_heter} with $k=1$ and various values of $(m, \rho)$. 
Note that our $t$-test can handle all values of $m\ge 2$, $k \in \{1, \ldots, m\}$ and $\rho \ge 0$, but it generally involves one-dimensional optimization as described in Section \ref{sec:theory-max-rej-prob-detail}. 
We can conduct a similar theoretical investigation to simplify the optimization of the rejection probability under Assumption \ref{assu:relative_heter} with $k \geq 2$. However, in this case, the potential maximizers cannot be reduced to a single point, so one-dimensional numerical optimization is still required. We relegate the detailed discussion under Assumption
\ref{assu:relative_heter} with $k\geq 2$ to the supplementary material.

\begin{table}[!ht]
    \centering
    \caption{Largest significance value {in \eqref{eq:max_sig_level_k1} such that our two-sided $t$-test has a closed-form critical value under various values of $(m, \rho)$.} 
    }
    \label{tab:largest_alpha}
   
\begin{tabular}[t]{lrrrrrrrrr}
\toprule
$m \ \backslash \ \rho$   & 0.1 & 0.2 & 0.5 & 1 & 2 & 3 & 4 & 5 & 10\\
\midrule
5 & 3.950 & 4.418 & 6.438 & 9.456 & 11.866 & 12.505 & 12.647 & 12.693 & 12.770\\
10 & 3.911 & 4.510 & 7.313 & 9.026 & 9.404 & 9.435 & 9.480 & 9.486 & 9.504\\
20 & 3.684 & 4.615 & 6.337 & 6.711 & 6.829 & 6.850 & 6.874 & 6.866 & 6.874\\
25 & 3.491 & 4.442 & 6.068 & 6.278 & 6.354 & 6.350 & 6.357 & 6.365 & 6.389\\
50 & 3.768 & 4.663 & 5.308 & 5.330 & 5.391 & 5.437 & 5.435 & 5.436 & 5.441\\
\bottomrule
\end{tabular}

\end{table}

\subsection{Power of the \textit{t}-test} \label{sec:theory-power}

We now investigate the power of the proposed $t$-test under the alternative hypothesis $\overline{H}_1$ in \eqref{eq:null_normal} with $\delta \ne 0$. 
Without loss of generality, we assume that $\delta>0$. 
The theorem below gives a lower bound for the power of the $t$-test. 
\begin{thm}\label{thm:power}
    Suppose that Assumption \ref{assu:normal} holds for some $\delta>0$, 
    and define the $t$-statistic $T_m$ as in \eqref{eq:pop-test-1}. 
    Then, for any finite $c>0$,  
    \begin{align*}
        \bP[|T_m|>c] \ge \bP[T_m>c] \ge 1 - \frac{1}{\delta^2} \left[ \sigma^2_{m+1} + \frac{2(c^2 + m^{-1})}{m} \sum_{j=1}^m \sigma_j^2 \right].
    \end{align*}
\end{thm}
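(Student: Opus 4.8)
The first inequality is immediate: since $c>0$, the event $\{T_m>c\}$ is contained in $\{|T_m|>c\}$, so $\bP[|T_m|>c]\ge\bP[T_m>c]$. My plan for the second inequality is to bound the complementary event $\{T_m\le c\}$ by a second-moment (Chebyshev-type) argument after rewriting it in a convenient form, using $\bP[T_m>c]=1-\bP[T_m\le c]$.

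First I would write $\psi_{m+1}=\delta+\xi$ with $\xi\sim\cN(0,\sigma_{m+1}^2)$ independent of the controls. Assuming $S_m>0$ (which holds almost surely whenever at least one control variance is positive; the degenerate all-zero-control case is checked separately and makes $\bP[|T_m|>c]=1$, so the bound holds trivially), I multiply the defining inequality $\tfrac{\psi_{m+1}-\overline{\psi}_m}{S_m}\le c$ through by $S_m$ to obtain the equivalent event
\[
\{T_m\le c\}=\{\,\delta\le c S_m+\overline{\psi}_m-\xi\,\}=\{R\ge\delta\},\qquad R\equiv c S_m+\overline{\psi}_m-\xi .
\]
Since $\delta>0$, we have $\{R\ge\delta\}\subseteq\{R^2\ge\delta^2\}$, so Markov's inequality applied to $R^2$ gives $\bP[T_m\le c]\le \bE[R^2]/\delta^2$. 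It then only remains to bound $\bE[R^2]$ by the bracketed expression.

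Expanding $\bE[R^2]=c^2\bE[S_m^2]+\bE[\overline{\psi}_m^{\,2}]+\bE[\xi^2]+2c\,\bE[S_m\overline{\psi}_m]-2c\,\bE[S_m\xi]-2\,\bE[\overline{\psi}_m\xi]$, the two cross terms involving $\xi$ vanish because $\xi$ is independent of the controls and has mean zero. For the single remaining cross term I would apply Cauchy--Schwarz and then AM--GM, $2c\,\bE[S_m\overline{\psi}_m]\le 2c\sqrt{\bE[S_m^2]\,\bE[\overline{\psi}_m^{\,2}]}\le c^2\bE[S_m^2]+\bE[\overline{\psi}_m^{\,2}]$; this is precisely where the factor of $2$ on the control terms in the stated bound originates. (A sign-flip symmetry $\psi_j\mapsto-\psi_j$ for $1\le j\le m$ in fact shows $\bE[S_m\overline{\psi}_m]=0$ outright, which would sharpen the constant, but merely bounding it suffices for the claim.) Substituting the elementary moments $\bE[\xi^2]=\sigma_{m+1}^2$, $\bE[\overline{\psi}_m^{\,2}]=\tfrac{1}{m^2}\sum_{j=1}^m\sigma_j^2$, and $\bE[S_m^2]=\tfrac{1}{m}\sum_{j=1}^m\sigma_j^2$ then yields $\bE[R^2]\le \sigma_{m+1}^2+\tfrac{2(c^2+m^{-1})}{m}\sum_{j=1}^m\sigma_j^2$, which is the desired bound.

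The one computation worth isolating is $\bE[S_m^2]=\tfrac1m\sum_{j=1}^m\sigma_j^2$: writing $\sum_{j=1}^m(\psi_j-\overline{\psi}_m)^2=\sum_j\psi_j^2-m\,\overline{\psi}_m^{\,2}$ and taking expectations (all control means are zero) gives $\tfrac{m-1}{m}\sum_j\sigma_j^2$, and dividing by $m-1$ gives the claim. I do not expect any genuine obstacle: the only mildly delicate points are choosing the auxiliary variable $R$ so that its mean is aligned with the threshold, and recognizing that \emph{bounding} rather than annihilating the $S_m\overline{\psi}_m$ cross term is what produces the advertised factor of $2$. Everything else is routine moment bookkeeping.
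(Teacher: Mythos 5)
Your proposal is correct and follows essentially the same route as the paper: rewrite $\{T_m\le c\}$ as $\{cS_m+\overline{\psi}_m-(\psi_{m+1}-\delta)\ge\delta\}$, pass to the squared event, apply Markov, and bound the second moment — the paper absorbs the cross term via $(a+b)^2\le 2(a^2+b^2)$ rather than your explicit Cauchy--Schwarz/AM--GM step, but these are the same inequality and yield the identical constant. Your aside that $\bE[S_m\overline{\psi}_m]=0$ by sign symmetry (which would sharpen the factor of $2$) is a valid observation not exploited by the paper either.
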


The lower bound of power in Theorem \ref{thm:power} increases with $\delta$, where $\delta$ corresponds to $\sqrt{n}(\mu_1 - \mu_0)$ in our large-sample inference for a finite number of clusters as shown in Theorem \ref{thm:size_asymp}. 
If the gap between the means of the treated and control clusters is bounded away from zero, then, as the sample size $n\longrightarrow \infty$, the power of our $t$-test with any finite critical value will converge to $1$. 
In addition, 
Theorem \ref{thm:power} also provides a rough power estimate when we have some information about the treatment effect size and the variances for the treated and control clusters.

\section{Simultaneous inference} \label{sec:simu}

Recall that Assumption \ref{assu:relative_heter} involves two parameters $\rho$ and $k$. They represent the allowable degree of relative heterogeneity between treated and control clusters. Specifically, larger values of $\rho$ and $k$ correspond to a greater degree of allowable heterogeneity.
In practice, specifying the values of $(\rho, k)$ might be challenging, and it is often desirable to perform multiple analyses for a wide range of values for $(\rho, k)$. 
This raises the question of how to interpret the results of the analysis under different values of $(\rho, k)$.
In this section, we show that the analyses over all possible values of $(\rho, k)$ can be simultaneously valid, without the need of any adjustment due to multiple analyses. 
Moreover, the analysis results can be easily visualized and interpreted. 
Similar simultaneous inference has been used for sensitivity analysis of observational studies \citep{cuili2025wp,wuli2025jasa}.

We first introduce several notations to denote the true relative heterogeneity of the standard deviations between treated and control clusters. 
For any $k \in \{1, \ldots, m\}$, 
define 
\begin{align}\label{eq:rho_star}
    \rho^\star_{k} 
    \equiv
    \inf \big\{\rho\ge 0: \sigma_{m+1} \le \rho \sigma_{(k)}\big\}
    = 
    \begin{cases}
        \frac{\sigma_{m+1}}{\sigma_{(k)}}, & \text{if } \sigma_{(k)} > 0, \\
        0, & \text{if } \sigma_{(k)} = 0 \text{ and } \sigma_{m+1} = 0, \\
        \infty & \text{if } \sigma_{(k)} = 0 \text{ and } \sigma_{m+1} > 0, 
    \end{cases}
\end{align}
where 
$\sigma_{m+1}$ and $\sigma_{(k)}$ 
in \eqref{eq:rho_star} denote the true standard deviation of the treated cluster and that of the control cluster at rank $k$. 
Consequently, the values of $\rho_k^\star$ reflect the true relative heterogeneity between the treated and control clusters. In particular, larger values of $\rho_k^\star$ indicate greater heterogeneity between the treated and control clusters.

We now explain the key idea for our simultaneous inference procedure. 
In Section \ref{sec:theory}, we test the null hypothesis in \eqref{eq:null_normal} of no mean difference between the treated and control clusters under Assumption \ref{assu:relative_heter} for some $k \in \{1, \ldots, m\}$ and $\rho \ge 0$, 
which is equivalent to that $\rho_k^\star \le \rho$ with $\rho_k^\star$ defined as in \eqref{eq:rho_star}. 
Importantly, we can reinterpret the $t$-test in Algorithm \ref{algo:t-test} as a valid test for the null hypothesis of $\rho_k^\star \le \rho$ about the true relative heterogeneity in \eqref{eq:rho_star}, under the assumption of no mean difference between treated and control clusters (i.e., $\delta = 0$). 
By standard test inversion, we can construct confidence intervals for $\rho_k^\star$, which must be one-sided confidence intervals with unbounded right endpoints and thus provide essentially lower bounds on the true relative heterogeneity, under the assumption that $\delta=0$. 
Moreover, these confidence intervals will be simultaneously valid across all $k \in \{1, \ldots, m\}$. 
We summarize the results in the following theorem, followed by a discussion of its practical implications and interpretation.
For any $z\in \mathbb{R}$, 
we write $(z, \infty] \equiv (z, \infty)\cup \{\infty\}$ and analogously $[z, \infty] \equiv [z, \infty)\cup \{\infty\}$.

\begin{thm}\label{thm:simu_ci_rho}
    Let $\alpha \in (0, 1)$. Suppose Assumption \ref{assu:relative_heter} holds with $\delta = 0$. 
    Let $T_m$ be the test statistic as defined in \eqref{eq:pop-test-1}. 
    \begin{itemize}
        \item[(i)] For any $k \in \{1, \ldots, m\}$, an $(1-\alpha)$-confidence set for $\rho^\star_k$ in \eqref{eq:rho_star} is
        \begin{align}\label{eq:interval_rho_k}
        \mathcal{I}_{m, \alpha, k} 
        \equiv \{
        \rho \ge 0: 
        p_m(|T_m|; k, \rho) > \alpha
        \}
        \cup \{\infty\}
        = 
        (\hat{\rho}_{m, \alpha, k}, \infty ]
        \text{ or }
        [\hat{\rho}_{m, \alpha, k}, \infty ],
        \end{align}
        which must be an one-sided confidence interval with $\hat{\rho}_{m, \alpha, k} \equiv \inf \mathcal{I}_{m, \alpha, k}$. 

        \item[(ii)] 
        The confidence intervals in (i) are simultaneously valid across all $k \in \{1, \ldots, m\}$, in the sense that 
        \begin{align*}
            \bP\left[ \rho^\star_k \in \mathcal{I}_{m, \alpha, k} \text{ for all } k \in \{1, \ldots, m\} \right] \ge 1 - \alpha. 
        \end{align*}
    \end{itemize}
\end{thm}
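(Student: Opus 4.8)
The plan is to prove both parts by combining the monotonicity of the maximum rejection probability $p_m(c; k, \rho)$ in $\rho$ with the super-uniformity of a single ``oracle'' $p$-value. Throughout, write $G(c) \equiv \bP_0[|T_m| > c]$ for the survival function of $|T_m|$ evaluated at the \emph{true} standard deviations $\{\sigma_j\}_{j=1}^{m+1}$ (recall $\delta = 0$). The guiding observation is that $G$ serves as a common lower bound for the whole family of $p$-value functions, which is exactly what lets us avoid any multiplicity correction.

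For Part (i), I would first establish the interval structure. The feasible set $\cSS$ in \eqref{eq:S_k_rho} is nested in $\rho$: $\mathcal{S}_m(k,\rho) \subseteq \mathcal{S}_m(k,\rho')$ whenever $\rho \le \rho'$. Taking the supremum in \eqref{eq:max_rej_prob} then shows that $p_m(c; k, \rho)$ is nondecreasing in $\rho$ for each fixed $c$ and $k$, so $\{\rho \ge 0 : p_m(|T_m|; k, \rho) > \alpha\}$ is an upper set, i.e., an interval $(\hat{\rho}_{m,\alpha,k}, \infty)$ or $[\hat{\rho}_{m,\alpha,k}, \infty)$ with $\hat{\rho}_{m,\alpha,k} = \inf \mathcal{I}_{m,\alpha,k}$; adjoining $\{\infty\}$ gives \eqref{eq:interval_rho_k}. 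For validity at a fixed $k$, note that by the definition of $\rho^\star_k$ in \eqref{eq:rho_star}, whenever $\rho^\star_k < \infty$ the true configuration lies in $\mathcal{S}_m(k, \rho^\star_k)$, so the supremum defining $p_m$ yields $p_m(c; k, \rho^\star_k) \ge G(c)$ for all $c$. Since $G(|T_m|)$ is super-uniform, $\bP[p_m(|T_m|; k, \rho^\star_k) \le \alpha] \le \bP[G(|T_m|) \le \alpha] \le \alpha$; and when $\rho^\star_k = \infty$ we have $\rho^\star_k \in \mathcal{I}_{m,\alpha,k}$ by construction. Coverage therefore holds in either case.

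For Part (ii), the crucial point is that the bound $p_m(|T_m|; k, \rho^\star_k) \ge G(|T_m|)$ holds simultaneously across all $k$ with the \emph{same} right-hand side $G(|T_m|)$, since the true configuration is feasible for the constraint $(k, \rho^\star_k)$ by the very definition of $\rho^\star_k$. Hence the failure event for each $k$ is swallowed by one common event: $\{\rho^\star_k \notin \mathcal{I}_{m,\alpha,k}\} \subseteq \{G(|T_m|) \le \alpha\}$ (the case $\rho^\star_k = \infty$ contributes nothing, as above). Taking the union over $k$ preserves this inclusion, so $\bP[\exists\, k : \rho^\star_k \notin \mathcal{I}_{m,\alpha,k}] \le \bP[G(|T_m|) \le \alpha] \le \alpha$, and passing to the complement yields the claim.

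I expect the main obstacle, and indeed the heart of the argument, to be Part (ii): recognizing that all $m$ failure events are dominated by the single oracle event $\{G(|T_m|) \le \alpha\}$, so that no multiplicity correction is needed. This works precisely because each constraint $(k, \rho^\star_k)$ is tight at the true configuration, making $G$ a common lower bound for the entire family $\{p_m(\cdot\,; k, \rho^\star_k)\}_{k}$. The remaining work is routine bookkeeping around the degenerate cases $\sigma_{(k)} = 0$ (so that $\rho^\star_k \in \{0, \infty\}$) and around possible atoms of $|T_m|$ in the super-uniformity step, neither of which poses a genuine difficulty.
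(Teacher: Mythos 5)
Your proposal is correct and follows essentially the same route as the paper's proof: monotonicity of $p_m(c;k,\rho)$ in $\rho$ (from the nestedness of $\cSS$) gives the one-sided interval structure, feasibility of the true configuration at $(k,\rho^\star_k)$ gives $p_m(c;k,\rho^\star_k)\ge \bP_0[|T_m|>c]$ for every $k$ simultaneously, and super-uniformity of the common oracle tail probability absorbs all $m$ failure events without any multiplicity adjustment. No gaps.
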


The interpretation of Theorem \ref{thm:simu_ci_rho} is as follows.
If there is no mean difference between treated and control clusters (i.e., $\delta=0$), then we need to believe that, with $(1-\alpha)$ confidence level, the relative heterogeneity $\rho_k^\star$ in \eqref{eq:rho_star} must be greater than (or equal to) $\hat{\rho}_{m, \alpha, k}$, for all $k \in \{1, \ldots, m\}$. 
That is, with $(1-\alpha)$ confidence level, the treated standard deviation $\sigma_{m+1}$ must be at least $\hat{\rho}_{m, \alpha, k}$ times 
the control standard deviation $\sigma_{(k)}$ at rank $k$, for all $k \in \{1, \ldots, m\}$. 
If we question any of these statements on the true relative heterogeneity, then the assumption of $\delta=0$ is likely to fail, or equivalently there is likely significant mean difference between the treated and control clusters. 
Obviously, the larger the values of $\{\hat{\rho}_{m, \alpha, k}\}^m_{k=1}$, the stronger the evidence for a nonzero treatment effect.
In practice, we can easily visualize these confidence intervals by, say, plotting $k$ against $\hat{\rho}_{m, \alpha, k}$; this is illustrated in the two empirical applications in Section \ref{sec:emp} ahead.

We now discuss the computation for the confidence intervals in Theorem \ref{thm:simu_ci_rho}. 
By the definition of the $p$-value $p_m(c; k, \rho)$ in \eqref{eq:max_rej_prob} and the fact that the set $\cSS$ in \eqref{eq:S_k_rho} increases as $k$ or $\rho$ increase, 
$p_m(c; k, \rho)$ must be nondecreasing in both $k$ and $\rho$. 
The monotonicity in $\rho$ then explains 
the equivalent one-sided form of the set in \eqref{eq:interval_rho_k}.
Thus, we can use the bisection method to find the thresholds $\{\hat{\rho}_{m, \alpha, k}\}^m_{k=1}$. 
In addition, the monotonicity in $k$ implies the threshold $\hat{\rho}_{m, \alpha, k}$ is nonincreasing with $k$. 
Hence, we can first find $\hat{c}_{m, \alpha, 1}$, and then sequentially use $\hat{c}_{m, \alpha, k-1}$ as an upper bound for $\hat{\rho}_{m, \alpha, k}$ in the bisection method, for $2\le k \le m$. 

\begin{re}
    \citet{hagemann2024wp} also reported thresholds like  $\{\hat{\rho}_{m, \alpha, k}\}$ for his test. 
    However, his test only works when $k=1$ or 2, while our test works for any $k \in \{1, \ldots, m\}$. 
    In addition, as shown in Theorem \ref{thm:simu_ci_rho}, these thresholds can be interpreted as lower confidence bounds for the true relative heterogeneity, and they are simultaneously valid, indicating that the confidence bounds for all $k$ are indeed ``free lunch'' added to those for $k=1$ or $2$. 
\end{re}

\section{Simulations} \label{sec:sims}
In this section, we consider two sets of numerical exercises to compare the performance of the $t$-test against other methods for conducting inference with a single treated cluster.

\subsection{Simulation design 1: normal means} \label{sec:sims-1}

The first simulation design generates data using normal distributions. The data generating process (DGP) is as follows. We generate normal random variables as in Assumption \ref{assu:normal}. In particular, $\{\psi_j\}^m_{j=1}$ are normally distributed with mean 0, $\psi_{m+1}$ has mean $\delta$, $\sigma_{m+1}^2 = \rho^2$, and $\{\sigma_j^2\}^m_{j=1}$ are specified below.

We consider two different DGPs to generate the random variables for the control clusters.
\begin{description}
    \item[DGP 1.] $\sigma_j^2 = 1$ for $j = 1, \ldots, m$.
    \item[DGP 2.] $\sigma_j^2 = 1 + \frac{j - 1}{m - 1}$ for $j = 1, \ldots, m$.
\end{description}

\textbf{DGP 1} 
is the homogeneous design in which all control random variables have the same varaiance. We introduce heterogeneity in \textbf{DGP 2} and allow the variance to vary between 1 and 2. The goal is to test the two-sided hypothesis \eqref{eq:null_normal}, under various heterogeneity parameter $\rho$, cluster size $m$, and significance level $\alpha$. We compare the performance of our $t$-test with \citet{hagemann2024wp} in terms of size and power. This is because both of our tests work with a single treated cluster and a finite number of control clusters, and are valid under a certain relative heterogeneity assumption.

In the simulations, we consider $\rho$ from 0.1 to 2 with step size 0.1, $m \in \{5, 10, 25, 50\}$, and $\alpha \in \{0.01, 0.05, 0.1\}$. We focus on two-sided tests and choose $\rho$ in our $t$-test and \citet{hagemann2024wp} so that it matches the one used in the DGP, i.e., the relative heterogeneity assumption is always correctly specified. The results are based on 500,000 Monte Carlo replications. Figure \ref{fig:sim1-rho-a05} shows the result for $\alpha = 0.05$ under \textbf{DGP 1} when $k = 1$. The figure plots the rejection rate against various values of $\rho$. Each facet represents a specific combination of $m$ and $\delta$. $\delta = 0$ corresponds to the results under the null. $\delta > 0$ corresponds to the results under the alternative. Note that \citet{hagemann2024wp}'s result does not appear or only partially appear in some of the facets. This is because \citet{hagemann2024wp} may not necessarily be able to find a weight for his rearrangement test such that his test can be shown to be valid for some combinations of $m$, $\rho$ and $k$.

Figure \ref{fig:sim1-rho-a05} shows that our test performs favorably when compared to \citet{hagemann2024wp}. Both of our tests control size. The $t$-test is more powerful, especially when $\rho$ is small. As $\rho$ increases above 1, the power difference decreases, but we are still more powerful.

\begin{figure}
    \centering
    \includegraphics[scale=1]{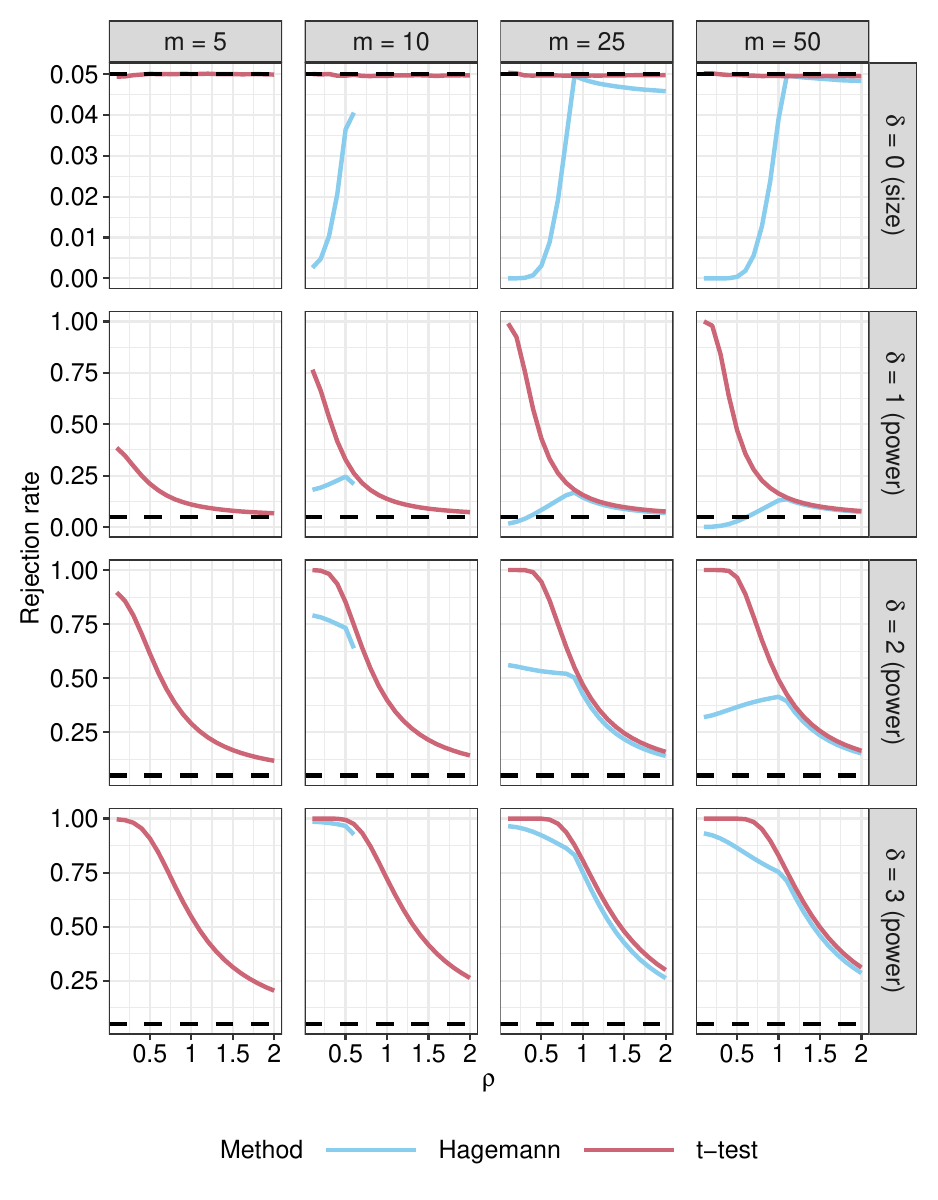}
    \caption{Probability of rejection against heterogeneity parameter $\rho$ for various cluster size $m$ and alternatives $\delta$ at $\alpha = 0.05$ for \textbf{DGP 1} of simulation design 1.}
    \label{fig:sim1-rho-a05}
\end{figure}

Next, Figure \ref{fig:sim1-dgp2-rho-a05} reports the results for \textbf{DGP 2} that includes more heterogeneity among the control random variables for $\alpha = 0.05$ and $k=1$. As predicted by the theory, our $t$-test becomes more conservative when more heterogeneity are included. Our $t$-test continues to control size and has power against the alternative. 
Moreover, our test outperforms \citet{hagemann2024wp}'s test in most cases.
In the supplementary material, we report the results for $k=1$ and $k=2$ for both DGP at various significance levels.

\begin{figure}
    \centering
    \includegraphics[scale=1]{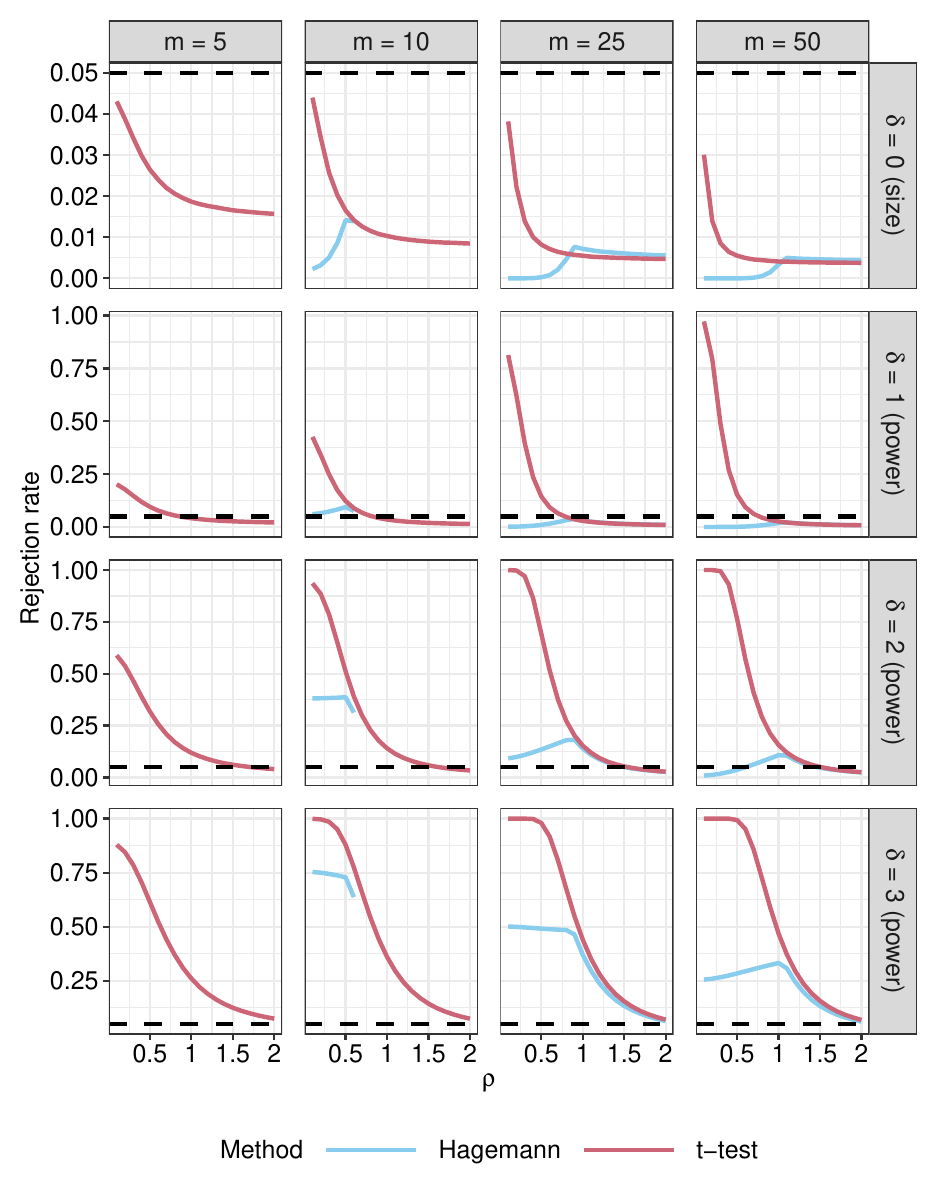}
    \caption{Probability of rejection against heterogeneity parameter $\rho$ for various cluster size $m$ and alternatives $\delta$ at $\alpha = 0.05$ for \textbf{DGP 2} of simulation design 1.}
    \label{fig:sim1-dgp2-rho-a05}
\end{figure}

\subsection{Simulation design 2: two-way fixed effects} \label{sec:sims-2}

Next, we conduct a simulation exercise with two-way fixed effects as in Example \ref{eg:twfe}. We consider a design that is based on the ones used in \citet{conleytaber2011restat} and \citet{hagemann2024wp}. As before, let $\cJ_0 \equiv \{1, \ldots, m\}$ be the control clusters and $\cJ_1 \equiv \{m+1\}$ be the treated cluster.  Let $\cT \equiv \{1, \ldots, 10\}$ be the total number of time periods. Let $t_0 = 6$ be the intervention period and  $D_{jt} \equiv \ind[j \in \cJ_1 \text{ and } t > t_0]$ for $j \in \cJ$ and $t \in \cT$. Each simulated data is generated from the following two-way fixed effects model
\begin{align}
    \label{eq:simulation2}
    \begin{split}
        Y_{jt} & = \alpha_t + \gamma_j + \cp D_{jt}  + U_{jt}, 
    \end{split}
\end{align}
where $\alpha_t = 1$ for all $t \in \cT$ and $\gamma_j = 2 \ind[j \leq \frac{m}{2}] - 1$ for all $j \in \cJ$. We test the null of $\cp = 0$ and consider $\cp \in \{0, 1, 2, 3\}$ when generating the data. We consider the following DGPs based on  model \eqref{eq:simulation2}. The first four DGPs are the same as the ones considered in \citet{hagemann2024wp}. The remaining two DGPs are based on the error distributions considered in \citet{conleytaber2011restat}.
\begin{description}
    \item[DGP 1.] Generate $U_{jt}  = \eta U_{j, t-1} + \sigma^{\ind[j = m + 1]} V_{jt}$ where $\eta = 0.5$ and $V_{jt}$ are independently distributed standard normal random variables.
    \item[DGP 2.] Same as DGP 1, but use $\eta = 0.1$.
    \item[DGP 3.] Same as DGP 1, but use $\eta = 0.9$.
    \item[DGP 4.] Same as DGP 1, but $V_{jt}$ follows a normalized $\chi_2^2$ distribution with mean 0 and variance 1.
    \item[DGP 5.] Same as DGP 1, but use $V_{jt} \sim \text{Uniform}[-\sqrt{3}, \sqrt{3}]$.
\end{description}

For each DGP, we are interested in testing the two-sided hypothesis of 
\[
    H_0: \cp = 0
    \quad \text{ vs } \quad 
    H_1 : \cp \neq 0.
\]
In addition, we consider simulations with $m \in \{5, 10, 25, 50\}$, $
\sigma
\in \{0.1, 0.5, 1, 2\}$, and $\alpha = 0.05$. We report the rejection rate curves based on 5,000 Monte Carlo simulations. For each DGP, we consider conducting inference using the following four methods:

\begin{description}
    \item[t] The $t$-test proposed in our paper.
    \item[H] The rearrangement test by \citet{hagemann2024wp}.
    \item[CT] The procedure of \citet{conleytaber2011restat} that assumes homogeneity.
    \item[FP] The bootstrap procedure of \citet{fermanpinto2019restat}.
\end{description}

Figures \ref{fig:sim2-dgp1-a05} and \ref{fig:sim2-dgp2-a05} show the results for DGPs 1 and 2 under $\alpha = 0.05$ across different numbers of clusters $m$ and true relative heterogeneity $\sigma$.
We set the relative heterogeneity parameter $\rho$ for both our and \citet{hagemann2024wp}'s method in all the DGPs to be  the corresponding $\sigma$, so that the relative heterogeneity assumption is correctly specified. 
We also incorporate the correct value of $\sigma$ in applying FP to specify the heterogeneity.

\begin{figure}[!ht]
    \centering
    \includegraphics[scale=1]{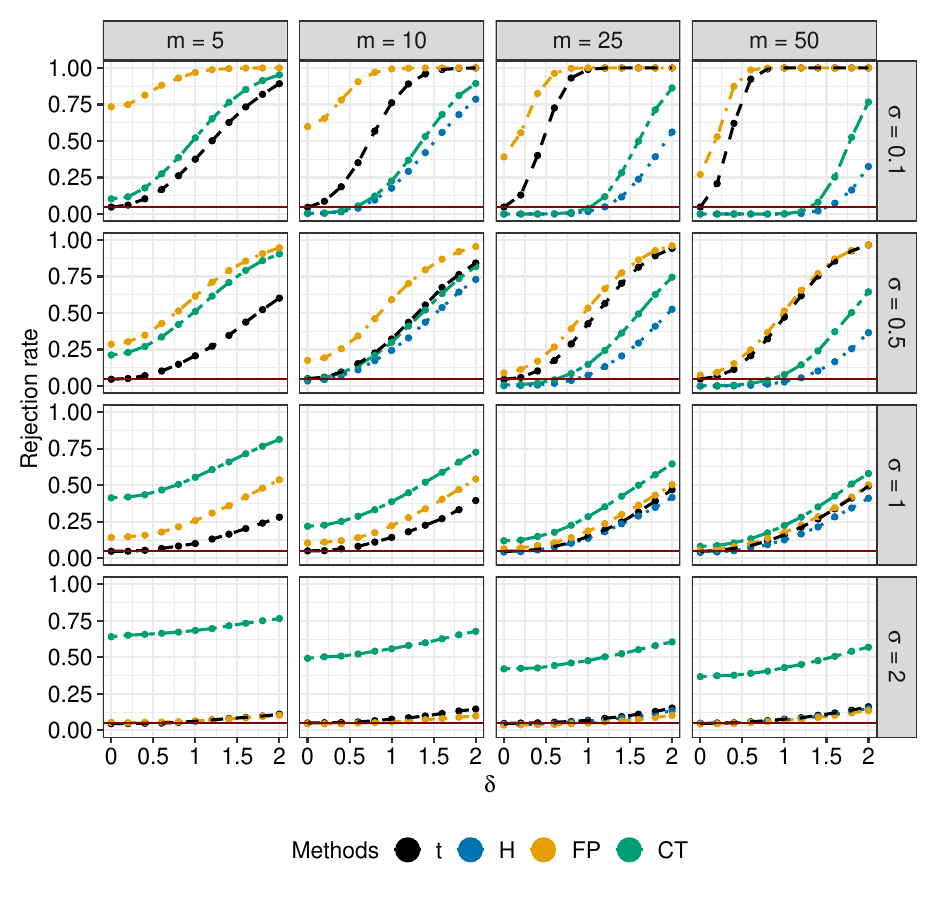}
    \caption{Simulation results for \textbf{DGP 1} of simulation design 2 at $\alpha = 0.05$.}
    \label{fig:sim2-dgp1-a05}
\end{figure}

\begin{figure}[!ht]
    \centering
    \includegraphics[scale=1]{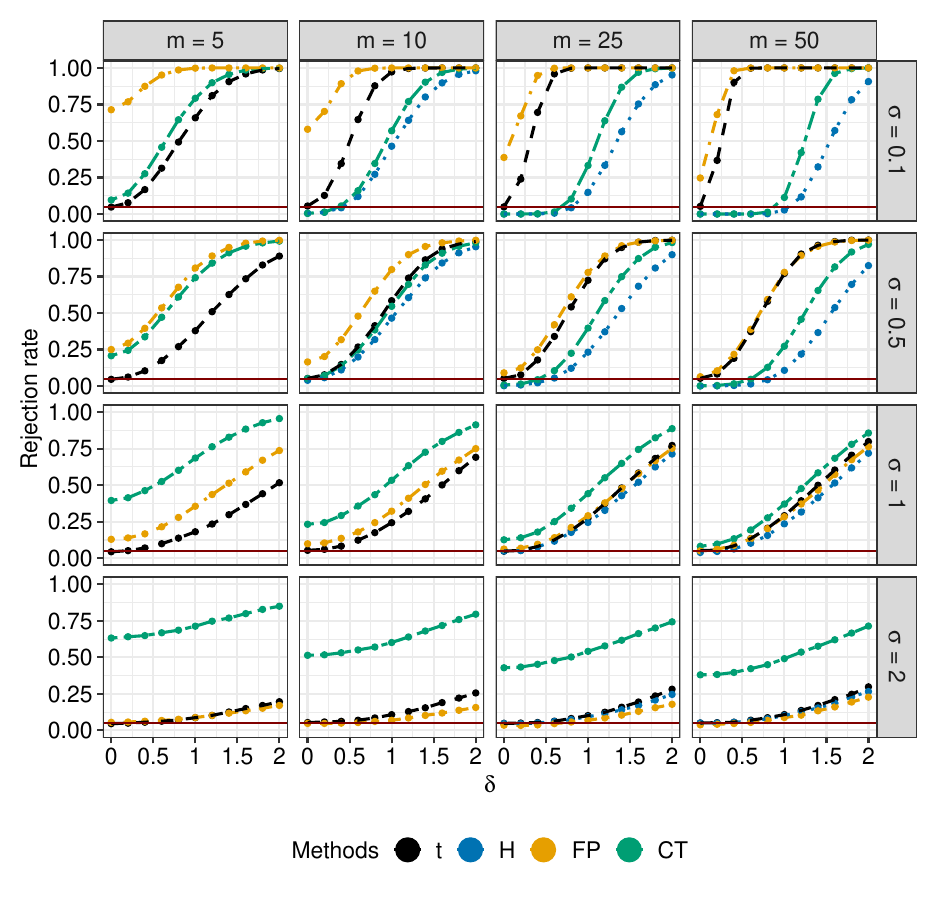}
    \caption{ Simulation results for \textbf{DGP 2} of simulation design 2 at $\alpha = 0.05$.}
    \label{fig:sim2-dgp2-a05}
\end{figure}

It can be seen that our $t$-test performs well and compares favorably to other methods. 
In particular, the CT test exhibits inflated Type I error rates when the number of control clusters $m$ is small and $\sigma$ is large. This inflation arises because the validity of CT relies on the assumption of an infinite number of clusters and homogeneous variances between the  treated and control clusters. The FP test also shows an inflated Type I error when $m$ is small, with the inflation becoming more pronounced as $\sigma$ decreases. This again reflects its reliance on the asymptotic validity under an infinite number of clusters. Notably, we assume that FP has access to the true heterogeneity between the treated cluster and each control cluster, which is a stronger assumption than those required by our $t$-test or \citet{hagemann2024wp}'s rearrangement test. Both our test and that of \citet{hagemann2024wp} successfully control Type I error. In contrast, our test is applicable regardless of the number of control clusters and demonstrates higher power across different levels of heterogeneity. The results for DGPs 3 to 5 are contained in the supplementary material.

\section{Empirical applications} \label{sec:emp}
In this section, we illustrate the $t$-test proposed in this paper with two recent empirical applications. For each of the empirical applications, we report two sets of results. First, we try to find the largest $\rho$ such that the null hypothesis is rejected when $k = 1$ among various regression specifications of the empirical examples. Among those specifications where the null can be rejected at some $\rho$,
we conduct simultaneous inference as discussed in Section \ref{sec:simu}.

\subsection{Empirical application 1: \citet{depewswensen2022ej}}

\citet{depewswensen2022ej} examines the impact of the 1911 New York State Sullivan Act on mortality rate. The Sullivan Act required New York citizens to obtain a permit and license to carry concealable weapon. We consider the following two-way fixed effects model:
	\begin{equation}
	    \label{eq:ej-app-twfe-1}
		\text{Outcome}_{jt}
		= \beta\text{Treated}_{jt}
		+ \alpha_t
		+ \gamma_j
		+ \epsilon_{jt},
	\end{equation}
where $\text{Treated}_{jt}$ equals 1 if state $j$ is New York and $t$ is the post-treatment period, $\alpha_t$ is the year fixed effects, $\gamma_j$ is the state fixed effects, and $\epsilon_{jt}$ is the residual term. They consider the following four outcome variables: homicide rate, suicide rate, gun suicide rate, and non-gun suicide rate. 

They report cluster-robust standard errors and $p$-value from the Wild cluster bootstrap. There is only nine control clusters in this empirical application. Hence, \citet{hagemann2024wp}'s test may not be applicable because there can be no weights such that his rearrangement test is valid for some relative heterogeneity parameters $\rho$. \par 

We are interested in testing
\begin{equation}
    \label{eq:application1-null}
    H_0: \beta = 0
    \quad 
    \text{ vs }
    \quad 
    H_1 : \beta \neq 0.
\end{equation}

Table \ref{tab:application1} summarizes the estimation and inference results of the baseline model described in \eqref{eq:ej-app-twfe-1} for the four outcomes described above. The point estimates and the Wild cluster bootstrap $p$-values are taken from Table 2 of \citet{depewswensen2022ej}.\footnote{The point estimates and $p$-values that \citet{depewswensen2022ej} report are based on weighted least squares.}

\begin{table}[!ht]
    \centering\setlength\tabcolsep{3.3pt}
    \caption{Regression and inference results for empirical application 1 with $k=1$.}
    \label{tab:application1}
    
\begin{tabular}[t]{ccccc}
\toprule
 & Homicide rate & Suicide rate & Gun suicide rate & Non-gun suicide rate\\
\midrule
Point estimate & -0.04 & -0.85 & -1.42 & 0.54\\
\cmidrule{1-5}
Wild $p$-value & 0.87 & 0.29 & 0 & 0.28\\
\cmidrule{1-5}
\addlinespace[0.3em]
\multicolumn{5}{l}{\textbf{Rearrangement test (Hagemann, 2024)}}\\
\hspace{1em}$\alpha = 0.01$ & NA & NA & NA & NA\\
\cmidrule{1-5}
\hspace{1em}$\alpha = 0.05$ & NA & NA & NA & NA\\
\cmidrule{1-5}
\hspace{1em}$\alpha = 0.1$ & NA & NA & NA & NA\\
\cmidrule{1-5}
\addlinespace[0.3em]
\multicolumn{5}{l}{\textbf{$t$-test}}\\
\hspace{1em}$\alpha = 0.01$ & NA & NA & 0.65 & NA\\
\cmidrule{1-5}
\hspace{1em}$\alpha = 0.05$ & NA & NA & 1.03 & NA\\
\cmidrule{1-5}
\hspace{1em}$\alpha = 0.1$ & NA & NA & 1.3 & NA\\
\bottomrule
\end{tabular}

\end{table}

The remaining rows of Table \ref{tab:application1} report the largest $\rho$ such that the null can be rejected for significance levels $\alpha = 0.01$, $0.05$, and $0.1$.
For entries with ``NA'' in the table, it refers to the situation where there does not exist a $\rho \geq 0$ such that the null is rejected.
If we focus on the gun suicide rate under $\alpha = 0.1$, it means that the null can be rejected when the variance in New York is at most $1.3^2 \approx 1.69$ larger than the smallest variance in the control clusters. 
The true relative level of heterogeneity such that the null can be rejected may be related to the population size of the various states. In particular, New York state has a larger population than the other control states in 1911 \citep{fred2025website}: the population of New York state was 9.249 million, and the largest control state was Massachusetts with 3.383 million and the smallest control state was Vermont with 0.358 million.

Next, we conduct simultaneous inference for our $t$-test as in Section \ref{sec:simu}. We focus on the outcome ``gun suicide rate'' as we can find $\rho$ such that the null is rejected when $k=1$. Figure \ref{fig:application1} reports the results for $\alpha \in \{0.01, 0.05, 0.1\}$ on the range of $\rho$ for various $k$ such that the null can be rejected. The shaded area represents the simultaneous confidence region for true relative heterogeneity when there is no treatment effect.

\begin{figure}[!ht]
    \centering
    \includegraphics[scale=1]{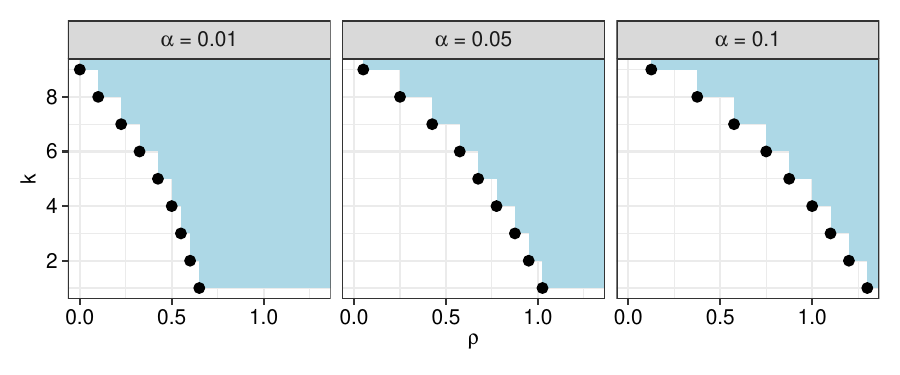}
    \caption{Simultaneous inference for empirical application 1 at different significance levels $\alpha$.}
    \label{fig:application1}
\end{figure}

\subsection{Empirical application 2: \citet{hiraiwaetal2024wp}}

\citet{hiraiwaetal2024wp} examine whether firms enforce noncompete agreements (NCAs). NCAs are restrictions that prevent workers from joining or starting competing firms. In 2020, Washington state passed a law that prohibits NCAs for workers earning below a certain threshold. The threshold was \$100,000 per year in 2020 and \$101,390 per year in 2021. They study the impact of being above or below the threshold on employment for different earnings bins. They consider the following two-way fixed effects model in their panel B of Table 1:
\[
		\log \text{Emp}_{b,t}
		= \beta \text{Treated}_{b,t}
		+ \alpha_t
		+ \gamma_b
		+ \epsilon_{b,t},
	\]
where $\text{Emp}_{b,t}$ is the employment count of bin $b$ at year $t$, $\text{Treated}_{b,t}$ equals 1 for the focal bin in the focal year, $\alpha_t$ and $\gamma_b$ are the year and bin fixed effects, and $\epsilon_{b,t}$ is the error term. They have 30 clusters (income bins). They use one-sided randomization inference and found that there is no significant effect.  

We apply our method and \citet{hagemann2024wp}. We are interested in testing the two-sided hypothesis as in \eqref{eq:application1-null}. Table \ref{tab:application2} summarizes our results. Each column contains the result for a specific focal year and definition of the treatment variable. For instance, column 2 refers to focal year 2020, and the treatment variable is defined to be equal to 1 when the income bin is just above the threshold, i.e., \$100-101.389k. The remaining columns are defined similarly. For each regression specification, we find the largest $\rho$ such that the null is rejected.  For entries with ``NA'' in the table, it refers to the situation where there does not exist a $\rho \geq 0$ such that the null is rejected.

\begin{table}[!ht]
    \centering
    \caption{Regression and inference results for empirical application 2 with $k=1$.}
    \label{tab:application2}
    
\begin{tabular}[t]{ccccc}
\toprule
\multicolumn{1}{c}{ } & \multicolumn{2}{c}{Focal year 2020} & \multicolumn{2}{c}{Focal year 2021} \\
\cmidrule(l{3pt}r{3pt}){2-3} \cmidrule(l{3pt}r{3pt}){4-5}
 & \makecell{Above \\ threshold \\ \$100--101.389k} & \makecell{Below \\ threshold \\ \$98.61--100k} & \makecell{Above \\ threshold \\ \$101.39--102.78k} & \makecell{Below \\ threshold \\ \$100--101.39k}\\
\midrule
Point estimate & -0.001 & -0.002 & -0.038  & -0.024\\
\cmidrule{1-5}
\addlinespace[0.3em]
\multicolumn{5}{l}{\textbf{Rearrangement test (Hagemann, 2024)}}\\
\hspace{1em}$\alpha = 0.01$ & NA & NA & NA & NA\\
\cmidrule{1-5}
\hspace{1em}$\alpha = 0.05$ & NA & NA & NA & NA\\
\cmidrule{1-5}
\hspace{1em}$\alpha = 0.1$ & NA & NA & NA & NA\\
\cmidrule{1-5}
\addlinespace[0.3em]
\multicolumn{5}{l}{\textbf{$t$-test}}\\
\hspace{1em}$\alpha = 0.01$ & NA & NA & 0.31 & 0.14\\
\cmidrule{1-5}
\hspace{1em}$\alpha = 0.05$ & NA & NA & 0.45 & 0.25\\
\cmidrule{1-5}
\hspace{1em}$\alpha = 0.1$ & NA & NA &  0.56 & 0.33\\
\bottomrule
\end{tabular}

\end{table}

Next, we conduct simultaneous inference for our $t$-test as in the first empirical application. We focus on the two variables for focal year 2021 because we can find $\rho$ such that the null is rejected when $k=1$. In particular, we report the range of $\rho$ for various $k$ as long as there exists $\rho$ such that the null can be rejected. Figure \ref{fig:application2} reports the results for $\alpha \in \{0.01, 0.05, 0.1\}$.

\begin{figure}[!ht]
    \centering
    \includegraphics[scale=1]{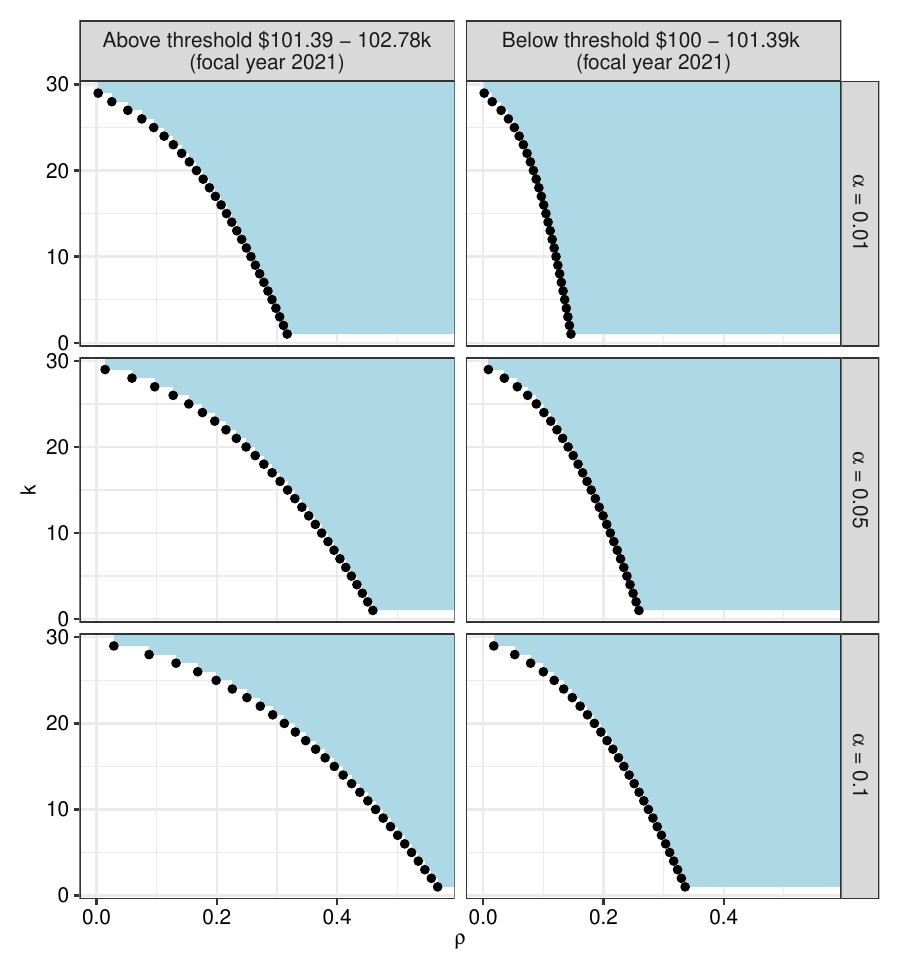}
    \caption{Simultaneous inference for  empirical application 2 for different outcomes and significance levels $\alpha$.}
    \label{fig:application2}
\end{figure}

\section{Conclusion} \label{sec:conclusion}
In this paper, we propose a $t$-test to conduct inference with a single treated cluster under a certain relative heterogeneity assumption. The $t$-statistic and the associated critical values are easy to calculate in many empirically relevant applications. We show that our test performs favorably when compared to other methods in the literature. We also show that our test is valid with weaker assumptions.

\appendix

\section{Appendix for the main text}

\subsection{Lemmas for Section \ref{sec:theory-max-rej-prob-detail}} \label{app:theory-max-rej-prob-detail}

We first state the lemma that the maximum rejection probability in \eqref{eq:max_rej_prob} must be obtained at some finite values of $\{\sigma_j\}_{j=1}^{m+1}$. 
This implies that we do not need to worry about the boundary case where some of $\{\sigma_j\}_{j=1}^{m+1}$ approach infinity.

\begin{lem}\label{lemma:max_at_finite}
For any $k \in \{1, \ldots, m\}$ and $c\ne m^{-1/2}$,\footnote{Similar to the footnote for Theorem \ref{thm:size_asymp} and as discussed in Remark \ref{re:c_greater_1_over_sq_m}, we will consider values of $c$ greater than $m^{-1/2}$ for most conventional significance levels.} 
the maximum rejection probability $p_m(c; k, \rho)$ in \eqref{eq:max_rej_prob} must be obtained at some $(\sigma_1, \ldots, \sigma_m, \sigma_{m+1}) \in \cSS$. 
\end{lem}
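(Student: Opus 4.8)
The plan is to exploit the scale invariance of the $t$-statistic to replace the apparently unbounded constraint set $\cSS$ by a compact one, and then to invoke the extreme value theorem. First I would write $\psi_j = \sigma_j Z_j$ with $Z_1, \ldots, Z_{m+1}$ i.i.d.\ standard normal, so that the statistic $T_m$ in \eqref{eq:pop-test-1} is homogeneous of degree $0$ in $\bsig \equiv (\sigma_1, \ldots, \sigma_{m+1})$: scaling every $\sigma_j$ by a common $a>0$ multiplies numerator and denominator equally and leaves $T_m$ unchanged. Hence $\bP_0[|T_m|>c]$ is constant along rays through the origin. Since the constraint $\sigma_{m+1}\le \rho\,\sigma_{(k)}$ is itself homogeneous, $\cSS$ is a cone, and therefore
\[
	p_m(c;k,\rho)=\sup_{\bsig\in\cSS}\bP_0[|T_m|>c]
	=\sup_{\bsig\in K}\bP_0[|T_m|>c],
	\qquad
	K\equiv \cSS\cap\{\,\|\bsig\|_2=1\,\}.
\]
The set $K$ is compact: the order statistic $\sigma_{(k)}$ is a continuous function of $\bsig$, so $\cSS$ is closed, and $K$ is its intersection with the unit sphere in $\mathbb{R}^{m+1}_{\ge 0}$.

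The crux is then to show that $\bsig\mapsto \bP_0[|T_m|>c]$ is continuous on $K$. Two features make this work. First, at every $\bsig\in K$ at least one control variance is positive: if $\sigma_{m+1}>0$ the constraint forces $\sigma_{(k)}>0$, and if $\sigma_{m+1}=0$ then $\|\bsig\|_2=1$ forces some control $\sigma_j>0$; consequently $S_m>0$ almost surely and $T_m$ is well defined. Second, and this is where the hypothesis $c\ne m^{-1/2}$ enters, $|T_m|$ has no atom at $c$ for any $\bsig\in K$: by symmetry of $T_m$ under $\delta=0$, the only possible atom of $|T_m|$ sits at $m^{-1/2}$ (it arises exactly when $\sigma_{m+1}=0$ and a single control variance is positive, as noted around Remark \ref{re:c_greater_1_over_sq_m}), so $\bP_0[|T_m|=c]=0$ whenever $c\ne m^{-1/2}$. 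Given these two facts, continuity follows from a coupling argument: for $\bsig^{(n)}\to\bsig^\ast$ in $K$ I would hold the $Z_j$ fixed, so that $S_m(\bsig^{(n)})^2\to S_m(\bsig^\ast)^2>0$ a.s.\ and hence $T_m(\bsig^{(n)})\to T_m(\bsig^\ast)$ almost surely; since $|T_m(\bsig^\ast)|\ne c$ a.s., the indicators $\mathbf{1}\{|T_m(\bsig^{(n)})|>c\}$ converge a.s.\ to $\mathbf{1}\{|T_m(\bsig^\ast)|>c\}$, and bounded convergence gives $\bP_0[|T_m(\bsig^{(n)})|>c]\to\bP_0[|T_m(\bsig^\ast)|>c]$.

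With continuity on the compact set $K$ in hand, the extreme value theorem guarantees that the supremum is attained at some $\bsig^\ast\in K\subseteq\cSS$, which is exactly the claim. I expect the continuity step to be the main obstacle: the delicate points are ruling out an atom of $|T_m|$ at the threshold $c$, which is precisely why $c\ne m^{-1/2}$ is assumed, and checking that the almost-sure coupling convergence survives as control variances degenerate toward the boundary of the nonnegative orthant. By contrast, the scale-invariance reduction and the compactness of $K$ are routine once the homogeneity of both the objective and the constraint is observed.
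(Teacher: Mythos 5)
Your proof is correct, but it takes a genuinely different route from the paper's. The paper proves attainment via a maximizing-sequence argument (Lemma \ref{lemma:sup_exist}): it extracts a sequence approaching the supremum, normalizes either by $\sigma_{n,m+1}$ or by $\max_i\sigma_{ni}$ depending on whether the ratios $\sigma_{ni}/\sigma_{n,m+1}$ stay bounded, applies Bolzano--Weierstrass, and passes to the limit using the absence of an atom at the threshold; the degenerate branch in which the limit has $\sigma_{m+1}=0$ is then closed by citing \citet{Bakirov:2006aa} for attainment of the supremum over that slice. You instead compactify once and for all: scale invariance of $T_m$ and of the constraint lets you restrict to $K=\cSS\cap\{\|\bsig\|_2=1\}$, and continuity of $\bsig\mapsto\bP_0[|T_m|>c]$ on $K$ plus the extreme value theorem finishes the argument uniformly, with no case split and no appeal to Bakirov for the $\sigma_{m+1}=0$ slice. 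That is a cleaner packaging of the same analytic content; the price is that you must establish continuity at \emph{every} point of $K$, whereas the paper only needs convergence along the particular subsequences it constructs. Both arguments ultimately rest on the same two facts, which you correctly isolate: (i) every $\bsig\in K$ has at least one positive control standard deviation, so $S_m>0$ almost surely and the coupled convergence of $T_m$ goes through as control variances degenerate; and (ii) $|T_m|$ has no atom at $c$ when $c\ne m^{-1/2}$. You assert (ii) rather than prove it --- it is exactly the content of the paper's Lemma \ref{lem:rej_prob_point_mass}, whose proof reduces the event to $\sum_i\lambda_i\zeta_i^2=0$ for the eigenvalues $\lambda_i$ of $\bm{D}\bm{V}\bm{D}$ and shows some $\lambda_i$ is nonzero unless $c=m^{-1/2}$ --- so in a full write-up you would either reproduce that computation or cite it. Two trivial loose ends: the origin $\bsig=\bm{0}$ lies in $\cSS$ but not on any ray through $K$, so you should note that the rejection probability there is degenerate (equal to $0$) and cannot be the supremum; and when $\rho=0$ the cone collapses to $\{\sigma_{m+1}=0\}$, where your argument still applies verbatim.
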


Next, we investigate possible maximizers for the rejection probability. From Section \ref{sec:sigma_treat_0}, when $\sigma_{m+1} = 0$, 
the rejection probability $\bP_0[|T_m|>c] $ of our $t$-test is at most $p_{m,0}(c)$ defined as in \eqref{eq:max_rej_prob_zero}, which can be achieved at some values of $\{\sigma_j\}^m_{j=1}$ that satisfy our relative heterogeneity assumption, regardless of the values of $\rho$ and $k$.
Thus, in the following, we consider only the case where $\sigma_{m+1}>0$. 

Recall the integral representation of the rejection probability in Lemma \ref{lem:rej_prob_integral_main}. Our goal is to optimize \eqref{eq:rej_prob_integral_main} over $\gamma_j \equiv \frac{\sigma_j}{\sigma_{m+1}}$, $j = 1, \ldots, m$, under the constraint imposed by the relative heterogeneity assumption.
Such an optimization can be numerically challenging even for moderate $m$. 
The following is a key lemma 
that can greatly reduce the set of possible maximizers for the rejection probability. 

\begin{lem}\label{lem:first_second_deriv_gamma}
	Consider the integral representation in \eqref{eq:rej_prob_integral_main} for the rejection probability. 
Fix the values of $c>0$, $\gamma_3, \ldots, \gamma_m \ge 0$ and $\theta_{m+1} < 0$, and view the rejection probability $\bP_0[|T_m|>c] \equiv \overline{p}_m(c; \gamma_1, \gamma_2, \gamma_3, \ldots, \gamma_m)$ as a function of only $\gamma_1$, where $\gamma_2$ is uniquely determined by $c, \gamma_1, \gamma_3, \ldots, \gamma_m$ and $\theta_{m+1}$. 
    If $c\ne m^{-1/2}$, $\gamma_1$ and $\gamma_2$ are both positive and they are not equal, i.e., $\gamma_1 \ne \gamma_2$,  and the first derivative of $\overline{p}_m(c; \gamma_1, \gamma_2, \gamma_3, \ldots, \gamma_m)$ over $\gamma_1^2$ is zero, then the second derivative of $\overline{p}_m(c; \gamma_1, \gamma_2, \gamma_3, \ldots, \gamma_m)$ over $\gamma_1^2$ must be positive.
\end{lem}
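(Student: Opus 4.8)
The plan is to work directly with the integral representation in Lemma \ref{lem:rej_prob_integral_main} and to exploit the fact that $\theta_{m+1}$ is held fixed. Holding $\theta_{m+1}$ fixed means the upper limit $s_* \equiv |\theta_{m+1}|$ of the integral, and in particular the location of the endpoint singularity of the integrand at which $-g_c(-s_*)=0$, does not move as $\gamma_1$ varies. Writing $x \equiv \gamma_1^2$, $y\equiv \gamma_2^2$ and $G(s;x) \equiv -g_c(-s)$, the object of study is
\[
    F(x) \equiv \overline{p}_m = \frac{1}{\pi}\int_0^{s_*} \frac{s^{(m-1)/2}}{G(s;x)^{1/2}}\,\mathrm{d}s,
\]
where $y=y(x)$ is pinned down by the constraint $g_c(\theta_{m+1})=0$, that is $G(s_*;x)=0$, viewed as an identity in $x$.

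First I would record the algebraic structure of $G$ in the two free variables. Inspecting \eqref{eq:g_fun_char}, $-g_c(-s)$ is, as a function of $(\gamma_1^2,\gamma_2^2)$ with the remaining $\gamma_j$ fixed, affine in each and symmetric, so
\[
    G(s;x) = A(s) + B(s)(x+y) + D(s)\,xy
\]
for coefficient functions $A,B,D$ that I would read off explicitly (the symmetry $x\leftrightarrow y$ forces equal coefficients on $x$ and $y$). Differentiating the constraint $A(s_*)+B(s_*)(x+y)+D(s_*)xy=0$ implicitly yields $y'$ and $y''$ as explicit rational expressions in $x,y$ and the starred coefficients $B_*\equiv B(s_*)$, $D_*\equiv D(s_*)$. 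The crucial computation is then to substitute these into the total derivatives $G_x$ and $G_{xx}$ along the curve and to show that both collapse onto the single function
\[
    \Phi(s) \equiv D_* B(s) - B_* D(s), \qquad \Phi(s_*)=0,
\]
namely $G_x(s) = \tfrac{x-y}{L_*}\Phi(s)$ and $G_{xx}(s) = \tfrac{2M_*}{L_*^2}\Phi(s)$, where $L_*\equiv B_*+D_* x$ and $M_*\equiv B_*+D_* y$. This proportionality is the heart of the argument, and I expect the bookkeeping here to be the main obstacle.

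Granting this, the conclusion follows cleanly. Differentiating under the integral sign is legitimate because $G_x(s)\to 0$ as $s\to s_*$ (a consequence of $\Phi(s_*)=0$, which in turn reflects that the singularity is held fixed), so the differentiated integrands retain the integrable $(s_*-s)^{-1/2}$ behaviour; I would make this rigorous by splitting off a neighbourhood of $s_*$. Then
\[
    F''(x) = -\frac{1}{2\pi}\int_0^{s_*} s^{(m-1)/2}\Big[\frac{G_{xx}}{G^{3/2}} - \frac{3}{2}\frac{G_x^2}{G^{5/2}}\Big]\,\mathrm{d}s.
\]
The hypothesis $F'(x)=0$, combined with $G_x\propto\Phi$ together with $x\ne y$ and $L_*\ne0$, forces $\int_0^{s_*} s^{(m-1)/2}G^{-3/2}\Phi\,\mathrm{d}s=0$; since $G_{xx}$ is also proportional to $\Phi$, this annihilates the first term of $F''$, leaving
\[
    F''(x) = \frac{3(x-y)^2}{4\pi L_*^2}\int_0^{s_*}\frac{s^{(m-1)/2}\,\Phi(s)^2}{G(s)^{5/2}}\,\mathrm{d}s.
\]
On $(0,s_*)$ we have $s>0$, $G>0$ and $\Phi^2\ge0$, so the right-hand side is strictly positive as soon as $\Phi\not\equiv0$ and $L_*\ne0$. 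The remaining loose ends are precisely these nondegeneracies: I would verify $L_*\ne0$ and $\Phi\not\equiv0$ (equivalently, that $B$ and $D$ are not proportional as functions of $s$) from the explicit forms of the coefficients, and this is where the hypotheses $\gamma_1\ne\gamma_2$, $\gamma_1,\gamma_2>0$ and $c\ne m^{-1/2}$ are used to exclude the degenerate case in which $F$ would be constant along the curve.
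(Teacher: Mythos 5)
Your proposal is correct and reaches the paper's conclusion by the same underlying mechanism, but it organizes the computation quite differently and, in my view, more transparently. The paper factors $-g_c(-s)=s\,(|\theta_{m+1}|-s)\,P(\bm{x},s)\,Q(\bm{x},s)$ and tracks the logarithmic derivatives of $P$ and $Q$ separately through the chain Lemmas \ref{lem:deri_PQ}--\ref{lem:deri_h_PQ}, with the decisive structural identity being $\frac{\partial \tilde h}{\partial z}=\Delta[\tilde h - \tilde h^2]$, which is what makes the second derivative collapse to $\frac{3\Delta^2}{4}L(\tilde h^2)$ at a stationary point (Lemma \ref{lem:deri_rej_prob}(ii)). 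You instead exploit the bilinearity of $g_c(\theta_{m+1})$ in $(\gamma_1^2,\gamma_2^2)$ directly: writing $G=A+B(s)(x+y)+D(s)xy$ and differentiating along the constraint surface gives $G_x=\frac{x-y}{L_*}\Phi$ and $G_{xx}=\frac{2M_*}{L_*^2}\Phi$ with the \emph{same} $\Phi(s)=D_*B(s)-B_*D(s)$ in two lines — I checked both identities and they are right — and this proportionality is exactly the content of the paper's $\tilde h$ identity (indeed $\Delta\tilde h = G_x/(\kappa G)$). Your endpoint analysis is also the right one: $\Phi(s_*)=0$ is what keeps the differentiated integrands $O((s_*-s)^{-1/2})$ and justifies differentiation under the integral. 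The two deferred items are genuine but routine and land exactly where the paper's hypotheses enter: $L_*\ne 0$ is the implicit-function nondegeneracy, which the paper obtains from $\theta_{m+1}<-\tau^{-1}$ (Lemma \ref{lem:roots}(c), giving $-\tau\theta_{m+1}-1>0$ in the derivative of the constraint); and $\Phi\not\equiv 0$ is the paper's verification that the polynomial $\chi(s)$ in \eqref{eq:h_tilde_poly} has leading coefficient $\frac{B}{\kappa m}(mc^2-1)\ne 0$, which is precisely where $c\ne m^{-1/2}$ is used. If you carry out those two checks (both finite computations on explicit polynomial coefficients), your argument is a complete and somewhat shorter proof; what the paper's longer decomposition buys in exchange is the collection of intermediate quantities ($\tilde h_P$, $\tilde h_Q$, $A$, $B$, $\delta$) that it reuses in Lemma \ref{lem:sign_deriv_gamma} and the $k=1$ closed-form analysis, which your $\Phi$ would not directly supply.
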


Lemma \ref{lem:first_second_deriv_gamma} has an important implication. 
Specifically, 
given values of $c, \gamma_3, \ldots, \gamma_m$ and $\theta_{m+1}$, if $\gamma_1$ is in the interior of the feasible region for optimization, and $\gamma_1 \neq \gamma_2$, then $(\gamma_1, \gamma_2, \ldots, \gamma_m)$ cannot be a maximizer, or even a local maximizer, for the rejection probability $\overline{p}_m(c; \gamma_1, \gamma_2, \gamma_3, \ldots, \gamma_m)$.
Moreover,
because $T_m$ is invariant to permutations of control clusters, 
the value of $\overline{p}_m(c; \gamma_1, \gamma_2, \ldots,$ $ \gamma_m)$ is invariant to permutations of $(\gamma_1, \ldots, \gamma_m)$.
Therefore, Lemma \ref{lem:first_second_deriv_gamma} essentially applies to any pair $(\gamma_i, \gamma_j)$ for $i \ne j$.

For any $k \in \{1, \ldots, m\}$ and $\rho > 0$, suppose now that $(\sigma_1, \ldots, \sigma_m, \sigma_{m+1})$ is a maximizer for the maximum rejection probability in \eqref{eq:max_rej_prob}, and $\sigma_{m+1}>0$. 
Then $(\gamma_1, \ldots, \gamma_m) = (\frac{\sigma_1}{\sigma_{m+1}}, \ldots, \frac{\sigma_m}{\sigma_{m+1}})$ must be a maximizer of $\overline{p}_m(c; \gamma_1, \ldots, \gamma_m)$ over all $(\gamma_1, \ldots, \gamma_m)\in \mathbb{R}_{\ge 0}^m$ satisfying that $\gamma_{(k)} \ge \rho^{-1}$, where $\gamma_{(k)}$ is the $k$th smallest value of $\{\gamma_j\}^m_{j=1}$.
From Lemma \ref{lem:first_second_deriv_gamma}, this cannot be true if there exists $i \ne j \in \{1,2,\ldots, m\}$ such that (i) $\gamma_i \neq \gamma_j$, and (ii) both $\gamma_i$ and $\gamma_j$ are not in the set $\{0, \rho^{-1}\}$, where the latter ensures that $\gamma_i$ and $\gamma_j$ are not in the boundary of the feasible region so that we can apply Lemma \ref{lem:first_second_deriv_gamma}. 
Therefore, to maximize the rejection probability, it suffices to consider only the cases where $\gamma_j$, for $1\le j \le m$, is either $0$, or $\rho^{-1}$, or some common value in $\mathbb{R}_{\ge 0}$. 
More specifically, 
under Assumption \ref{assu:relative_heter} for
any given $k$ and $\rho>0$, it suffices to consider 
the following possible values for $(\gamma_1, \ldots, \gamma_m)$ for $0\le m_0 \le k-1$ and $0\le m_1 \le m - m_0$:
\begin{align}\label{eq:possible_maximizer}
	\begin{pmatrix}
		\rho^{-1} \bs{1}_{m_1} \\
		\bs{0}_{m_0} \\
		\gamma \bs{1}_{m-m_1-m_0}
	\end{pmatrix}^\top,
	\ \ \ 
	\text{where } 
    \gamma \in 
	\begin{cases}
		\mathbb{R}_{\ge 0}, & \text{if } m_1 \ge m-k+1, \\
		[\rho^{-1}, \infty), & \text{if } m_1 < m-k+1.
	\end{cases}
\end{align}
In \eqref{eq:possible_maximizer}, we essentially enumerate the possible numbers of $\{\gamma_j\}^m_{j=1}$ that are zero, $\rho^{-1}$, and some common value $\gamma$, respectively.
Note that under Assumption \ref{assu:relative_heter} with given $\rho>0$ and $k$, the number of terms in $\{\gamma_j\}^m_{j=1}$ that are zero is at most $(k-1)$, and the possible range of the common value $\gamma$ depends on the number $m_1$ of the terms in $\{\gamma_j\}^m_{j=1}$ that are already $\rho^{-1}$. 
In particular, if $m_1 < m-k+1$, then the common value $\gamma$ is at least $\rho^{-1}$.

\subsection{Lemmas for the $k=1$ case in Section \ref{sec:theory-prob-k=1}}

In this subsection, we show that the optimization in \eqref{eq:max_rej_m1_m0} can be further simplified, by excluding values of $\gamma$ that cannot maximize $\overline{p}_m(c; \rho, \gamma; m_1, m_0)$ in \eqref{eq:p_m_given_num_gamma}. 
This relies crucially on the following lemma; we present a concise version below, and relegate the details to the supplementary material. 

\begin{lem}\label{lem:sign_deriv_gamma}
	Consider the integral representation in \eqref{eq:rej_prob_integral_main}. 
	Fix the values of $c>0$, $m\ge 3$, $\gamma_3, \ldots, \gamma_m \ge 0$ and $\theta_{m+1}<0$, and view the rejection probability $\bP_0[|T_m|>c] = \overline{p}_m(c; \gamma_1, \gamma_2, \gamma_3,$ $ \ldots, \gamma_m)$ as a function of only $\gamma_1$, where $\gamma_2$ is uniquely determined by $c, \gamma_1, \gamma_3, \ldots, \gamma_m$ and $\theta_{m+1}$.  If $\gamma_1 = \max_{1\le i \le m} \gamma_i > \gamma_2>0$
	$c\ge \sqrt{\frac{2(m-1)}{m(m-2)}}$, 
	and $H_m(c; \gamma_1, \ldots, \gamma_m) < 0$\footnote{We give the detailed expression of $H_m(\cdot)$ in the supplementary material}, then the first order derivative of $\overline{p}_m(c; \gamma_1, \gamma_2, \gamma_3, \ldots, \gamma_m)$ over $\gamma_1^2$ must be negative. 
\end{lem}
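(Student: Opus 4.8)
The plan is to differentiate the integral representation \eqref{eq:rej_prob_integral_main} directly with respect to $\gamma_1^2$, exploiting the parametrization in the statement that holds $\theta_{m+1}$ (and hence the upper limit $|\theta_{m+1}|$) fixed, exactly as in Lemma \ref{lem:first_second_deriv_gamma}. Writing $x_i \equiv \gamma_i^2$, the only dependence on $x_1$ enters through the integrand and through $x_2$, which is pinned down by the constraint $g_c(\theta_{m+1}) = 0$. First I would factor the polynomial in \eqref{eq:g_fun_char} as $g_c(\theta) = P(\theta)Q(\theta)$ with $P(\theta) \equiv \prod_{i=1}^m(\kappa x_i - \theta)$ and $Q(\theta) \equiv -(m+\theta) + (\kappa + \tfrac{\kappa+1}{m}\theta)\sum_{i=1}^m \frac{x_i}{\kappa x_i - \theta}$. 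Since $\theta_{m+1} < 0$ forces $P(\theta_{m+1}) > 0$, the root condition gives $Q(\theta_{m+1}) = 0$, which collapses $\partial g_c(\theta_{m+1})/\partial x_i$ to a single term and, by implicit differentiation of the constraint, yields the clean ratio
\begin{equation*}
\frac{\mathrm{d}x_2}{\mathrm{d}x_1} = -\frac{(\kappa x_2 + |\theta_{m+1}|)^2}{(\kappa x_1 + |\theta_{m+1}|)^2} < 0.
\end{equation*}

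Next I would differentiate under the integral. Because the upper limit is fixed, no boundary term arises, and
\begin{equation*}
\frac{\mathrm{d}\overline{p}_m}{\mathrm{d}x_1} = \frac{1}{2\pi}\int_0^{|\theta_{m+1}|} \frac{s^{(m-1)/2}}{[-g_c(-s)]^{3/2}}\,\Phi(s)\,\mathrm{d}s, \qquad \Phi(s) \equiv \frac{\partial g_c(-s)}{\partial x_1} + \frac{\partial g_c(-s)}{\partial x_2}\frac{\mathrm{d}x_2}{\mathrm{d}x_1}.
\end{equation*}
Using the factored form once more, each $\partial g_c(-s)/\partial x_i$ equals $P(-s)\big[\kappa Q(-s)/u_i + sA(-s)/u_i^2\big]$ with $u_i \equiv \kappa x_i + s$ and $A(-s) \equiv \kappa - \tfrac{(\kappa+1)s}{m}$. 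Substituting the ratio above and setting $d \equiv |\theta_{m+1}| - s$ and $v_i \equiv \kappa x_i + |\theta_{m+1}| = u_i + d$, the two rational brackets simplify via the identities $u_2 v_1^2 - u_1 v_2^2 = (u_1 - u_2)(u_1 u_2 - d^2)$ and $u_2^2 v_1^2 - u_1^2 v_2^2 = -d(u_1-u_2)(u_2 v_1 + u_1 v_2)$, so that after pulling out the manifestly positive factor $P(-s)(u_1-u_2)/v_1^2$ (positive because $\gamma_1 > \gamma_2$) the sign of $\Phi(s)$ is that of
\begin{equation*}
\Psi(s) \equiv \frac{\kappa Q(-s)(u_1 u_2 - d^2)}{u_1 u_2} - \frac{sA(-s)\,d\,(u_2 v_1 + u_1 v_2)}{u_1^2 u_2^2}.
\end{equation*}

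Two observations make this reduction precise. First, since the integrand of \eqref{eq:rej_prob_integral_main} is real and positive on $(0,|\theta_{m+1}|)$ and $P(-s) > 0$, we have $Q(-s) < 0$ throughout, so the first term of $\Psi$ carries the sign of $-(u_1 u_2 - d^2)$. Second, $\Psi$ vanishes linearly at $s = |\theta_{m+1}|$ (there $d = 0$ and $Q(\theta_{m+1}) = 0$), which tames the $[-g_c(-s)]^{-3/2}$ singularity and justifies differentiating under the integral. It therefore suffices to prove that $\Psi$ integrates to a negative value against the positive weight $s^{(m-1)/2}[-g_c(-s)]^{-3/2}P(-s)(u_1-u_2)v_1^{-2}$.

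The main obstacle is that $\Psi$ is not sign-definite in general: on the range where simultaneously $A(-s) < 0$ and $u_1 u_2 < d^2$, both terms of $\Psi$ are positive. The heart of the proof is therefore to bound $\Psi(s)$ uniformly so that the requirement collapses to the single, $s$-independent inequality $H_m(c;\gamma_1,\ldots,\gamma_m) < 0$. This is exactly where the two hypotheses are spent: $\gamma_1 = \max_{1\le i\le m}\gamma_i$ makes $u_1$ the largest and controls the $Q$-term (and the sign pattern of $u_1 u_2 - d^2$), while $c \ge \sqrt{2(m-1)/(m(m-2))}$---equivalently $\kappa(m-2)\ge 2$ for $\kappa = mc^2/(m-1)$---controls the $A(-s)$ contribution on the sub-interval $s > m\kappa/(\kappa+1)$ where $A(-s)$ turns negative. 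I would organize the final estimate by splitting $(0,|\theta_{m+1}|)$ at this sign-change point of $A(-s)$ and bounding each piece, which is the route along which $H_m$ naturally emerges; carrying this bound out sharply enough that it reduces to $H_m$, rather than to something weaker, is the delicate step, and $m\ge 3$ guarantees the threshold $\sqrt{2(m-1)/(m(m-2))}$ is well defined so the bound is not vacuous.
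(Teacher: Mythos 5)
Your setup is sound and, up to notation, retraces the paper's preparatory lemmas: holding $\theta_{m+1}$ fixed, the implicit derivative $\mathrm{d}x_2/\mathrm{d}x_1 = -(\kappa x_2+|\theta_{m+1}|)^2/(\kappa x_1+|\theta_{m+1}|)^2$ is exactly \eqref{eq:deri_w}, differentiation under the integral is legitimate for the reason you give, and your reduction of the sign of $\mathrm{d}\overline{p}_m/\mathrm{d}x_1$ to the sign of $\int_0^{|\theta_{m+1}|}(\text{positive weight})\cdot\Psi(s)\,\mathrm{d}s$ with a non-sign-definite $\Psi$ is the same situation the paper reaches in Lemma \ref{lem:deri_rej_prob}, where the derivative equals $-\frac{\Delta}{2}L(\tilde h(\bm{x},s))$ with $\tilde h$ changing sign on $(0,|\theta_{m+1}|)$.

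The gap is in the final step, which is where all the content of the lemma lives. You propose to split $(0,|\theta_{m+1}|)$ at the zero of $A(-s)$ and ``bound each piece,'' but pointwise bounds on the two subintervals cannot by themselves determine the sign of the integral of a function that is genuinely positive on one region and negative on another; some device is needed to compare the mass on the two sides. The paper's device is a Chebyshev-type correlation inequality (Lemma \ref{lem:integral_product_bound}, adapted from Bakirov): the integrand is factored as $G(s)F(s)$ where $G(s)=s^{\nu}U(\bm{x},s)/[(w+s)(z+s)]$ is nonnegative and \emph{nondecreasing} for a carefully chosen exponent $\nu$ (Lemma \ref{lem:deriv_G} — this is where the terms $\frac{\overline{\theta}}{z+\overline{\theta}}+\frac{2\overline{\theta}}{w+\overline{\theta}}-\frac{\underline{\theta}-m}{2(\tau\underline{\theta}-1)}$ of $H_m$ originate, using the root bounds of Lemmas \ref{lem:neg_root_abs_lower}--\ref{lem:neg_root_abs_upper}), while $F(s)$ is bounded below by a function with a single sign change from negative to positive whose integral is positive by an explicit beta-function computation (Lemmas \ref{lem:18}, \ref{lem:F_bound_w_min}, \ref{lem:F_bound_w_gen} — this is where the $\frac{1-\tau}{1-\tau+\min\{C,0\}}$ term originates, and where $c\ge\sqrt{2(m-1)/[m(m-2)]}$ is spent, solely to guarantee $\tau\le 1/2$). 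The inequality $H_m<0$ is precisely the statement that a single $\nu$ exists making both halves of this factorization work simultaneously; it does not emerge from an interval-splitting estimate. Without supplying this monotone-factor/single-sign-change mechanism (or an equivalent), your argument does not close.
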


Lemma \ref{lem:sign_deriv_gamma} has an important implication.
Specifically, 
given values of $\gamma_3, \ldots, \gamma_m$ and $\theta_{m+1}$, if, for some small $\epsilon>0$, 
$[\gamma_1 - \epsilon, \gamma_1]$ is in the feasible region for optimization, and the conditions in Lemma \ref{lem:sign_deriv_gamma} holds, then $(\gamma_1, \gamma_2, \ldots, \gamma_m)$ cannot be a maximizer, or even a local maximizer, for the rejection probability $\overline{p}_m(c; \gamma_1, \gamma_2, \gamma_3, \ldots, \gamma_m)$. 
This is because we can strictly increase this rejection probability by slightly decreasing $\gamma_1$. 
Analogous to the discussion after Lemma \ref{lem:first_second_deriv_gamma}, because the rejection probability is invariant to permutations of $(\gamma_1, \ldots, \gamma_m)$, 
we can always set $\gamma_1 = \max_{1\le i \le m} \gamma_i$ without loss of generality, and apply Lemma \ref{lem:sign_deriv_gamma} to any $\gamma_j$ with $2\le j\le m$. 
As long as the conditions in Lemma \ref{lem:sign_deriv_gamma} hold for some $2\le j\le m$, we can exclude the corresponding $(\gamma_1, \ldots, \gamma_m)$ from the set of possible maximizers for the rejection probability.

We apply Lemma \ref{lem:sign_deriv_gamma} to the optimization in \eqref{eq:max_rej_m1_m0} to exclude some values of $\gamma > \rho^{-1}$ that cannot maximize $\overline{p}_m(c; \rho, \gamma; m_1, m_0)$\footnote{Specifically, when using Lemma \ref{lem:sign_deriv_gamma}, we set $\gamma_1$ to be $\gamma$ and $\gamma_2$ to be $\rho^{-1}$.
	}.
In the ideal case, we can exclude all values of $\gamma > \rho^{-1}$, and the optimization in \eqref{eq:max_rej_m1_m0} is either solved or needs only to be maximized over $\gamma \in [0, \rho^{-1}]$. 
In Theorem \ref{thm:max_rej_prob_k_1_simple_closed_form}, we focus on the relative heterogeneity assumption with $k=1$, i.e., the treated standard deviation is smaller than or equal to $\rho$ times each of the control standard deviations.
We relegate the discussion under Assumption \ref{assu:relative_heter} with $k\ge 2$ to the supplementary material.

\subsection{\textbf{\textit{t}}-test with a large number of clusters} \label{sec:4.6}

Despite that our inference procedure focuses on the case with a finite number of control clusters, 
we briefly explore the asymptotic regime where the number of control clusters tends to infinity. This offers useful insights even when the number of control clusters is finite.
Below we first state a rather weak regularity condition. We then present the main theorem and discuss its implications. 

\begin{cond}\label{cond:large_m}
    As $m\longrightarrow \infty$, both $\sigma_{m+1}^2$ and $m^{-1}\sum_{j=1}^m \sigma_j^2$ are bounded away from zero and infinity, and $m^{-2}\sum_{j=1}^m \sigma_j^4 \longrightarrow 0$.     
\end{cond}

\begin{thm}\label{thm:limit_T_m_large}
    Under Assumption \ref{assu:normal} and 
    Condition \ref{cond:large_m}, 
    as $m\longrightarrow \infty$,  we have 
    \begin{align*}
    \sup_{c\in \mathbb{R}}\left| \bP[T_m \le c] - 
    \bP\Bigg[ \frac{\sigma_{m+1} \varepsilon + \delta}{\sqrt{m^{-1}\sum_{j=1}^m \sigma_j^2}} \le c \Bigg] \right|
    \longrightarrow 0.
    \end{align*}
    where $\varepsilon \sim \cN(0,1)$. 
\end{thm}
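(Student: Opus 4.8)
The plan is to realize $T_m$ and the approximating variable on a common probability space, show their difference is negligible in probability, and then upgrade this to uniform (Kolmogorov) closeness of the distribution functions. Write $\psi_{m+1} = \delta + \sigma_{m+1}\varepsilon$ with $\varepsilon \sim \cN(0,1)$, and set $\bar\sigma_m^2 \equiv m^{-1}\sum_{j=1}^m \sigma_j^2$. Under this coupling the target variable is $Y_m \equiv (\sigma_{m+1}\varepsilon + \delta)/\bar\sigma_m = \psi_{m+1}/\bar\sigma_m$, so that, with $T_m$ as in \eqref{eq:pop-test-1},
\begin{align*}
T_m - Y_m = \psi_{m+1}\Big(\frac{1}{S_m} - \frac{1}{\bar\sigma_m}\Big) - \frac{\overline{\psi}_m}{S_m}.
\end{align*}
I would first establish $T_m - Y_m \convergep 0$, and then convert this into the claimed uniform bound.

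For the numerator, $\overline{\psi}_m$ is centered with $\Var(\overline{\psi}_m) = m^{-2}\sum_{j=1}^m\sigma_j^2 = m^{-1}\bar\sigma_m^2 = O(m^{-1})$, since Condition \ref{cond:large_m} keeps $\bar\sigma_m^2$ bounded; hence $\overline{\psi}_m \convergep 0$. The treated observation $\psi_{m+1}$ is $O_p(1)$ because $\delta$ is fixed and $\sigma_{m+1}^2$ is bounded.

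The crux is to show $S_m^2 \convergep \bar\sigma_m^2$. Writing $S_m^2 = (m-1)^{-1}\sum_{j=1}^m \psi_j^2 - \tfrac{m}{m-1}\overline{\psi}_m^2$, a direct moment computation gives $\bE[S_m^2] = \bar\sigma_m^2$ exactly. For the variance I would bound $\Var(S_m^2)$ by the variances of its two pieces: the first contributes $2(m-1)^{-2}\sum_{j=1}^m\sigma_j^4$ via the Gaussian identity $\Var(\psi_j^2)=2\sigma_j^4$, which vanishes precisely because Condition \ref{cond:large_m} forces $m^{-2}\sum_{j=1}^m\sigma_j^4 \to 0$; the second contributes $O(m^{-2})$, as $\overline{\psi}_m^2$ is the square of a Gaussian with variance $O(m^{-1})$. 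Thus $\Var(S_m^2)\to 0$ and Chebyshev gives $S_m^2 \convergep \bar\sigma_m^2$. Because Condition \ref{cond:large_m} keeps $\bar\sigma_m^2$ bounded away from zero, $S_m$ is bounded away from zero with probability tending to one, so both terms in the display for $T_m - Y_m$ are products of an $O_p(1)$ factor and an $o_p(1)$ factor; hence $T_m - Y_m \convergep 0$.

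To pass from this to the uniform statement, note that under the coupling $Y_m \sim \cN(\delta/\bar\sigma_m,\ \sigma_{m+1}^2/\bar\sigma_m^2)$, whose density is bounded by $(2\pi)^{-1/2}\,\bar\sigma_m/\sigma_{m+1}$, a constant uniform in $m$ since Condition \ref{cond:large_m} bounds $\sigma_{m+1}$ away from zero and $\bar\sigma_m$ from above. Hence the distribution function $G_m$ of $Y_m$ is Lipschitz with a uniform constant $L$, and for any $\eta>0$ the sandwiching
\begin{align*}
G_m(c) - L\eta - \bP[|T_m - Y_m|>\eta] \le \bP[T_m \le c] \le G_m(c) + L\eta + \bP[|T_m - Y_m|>\eta]
\end{align*}
holds uniformly in $c$, giving $\sup_c |\bP[T_m\le c] - G_m(c)| \le L\eta + \bP[|T_m-Y_m|>\eta]$; letting $m\to\infty$ and then $\eta\to 0$ yields the claim. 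The main obstacle is the denominator step: controlling $\Var(S_m^2)$ under only the weak fourth-moment condition $m^{-2}\sum_{j=1}^m\sigma_j^4\to 0$, together with verifying that the approximating law has a uniformly bounded density so that convergence in probability can be upgraded to Kolmogorov-distance convergence.
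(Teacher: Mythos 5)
Your proposal is correct and follows essentially the same route as the paper: both arguments rest on showing $\overline{\psi}_m = o_P(1)$ and $S_m^2 = m^{-1}\sum_{j=1}^m\sigma_j^2 + o_P(1)$ via Chebyshev, with the fourth-moment part of Condition \ref{cond:large_m} controlling $\Var(S_m^2)$, and then upgrading $T_m - Y_m \convergep 0$ to Kolmogorov-distance convergence. The only difference is cosmetic: the paper delegates the final uniform-convergence step to a cited lemma (Lemma A27 of Wang, 2022), whereas you prove it directly via the uniformly bounded density of the approximating Gaussian and the standard sandwiching inequality, which is a valid and self-contained substitute.
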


From Theorem \ref{thm:limit_T_m_large}, when $m$ is large, the distribution of the test statistic $T_m$ under both the null and the alternative hypotheses can be approximated by an Gaussian distribution. This has important implications, as detailed below. 

First, suppose the null hypothesis $\overline{H}_0$ in \eqref{eq:null_normal} holds (i.e., $\delta = 0$). Then, when the number of clusters is large, the rejection probability $\bP_{0}[|T_m| > c]$ depends only the ratio between the treated-cluster variance $\sigma_{m+1}^2$ and the average of the control-cluster variances $m^{-1}\sum_{j=1}^m \sigma_j^2$. It is increasing in this ratio. 
In particular, under Assumption \ref{assu:relative_heter} for some $\rho > 0$, we have the following for any $c>0$: 
\begin{align*}
        \bP_0[|T_m| > c] \approx 
        \bP\Bigg[ \frac{\sigma_{m+1} |\varepsilon|}{\sqrt{m^{-1}\sum_{j=1}^m \sigma_j^2}} > c \Bigg]
        \le 
        \bP\Bigg[ 
        \sqrt{\frac{m \rho^2}{m-k+1}} \cdot |\varepsilon| > c
        \Bigg],
\end{align*}
where the last inequality becomes equality when $\sigma_{m+1} = 1$, $(k-1)$ of $\{\sigma_j\}_{j=1}^m$ equal $0$, and the remaining $(m - k + 1)$ equal $\rho^{-1}$.
It also implies that the critical value $\cv$ of a level-$\alpha$ test is approximately 
$\sqrt{\frac{m}{m-k+1}} \cdot \rho z_{\frac{\alpha}{2}}$, where $z_{\frac{\alpha}{2}}$ is the $\frac{\alpha}{2}$ upper quantile of the standard normal distribution.

Second, under the alternative hypothesis with, say, $\delta>0$, the above discussion also implies that the power of the level-$\alpha$ $t$-test is approximately 
\begin{align}\label{eq:power_large_m}
     \bP[|T_m| > \cv] & \approx  
    \bP\Bigg[ \Bigg| \frac{\sigma_{m+1} \varepsilon + \delta}{\sqrt{m^{-1}\sum_{j=1}^m \sigma_j^2}} \Bigg| > \sqrt{\frac{m \rho^2}{m-k+1}} z_{\frac{\alpha}{2}} \Bigg]
    \notag \\
    & =
    \bP\Bigg[ \Big| \varepsilon + \frac{\delta}{\sigma_{m+1}} \Big| >
    R_{m, k, \rho}
    z_{\frac{\alpha}{2}}
    \Bigg]
    \nonumber
    \\
    & \ge \bP\Bigg[ \varepsilon + \frac{\delta}{\sigma_{m+1}} >
    R_{m, k, \rho}
    z_{\frac{\alpha}{2}}
    \Bigg],
\end{align}
where $R^2_{m, k, \rho}$ is the ratio between the maximum possible value of $\frac{\sigma_{m+1}^2}{m^{-1}\sum_{j=1}^m \sigma_j^2}$ and the corresponding true value, and $R_{m, k, \rho}$ is the square root of $R^2_{m, k, \rho}$. 
From \eqref{eq:power_large_m}, the lower bound of the approximate power in \eqref{eq:power_large_m}, which is also approximately the power of an one-sided level-$\frac{\alpha}{2}$ test, is increasing in $\frac{\delta}{\sigma_{m+1}}$ and decreasing in $R_{m, k, \rho}$. 
This is intuitive, since $\frac{\delta}{\sigma_{m+1}}$ represents the signal-to-noise ratio, whereas $R_{m, k, \rho}$ represents how conservative we are when making the relative heterogeneity assumption.

We now give a remark that is helpful for our numerical search of the critical value when $m$ is finite.

\begin{re}
Recall that $\gamma_j \equiv \frac{\sigma_j}{\sigma_{m+1}}$ for $1\le j \le m$. 
From the discussion before, the maximum rejection probability is achieved when $k-1$ of 
$\{\gamma_j\}_{j=1}^m$ equal $0$ and the remaining $(m-k+1)$ equal $1$. 
Thus, 
given any significance level, 
we suggest using this configuration of $\{\gamma_j\}_{j=1}^m$ as an initial guess for the desired critical value for any finite $m$, and then use the optimization in Theorem \ref{thm:max_rej_prob} to verify whether such an initial guess of critical value leads to the $t$-test with the desired significance level; see Remark \ref{rmk:guess_candidate}. 
Moreover, as demonstrated in Section \ref{sec:theory-prob-k=1}, under the relative heterogeneity assumption with $k=1$, 
the maximizer for the rejection probability from the asymptotic analysis with large $m$ is also valid when the number of control clusters is finite and the significance level is sufficiently small.
Our numerical experiment shows that this holds more generally; at usual significance levels, the critical value from the asymptotic analysis with large $m$ is often valid even when $m$ is finite, for general relative heterogeneity assumption with $k\ge 1$.
\end{re}

\addcontentsline{toc}{section}{References}
{\small{
	\singlespacing{
	\bibliographystyle{ecta}
	\bibliography{refs}

@article{wang2022rerandomization,
	author = {Y. Wang and X. Li},
	journal = {The Annals of Statistics},
	pages = {3439 -- 3465},
	title = {{Rerandomization with diminishing covariate imbalance and diverging number of covariates}},
	volume = {50},
	year = {2022}}

@article{Bakirov:2006aa,
	abstract = {A Student-type test is constructed under a condition weaker than normal. We assume that the errors are scale mixtures of normal random variables and compute the critical values of the suggested s-test. Our s-test is optimal in the sense that if the level is at most α, then the s-test provides the minimum critical values. (The most important critical values are tabulated at the end of the paper.) For α≤.05, the two-sided s-test is identical with Student's classical t-test. In general, the s-test is a t-type test, but its degree of freedom should be reduced depending on α. The s-test is applicable for many heavy-tailed errors, including symmetric stable, Laplace, logistic, or exponential power. Our results explain when and why the P-value corresponding to the t-statistic is robust if the underlying distribution is a scale mixture of normal distributions. Bibliography: 24 titles.},
	author = {Bakirov, N. K. and Sz{\'e}kely, G. J.},
	date-added = {2025-01-11 09:38:25 -0600},
	date-modified = {2025-01-11 09:38:42 -0600},
	journal = {Journal of Mathematical Sciences},
	pages = {6497--6505},
	title = {Student's t-test for Gaussian scale mixtures},
	volume = {139},
	year = {2006},
	bdsk-url-1 = {https://doi.org/10.1007/s10958-006-0366-5}}

@article{hagemann2024wp,
	author = {Hagemann, Andres},
	journal = {Working Paper},
	title = {{Inference with a single treated cluster}},
	year = {2024}}

@article{makshanovshalaevskii1977some,
	author = {Makshanov, AV and Shalaevskii, OV},
	journal = {Zapiski Nauchnykh Seminarov POMI},
	pages = {118--138},
	publisher = {St. Petersburg Department of Steklov Institute of Mathematics, Russian~{\ldots}},
	title = {Some problems of asymptotic approximations of distributions},
	volume = {74},
	year = {1977}}

@article{bakirov1989jms,
	author = {Bakirov, NK},
	journal = {Journal of Mathematical Sciences},
	number = {4},
	pages = {433--440},
	publisher = {Springer},
	title = {An extremal property of the student distribution},
	volume = {44},
	year = {1989}}

@article{bakirov1998jms,
	author = {Bakirov, Nail K},
	journal = {Journal of Mathematical Sciences},
	pages = {1460--1467},
	publisher = {Springer},
	title = {Nonhomogeneous samples in the Behrens-Fisher problem},
	volume = {89},
	year = {1998}}

@article{kumarliang2024aejep,
	author = {Kumar, Anil and Liang, Che-Yuan},
	doi = {10.1257/pol.20200683},
	journal = {American Economic Journal: Economic Policy},
	month = {August},
	number = {3},
	pages = {1--26},
	title = {Labor Market Effects of Credit Constraints: Evidence from a Natural Experiment},
	url = {https://www.aeaweb.org/articles?id=10.1257/pol.20200683},
	volume = {16},
	year = {2024},
	bdsk-url-1 = {https://www.aeaweb.org/articles?id=10.1257/pol.20200683},
	bdsk-url-2 = {https://doi.org/10.1257/pol.20200683}}

@article{dillenderetal2023jpube,
	author = {Marcus Dillender and Lu Jinks and Anthony T. {Lo Sasso}},
	doi = {https://doi.org/10.1016/j.jpubeco.2022.104781},
	issn = {0047-2727},
	journal = {Journal of Public Economics},
	pages = {104781},
	title = {When (and why) providers do not respond to changes in reimbursement rates},
	url = {https://www.sciencedirect.com/science/article/pii/S0047272722001839},
	volume = {217},
	year = {2023},
	bdsk-url-1 = {https://www.sciencedirect.com/science/article/pii/S0047272722001839},
	bdsk-url-2 = {https://doi.org/10.1016/j.jpubeco.2022.104781}}

@article{alpertetal2024aejep,
	author = {Alpert, Abby and Dykstra, Sarah and Jacobson, Mireille},
	doi = {10.1257/pol.20200579},
	journal = {American Economic Journal: Economic Policy},
	month = {February},
	number = {1},
	pages = {87--123},
	title = {Hassle Costs versus Information: How Do Prescription Drug Monitoring Programs Reduce Opioid Prescribing?},
	url = {https://www.aeaweb.org/articles?id=10.1257/pol.20200579},
	volume = {16},
	year = {2024},
	bdsk-url-1 = {https://www.aeaweb.org/articles?id=10.1257/pol.20200579},
	bdsk-url-2 = {https://doi.org/10.1257/pol.20200579}}

@article{harrislarsen2023jhr,
	author = {Harris, Douglas N and Larsen, Matthew F},
	journal = {Journal of Human Resources},
	number = {5},
	pages = {1608--1643},
	publisher = {University of Wisconsin Press},
	title = {Taken by storm: The effects of Hurricane Katrina on medium-term student outcomes in New Orleans},
	volume = {58},
	year = {2023}}

@article{besteretal2011joe,
	author = {Bester, Alan and Conley, Timothy and Hansen, Christian},
	journal = {Journal of Econometrics},
	keywords = {HAC; Panel; Robust; Spatial;},
	number = {2},
	pages = {137-151},
	title = {Inference with dependent data using cluster covariance estimators},
	url = {https://EconPapers.repec.org/RePEc:eee:econom:v:165:y:2011:i:2:p:137-151},
	volume = {165},
	year = {2011},
	bdsk-url-1 = {https://EconPapers.repec.org/RePEc:eee:econom:v:165:y:2011:i:2:p:137-151}}

@article{canayetal2017ecta,
	author = {Ivan A. Canay and Joseph P. Romano and Azeem M. Shaikh},
	issn = {00129682, 14680262},
	journal = {Econometrica},
	number = {3},
	pages = {1013--1030},
	publisher = {[Wiley, The Econometric Society]},
	title = {Randomization Tests under an Approximate Symmetry Assumption},
	url = {http://www.jstor.org/stable/44955148},
	volume = {85},
	year = {2017},
	bdsk-url-1 = {http://www.jstor.org/stable/44955148}}

@article{ibraimovmuller2016restat,
	author = {Ibragimov, Rustam and M{\"u}ller, Ulrich K.},
	doi = {10.1162/REST_a_00545},
	eprint = {https://direct.mit.edu/rest/article-pdf/98/1/83/1918100/rest\_a\_00545.pdf},
	issn = {0034-6535},
	journal = {The Review of Economics and Statistics},
	month = {03},
	number = {1},
	pages = {83-96},
	title = {{Inference with Few Heterogeneous Clusters}},
	url = {https://doi.org/10.1162/REST\_a\_00545},
	volume = {98},
	year = {2016},
	bdsk-url-1 = {https://doi.org/10.1162/REST%5C_a%5C_00545},
	bdsk-url-2 = {https://doi.org/10.1162/REST_a_00545}}

@article{hagemann2022wp,
	author = {Andreas Hagemann},
	journal = {Working Paper},
	title = {{Permutation inference with a finite number of heterogeneous clusters}},
	year = {2022}}

@article{cameronetal2008restat,
	author = {Cameron, A. and Gelbach, Jonah and Miller, Douglas},
	journal = {The Review of Economics and Statistics},
	number = {3},
	pages = {414-427},
	title = {Bootstrap-Based Improvements for Inference with Clustered Errors},
	url = {https://EconPapers.repec.org/RePEc:tpr:restat:v:90:y:2008:i:3:p:414-427},
	volume = {90},
	year = {2008},
	bdsk-url-1 = {https://EconPapers.repec.org/RePEc:tpr:restat:v:90:y:2008:i:3:p:414-427}}

@article{ibraimovmuller2010jbes,
	author = {Ibragimov, Rustam and M{\"u}ller, Ulrich K.},
	journal = {Journal of Business \& Economic Statistics},
	number = {4},
	pages = {453-468},
	title = {t-Statistic Based Correlation and Heterogeneity Robust Inference},
	url = {https://EconPapers.repec.org/RePEc:bes:jnlbes:v:28:i:4:y:2010:p:453-468},
	volume = {28},
	year = {2010},
	bdsk-url-1 = {https://EconPapers.repec.org/RePEc:bes:jnlbes:v:28:i:4:y:2010:p:453-468}}

@article{conleytaber2011restat,
	abstract = {{In difference-in-differences applications, identification of the key parameter often arises from changes in policy by a small number of groups. In contrast, typical inference assumes that the number of groups changing policy is large. We present an alternative inference approach for a small (finite) number of policy changers, using information from a large sample of nonchanging groups. Treatment effect point estimators are not consistent, but we can consistently estimate their asymptotic distribution under any point null hypothesis about the treatment. Thus, treatment point estimators can be used as test statistics, and confidence intervals can be constructed using test statistic inversion.}},
	author = {Conley, Timothy G. and Taber, Christopher R.},
	doi = {10.1162/REST_a_00049},
	eprint = {https://direct.mit.edu/rest/article-pdf/93/1/113/1919185/rest\_a\_00049.pdf},
	issn = {0034-6535},
	journal = {The Review of Economics and Statistics},
	month = {02},
	number = {1},
	pages = {113-125},
	title = {{Inference with ``Difference in Differences'' with a Small Number of Policy Changes}},
	url = {https://doi.org/10.1162/REST\_a\_00049},
	volume = {93},
	year = {2011},
	bdsk-url-1 = {https://doi.org/10.1162/REST%5C_a%5C_00049},
	bdsk-url-2 = {https://doi.org/10.1162/REST_a_00049}}

@article{fermanpinto2019restat,
	abstract = {{We derive an inference method that works in differences-in-differences settings with few treated and many control groups in the presence of heteroskedasticity. As a leading example, we provide theoretical justification and empirical evidence that heteroskedasticity generated by variation in group sizes can invalidate existing inference methods, even in data sets with a large number of observations per group. In contrast, our inference method remains valid in this case. Our test can also be combined with feasible generalized least squares, providing a safeguard against misspecification of the serial correlation.}},
	author = {Ferman, Bruno and Pinto, Cristine},
	doi = {10.1162/rest_a_00759},
	eprint = {https://direct.mit.edu/rest/article-pdf/101/3/452/1916793/rest\_a\_00759.pdf},
	issn = {0034-6535},
	journal = {The Review of Economics and Statistics},
	month = {07},
	number = {3},
	pages = {452-467},
	title = {{Inference in Differences-in-Differences with Few Treated Groups and Heteroskedasticity}},
	url = {https://doi.org/10.1162/rest\_a\_00759},
	volume = {101},
	year = {2019},
	bdsk-url-1 = {https://doi.org/10.1162/rest%5C_a%5C_00759},
	bdsk-url-2 = {https://doi.org/10.1162/rest_a_00759}}

@article{wangburke2022jfe,
	author = {Jialan Wang and Kathleen Burke},
	doi = {https://doi.org/10.1016/j.jfineco.2021.09.024},
	issn = {0304-405X},
	journal = {Journal of Financial Economics},
	keywords = {Payday lending, Disclosure, Financial regulation, Enforcement},
	number = {2, Part B},
	pages = {489-507},
	title = {The effects of disclosure and enforcement on payday lending in Texas},
	url = {https://www.sciencedirect.com/science/article/pii/S0304405X21004372},
	volume = {145},
	year = {2022},
	bdsk-url-1 = {https://www.sciencedirect.com/science/article/pii/S0304405X21004372},
	bdsk-url-2 = {https://doi.org/10.1016/j.jfineco.2021.09.024}}

@article{canaysantosshaikh2021restat,
	author = {Ivan A. Canay and Andres Santos and Azeem M. Shaikh},
	date-modified = {2020-04-01 14:12:37 -0500},
	issue = {2},
	journal = {The Review of Economics and Statistics},
	month = {January},
	title = {The Wild Bootstrap with a ``small'' number of ``large'' Clusters},
	volume = {2021},
	year = {2021}}

@article{oldenmoen2023ej,
	abstract = {Triple difference has become a widely used estimator in empirical work. A close reading of articles in top economics journals reveals that the use of the estimator to a large extent rests on intuition. The identifying assumptions are neither formally derived nor generally agreed on. We give a complete presentation of the triple difference estimator, and show that even though the estimator can be computed as the difference between two difference-in-differences estimators, it does not require two parallel trend assumptions to have a causal interpretation. The reason is that the difference between two biased difference-in-differences estimators will be unbiased as long as the bias is the same in both estimators. This requires only one parallel trend assumption to hold.},
	author = {Olden, Andreas and M{\o}en, Jarle},
	doi = {10.1093/ectj/utac010},
	eprint = {https://academic.oup.com/ectj/article-pdf/25/3/531/45842047/utac010.pdf},
	issn = {1368-4221},
	journal = {The Econometrics Journal},
	month = {03},
	number = {3},
	pages = {531-553},
	title = {The triple difference estimator},
	url = {https://doi.org/10.1093/ectj/utac010},
	volume = {25},
	year = {2022},
	bdsk-url-1 = {https://doi.org/10.1093/ectj/utac010}}

@article{hiraiwaetal2024wp,
	author = {Hiraiwa, Takuya and Lipsitz, Michael and Starr, Evan},
	journal = {The Review of Economics and Statistics},
	pages = {1-47},
	title = {Do Firms Value Court Enforceability of Noncompete Agreements? A Revealed Preference Approach},
	year = {2024}}

@article{depewswensen2022ej,
	abstract = {The 1911 New York State Sullivan Act outlawed carrying concealable firearms without a licence, established strict licencing rules and regulated the sale and possession of handguns. We analyse the effects of the Sullivan Act using historical data on mortality rates, pistol permits and citations for illegal carrying. Our analysis of pistol permits and citations reveal clear initial effects of the Sullivan Act on gun-related behaviours. Using synthetic control and difference-in-differences methodologies, our main analyses show no effects on overall homicide rates, evidence of a reduction in overall suicide rates and strong evidence of a large and sustained decrease in gun-related suicide rates.},
	author = {Depew, Briggs and Swensen, Isaac},
	doi = {10.1093/ej/ueac004},
	eprint = {https://academic.oup.com/ej/article-pdf/132/646/2118/51753860/ueac004.pdf},
	issn = {0013-0133},
	journal = {The Economic Journal},
	month = {01},
	number = {646},
	pages = {2118-2140},
	title = {The Effect of Concealed-Carry and Handgun Restrictions on Gun-Related Deaths: Evidence from the Sullivan Act of 1911},
	url = {https://doi.org/10.1093/ej/ueac004},
	volume = {132},
	year = {2022},
	bdsk-url-1 = {https://doi.org/10.1093/ej/ueac004}}

@article{rothetal2023joe,
title = {What’s trending in difference-in-differences? A synthesis of the recent econometrics literature},
journal = {Journal of Econometrics},
volume = {235},
number = {2},
pages = {2218-2244},
year = {2023},
issn = {0304-4076},
doi = {https://doi.org/10.1016/j.jeconom.2023.03.008},
url = {https://www.sciencedirect.com/science/article/pii/S0304407623001318},
author = {Jonathan Roth and Pedro H.C. Sant’Anna and Alyssa Bilinski and John Poe},
}

@article{dechaisemartindhaultfœuille2023ej,
    author = {de Chaisemartin, Clément and D’Haultfœuille, Xavier},
    title = {Two-way fixed effects and differences-in-differences with heterogeneous treatment effects: a survey},
    journal = {The Econometrics Journal},
    volume = {26},
    number = {3},
    pages = {C1-C30},
    year = {2022},
    month = {06},
    abstract = {Linear regressions with period and group fixed effects are widely used to estimate policie’s effects: 26 of the 100 most cited papers published by the American Economic Review from 2015 to 2019 estimate such regressions. It has recently been shown that those regressions may produce misleading estimates if the policy’s effect is heterogeneous between groups or over time, as is often the case. This survey reviews a fast-growing literature that documents this issue and that proposes alternative estimators robust to heterogeneous effects. We use those alternative estimators to revisit Wolfers (2006a).},
    issn = {1368-4221},
    doi = {10.1093/ectj/utac017},
    url = {https://doi.org/10.1093/ectj/utac017},
    eprint = {https://academic.oup.com/ectj/article-pdf/26/3/C1/51707976/utac017.pdf},
}

@misc{bakeretal2025wp,
      title={Difference-in-Differences Designs: A Practitioner's Guide}, 
      author={Andrew Baker and Brantly Callaway and Scott Cunningham and Andrew Goodman-Bacon and Pedro H. C. Sant'Anna},
      year={2025},
      eprint={2503.13323},
      archivePrefix={arXiv},
      primaryClass={econ.EM},
      url={https://arxiv.org/abs/2503.13323}, 
}

@article{alvarezetal2025wp,
  title={Inference with few treated units},
  author={Alvarez, Luis and Ferman, Bruno and W{\"u}thrich, Kaspar},
  journal={arXiv preprint arXiv:2504.19841},
  year={2025}
}

@article{cameronmiller2015jhr,
  author =        {Cameron, A Colin and Miller, Douglas L},
  journal =       {Journal of human resources},
  number =        {2},
  pages =         {317--372},
  publisher =     {University of Wisconsin Press},
  title =         {A practitioner’s guide to cluster-robust inference},
  volume =        {50},
  year =          {2015},
}

@article{conleyetal2018jar,
  author =        {Conley, Timothy and Gonçalves, Silva and
                   Hansen, Christian},
  journal =       {Journal of Accounting Research},
  number =        {4},
  pages =         {1139-1203},
  title =         {Inference with Dependent Data in Accounting and
                   Finance Applications},
  volume =        {56},
  year =          {2018},
  abstract =      {ABSTRACT We review developments in conducting
                   inference for model parameters in the presence of
                   intertemporal and cross-sectional dependence with an
                   emphasis on panel data applications. We review the
                   use of heteroskedasticity and autocorrelation
                   consistent (HAC) standard error estimators, which
                   include the standard clustered and multiway clustered
                   estimators, and discuss alternative sample-splitting
                   inference procedures, such as the Fama–Macbeth
                   procedure, within this context. We outline pros and
                   cons of the different procedures. We then illustrate
                   the properties of the discussed procedures within a
                   simulation experiment designed to mimic the type of
                   firm-level panel data that might be encountered in
                   accounting and finance applications. Our conclusion,
                   based on theoretical properties and simulation
                   performance, is that sample-splitting procedures with
                   suitably chosen splits are the most likely to deliver
                   robust inferential statements with approximately
                   correct coverage properties in the types of large,
                   heterogeneous panels many researchers are likely to
                   face.},
  doi =           {https://doi.org/10.1111/1475-679X.12219},
  url =           {https://onlinelibrary.wiley.com/doi/abs/10.1111/1475-
                  679X.12219},
}

@article{mackinnonetal2023joe,
  author =        {James G. MacKinnon and Morten Ørregaard Nielsen and
                   Matthew D. Webb},
  journal =       {Journal of Econometrics},
  number =        {2},
  pages =         {272-299},
  title =         {Cluster-robust inference: A guide to empirical
                   practice},
  volume =        {232},
  year =          {2023},
  abstract =      {Methods for cluster-robust inference are routinely
                   used in economics and many other disciplines.
                   However, it is only recently that theoretical
                   foundations for the use of these methods in many
                   empirically relevant situations have been developed.
                   In this paper, we use these theoretical results to
                   provide a guide to empirical practice. We do not
                   attempt to present a comprehensive survey of the
                   (very large) literature. Instead, we bridge theory
                   and practice by providing a thorough guide on what to
                   do and why, based on recently available econometric
                   theory and simulation evidence. To practice what we
                   preach, we include an empirical analysis of the
                   effects of the minimum wage on labor supply of
                   teenagers using individual data.},
  doi =           {https://doi.org/10.1016/j.jeconom.2022.04.001},
  issn =          {0304-4076},
  url =           {https://www.sciencedirect.com/science/article/pii/
                  S0304407622000781},
}

@misc{fred2025website,
  author = {{Federal Reserve Economic Data}},
  title = {Resident Population by State, Annual},
  howpublished = {\url{https://fred.stlouisfed.org/release/tables?rid=118&eid=259194&od=1911-01-01#}},
  year = {2025},
  note = {Accessed: 2025-05-25}
}

@article{lau2025wp,
	author = {Lau, Chun Pong},
	journal = {Working Paper},
	title = {{Combining Clusters for the Approximate Randomization Test}},
        url = {"https://arxiv.org/abs/2502.03865"},
	year = {2025}}

@misc{alvarezferman2023wp,
      title={Inference in Difference-in-Differences with Few Treated Units and Spatial Correlation}, 
      author={Luis Alvarez and Bruno Ferman},
      year={2023},
      eprint={2006.16997},
      archivePrefix={arXiv},
      primaryClass={econ.EM},
      url={https://arxiv.org/abs/2006.16997}, 
}

@article{horowitzmanski1995ecta,
  author =        {Horowitz, Joel and Manski, Charles},
  journal =       {Econometrica},
  number =        {2},
  pages =         {281-302},
  title =         {Identification and Robustness with Contaminated and
                   Corrupted Data},
  volume =        {63},
  year =          {1995},
  abstract =      {Robust estimation aims at developing point estimators
                   that are not highly sensitive to errors in data.
                   However, the population parameters of interest are
                   not identified under the assumptions of robust
                   estimation, so the rationale for point estimation is
                   not apparent. This paper shows that, under error
                   models used in robust estimation, unidentified
                   population parameters can often be bounded. The
                   bounds provide information that is not available in
                   robust estimation. For example, it is possible to
                   bound the population mean under contaminated
                   sampling. It is argued that estimating the bounds is
                   more natural than attempting point estimation of
                   unidentified parameters. Copyright 1995 by The
                   Econometric Society.},
  url =           {https://EconPapers.repec.org/RePEc:ecm:emetrp:v:63:y:1995:i:
                  2:p:281-302},
}

@article{klinesantos2013qe,
  author =        {Kline, Patrick and Santos, Andres},
  journal =       {Quantitative Economics},
  number =        {2},
  pages =         {231-267},
  title =         {Sensitivity to missing data assumptions: Theory and
                   an evaluation of the U.S. wage structure},
  volume =        {4},
  year =          {2013},
  abstract =      {This paper develops methods for assessing the
                   sensitivity of empirical conclusions regarding
                   conditional distributions to departures from the
                   missing at random (MAR) assumption. We index the
                   degree of nonignorable selection governing the
                   missing data process by the maximal
                   Kolmogorov–Smirnov distance between the
                   distributions of missing and observed outcomes across
                   all values of the covariates. Sharp bounds on minimum
                   mean square approximations to conditional quantiles
                   are derived as a function of the nominal level of
                   selection considered in the sensitivity analysis and
                   a weighted bootstrap procedure is developed for
                   conducting inference. Using these techniques, we
                   conduct an empirical assessment of the sensitivity of
                   observed earnings patterns in U.S. Census data to
                   deviations from the MAR assumption. We find that the
                   well documented increase in the returns to schooling
                   between 1980 and 1990 is relatively robust to
                   deviations from the missing at random assumption
                   except at the lowest quantiles of the distribution,
                   but that conclusions regarding heterogeneity in
                   returns and changes in the returns function between
                   1990 and 2000 are very sensitive to departures from
                   ignorability.},
  doi =           {https://doi.org/10.3982/QE176},
  url =           {https://onlinelibrary.wiley.com/doi/abs/10.3982/QE176},
}

@article{mastenpoirier2020qe,
  author =        {Masten, Matthew A. and Poirier, Alexandre},
  journal =       {Quantitative Economics},
  number =        {1},
  pages =         {41-111},
  title =         {Inference on breakdown frontiers},
  volume =        {11},
  year =          {2020},
  abstract =      {Given a set of baseline assumptions, a breakdown
                   frontier is the boundary between the set of
                   assumptions which lead to a specific conclusion and
                   those which do not. In a potential outcomes model
                   with a binary treatment, we consider two conclusions:
                   First, that ATE is at least a specific value (e.g.,
                   nonnegative) and second that the proportion of units
                   who benefit from treatment is at least a specific
                   value (e.g., at least 50\%). For these conclusions,
                   we derive the breakdown frontier for two kinds of
                   assumptions: one which indexes relaxations of the
                   baseline random assignment of treatment assumption,
                   and one which indexes relaxations of the baseline
                   rank invariance assumption. These classes of
                   assumptions nest both the point identifying
                   assumptions of random assignment and rank invariance
                   and the opposite end of no constraints on treatment
                   selection or the dependence structure between
                   potential outcomes. This frontier provides a
                   quantitative measure of the robustness of conclusions
                   to relaxations of the baseline point identifying
                   assumptions. We derive -consistent sample analog
                   estimators for these frontiers. We then provide two
                   asymptotically valid bootstrap procedures for
                   constructing lower uniform confidence bands for the
                   breakdown frontier. As a measure of robustness,
                   estimated breakdown frontiers and their corresponding
                   confidence bands can be presented alongside
                   traditional point estimates and confidence intervals
                   obtained under point identifying assumptions. We
                   illustrate this approach in an empirical application
                   to the effect of child soldiering on wages. We find
                   that sufficiently weak conclusions are robust to
                   simultaneous failures of rank invariance and random
                   assignment, while some stronger conclusions are
                   fairly robust to failures of rank invariance but not
                   necessarily to relaxations of random assignment.},
  doi =           {https://doi.org/10.3982/QE1288},
  url =           {https://onlinelibrary.wiley.com/doi/abs/10.3982/QE1288},
}

@article{wuli2025jasa,
  title={Sensitivity analysis for quantiles of hidden biases in matched observational studies},
  author={Wu, Dongxiao and Li, Xinran},
  journal={Journal of the American Statistical Association},
  pages={1--12},
  year={2025},
  publisher={Taylor \& Francis}
}

@article{cuili2025wp,
  title={Robust Sensitivity Analysis via Augmented Percentile Bootstrap under Simultaneous Violations of Unconfoundedness and Overlap},
  author={Cui, Han and Li, Xinran},
  journal={arXiv preprint arXiv:2509.13169},
  year={2025}
}
	}
}}

\newpage

\newif\ifshowhidden

\showhiddentrue 

\newcommand{\hideif}[1]{\ifshowhidden #1\fi}

\begin{center}

{\bf \Large 
Supplementary material for ``Cluster-robust inference with a single treated cluster using the $\bm{t}$-test''}
\end{center}

\renewcommand {\theproposition} {S\arabic{proposition}}
\renewcommand {\theexample} {S\arabic{example}}
\renewcommand {\thefigure} {S\arabic{figure}}
\renewcommand {\thetable} {S\arabic{table}}
\renewcommand {\theequation} {S\arabic{equation}}
\renewcommand {\thelemma} {S\arabic{lemma}}
\renewcommand {\thesection} {S\arabic{section}}
\renewcommand {\thetheorem} {S\arabic{theorem}}
\renewcommand {\thecondition} {S\arabic{condition}}

\newcommand{\hargs}{(\gamma; c, \rho)}
\newcommand{\tha}{\widetilde H_{m,1}\hargs}
\newcommand{\thb}{\widetilde H_{m,2}\hargs}
\newcommand{\thc}{\widetilde H_{m,3}(c, \rho)}
\newcommand{\thh}{\widetilde H_m\hargs}
\newcommand{\oha}{\overline H_{m,1}(c, \rho)}
\newcommand{\ohb}{\overline H_{m,2}(c, \rho)}

\setcounter{section}{0}
\setcounter{table}{0}
\setcounter{figure}{0}

\section{Technical lemmas for the $t$-test}

Throughout this section, we will study $t$-test based on independent normal observations under Assumption \ref{assu:normal}. 
In Section \ref{sec:char_poly}, we will show that the $t$-test reject the null if and only if a certain quadratic form of the normal observations is bounded by $0$, and study the property of the corresponding coefficient matrix, particularly about its characteristic polynomial. 
In Section \ref{sec:integral_rej_prob}, we give an integral representation of the rejection probability. 
In Section \ref{sec:deri_rej_prob}, we study the first and second order partial derivatives of the rejection probability. 
In Section \ref{sec:bound_rej_prob}, we study bounds on the first order partial derivative of the rejection probability. 

\subsection{Quadratic form and characteristic polynomial of the coefficient matrix}\label{sec:char_poly}

\begin{lem}\label{lem:t_test_matrix}
Recall the definition of $T_m$ in \eqref{eq:pop-test-1}. 
Let $\tilde{\bm{\nob}} \equiv (\nob_{m+1}, \nob_1, \ldots, \nob_m)^\top$. Then for any $c > 0$,
\begin{align}
	\label{eq:lem-1}
    |T_m| > c 
    \quad
    \Longleftrightarrow
    \quad
    \tilde{\bm{\nob}}^\top \bm{V} \tilde{\bm{\nob}} < 0
\end{align}
where 
\begin{align*}
    \kappa \equiv \frac{m c^2}{m-1}
    \quad 
    \text{and}
    \quad 
	\bm{V} 
	\equiv 
    \begin{pmatrix}
        -m & \bs{1}_m^\top \\
        \bs{1}_m & 
        \kappa \bs{I}_m -
        \frac{\kappa + 1}{m}
        \bs{1}_m\bs{1}_m^\top
    \end{pmatrix}.
\end{align*}

\end{lem}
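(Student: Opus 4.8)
The plan is to reduce the event $\{|T_m| > c\}$ to a sign condition on the stated quadratic form by a direct algebraic expansion. First I would note that, since $c > 0$ and $S_m \ge 0$, the event $|T_m| > c$ is equivalent to $T_m^2 > c^2$, that is,
\[
(\nob_{m+1} - \overline\nob_m)^2 - c^2 S_m^2 > 0,
\]
where, using $\kappa = mc^2/(m-1)$, we have $c^2 S_m^2 = \frac{c^2}{m-1}\sum_{j=1}^m (\nob_j - \overline\nob_m)^2 = \frac{\kappa}{m}\sum_{j=1}^m (\nob_j - \overline\nob_m)^2$.

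The core step is to compute the quadratic form explicitly and show it equals $-m$ times the left-hand side above. Writing $\bw = (\nob_1, \ldots, \nob_m)^\top$ and expanding the block product gives
\[
\tilde{\bm{\nob}}^\top \bm{V}\tilde{\bm{\nob}} = -m\nob_{m+1}^2 + 2\nob_{m+1}\,\bs{1}_m^\top\bw + \kappa\,\bw^\top\bw - \tfrac{\kappa+1}{m}(\bs{1}_m^\top\bw)^2.
\]
I would then substitute $\bs{1}_m^\top\bw = m\overline\nob_m$, $\bw^\top\bw = \sum_{j=1}^m \nob_j^2$, and the identity $\sum_{j=1}^m \nob_j^2 = \sum_{j=1}^m(\nob_j - \overline\nob_m)^2 + m\overline\nob_m^2$. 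Collecting the $\overline\nob_m^2$ terms, the coefficient $\kappa m - (\kappa+1)m = -m$ lets the $\nob_{m+1}$ and $\overline\nob_m$ contributions combine into the perfect square $-m(\nob_{m+1} - \overline\nob_m)^2$, yielding
\[
\tilde{\bm{\nob}}^\top \bm{V}\tilde{\bm{\nob}} = -m(\nob_{m+1} - \overline\nob_m)^2 + \kappa\sum_{j=1}^m(\nob_j - \overline\nob_m)^2 = -m\big[(\nob_{m+1} - \overline\nob_m)^2 - c^2 S_m^2\big].
\]

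Since $m > 0$, the condition $\tilde{\bm{\nob}}^\top \bm{V}\tilde{\bm{\nob}} < 0$ is exactly $(\nob_{m+1} - \overline\nob_m)^2 - c^2 S_m^2 > 0$, which is the characterization of $|T_m| > c$ from the first step; this establishes \eqref{eq:lem-1}. The argument is essentially a bookkeeping exercise, so there is no serious conceptual obstacle; the only point requiring care is the degenerate event $\{S_m = 0\}$, on which $T_m$ is undefined. Since the variances are not all zero, this event has probability zero, and on its complement both characterizations coincide, so \eqref{eq:lem-1} holds as an identity of events wherever $T_m$ is defined.
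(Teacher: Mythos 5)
Your proposal is correct and follows essentially the same route as the paper: both reduce $|T_m|>c$ to the sign of $(\nob_{m+1}-\overline\nob_m)^2 - c^2 S_m^2$ and verify by direct algebra that this quantity equals $-\tfrac{1}{m}\,\tilde{\bm{\nob}}^\top\bm{V}\tilde{\bm{\nob}}$ (the paper assembles $\bm{V}$ from the two rank-structured pieces, while you expand the block product of the given $\bm{V}$ and complete the square — the same bookkeeping in opposite directions). Your closing remark on the degenerate event $\{S_m=0\}$ is a reasonable extra care the paper leaves implicit.
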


\hideif{
\begin{proof}[Proof of Lemma \ref{lem:t_test_matrix}] 
By definition, 
for any $c> 0$, $|T_m|>c$ if and only if $|\nob_{m+1} - \overline{\nob}_m| > cS_{m}$, which is further equivalent to 
$mc^2S^2_{m} - m(\nob_{m+1} - \overline{\nob}_m)^2 < 0$. 
Let $\bs{1}_m$ be an $m$-dimensional vector with all elements being $1$, 
$\bs{0}_m$ be an $m$-dimensional vector with all elements being $0$,
and $\bs{I}_m$ be an $m\times m$ identity matrix. By some algebra, we can show that 
\begin{align*}
    (\nob_{m+1} - \overline{\nob}_m)^2
    & = 
    \tilde{\bm{\nob}}^\top 
    \begin{pmatrix}
        1 \\
        -m^{-1}\bs{1}_m 
    \end{pmatrix}
    \begin{pmatrix}
        1 & -m^{-1}\bs{1}_m^\top 
    \end{pmatrix} 
    \tilde{\bm{\nob}}
    = 
    \tilde{\bm{\nob}}^\top
    \begin{pmatrix}
        1 & -m^{-1}\bs{1}_m^\top \\
        -m^{-1}\bs{1}_m & 
        m^{-2} \bs{1}_m \bs{1}_m^\top
    \end{pmatrix}
    \tilde{\bm{\nob}},
\end{align*}
and 
\begin{align*}
    (m-1) S^2_{m}
    & = 
    \sum^m_{j=1} (\nob_{j} - \overline{\nob}_m)^2   = 
    \bs{\nob}_{1:m}^\top 
    ( \bs{I}_m -m^{-1} \bs{1}_m \bs{1}_m^\top ) \bs{\nob}_{1:m}  
    = 
    \bs{\nob}_{1:m}^\top 
    ( \bs{I}_m -m^{-1} \bs{1}_m \bs{1}_m^\top )
    \bs{\nob}_{1:m}\\
    & = 
    \tilde{\bm{\nob}}^\top 
    \begin{pmatrix}
        0 & \bs{0}_m^\top \\
        \bs{0}_m & \bs{I}_m -m^{-1} \bs{1}_m \bs{1}_m^\top
    \end{pmatrix}
    \tilde{\bm{\nob}},
\end{align*}
where $\bs{\nob}_{1:m} \equiv (\nob_1, \ldots, \nob_m)^\top$.
These imply that 
\begin{align*}
    & \hspace{-15pt} mc^2S^2_{m} - m (\nob_{m+1} - \overline{\nob}_m)^2 
    \notag \\
    & = 
    \frac{m c^2}{m-1}
    \tilde{\bm{\nob}}^\top 
    \begin{pmatrix}
        0 & \bs{0}_m^\top \\
        \bs{0}_m & \bs{I}_m -m^{-1} \bs{1}_m \bs{1}_m^\top
    \end{pmatrix}
    \tilde{\bm{\nob}} 
    - 
    m \tilde{\bm{\nob}}^\top
    \begin{pmatrix}
        1 & -m^{-1}\bs{1}_m^\top \\
        -m^{-1}\bs{1}_m & 
        m^{-2} \bs{1}_m \bs{1}_m^\top
    \end{pmatrix}
    \tilde{\bm{\nob}}
    \\
    & = 
    \tilde{\bm{\nob}}^\top
    \left[
    \begin{pmatrix}
        0 & \bs{0}_m^\top \\
        \bs{0}_m &  \frac{m c^2}{m-1} \bs{I}_m - \frac{c^2}{m-1}\bs{1}_m \bs{1}_m^\top
    \end{pmatrix}
    - 
    \begin{pmatrix}
        m & -\bs{1}_m^\top \\
        -\bs{1}_m & 
        m^{-1} \bs{1}_m \bs{1}_m^\top
    \end{pmatrix}
    \right]
    \tilde{\bm{\nob}}
    \\
    & = 
    \tilde{\bm{\nob}}^\top
    \begin{pmatrix}
        -m & \bs{1}_m^\top \\
        \bs{1}_m & 
        \frac{m c^2}{m-1} \bs{I}_m -
        \frac{1}{m}
        ( \frac{mc^2}{m-1} + 1 ) \bs{1}_m\bs{1}_m^\top
    \end{pmatrix}
    \tilde{\bm{\nob}}\\
    & = \tilde{\bm{\nob}}^\top \bm{V} \tilde{\bm{\nob}},
\end{align*}
where the last equality holds by definition. 
From the above, we derive Lemma \ref{lem:t_test_matrix}. 
\end{proof}
}

\begin{lem} \label{lem:character_sigma}
Let $\bm{D} \equiv \mathrm{diag}(\sigma_{m+1}, \sigma_1, \ldots, \sigma_m)$, and define $\bm{V}$ and $\kappa$ as in Lemma \ref{lem:t_test_matrix}.
For any nonnegative $\sigma_1^2, \sigma^2_2, \ldots, \sigma^2_{m+1}$, 
\begin{align*}
    f(\lambda)
    & \equiv 
    -(m\sigma_{m+1}^2+ \lambda)
        \prod_{i=1}^m ( \kappa\sigma_i^2 - \lambda)  
        +
        \left( \kappa \sigma_{m+1}^2
        + \frac{\kappa+1}{m} \lambda \right)
        \cdot 
        \sum_{i=1}^m 
        \left[
        \sigma_i^2 \prod_{j\ne i, 1\le j \le m} ( \kappa\sigma_j^2 - \lambda) 
        \right]
\end{align*}
is a characteristic polynomial of $\bm{D}\bs{V}\bs{D}$. 
\end{lem}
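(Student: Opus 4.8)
The plan is to compute the characteristic polynomial $\det(\bm{D}\bm{V}\bm{D} - \lambda\bm{I}_{m+1})$ directly and verify that it coincides with $f(\lambda)$. First I would form the product $\bm{M} \equiv \bm{D}\bm{V}\bm{D}$ in block form, exploiting that $\bm{D}$ is diagonal with the treated-cluster entry $\sigma_{m+1}$ first. Writing $\bm{\Sigma} \equiv \mathrm{diag}(\sigma_1,\ldots,\sigma_m)$ and $\bm{s} \equiv \bm{\Sigma}\bm{1}_m = (\sigma_1,\ldots,\sigma_m)^\top$, the conjugation by $\bm{D}$ turns the $(1,1)$ entry into $-m\sigma_{m+1}^2$, the first row and column off the diagonal into $\sigma_{m+1}\bm{s}^\top$ and $\sigma_{m+1}\bm{s}$, and the lower-right block into $\kappa\bm{\Sigma}^2 - \frac{\kappa+1}{m}\bm{s}\bm{s}^\top$. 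The crucial observation is that $\kappa\bm{\Sigma}^2 - \lambda\bm{I}_m$ is diagonal, so the only non-diagonal pieces of $\bm{M}-\lambda\bm{I}_{m+1}$ are the border and the single rank-one term $-\frac{\kappa+1}{m}\bm{s}\bm{s}^\top$.

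The key step is to peel off a diagonal matrix and treat the remainder as a rank-two update. Set $\bm{\Delta} \equiv \mathrm{diag}\big(-(m\sigma_{m+1}^2+\lambda),\, \kappa\sigma_1^2-\lambda,\, \ldots,\, \kappa\sigma_m^2-\lambda\big)$, $\tilde{\bm{s}} \equiv (0, \bm{s}^\top)^\top \in \mathbb{R}^{m+1}$, and $\bm{U} \equiv [\bm{e}_1,\, \tilde{\bm{s}}]$, where $\bm{e}_1$ is the first standard basis vector. Then one checks the identity $\bm{M}-\lambda\bm{I}_{m+1} = \bm{\Delta} + \bm{U}\bm{K}\bm{U}^\top$ with the $2\times 2$ matrix $\bm{K} \equiv \begin{pmatrix} 0 & \sigma_{m+1} \\ \sigma_{m+1} & -\frac{\kappa+1}{m}\end{pmatrix}$, since $\bm{e}_1\tilde{\bm{s}}^\top$ and $\tilde{\bm{s}}\bm{e}_1^\top$ reproduce the border while $\tilde{\bm{s}}\tilde{\bm{s}}^\top$ reproduces the rank-one term in the lower-right block.

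Next I would apply the matrix determinant lemma, $\det(\bm{\Delta}+\bm{U}\bm{K}\bm{U}^\top) = \det(\bm{\Delta})\,\det(\bm{I}_2 + \bm{K}\bm{U}^\top\bm{\Delta}^{-1}\bm{U})$, valid for every $\lambda$ at which $\bm{\Delta}$ is invertible. Because the supports of $\bm{e}_1$ and $\tilde{\bm{s}}$ are disjoint, $\bm{U}^\top\bm{\Delta}^{-1}\bm{U}$ is diagonal with entries $1/a$ and $S \equiv \sum_{i=1}^m \sigma_i^2/(\kappa\sigma_i^2-\lambda)$, where $a$ denotes the $(1,1)$ entry of $\bm{\Delta}$. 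The $2\times 2$ determinant then reduces to $1 - \frac{\kappa+1}{m}S - \sigma_{m+1}^2 S/a$, giving $\det(\bm{M}-\lambda\bm{I}_{m+1}) = a\,\Pi - \big(a\tfrac{\kappa+1}{m}+\sigma_{m+1}^2\big)\,S\,\Pi$, where $\Pi \equiv \prod_{i=1}^m(\kappa\sigma_i^2-\lambda)$ and $S\,\Pi = \sum_{i=1}^m \sigma_i^2\prod_{j\ne i}(\kappa\sigma_j^2-\lambda)$. Substituting $a = -(m\sigma_{m+1}^2+\lambda)$ and simplifying $a\frac{\kappa+1}{m}+\sigma_{m+1}^2 = -\big(\kappa\sigma_{m+1}^2+\frac{\kappa+1}{m}\lambda\big)$ collapses this expression exactly into $f(\lambda)$.

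I expect the only genuine obstacle to be the bookkeeping in this last simplification, together with justifying the determinant identity where $\bm{\Delta}$ is singular. The latter is handled cleanly by observing that both sides are polynomials in $\lambda$ that agree off the finite set $\{-m\sigma_{m+1}^2\}\cup\{\kappa\sigma_i^2\}_{i=1}^m$ and hence agree identically, so the excluded values require no separate treatment. Everything else is routine matrix algebra, and the rank-two reformulation is what makes the $(m+1)$-dimensional determinant tractable.
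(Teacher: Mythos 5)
Your proposal is correct, and every step checks out: the block form of $\bm{D}\bm{V}\bm{D}$, the identity $\bm{M}-\lambda\bm{I}_{m+1}=\bm{\Delta}+\bm{U}\bm{K}\bm{U}^\top$, the diagonality of $\bm{U}^\top\bm{\Delta}^{-1}\bm{U}$ (since $\bm{e}_1$ and $\tilde{\bm{s}}$ have disjoint supports), and the final simplification $a\tfrac{\kappa+1}{m}+\sigma_{m+1}^2=-\bigl(\kappa\sigma_{m+1}^2+\tfrac{\kappa+1}{m}\lambda\bigr)$ all hold. However, your route is genuinely different from the paper's. The paper instead writes $\det(\bm{D}\bm{V}\bm{D}-\lambda\bm{I}_{m+1})=\det(\bm{D})^2\det(\bm{V}-\lambda\bm{D}^{-2})$, observes that $\bm{V}-\lambda\bm{D}^{-2}$ is a bordered matrix whose lower-right block is ``diagonal plus constant,'' and evaluates its determinant by citing a closed-form formula from Bakirov (1998, Proposition 1); because this factorization requires $\bm{D}$ invertible, the paper must then extend to the case where some $\sigma_j=0$ by a separate continuity-in-$\sigma$ argument. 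Your diagonal-plus-rank-two decomposition combined with the matrix determinant lemma buys two things: it is self-contained (no external determinant formula is invoked), and it never inverts $\bm{D}$, so zero variances are covered automatically — the only degeneracy you must handle is singular $\bm{\Delta}$, which your polynomial-identity-in-$\lambda$ argument disposes of cleanly. The paper's approach, in exchange, requires less matrix bookkeeping once the cited formula is granted. Both are valid; yours is arguably the tidier derivation from first principles.
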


\hideif{
\begin{proof}[Proof of Lemma \ref{lem:character_sigma}]
We first consider the case where $\sigma_1, \sigma_2, \ldots, \sigma_{m+1}$ are all positive. 
Let $\text{diag}(\sigma^{-2}_{1:m})$ be an $m\times m$ diagonal matrix with diagonal elements $(\sigma^{-2}_1, \ldots, \sigma^{-2}_m)$. 
By definition, we have 
\begin{align*}
    \bm{V}  - \lambda \bm{D}^{-2}
    & = 
    \begin{pmatrix}
        -m & \bs{1}_m^\top \\
        \bs{1}_m & 
        \kappa \bs{I}_m -
        \frac{\kappa + 1}{m}
        \bs{1}_m\bs{1}_m^\top
    \end{pmatrix}
    - \lambda  \bm{D}^{-2}
    \\
    & 
    = \begin{pmatrix}
        -m - \lambda \sigma_{m+1}^{-2} & \bs{1}_m^\top \\
        \bs{1}_m & 
        \kappa \bs{I}_m - \lambda \text{diag}(\sigma^{-2}_{1:m})
        - 
        \frac{\kappa + 1}{m}
        \bs{1}_m\bs{1}_m^\top
    \end{pmatrix}
    \\
    & = 
    \begin{pmatrix}
			b  & 1 & 1 & \cdots & 1 \\
			1 & d + c_1 & d & \cdots & d \\
			1 & d & d + c_2 & \cdots & d \\ 
			\vdots & \vdots & \vdots & \ddots & \vdots \\
			1 & d & d & \cdots & d + c_m \\ 
	\end{pmatrix},
\end{align*}
where $b = -m- \lambda \sigma_{m+1}^{-2}$, $d = - \frac{\kappa + 1}{m}$, and $c_i = \kappa - \lambda \sigma_i^{-2}$ for $1\le i \le m$. 
Applying \citet[][Proposition 1]{bakirov1998jms}, we have 
\begin{align*}
		& \quad \ \det ( \bm{D}\bm{V} \bm{D} - \lambda \bm{I}_{m+1}) 
        \\
        & = \det (\bm{D})^2 \det( \bm{V} - \lambda \bm{D}^{-2}) \\
        & = \prod^{m+1}_{j=1} \sigma_j^2 
		\begin{vmatrix}
			b  & 1 & 1 & \cdots & 1 \\
			1 & d + c_1 & d & \cdots & d \\
			1 & d & d + c_2 & \cdots & d \\ 
			\vdots & \vdots & \vdots & \ddots & \vdots \\
			1 & d & d & \cdots & d + c_m \\ 
		\end{vmatrix}
        \\
        & = \prod^{m+1}_{j=1} \sigma_j^2  \cdot
        b \prod_{i=1}^m c_i \left( 1 + d \sum_{i=1}^m \frac{1}{c_i} - \frac{1}{b} \sum_{i=1}^m \frac{1}{c_i} \right)
        \\
        & = 
        \sigma_{m+1}^2 b  \cdot
        \prod_{i=1}^m (\sigma_i^2 c_i) 
        \cdot
        \left[ 1 + \left( d - \frac{\sigma_{m+1}^2}{\sigma_{m+1}^2 b} \right) \sum_{i=1}^m \frac{\sigma_i^2}{c_i \sigma_i^2} \right]\\
        & = 
        -(m\sigma_{m+1}^2+ \lambda)  \cdot
        \prod_{i=1}^m ( \kappa\sigma_i^2 - \lambda) 
        \cdot
        \left[ 1 + \left( -\frac{\kappa+1}{m} + \frac{\sigma_{m+1}^2}{m\sigma_{m+1}^2+ \lambda} \right) \sum_{i=1}^m \frac{\sigma_i^2}{\kappa\sigma_i^2 - \lambda} \right]
        \\
        & = 
        -(m\sigma_{m+1}^2+ \lambda)
        \prod_{i=1}^m ( \kappa\sigma_i^2 - \lambda)  
        -
        \left[ -\frac{\kappa+1}{m}(m\sigma_{m+1}^2+ \lambda) + \sigma_{m+1}^2\right]
        \cdot 
        \sum_{i=1}^m 
        \left[
        \sigma_i^2 \prod_{j\ne i, 1\le j \le m} ( \kappa\sigma_j^2 - \lambda) 
        \right]
        \\
        & = 
        -(m\sigma_{m+1}^2+ \lambda)
        \prod_{i=1}^m ( \kappa\sigma_i^2 - \lambda)  
        +
        \left( \kappa \sigma_{m+1}^2
        + \frac{\kappa+1}{m} \lambda \right)
        \cdot 
        \sum_{i=1}^m 
        \left[
        \sigma_i^2 \prod_{j\ne i, 1\le j \le m} ( \kappa\sigma_j^2 - \lambda) 
        \right]
        \\
        & = f(\lambda),
	\end{align*}
    where the last equality holds by definition. 

    We then consider the case where some of the $\{\sigma_j\}^{m+1}_{j=1}$ can be zero. 
    By definition, for any $\lambda \in \mathbb{R}$, 
    both $\det ( \bm{D}\bm{V} \bm{D} - \lambda \bm{I}_{m+1})$ and $f(\lambda)$, viewed as functions of $\{\sigma_j\}^{m+1}_{j=1}$, 
    are polynomial and thus continuous functions of $\{\sigma_j\}^{m+1}_{j=1}$. 
    By continuity and the fact that $\det ( \bm{D}\bm{V} \bm{D} - \lambda \bm{I}_{m+1})=f(\lambda)$ for any positive $\{\sigma_j\}^{m+1}_{j=1}$, 
    we can know that $\det ( \bm{D}\bm{V} \bm{D} - \lambda \bm{I}_{m+1})$ must equal $f(\lambda)$ for any nonnegative $\{\sigma_j\}^{m+1}_{j=1}$. 

    From the above, Lemma \ref{lem:character_sigma} holds. 
\end{proof}
}

\begin{lem}\label{lem:char_gamma} 
Consider any positive 
$\sigma_{m+1}$ and any nonnegative $\sigma_1, \ldots, \sigma_m$. 
Let $\kappa \equiv \frac{m c^2}{m-1}$, 
$\gamma_i \equiv \frac{\sigma_i}{\sigma_{m+1}}$ for $1\le i \le m$, 
and 
\begin{align}\label{eq:char_gamma}
    g(\theta) = -(m+ \theta)
        \prod_{i=1}^m ( \kappa\gamma_i^2 - \theta)  
        +
        \left( \kappa 
        + \frac{\kappa+1}{m} \theta\right)
        \cdot 
        \sum_{i=1}^m 
        \left[
        \gamma_i^2 \prod_{j\ne i} ( \kappa\gamma_j^2 - \theta ) 
        \right]. 
\end{align}
Then, for any $\lambda \in \mathbb{R}$, $f(\lambda)$ in Lemma \ref{lem:character_sigma} can be written equivalently as
\[
    f(\lambda) = (\sigma_{m+1}^{2} )^{m+1} 
    g\left(\frac{\lambda}{\sigma^2_{m+1}}\right).
\]
Consequently, if $\lambda_1, \ldots, \lambda_{m+1}$ are the $m+1$ roots of $f(\cdot)$, 
then $\theta_i = \frac{\lambda_i}{\sigma^2_{m+1}}$, $i=1,2,\ldots, m+1$, are the $(m+1)$ roots of $g(\cdot)$. 
\end{lem}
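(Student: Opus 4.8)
The plan is to prove the identity by a direct change of variable, since both $f$ and $g$ are explicit polynomials and the claimed relation is nothing more than a rescaling of the argument. First I would substitute $\theta = \lambda/\sigma_{m+1}^2$ and $\gamma_i^2 = \sigma_i^2/\sigma_{m+1}^2$ into the expression \eqref{eq:char_gamma} for $g$ and track how each factor scales. The one observation that drives everything is that every factor of the form $\kappa\gamma_j^2 - \theta$ equals $(\kappa\sigma_j^2 - \lambda)/\sigma_{m+1}^2$, so that a product of $r$ such factors contributes a denominator $(\sigma_{m+1}^2)^r$. It therefore suffices to verify that the two terms of $g$ each scale by the \emph{same} power of $\sigma_{m+1}^2$, namely $(\sigma_{m+1}^2)^{-(m+1)}$, so that this common factor can be pulled out.

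Concretely, I would treat the two terms of $g$ separately. For the first term, $m + \theta = (m\sigma_{m+1}^2 + \lambda)/\sigma_{m+1}^2$ and $\prod_{i=1}^m(\kappa\gamma_i^2 - \theta) = (\sigma_{m+1}^2)^{-m}\prod_{i=1}^m(\kappa\sigma_i^2 - \lambda)$, so the first term of $g$ equals $(\sigma_{m+1}^2)^{-(m+1)}$ times the first term of $f$ from Lemma \ref{lem:character_sigma}. For the second term, $\kappa + \frac{\kappa+1}{m}\theta = (\sigma_{m+1}^2)^{-1}\big(\kappa\sigma_{m+1}^2 + \frac{\kappa+1}{m}\lambda\big)$, while each summand $\gamma_i^2\prod_{j\ne i}(\kappa\gamma_j^2 - \theta)$ carries $\gamma_i^2 = \sigma_i^2/\sigma_{m+1}^2$ together with a product of $(m-1)$ factors, for a total factor of $(\sigma_{m+1}^2)^{-m}$; hence the second term of $g$ equals $(\sigma_{m+1}^2)^{-(m+1)}$ times the second term of $f$. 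Adding the two gives $g(\lambda/\sigma_{m+1}^2) = (\sigma_{m+1}^2)^{-(m+1)} f(\lambda)$, which rearranges to the asserted $f(\lambda) = (\sigma_{m+1}^2)^{m+1} g(\lambda/\sigma_{m+1}^2)$ for all $\lambda \in \mathbb{R}$.

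For the statement about roots, I would use that $\sigma_{m+1} > 0$ forces the constant $(\sigma_{m+1}^2)^{m+1}$ to be nonzero, so the identity shows $f(\lambda) = 0$ if and only if $g(\lambda/\sigma_{m+1}^2) = 0$. Since $f$ is a characteristic polynomial of the $(m+1)\times(m+1)$ matrix $\bm{D}\bm{V}\bm{D}$, it has degree $m+1$ and hence $m+1$ roots counted with multiplicity; and because the map $\lambda \mapsto \lambda/\sigma_{m+1}^2$ is a bijection of $\mathbb{R}$ (indeed of $\mathbb{C}$) that scales by a nonzero constant, it carries the roots of $f$ bijectively onto those of $g$ while preserving multiplicities. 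In particular, if $\lambda_1,\ldots,\lambda_{m+1}$ are the roots of $f$, then $\theta_i = \lambda_i/\sigma_{m+1}^2$ are exactly the $m+1$ roots of $g$.

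I do not expect a genuine obstacle here, as the argument is purely computational; the only point requiring care is the bookkeeping of the powers of $\sigma_{m+1}^2$, ensuring that both terms of $g$ scale by precisely $(\sigma_{m+1}^2)^{-(m+1)}$ so that the single common factor factors out cleanly.
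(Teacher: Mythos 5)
Your proposal is correct and follows essentially the same route as the paper's proof: a direct substitution $\theta=\lambda/\sigma_{m+1}^2$, $\gamma_i^2=\sigma_i^2/\sigma_{m+1}^2$ showing $(\sigma_{m+1}^2)^{-(m+1)}f(\lambda)=g(\theta)$, with the roots statement following because the scaling constant is nonzero. Your explicit bookkeeping of the powers of $\sigma_{m+1}^2$ in each of the two terms is exactly the ``some algebra'' the paper leaves implicit.
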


\hideif{
\begin{proof}[Proof of Lemma \ref{lem:char_gamma}]
For any positive $\sigma_{m+1}^2$, 
$\lambda \in \mathbb{R}$ and $\theta \equiv \frac{\lambda}{\sigma_{m+1}^2}$, 
by definition and some algebra, we can verify that  
\begin{align*}
    (\sigma_{m+1}^{2} )^{-(m+1)} \cdot f(\lambda)
    & = 
    -(m+ \theta)
        \prod_{i=1}^m ( \kappa\gamma_i^2 - \theta)  
        +
        \left( \kappa 
        + \frac{\kappa+1}{m} \theta\right)
        \cdot 
        \sum_{i=1}^m 
        \left[
        \gamma_i^2 \prod_{j\ne i} ( \kappa\gamma_j^2 - \theta ) 
        \right] \\
    &= g(\theta). 
\end{align*}
This immediately implies Lemma \ref{lem:char_gamma}. 
\end{proof}
}

\begin{lem}\label{lem:roots}
Consider any $c>0$, any positive 
$\sigma_{m+1}$, and any nonnegative $\sigma_1, \ldots, \sigma_m$. 
Define $\kappa, \gamma_1, \ldots, \gamma_m$ and $g(\cdot)$ as in Lemma \ref{lem:char_gamma}. 
Define further $\tau \equiv \frac{\kappa +1}{\kappa m}$. 
Let $\gamma_{(1)} \le \gamma_{(2)} \le \ldots \le \gamma_{(m)}$ be the sorted values of $\gamma_1, \gamma_2,  \ldots, \gamma_m$.
The $(m+1)$ roots of $g(\theta)$ in \eqref{eq:char_gamma} can then be characterized as follows: 
	\begin{enumerate}[label=(\alph*)]
		\item for each $1\le i \le m-1$, there is a nonnegative root $\theta_i \in (\kappa \gamma_{(i)}^2, \kappa \gamma_{(i+1)}^2)$ if $\gamma_{(i)} <  \gamma_{(i+1)}$ and $\theta_i = \kappa \gamma_{(i)}^2$ otherwise;
		\item there is a zero root $\theta_m = 0$;
		\item there is a negative root 
        $\theta_{m+1} < - \tau^{-1}$. 
	\end{enumerate}
\end{lem}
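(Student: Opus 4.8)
The plan is to locate all $m+1$ roots of the degree-$(m+1)$ polynomial $g$ by evaluating it at the distinguished points $0$, $-\tau^{-1}$, and $\{\kappa\gamma_{(i)}^2\}_{i=1}^m$, where the two additive pieces of $g$ collapse so that the sign is transparent, and then invoking the intermediate value theorem. First I would record the structural facts: $g$ has degree $m+1$ with leading coefficient $(-1)^{m+1}$, so that $g(\theta)\to+\infty$ as $\theta\to-\infty$; and all its roots are real (either because $\bm{D}\bm{V}\bm{D}$ is symmetric by Lemma \ref{lem:character_sigma} and $g$ is its characteristic polynomial rescaled via Lemma \ref{lem:char_gamma}, or a posteriori from the $m+1$ real roots we will exhibit).

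For part (b) I would substitute $\theta=0$: the first product becomes $-m\kappa^m\prod_i\gamma_i^2$, while each summand of the sum equals $\kappa^{m-1}\prod_i\gamma_i^2$, so the $m$ summands times the prefactor $\kappa$ give $+m\kappa^m\prod_i\gamma_i^2$, and hence $g(0)=0$ identically. For part (c) the key observation is that the second-term prefactor $\kappa+\frac{\kappa+1}{m}\theta$ vanishes exactly at $\theta=-\tau^{-1}=-\frac{\kappa m}{\kappa+1}$, which kills the entire sum, leaving $g(-\tau^{-1})=-(m-\tau^{-1})\prod_i(\kappa\gamma_i^2+\tau^{-1})=-\frac{m}{\kappa+1}\prod_i(\kappa\gamma_i^2+\tau^{-1})<0$. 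Combined with $g(\theta)\to+\infty$ as $\theta\to-\infty$, the intermediate value theorem yields a root in $(-\infty,-\tau^{-1})$; since $g(-\tau^{-1})<0$ strictly, this root is strictly below $-\tau^{-1}$.

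For part (a), evaluating at $\theta=\kappa\gamma_\ell^2$ kills the first product and every summand except the one indexed by $\ell$, giving $g(\kappa\gamma_\ell^2)=\kappa\big(1+\tfrac{(\kappa+1)\gamma_\ell^2}{m}\big)\,\gamma_\ell^2\,\kappa^{m-1}\prod_{j\ne\ell}(\gamma_j^2-\gamma_\ell^2)$. In the generic case where the $\gamma_{(i)}$ are distinct and positive, the scalar prefactors are positive and the product contributes exactly $(i-1)$ negative factors at rank $i$, so $\mathrm{sign}\,g(\kappa\gamma_{(i)}^2)=(-1)^{i-1}$; consecutive points carry opposite signs, and the intermediate value theorem produces one root in each of the $m-1$ disjoint open intervals $(\kappa\gamma_{(i)}^2,\kappa\gamma_{(i+1)}^2)$. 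Together with the root at $0$ and the root below $-\tau^{-1}$, this exhibits $m+1$ roots in pairwise disjoint regions; since $\deg g=m+1$, these are all the roots and each is simple, which settles the generic case.

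Finally I would remove the genericity assumptions (repeated values, or $\gamma_{(1)}=0$) by continuity: the roots of $g$ depend continuously on its coefficients, which are in turn continuous in $\{\gamma_i^2\}$. Perturbing the $\gamma_i$ to be distinct and positive and passing to the limit, each interval root lands in the closed interval $[\kappa\gamma_{(i)}^2,\kappa\gamma_{(i+1)}^2]$, collapsing to $\kappa\gamma_{(i)}^2$ precisely when $\gamma_{(i)}=\gamma_{(i+1)}$; the zero root persists because $g(0)=0$ holds for every configuration; and the negative root persists strictly below $-\tau^{-1}$ because $g(-\tau^{-1})<0$ holds for every configuration. I expect the main obstacle to be exactly this degenerate bookkeeping: verifying that a value repeated $r$ times yields a root of the right multiplicity at $\kappa v$ (checkable directly, since every term of $g$ is then divisible by $(\kappa v-\theta)^{r-1}$), and confirming that when $\gamma_{(1)}=0$ the first interval root stays strictly positive and does not merge into the structural simple root at $0$.
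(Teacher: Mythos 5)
Your overall strategy is the same as the paper's: evaluate $g$ at the knot points $\{\kappa\gamma_{(i)}^2\}$, read off alternating signs, apply the intermediate value theorem, and handle ties by observing that a value repeated $r$ times makes every term of $g$ divisible by $(\kappa\gamma^2-\theta)^{r-1}$ (the paper writes this as the factorization $g(\theta)=\prod_i(\kappa\tilde\gamma_i^2-\theta)^{m_i-1}\,\tilde g(\theta)$). Your computations for (a) and (b) match the paper's exactly. Where you genuinely diverge is part (c): the paper establishes $g(\theta)<0$ on the entire interval $[-\tau^{-1},0)$ via an inequality on $\sum_i \tau(\kappa\gamma_i^2+\tau^{-1})/(\kappa\gamma_i^2-\theta)$, whereas you only evaluate at the single point $\theta=-\tau^{-1}$, where the prefactor $\kappa+\frac{\kappa+1}{m}\theta$ of the sum vanishes and $g(-\tau^{-1})=-\frac{m}{\kappa+1}\prod_i(\kappa\gamma_i^2+\tau^{-1})<0$ in closed form. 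Combined with $g(\theta)\to+\infty$ as $\theta\to-\infty$ and the final root count, this is a cleaner and fully sufficient argument for the lemma as stated; the paper's stronger interval statement is not needed here.

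The one piece you have flagged but not actually closed is the case $\gamma_{(1)}=0<\gamma_{(2)}$. There your sign-alternation argument degenerates ($g(\kappa\gamma_{(1)}^2)=g(0)=0$, not $(-1)^{0}$), and the continuity/perturbation argument by itself cannot exclude the possibility that, as the perturbation $\epsilon\to 0$, the root in $(\kappa\epsilon^2,\kappa\gamma_{(2)}^2)$ collapses onto $0$ and merges with the structural zero root, leaving no root in the open interval $(0,\kappa\gamma_{(2)}^2)$. To rule this out you must pin down the exact multiplicity of $0$ as a root of $g$: if $m_1$ of the $\gamma_i$ vanish, you need the multiplicity to be exactly $m_1$, not more. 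This is what the paper's extra computation does: after writing $\tilde g(\theta)=-\theta\,\check g(\theta)$, it checks $\check g(0)=-m_1\kappa^{L-1}\prod_{i\ge 2}\tilde\gamma_i^2<0$ strictly, so the sign sequence $\check g(0)<0$, $\check g(\kappa\tilde\gamma_2^2)>0,\dots$ restarts the alternation and produces a strictly positive root in $(0,\kappa\tilde\gamma_2^2)$. Your divisibility observation gives only a lower bound on the multiplicity at $0$; the needed upper bound is this one additional evaluation, which you should add to make the degenerate case complete.
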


\hideif{
\begin{proof}[Proof of Lemma \ref{lem:roots}]
First, we give an equivalent expression for $g(\cdot)$. 
Let $\{\tilde\gamma_i\}^L_{i=1}$ be the unique ordered values of $\{\gamma_i\}^m_{i=1}$ such that $\tilde\gamma_1 < \cdots < \tilde \gamma_L$. In addition, let $m_i$ be the multiplicity of $\tilde\gamma_i$ for each $i = 1, \ldots, L$. Note that $\sum^L_{i=1} m_i = m$ by construction. 
From Lemma \ref{lem:char_gamma}, 
$g(\cdot)$ in \eqref{eq:char_gamma} has the following equivalent forms:
\begin{align}\label{eq:char_gamma_equiv}
    g(\theta) & = -(m+ \theta)
        \prod_{i=1}^m ( \kappa\gamma_i^2 - \theta)  
        +
        \left( \kappa 
        + \frac{\kappa+1}{m} \theta\right)
        \cdot 
        \sum_{i=1}^m 
        \left[
        \gamma_i^2 \prod_{j\ne i} ( \kappa\gamma_j^2 - \theta ) 
        \right]
        \nonumber
    \\
    & = 
    \prod_{i=1}^L  (\kappa\tilde\gamma_i^2 - \theta)^{m_i-1}
    \left\{ 
    -(m+ \theta)
        \prod_{i=1}^L ( \kappa\tilde\gamma_i^2 - \theta)  
        +
        \left( \kappa 
        + \frac{\kappa+1}{m} \theta\right)
        \cdot 
        \sum_{i=1}^L 
        \left[
        m_i
        \tilde\gamma_i^2 
         \prod_{j\ne i} ( \kappa\tilde\gamma_j^2 - \theta ) 
        \right]
    \right\}
    \nonumber
    \\
    & = 
    \prod_{i=1}^L  (\kappa\tilde\gamma_i^2 - \theta)^{m_i-1} \tilde{g}(\theta), 
\end{align}
where
\begin{align*}
	\tilde g(\theta)
	\equiv
	-(m+ \theta)
        \prod_{i=1}^L ( \kappa\tilde\gamma_i^2 - \theta)  
        +
        \left( \kappa 
        + \frac{\kappa+1}{m} \theta\right)
        \cdot 
        \sum_{i=1}^L 
        \left[
        m_i
        \tilde\gamma_i^2 
         \prod_{j\ne i} ( \kappa\tilde\gamma_j^2 - \theta ) 
        \right]. 
\end{align*}
From \eqref{eq:char_gamma_equiv}, it can be seen that there are $\sum^L_{i=1} (m_i - 1) = \sum^L_{i=1} m_l - L = m - L$ roots from the factor $\prod_{i=1}^L  (\kappa\tilde\gamma_i^2 - \theta)^{m_i-1}$.

Second, we prove (a) and (b) in the case where $\tilde\gamma_1 > 0$.  Consider the value of $g(\cdot)$ evaluated at  
$\kappa \tilde\gamma_l^2$ for $l=1, \ldots, L$.  
By definition, 
for $1\le l \le L$, 
we have
\begin{align}\label{eq:g_tilde_phi_l}
    \tilde g(\kappa \tilde\gamma_l^2)
    & \equiv
	-(m+ \kappa \tilde\gamma_l^2)
        \underbrace{\prod_{i=1}^L ( \kappa\tilde\gamma_i^2 - \kappa \tilde\gamma_l^2)}_{=0}
        +
        \left( \kappa 
        + \frac{\kappa+1}{m} \kappa \tilde\gamma_l^2 \right)
        \cdot 
        \sum_{i=1}^L 
        \left[
        m_i
        \tilde\gamma_i^2 
         \prod_{j\ne i} ( \kappa\tilde\gamma_j^2 - \kappa \tilde\gamma_l^2 ) 
        \right] 
        \nonumber
        \\
    & = 
    \left( \kappa 
        + \frac{\kappa+1}{m} \kappa \tilde\gamma_l^2 \right)
        \cdot 
        m_l 
        \tilde\gamma_l^2 
        \cdot
         \prod_{j\ne l} ( \kappa\tilde\gamma_j^2 - \kappa \tilde\gamma_l^2 ) 
         \nonumber
    \\
    & = 
    \underbrace{\kappa^{L-1}
    \left( \kappa 
        + \frac{\kappa+1}{m} \kappa \tilde\gamma_l^2 \right)
        \cdot 
        m_l 
        \tilde\gamma_l^2}_{>0}
    \cdot
         \prod_{j\ne l} ( \tilde\gamma_j^2 - \tilde{\gamma}_l^2 ) . 
\end{align}
By the construction of $\{\tilde\gamma_j\}_{j=1}^L$, we can know that the sign of  
$\tilde g(\kappa \tilde\gamma_l^2)$ is $(-1)^{l-1}$ for $1\le l \le L$. 
Thus, by the continuity of $\tilde{g}(\cdot)$, it must have a positive root in $(\kappa\tilde{\gamma}_l^2, \kappa\tilde{\gamma}_{l+1}^2)$, for $l=1,2,\ldots, L-1$. 
In addition, we can verify that 
\begin{align*}
    \tilde g(0)
	& \equiv
	-m
        \prod_{i=1}^L ( \kappa\tilde\gamma_i^2 )  
        +
        \kappa 
        \cdot 
        \sum_{i=1}^L 
        \left[
        m_i
        \tilde\gamma_i^2 
         \prod_{j\ne i, 1\le j \le L} ( \kappa\tilde\gamma_j^2 )
        \right] \\
    & = 
    -m \kappa^L 
    \prod_{i=1}^L \tilde\gamma_i^2  
    +
    \kappa^L  
    \cdot 
    \sum_{i=1}^L 
    \left(
    m_i
     \prod_{j=1}^L  \tilde\gamma_j^2
    \right)\\
    & = 
    \kappa^L 
    \prod_{i=1}^L \tilde\gamma_i^2 
    \cdot 
    \left( -m + \sum_{i=1}^L m_i \right) = 0. 
\end{align*}
Combined with the first part, we can know that (a) and (b) in Lemma \ref{lem:roots} hold.

Third, we prove (a) and (b) in the case where $\tilde\gamma_1 = 0$. 
In this case, $\tilde{g}(\cdot)$ simplifies to 
\begin{align}\label{eq:g_gamma_0}
	\tilde g(\theta)
	& \equiv
	-(m+ \theta)(-\theta)
        \prod_{i=2}^L ( \kappa\tilde\gamma_i^2 - \theta)  
        +
        \left( \kappa 
        + \frac{\kappa+1}{m} \theta\right)
        (-\theta)
        \cdot 
        \sum_{i=2}^L 
        \left[
        m_i
        \tilde\gamma_i^2 
         \prod_{j\ne 1, i} ( \kappa\tilde\gamma_j^2 - \theta ) 
        \right]
        \nonumber
    \\
    & = 
    -\theta \cdot 
    \left\{
        -(m+ \theta)
        \prod_{i=2}^L ( \kappa\tilde\gamma_i^2 - \theta)  
        +
        \left( \kappa 
        + \frac{\kappa+1}{m} \theta\right)
        \cdot 
        \sum_{i=2}^L 
        \left[
        m_i
        \tilde\gamma_i^2 
         \prod_{j\ne 1, i} ( \kappa\tilde\gamma_j^2 - \theta ) 
        \right]
    \right\}
    \nonumber
    \\
    & = - \theta \cdot \check{g}(\theta), 
\end{align}
where 
\begin{align*}
    \check{g}(\theta)
    & = 
    -(m+ \theta)
    \prod_{i=2}^L ( \kappa\tilde\gamma_i^2 - \theta)  
    +
    \left( \kappa 
    + \frac{\kappa+1}{m} \theta\right)
    \cdot 
    \sum_{i=2}^L 
    \left[
    m_i
    \tilde\gamma_i^2 
     \prod_{j\ne 1, i} ( \kappa\tilde\gamma_j^2 - \theta ) 
    \right]. 
\end{align*}
We can verify that
\begin{align*}
    \check{g}(\kappa \tilde\gamma_1^2) 
    & = 
    \check{g}(0)
    =
    -m
    \prod_{i=2}^L ( \kappa\tilde\gamma_i^2 )  
    +
    \kappa 
    \cdot 
    \sum_{i=2}^L 
    \left[
    m_i
    \tilde\gamma_i^2 
     \prod_{j\ne 1, i} ( \kappa\tilde\gamma_j^2) 
    \right]
    = 
    \kappa^{L-1} 
    \prod_{i=2}^L \tilde\gamma_i^2 
    \left(- m + \sum_{i=2}^L m_i\right)\\
    & = \kappa^{L-1} 
    \prod_{i=2}^L \tilde\gamma_i^2 \cdot (-m_1) < 0, 
\end{align*}
where the last equality holds because $\sum^L_{i=1} m_i = m$, 
and
for $2\le l \le L$, 
\begin{align*}
    \check{g}(\kappa \tilde{\gamma}_l^2)
    & = 
    -(m+ \kappa \tilde{\gamma}_l^2)
    \underbrace{
    \prod_{i=2}^L ( \kappa\tilde\gamma_i^2 - \kappa \tilde{\gamma}_l^2)}_{=0}
    +
    \left( \kappa 
    + \frac{\kappa+1}{m} \kappa \tilde{\gamma}_l^2\right)
    \cdot 
    \sum_{i=2}^L 
    \left[
    m_i
    \tilde\gamma_i^2 
     \prod_{j\ne 1, i} ( \kappa\tilde\gamma_j^2 - \kappa \tilde{\gamma}_l^2 ) 
    \right]
    \\
    & = 
    \underbrace{\left( \kappa 
    + \frac{\kappa+1}{m} \kappa \tilde{\gamma}_l^2\right)
    \cdot  
    m_l 
    \tilde\gamma_l^2}_{>0} 
    \cdot
     \prod_{j\ne 1, l} ( \kappa\tilde\gamma_j^2 - \kappa \tilde{\gamma}_l^2 ), 
\end{align*}
whose sign is $(-1)^{l-2}$. 
Thus, by the continuity of $\check{g}(\cdot)$, it must have a positive root in $(\kappa\tilde{\gamma}_l^2, \kappa\tilde{\gamma}_{l+1}^2)$, for $l=1,2,\ldots, L-1$. 
In addition, as shown in \eqref{eq:g_gamma_0}, $\tilde{g}(\cdot)$ also has a zero root. 
Combined with the first part, we can know that (a) and (b) in Lemma \ref{lem:roots} holds. 

Fourth, we prove (c).  
Consider first the sign of the limit of $g(\theta)$ as $\theta \longrightarrow -\infty$. 
By definition, 
\begin{align*}
    & \quad \ \lim_{\theta \rightarrow -\infty}\theta^{-(m+1)} g(\theta) \\
    & = 
    \lim_{\theta \rightarrow -\infty} \left\{ -(m/\theta + 1)
    \prod_{i=1}^m ( \kappa\gamma_i^2/\theta - 1)  
    +
    \theta^{-1} 
    \left( \kappa/\theta 
    + \frac{\kappa+1}{m} \right)
    \cdot 
    \sum_{i=1}^m 
    \left[
    \gamma_i^2 \prod_{j\ne i} ( \kappa\gamma_j^2/\theta - 1 ) 
    \right]\right\}\\
    & = (-1)^{m+1}. 
\end{align*}
Consequently, $g(\theta)$ is positive when $\theta$ is sufficiently small. 
Consider then the sign of $g(\theta)$ when $-\tau^{-1} \le \theta < 0$. 
By definition, for $\theta < 0$, 
\begin{align}\label{eq:g_simp}
    & \quad \ \underbrace{\left[\prod_{i=1}^m ( \kappa\gamma_i^2 - \theta) \right]^{-1}}_{>0} 
    g(\theta) 
    \nonumber
    \\
    & = 
    -(m+ \theta)
        +
        \left( \kappa 
        + \frac{\kappa+1}{m} \theta\right)
        \cdot 
        \sum_{i=1}^m 
        \frac{\gamma_i^2}{\kappa\gamma_i^2 - \theta} 
    = 
    -(m+ \theta)
    +
    ( 1 
    + \tau \theta)
    \cdot 
    \sum_{i=1}^m 
    \frac{\kappa  \gamma_i^2}{\kappa\gamma_i^2 - \theta}
    \nonumber
    \\
    & = 
    \sum_{i=1}^m 
    \left[ 
    \frac{ \kappa  \gamma_i^2
    + \tau \theta \kappa  \gamma_i^2}{\kappa\gamma_i^2 - \theta}
    - 1
    \right]
    - \theta 
    = 
    \sum_{i=1}^m 
    \frac{  \tau \theta \kappa  \gamma_i^2 + \theta}{\kappa\gamma_i^2 - \theta}
    - \theta 
    = 
    \theta 
    \left(
    \sum_{i=1}^m 
    \frac{  \tau \kappa  \gamma_i^2 + 1}{\kappa\gamma_i^2 - \theta}
    - 1 \right)
    \nonumber
    \\
    & = 
    \theta 
    \left[
    \sum_{i=1}^m 
    \frac{  \tau (\kappa  \gamma_i^2 + \tau^{-1} )}{\kappa\gamma_i^2 - \theta}
    - 1 \right]. 
\end{align}
When $-\tau^{-1} \le \theta < 0$, we have $\tau^{-1} \ge -\theta$, and thus $\kappa  \gamma_i^2 + \tau^{-1} \ge \kappa\gamma_i^2 - \theta$. This then implies that 
\begin{align*}
    \underbrace{\left[\prod_{i=1}^m ( \kappa\gamma_i^2 - \theta) \right]^{-1}}_{>0} 
    g(\theta)
    & = 
    \theta 
    \left[
    \sum_{i=1}^m 
    \frac{  \tau (\kappa  \gamma_i^2 + \tau^{-1} )}{\kappa\gamma_i^2 - \theta}
    - 1 \right]     \\
    & \le 
    \theta 
    \left(
    \sum_{i=1}^m 
    \tau 
    - 1 \right)     \\
    & = \theta (m\tau - 1) \\
    & = 
    \theta \left(\frac{\kappa+1}{\kappa} - 1\right)
    \\
    & < 0. 
\end{align*}
Consequently, $g(\theta)$ is negative for $\theta \in [-\tau^{-1}, 0)$. 
From the above, $g(\cdot)$ must have a negative root in $(-\infty, -\tau^{-1})$. 
Thus, (c) in Lemma \ref{lem:roots} holds. 

From the above, Lemma \ref{lem:roots} holds. 
\end{proof}
}

\subsection{Integral representation of the rejection probability}\label{sec:integral_rej_prob}

\begin{lem} \label{lem:rej_prob_integral}
Recall the test statistic $T_m$ in \eqref{eq:pop-test-1}. 
Let $\{\theta_i\}^{m+1}_{i=1}$ be the roots of the function $g(\cdot)$ in Lemma \ref{lem:char_gamma}, with $\theta_{m+1}$ being the unique negative root, and 
$g_0(\theta) \equiv \prod^{m+1}_{i=1} (\theta - \theta_i)$. Then for any $c>0$, any positive 
$\sigma_{m+1}$, and any nonnegative $\sigma_1, \ldots, \sigma_m$,
	\begin{equation*}
		\bP[ |T_m|> c ]
		= \frac{1}{\pi} 
		 \int^{|\theta_{m+1}|}_0	\frac{s^{\frac{m-1}{2}}}{
         [(-1)^m g_0(-s)]^{\frac{1}{2}}
         } \ \mathrm{d}s
         = 
         \frac{1}{\pi} 
		 \int^{|\theta_{m+1}|}_0	\frac{s^{\frac{m-1}{2}}}{
         [-g(-s)]^{\frac{1}{2}}
         } \ \mathrm{d}s
	\end{equation*}
\end{lem}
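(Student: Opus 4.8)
The plan is to turn the rejection event into the sign of a quadratic form in i.i.d.\ standard normals, read off the eigenvalue signs from the earlier lemmas, and then evaluate the resulting tail probability by Fourier inversion and a branch‑cut computation. Throughout I work under the null ($\delta=0$), matching \eqref{eq:rej_prob_integral_main}, so that $\tilde{\bm\psi}\equiv(\psi_{m+1},\psi_1,\ldots,\psi_m)^\top$ is mean‑zero Gaussian with covariance $\bm D^2$, $\bm D\equiv\mathrm{diag}(\sigma_{m+1},\sigma_1,\ldots,\sigma_m)$. First I would invoke Lemma \ref{lem:t_test_matrix} to write $\{|T_m|>c\}=\{\tilde{\bm\psi}^\top\bm V\tilde{\bm\psi}<0\}$, and substitute $\tilde{\bm\psi}=\bm D\bm Z$ with $\bm Z\sim\cN(\bm 0,\bm I_{m+1})$, turning the event into $\{\bm Z^\top(\bm D\bm V\bm D)\bm Z<0\}$. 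Since $\bm D\bm V\bm D$ is symmetric, an orthogonal diagonalization gives $\bm Z^\top\bm D\bm V\bm D\bm Z=\sum_{i=1}^{m+1}\lambda_i W_i^2$ with $W_i$ i.i.d.\ $\cN(0,1)$ and $\lambda_i$ the eigenvalues, i.e.\ the roots of $f$ in Lemma \ref{lem:character_sigma}. By Lemma \ref{lem:char_gamma} we have $\lambda_i=\sigma_{m+1}^2\theta_i$, and by Lemma \ref{lem:roots} exactly one root is negative ($\theta_{m+1}<0$), one is zero ($\theta_m=0$), and the remaining $m-1$ are nonnegative. The zero eigenvalue drops out and the common factor $\sigma_{m+1}^2$ cancels, so
\[
\bP[|T_m|>c]=\bP[Q<0],\qquad Q\equiv\sum_{i=1}^{m-1}\theta_i W_i^2-|\theta_{m+1}|\,W_{m+1}^2,
\]
which depends on $\{\sigma_j\}$ only through the ratios $\{\gamma_i\}$, as claimed.

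To evaluate $\bP[Q<0]$ I would use Fourier inversion. The characteristic function $\phi_Q(t)=\big(1+2it|\theta_{m+1}|\big)^{-1/2}\prod_{i=1}^{m-1}(1-2it\theta_i)^{-1/2}$ is analytic in the upper half plane except for the single branch point $t_\star=i/(2|\theta_{m+1}|)$ contributed by the negative eigenvalue; the factors from the nonnegative $\theta_i$ have their branch points on the negative imaginary axis, and $\theta_m=0$ contributes nothing. Writing $\bP[Q<0]=\int_{-\infty}^0 f_Q(q)\,dq$ with $f_Q(q)=\tfrac1{2\pi}\int_{\mathbb{R}}e^{-itq}\phi_Q(t)\,dt$, I would close the inversion contour in the upper half plane for $q<0$; by a Jordan‑lemma estimate, using the $|t|^{-m/2}$ decay of $\phi_Q$, the large arc vanishes, so $f_Q(q)$ reduces to the integral of the discontinuity of $\phi_Q$ across the cut emanating from $t_\star$. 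Parametrizing the cut by $t=iu$, $u\ge 1/(2|\theta_{m+1}|)$, the factor $(1+2it|\theta_{m+1}|)^{-1/2}=(1-2u|\theta_{m+1}|)^{-1/2}$ supplies the jump while the factors $(1+2u\theta_i)^{-1/2}$ stay real; performing the $q$‑integral first via $\int_{-\infty}^0 e^{uq}\,dq=u^{-1}$ collapses everything to a single integral in $u$ over $(1/(2|\theta_{m+1}|),\infty)$.

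The final step is the substitution $s=1/(2u)$, which maps $u\in(1/(2|\theta_{m+1}|),\infty)$ onto $s\in(0,|\theta_{m+1}|)$ and sends $2u|\theta_{m+1}|-1\mapsto(|\theta_{m+1}|-s)/s$ and $1+2u\theta_i\mapsto(s+\theta_i)/s$. Tracking the powers of $s$ (from the jump factor, the $m-1$ remaining factors, the $u^{-1}$ of the $q$‑integral, and the Jacobian) converts the integrand into a multiple of $s^{(m-2)/2}(|\theta_{m+1}|-s)^{-1/2}\prod_{i=1}^{m-1}(s+\theta_i)^{-1/2}$, with the phase bookkeeping producing the overall constant $1/\pi$. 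I would then invoke the factorization identity $-g(-s)=(-1)^m g_0(-s)=s\,(|\theta_{m+1}|-s)\prod_{i=1}^{m-1}(s+\theta_i)$, which follows from $g=(-1)^{m+1}g_0$ (the leading coefficient of $g$ in \eqref{eq:char_gamma} is $(-1)^{m+1}$) together with the root structure of Lemma \ref{lem:roots}. This both shows that $-g(-s)>0$ on $(0,|\theta_{m+1}|)$, so the square root is well defined, and rewrites the integrand as $\pi^{-1}s^{(m-1)/2}[-g(-s)]^{-1/2}$, yielding the two displayed forms.

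I expect the main obstacle to be the complex‑analytic bookkeeping: justifying the contour closing and the vanishing of the large arc (the decay is only borderline for $m=2$, where $\phi_Q\sim|t|^{-1}$, so I would verify $m=2$ directly---there $\bP[Q<0]=\tfrac2\pi\arctan\sqrt{|\theta_{m+1}|/\theta_1}$, which matches the integral by an elementary $\arcsin$ evaluation---and run the contour argument for $m\ge3$), and correctly accounting for the phase of $(1-2u|\theta_{m+1}|)^{-1/2}$ across the branch cut so that the constant comes out to exactly $1/\pi$ and the integrand is manifestly real and positive. A minor additional point is the case where some of $\theta_1,\ldots,\theta_{m-1}$ vanish, which merely deletes the corresponding trivial factors and is consistent with the factorization above by continuity.
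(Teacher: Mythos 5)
Your proposal is correct in its overall architecture and reaches the right answer, but it takes a genuinely different route at the central analytic step. The reduction is identical to the paper's: both use Lemma \ref{lem:t_test_matrix} to rewrite $\{|T_m|>c\}$ as $\{\tilde{\bm{\nob}}^\top \bm V \tilde{\bm{\nob}}<0\}$, diagonalize $\bm D\bm V\bm D$ to get $\bP[\sum_i\lambda_i\xi_i^2<0]$, and rescale by $\sigma_{m+1}^2$ via Lemma \ref{lem:char_gamma} to land on $\bP[\sum_i\theta_i\xi_i^2<0]$; and both finish with the same identity $g=(-1)^{m+1}g_0$ (matching leading coefficients) to convert the product over roots into $-g(-s)$. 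Where you diverge is in producing the integral representation itself: the paper simply invokes Theorem 2 of Makshanov and Shalaevskii (1977) to write $\bP[\xi_{m+1}^2/\sum_{i=1}^m\tfrac{\theta_i}{|\theta_{m+1}|}\xi_i^2>1]$ as a ready-made integral over $t\in(0,\infty)$ and then substitutes $s=|\theta_{m+1}|/(1+t)$, whereas you rederive the representation from scratch by Fourier inversion of $\phi_Q$, deformation onto the branch cut at $t=i/(2|\theta_{m+1}|)$, and the substitution $s=1/(2u)$. I checked your bookkeeping: the substitution sends the integrand to $\pi^{-1}s^{(m-2)/2}[(|\theta_{m+1}|-s)\prod_{i=1}^{m-1}(s+\theta_i)]^{-1/2}$, which agrees with $\pi^{-1}s^{(m-1)/2}[-g(-s)]^{-1/2}$ once the zero root $\theta_m=0$ is absorbed, and your closed-form check for $m=2$ (the Cauchy-ratio arctangent against the arcsine evaluation of the integral) is right. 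What the paper's citation buys is that none of the contour-closing, integrability, or degenerate-eigenvalue issues arise; what your derivation buys is self-containedness, at the cost of having to handle the borderline decay when only two eigenvalues are nonzero (which you correctly isolate to a direct verification) and, more generally, the cases where several of $\theta_1,\ldots,\theta_{m-1}$ vanish, where $\phi_Q$ may fail to be absolutely integrable and the density-first inversion needs either a Gil--Pelaez-type argument or a continuity-in-$(\gamma_1,\ldots,\gamma_m)$ limiting step. With those caveats filled in, your proof is a valid substitute for the paper's.
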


\hideif{
\begin{proof}[Proof of Lemma \ref{lem:rej_prob_integral}]
Let $\{\xi_i\}^{m+1}_{i=1}$ be i.i.d. standard normal random variables, 
and $\bs{\xi} \equiv (\xi_1, \xi_2, \ldots,$ $\xi_{m+1})^\top$. 
From Lemma \ref{lem:t_test_matrix}, 
the probability that $|T_m|>c$ is equivalent to:
\begin{align*}
    \bP[|T_m|>c]
    = 
    \bP[ \tilde{\bm{\nob}}^\top \bm{V} \tilde{\bm{\nob}} < 0 ]
    = \bP[ \bs{\xi}^\top \bs{D}\bm{V} \bs{D} \bs{\xi} < 0 ]
    = \bP\left[ \sum^{m+1}_{i=1} \lambda_i \xi_i^2 < 0 \right], 
\end{align*}
where $\lambda_1, \lambda_2, \ldots, \lambda_{m+1}$ are eigenvalues of $\bs{D}\bm{V} \bs{D}$ or equivalently the roots of the characteristic polynomial in Lemma \ref{lem:character_sigma}. 
From Lemma \ref{lem:char_gamma}, it can be further written as 
\begin{align*}
    \bP[|T_m|>c]
    = 
    \bP\left[ \sum^{m+1}_{i=1} \lambda_i \xi_i^2 < 0 \right]
    = 
    \bP\left[ \sum^{m+1}_{i=1} \lambda_i \sigma_{m+1}^{-2} \xi_i^2 < 0 \right]
    =
    \bP\left[ \sum^{m+1}_{i=1} \theta_i \xi_i^2 < 0 \right].
\end{align*}
Using \citet[Theorem 2]{makshanovshalaevskii1977some}, we can write the probability as the following integral:
\begin{equation}\label{eq:rej_prob_integral}
		\bP[|T_m|>c]
		 = \bP\left[ \frac{\xi_{m+1}^2}{\sum^{m}_{i=1} \frac{\theta_i}{|\theta_{m+1}|} \xi_i^2} > 1 \right]
		 = \frac{1}{\pi} \int^\infty_0 \frac{t^{-\frac{1}{2}} (1 + t)^{-1}}{\sqrt{\prod^{m}_{i=1} [1 + \frac{\theta_i}{|\tm|}(1 + t)]}} \ \text{d}t.
	\end{equation}
    By some algebra and the definition of $g_0(\cdot)$, 
	\begin{align*}
		& \quad \ \prod^{m}_{i=1} \left[ 1 + \frac{\theta_i}{|\theta_{m+1}|} (1 + t) \right]
        \nonumber
        \\
		& = \prod^{m}_{i=1} \left[ \left( \frac{|\theta_{m+1}|}{1 + t} + \theta_i \right) \frac{1 + t}{|\theta_{m+1}|} \right]	\notag	
        = \left( \frac{1 + t}{|\theta_{m+1}|} \right)^{m} \prod^{m}_{i=1} \left( \frac{|\theta_{m+1}|}{1 + t} + \theta_i  \right) 	\notag	\\
		& = \left( \frac{1 + t}{|\theta_{m+1}|} \right)^{m} \left( \frac{|\theta_{m+1}|}{1 + t} - |\theta_{m+1}|  \right)^{-1} \prod^{m+1}_{i=1} \left( \frac{|\theta_{m+1}|}{1 + t} + \theta_i  \right) 	\notag	\\
		& = - \frac{1}{t} \left( \frac{1 + t}{|\theta_{m+1}|} \right)^{m+1}  \prod^{m+1}_{i=1} \left( \frac{|\theta_{m+1}|}{1 + t} + \theta_i  \right)	\notag	
        = (- 1)^{m+2} \frac{1}{t} \left( \frac{1 + t}{|\theta_{m+1}|} \right)^{m+1}  \prod^{m+1}_{i=1} \left( - \frac{|\theta_{m+1}|}{1 + t} - \theta_i  \right) \notag	\\
		& =
		(- 1)^{m} \frac{1}{t} \left( \frac{1 + t}{|\theta_{m+1}|} \right)^{m+1}  g_0 \left( - \frac{|\theta_{m+1}|}{1 + t}  \right).
	\end{align*}
    We can then write \eqref{eq:rej_prob_integral} as 
	\begin{equation*}
		\bP[|T_m|>c]
		= \frac{1}{\pi} \int^\infty_0 
		 	\frac{(1+t)^{-\frac{m+1}{2} - 1} |\theta_{m+1}|^{\frac{m+1}{2}} }{
			[(-1)^{m} g_0(-\frac{|\theta_{m+1}|}{1 + t})]^{\frac{1}{2}}} \ \text{d}t.
	\end{equation*}
    After a change of variable with $s = \frac{|\theta_{m+1}|}{1+t}$ 
    and $\text{d}s = -|\theta_{m+1}|(1+t)^{-2} \ \text{d}t$, this further simplifies to 
    \begin{align*}
        \bP[|T_m|>c]
		= \frac{1}{\pi} \int^0_{|\theta_{m+1}|} 
		\frac{(1+t)^{-1} s^{\frac{m+1}{2}} }{
			[(-1)^{m} g_0(-s)]^{\frac{1}{2}}} 
        \cdot 
        - \frac{(1+t)^2}{|\theta_{m+1}|}
            \ \text{d}s
        = \frac{1}{\pi} \int^{|\theta_{m+1}|}_0 
		\frac{ s^{\frac{m-1}{2}} }{
			[(-1)^{m} g_0(-s)]^{\frac{1}{2}}} 
            \ \text{d}s.
    \end{align*}

Note that, by definition, $g(\theta)$ and $g_0(\theta)$ 
are both polynomial functions of $\theta$ and have exactly the same roots. 
Hence, they must be a scalar multiple of each other. 
Because the coefficient of $\theta^{m+1}$ in $g(\theta)$ is $(-1)^{m+1}$ and the coefficient of $\theta^{m+1}$ in $g_0(\theta)$ is 1, we have 
$
    g(\theta) = (-1)^{m+1}  \cdot g_0(\theta).
$
Consequently, 
\begin{align*}
        \bP[|T_m|>c]
        = \frac{1}{\pi} \int^{|\theta_{m+1}|}_0 
		\frac{ s^{\frac{m-1}{2}} }{
			[(-1)^{m} g_0(-s)]^{\frac{1}{2}}} 
            \ \text{d}s
        = \frac{1}{\pi} \int^{|\theta_{m+1}|}_0 
		\frac{ s^{\frac{m-1}{2}} }{
			[-g(-s)]^{\frac{1}{2}}} 
            \ \text{d}s
    \end{align*}

    From the above, Lemma \ref{lem:rej_prob_integral} holds. 
\end{proof}
}

\begin{lem} \label{lem:rej_prob_U}
Recall the test statistic $T_m$ in \eqref{eq:pop-test-1}, 
and consider any $c>0$, any positive 
$\sigma_{m+1}$, and any nonnegative $\sigma_1, \ldots, \sigma_m$. 
Define $\kappa \equiv \frac{m c^2}{m-1}$, $\tau \equiv \frac{\kappa +1}{\kappa m}$, and 
$\gamma_i \equiv \frac{\sigma_i}{\sigma_{m+1}}$ for $i \in \{1,\ldots, m\}$. 
Let $\theta_{m+1}$ be the unique negative root of the function $g(\cdot)$ in Lemma \ref{lem:char_gamma}. 
Let $x_i \equiv \kappa \gamma_i^2$ for $i \in \{1,\ldots, m\}$. Then,
\[
	\bP[|T_m|>c]
	= \frac{1}{\pi}	\int^{|\theta_{m+1}|}_0
			\frac{U(\bm{x}, s)}{\sqrt{|\theta_{m+1}|-s}} 		\ \mathrm{d}s,	
\]
where $\bm{x} \equiv (x_1, \ldots, x_m)$,
	\begin{align*}
        U(\bm{x},s)
		\equiv \frac{s^{\frac{m}{2}-1}}{
			\sqrt{ P(\bm{x},s) Q(\bm{x},s) 	}}, 
        \ \ 
		P(\bm{x},s)
		\equiv \sum^m_{i=1} \frac{1 + \tau x_i}{(x_i +s)(x_i - \theta_{m+1})},
        \ \ 
		Q(\bm{x},s)
		\equiv  \prod^m_{i=1} (x_i +s). 
	\end{align*}
\end{lem}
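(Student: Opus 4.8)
The plan is to start from the integral representation in Lemma~\ref{lem:rej_prob_integral} and show that its integrand is exactly $U(\bm{x},s)/\sqrt{|\theta_{m+1}|-s}$. Everything reduces to the single algebraic identity
\[
-g(-s) = s\,(|\theta_{m+1}|-s)\,P(\bm{x},s)\,Q(\bm{x},s), \qquad s\in(0,|\theta_{m+1}|),
\]
with $g$ the polynomial from Lemma~\ref{lem:char_gamma}. Granting this, Lemma~\ref{lem:rej_prob_integral} gives
\[
\frac{s^{\frac{m-1}{2}}}{[-g(-s)]^{1/2}} = \frac{s^{\frac{m-1}{2}}}{\sqrt{s\,(|\theta_{m+1}|-s)\,P(\bm{x},s)\,Q(\bm{x},s)}} = \frac{s^{\frac{m}{2}-1}}{\sqrt{P(\bm{x},s)\,Q(\bm{x},s)}}\cdot\frac{1}{\sqrt{|\theta_{m+1}|-s}} = \frac{U(\bm{x},s)}{\sqrt{|\theta_{m+1}|-s}},
\]
and the claim follows on integrating. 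Before manipulating square roots I would record that every factor is strictly positive on $s\in(0,|\theta_{m+1}|)$: indeed $s>0$, $|\theta_{m+1}|-s>0$, $Q(\bm{x},s)>0$ since each $x_i\ge 0$, and $P(\bm{x},s)>0$ since $1+\tau x_i>0$, $x_i+s>0$, and $x_i-\theta_{m+1}>0$ (as $\theta_{m+1}<0$ by Lemma~\ref{lem:roots}). This legitimizes all the roots and cancellations below.

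To establish the identity I would first substitute $\theta=-s$ into the definition \eqref{eq:char_gamma} of $g$, replace $\gamma_i^2$ by $x_i/\kappa$, and factor $Q(\bm{x},s)=\prod_{i=1}^m(x_i+s)$ out of both terms using $\prod_{j\ne i}(x_j+s)=Q(\bm{x},s)/(x_i+s)$. This yields $g(-s) = Q(\bm{x},s)\big[-(m-s) + (1-\tau s)\sum_{i=1}^m x_i/(x_i+s)\big]$. Writing $\sum_i x_i/(x_i+s)=m-s\sum_i 1/(x_i+s)$ and using $\tau m=(\kappa+1)/\kappa$ so that $-(m-s)+(1-\tau s)m=-s/\kappa$, the bracket collapses to
\[
-g(-s) = s\,Q(\bm{x},s)\left[\frac{1}{\kappa}+(1-\tau s)\sum_{i=1}^m\frac{1}{x_i+s}\right].
\]

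The crux is then to show the bracket equals $(|\theta_{m+1}|-s)\,P(\bm{x},s)$. The key input is that $\theta_{m+1}$ is a root of $g$: evaluating the factored form \eqref{eq:g_simp} from the proof of Lemma~\ref{lem:roots} at $\theta_{m+1}$ and using $\theta_{m+1}\ne 0$ gives $\sum_{i=1}^m (1+\tau x_i)/(x_i-\theta_{m+1})=1$. I would then apply the partial-fraction split $1/[(x_i+s)(x_i-\theta_{m+1})] = (|\theta_{m+1}|-s)^{-1}\big(1/(x_i+s)-1/(x_i-\theta_{m+1})\big)$, valid since $-\theta_{m+1}=|\theta_{m+1}|\ne s$, to obtain
\[
(|\theta_{m+1}|-s)P(\bm{x},s) = \sum_{i=1}^m\frac{1+\tau x_i}{x_i+s} - \sum_{i=1}^m\frac{1+\tau x_i}{x_i-\theta_{m+1}} = \sum_{i=1}^m\frac{1+\tau x_i}{x_i+s} - 1.
\]
Splitting $(1+\tau x_i)/(x_i+s)=\tau+(1-\tau s)/(x_i+s)$ and using $m\tau-1=1/\kappa$ turns the right-hand side into precisely $1/\kappa+(1-\tau s)\sum_i 1/(x_i+s)$, i.e.\ the bracket above. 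Combining the two displays gives the identity and hence the lemma.

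The one genuine obstacle is this bracket-to-$(|\theta_{m+1}|-s)P$ reduction: it is the only step where the specific value of the negative root $\theta_{m+1}$ is used, and it succeeds precisely because the ``$-1$'' generated by the root relation $\sum_i(1+\tau x_i)/(x_i-\theta_{m+1})=1$ cancels against the $m\tau-1=1/\kappa$ term. Everything else is routine bookkeeping with positive factors and powers of $s$.
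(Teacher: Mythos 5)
Your proposal is correct and takes essentially the same route as the paper: both start from the integral representation in Lemma~\ref{lem:rej_prob_integral}, use the root relation $\sum_{i=1}^m (1+\tau x_i)/(x_i-\theta_{m+1})=1$ obtained from $g(\theta_{m+1})=0$ with $\theta_{m+1}<0$, and apply the same partial-fraction identity to factor $-g(-s)$ as $s\,(|\theta_{m+1}|-s)\,P(\bm{x},s)\,Q(\bm{x},s)$. Your reorganization (first collapsing $-g(-s)$ to $sQ[\kappa^{-1}+(1-\tau s)\sum_i(x_i+s)^{-1}]$ and then matching that bracket to $(|\theta_{m+1}|-s)P$) is just a reordering of the paper's algebra, and your positivity checks legitimizing the square-root manipulations are sound.
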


\hideif{
\begin{proof}[Proof of Lemma \ref{lem:rej_prob_U}] 
By the same logic as \eqref{eq:g_simp}, 
for $\theta < 0$, 
we have 
\begin{align}\label{eq:g0_equiv}
   g(\theta) 
    &  = 
        \prod_{i=1}^m ( \kappa\gamma_i^2 - \theta)
        \cdot
        \theta 
    \left(
    \sum_{i=1}^m 
    \frac{  \tau \kappa  \gamma_i^2 + 1}{\kappa\gamma_i^2 - \theta}
    - 1 \right)
     = 
        \prod_{i=1}^m ( x_i - \theta)
        \cdot
        \theta 
    \left(
    \sum_{i=1}^m 
    \frac{  \tau x_i + 1}{x_i - \theta}
    - 1 \right).
\end{align}
Because $g(\theta_{m+1})=0$ and $\theta_{m+1}<0$, we must have 
	\begin{equation}
	\label{eq:constraint_neg_root}
		1 = \sum^m_{i=1} \frac{1 + \tau x_i}{x_i - \theta_{m+1}}.
	\end{equation}
	\item Substituting \eqref{eq:constraint_neg_root} into \eqref{eq:g0_equiv}, 
    we can write $g(\theta)$ for $\theta < 0$ as
    \begin{align}\label{eq:g0_equiv_neg_root}
    g(\theta) 
    & = 
        \prod_{i=1}^m ( x_i - \theta)
        \cdot
        \theta 
    \left(
    \sum_{i=1}^m 
    \frac{  \tau x_i + 1}{x_i - \theta}
    - \sum^m_{i=1} \frac{1 + \tau x_i}{x_i - \theta_{m+1}} \right)
    \nonumber
    \\
    & = 
        \prod_{i=1}^m ( x_i - \theta)
        \cdot
        \theta \left[\sum^m_{i=1} (1+\tau x_i) \left(\frac{1}{x_i - \theta} - \frac{1}{x_i - \theta_{m+1}} \right) \right]
    \nonumber
        \\
    & = 
        \prod_{i=1}^m ( x_i - \theta)
        \cdot
        \theta (\theta - \theta_{m+1})
		\sum^m_{i=1} \frac{1 + \tau x_i}{(x_i - \theta)(x_i - \theta_{m+1})}
    \end{align}
Substituting \eqref{eq:g0_equiv_neg_root} into the integral representation in Lemma \ref{lem:rej_prob_integral}, we then have
	\begin{align*}
		\bP[|T_m| > c]
		& = \frac{1}{\pi}	\int^{|\theta_{m+1}|}_0
			\frac{s^{\frac{m-1}{2}}}{
			\sqrt{ - \prod^m_{i=1} (x_i +s)	(-s)	(-s - \theta_{m+1})	
			\sum^m_{i=1} \frac{1 + \tau x_i}{(x_i +s)(x_i - \theta_{m+1})}}
			}	\ \text{d}s	\notag \\
		& =  \frac{1}{\pi}	\int^{|\theta_{m+1}|}_0
			\frac{s^{\frac{m}{2}-1}}{
			\sqrt{ \prod^m_{i=1} (x_i +s) 	(-s - \theta_{m+1})	
			\sum^m_{i=1} \frac{1 + \tau x_i}{(x_i +s)(x_i - \theta_{m+1})}}
			}		\ \text{d}s	\notag \\
		& = \frac{1}{\pi}
			\int^{|\tm|}_0
				\frac{s^{\frac{m}{2}-1}}{\sqrt{Q(\bm{x},s) P(\bm{x},s) (|\tm|-s)}}
				\ \text{d}s	\notag	\\
		& = \frac{1}{\pi}	\int^{|\theta_{m+1}|}_0
			\frac{U(\bm{x}, s)}{\sqrt{|\theta_{m+1}|-s}} 		\ \text{d}s.	\notag
	\end{align*}
Therefore, Lemma \ref{lem:rej_prob_U} holds. 
\end{proof}
}

\subsection{Derivatives of the rejection probability}\label{sec:deri_rej_prob}

\begin{lem}\label{lem:deri_PQ}
Consider the same setup and notations in Lemma \ref{lem:rej_prob_U}. 
Fix $c, x_3, \ldots, x_m$ and $\tm$, 
and let $z=x_1>0$ and $w=x_2>0$,
where we view $w$ as a function of $z$ as implied by \eqref{eq:constraint_neg_root}. 
Define $\Delta \equiv \frac{z-w}{(z - \theta_{m+1})^2}$, $A \equiv wz - \theta_{m+1}^2$ and $B \equiv z + w - 2\theta_{m+1}$. 
We then have
	\begin{align}
		\frac{\partial w}{\partial z}
		& = - \frac{( w - \theta_{m+1})^2}{(z - \theta_{m+1})^2},	\label{eq:deri_w}	\\
		\frac{\partial P(\bm{x}, s)}{\partial z}
		& = \frac{2\Delta(1-\tau s)[A + \frac{B}{2}(s-\theta_{m+1})]}{(z+s)^2(w+s)^2},	\label{eq:deri_P}	\\
		\frac{\partial Q(\bm{x}, s)}{\partial z}
		& = \Delta (A + Bs) \prod^m_{i=3} (x_i + s).	\label{eq:deri_Q}
	\end{align}
\end{lem}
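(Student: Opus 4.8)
The plan is to differentiate the three quantities directly, exploiting the implicit relationship between $w$ and $z$ that is forced by the negative-root constraint \eqref{eq:constraint_neg_root}. Throughout I abbreviate $\theta \equiv \theta_{m+1}$ and recall that $\theta < -\tau^{-1} < 0$ by Lemma \ref{lem:roots}, so in particular $1 + \tau\theta \neq 0$. Since $c, x_3, \ldots, x_m$ and $\theta$ are held fixed, all terms with index $i \ge 3$ in $P$ and $Q$ are constants, and only the $z$- and $w$-dependent terms contribute to the derivatives.

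First I would establish \eqref{eq:deri_w}. Writing $h(t) \equiv \frac{1+\tau t}{t-\theta}$, the constraint \eqref{eq:constraint_neg_root} reads $h(z) + h(w) + (\text{const}) = 1$. A direct computation gives $h'(t) = -\frac{1+\tau\theta}{(t-\theta)^2}$, so implicit differentiation yields $h'(z) + h'(w)\,\frac{\partial w}{\partial z} = 0$; dividing through by the common nonzero factor $-(1+\tau\theta)$ gives $\frac{\partial w}{\partial z} = -\frac{(w-\theta)^2}{(z-\theta)^2}$, which is \eqref{eq:deri_w}. For \eqref{eq:deri_Q}, I would apply the product rule to $(z+s)(w+s)$ and substitute \eqref{eq:deri_w}, obtaining $\frac{\partial}{\partial z}[(z+s)(w+s)] = (w+s) - (z+s)\frac{(w-\theta)^2}{(z-\theta)^2}$. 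After clearing the denominator $(z-\theta)^2$, the numerator expands and the cross terms in $\theta$ cancel, leaving the factorization $(z-w)[(wz-\theta^2) + (z+w-2\theta)s] = (z-w)(A + Bs)$. Dividing back by $(z-\theta)^2$ reproduces $\Delta(A+Bs)$, and multiplying by the constant $\prod_{i=3}^m(x_i+s)$ gives \eqref{eq:deri_Q}.

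The derivation of \eqref{eq:deri_P} follows the same template but carries the heaviest algebra, so I expect it to be the main obstacle. Writing $\phi(t) \equiv \frac{1+\tau t}{(t+s)(t-\theta)}$, a quotient-rule computation shows its numerator collapses to $N(t) \equiv -\tau t^2 - 2t + \theta - s - \tau s\theta$ (the $\tau(s-\theta)t$ terms cancel), so $\phi'(t) = N(t)/[(t+s)^2(t-\theta)^2]$. Then $\frac{\partial P}{\partial z} = \phi'(z) + \phi'(w)\,\frac{\partial w}{\partial z}$, and substituting \eqref{eq:deri_w} collapses the two $(\,\cdot - \theta)^2$ factors, leaving $\frac{\partial P}{\partial z} = \frac{N(z)(w+s)^2 - N(w)(z+s)^2}{(z-\theta)^2(z+s)^2(w+s)^2}$.

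The crux is to show the numerator equals $2(z-w)(1-\tau s)\,[A + \frac{B}{2}(s-\theta)]$. I would expand $N(z)(w+s)^2 - N(w)(z+s)^2$ grouped by the three monomials $-\tau t^2$, $-2t$, and the constant of $N$; each group factors out a $(z-w)$, and collecting the remainder yields $2wz(1-\tau s) + (w+z)(s-\theta)(1-\tau s) - 2\theta s(1-\tau s)$, from which the common factor $(1-\tau s)$ pulls out. Matching the bracketed remainder against $2[A + \frac{B}{2}(s-\theta)] = 2wz + (z+w)(s-\theta) - 2\theta s$ then completes the identity. Dividing by the common denominator and recognizing $\Delta = (z-w)/(z-\theta)^2$ gives \eqref{eq:deri_P}, which finishes the proof.
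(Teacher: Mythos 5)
Your proposal is correct and follows essentially the same route as the paper's proof: implicit differentiation of \eqref{eq:constraint_neg_root} for \eqref{eq:deri_w}, the product rule plus substitution for \eqref{eq:deri_Q}, and the quotient-rule computation with the numerator identity $N(z)(w+s)^2 - N(w)(z+s)^2 = 2(z-w)(1-\tau s)\bigl[A + \tfrac{B}{2}(s-\theta_{m+1})\bigr]$ for \eqref{eq:deri_P}, which matches the paper's factorization exactly (your $N(t)$ equals the paper's $-(\tau t^2 + 2t + d)$ with $d = s - \theta_{m+1} + s\theta_{m+1}\tau$). No gaps.
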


\hideif{
\begin{proof}[Proof of Lemma \ref{lem:deri_PQ}]
From Lemma \ref{lem:roots}, $\theta_{m+1} < -\tau^{-1}$, which implies that $-\tau \theta_{m+1} > 1$. Consequently, 
$\frac{1 + \tau w}{w - \theta_{m+1}}$ is increasing in $w$, because 
\begin{align*}
    \frac{\partial}{\partial w}\left( \frac{1 + \tau w}{w - \theta_{m+1}} \right)
    & = 
    \frac{-\tau \theta_{m+1}-1}{(w - \theta_{m+1})^2} > 0.
\end{align*}
This ensures that $w$ can be uniquely determined by $z$ through \eqref{eq:constraint_neg_root}, once we fix $c, x_3, \ldots, x_m$ and $\tm$. 
Below 
we prove the three derivatives in \eqref{eq:deri_w}--\eqref{eq:deri_Q} separately as follows. 
\begin{itemize}
	\item {\bf Proof of \eqref{eq:deri_w}.}  From \eqref{eq:constraint_neg_root}, we have 
    \begin{align*}
        1 = \frac{1 + \tau z}{z - \theta_{m+1}}
        + \frac{1 + \tau w}{w - \theta_{m+1}}
		+ \underbrace{\sum^m_{i=3} \frac{1 + \tau x_i}{x_i - \theta_{m+1}}}_{\text{``Constant''}}.
    \end{align*}
	Taking derivative of the above gives
	\begin{align*}
		0 
		& = 
		\frac{-\tau\tm-1}{(z-\tm)^2} \ \text{d}z
		+ 
		\frac{-\tau\tm-1}{(w-\tm)^2} \ \text{d}w.
	\end{align*}
    As discussed before, 
    $-\tau\theta_{m+1} > 1$. 
    We then have  
	\begin{align*}
		\frac{\partial w}{\partial z}
		= - \frac{( w - \theta_{m+1})^2}{(z - \theta_{m+1})^2},
	\end{align*}
	i.e., \eqref{eq:deri_w} holds.      
	\item {\bf Proof of \eqref{eq:deri_P}.}  Let $\tilde p(x) \equiv \frac{1 + \tau x}{(x + s)(x - \theta_{m+1})}$. We have
	\[
		\frac{\text{d}\tilde p(x)}{\text{d}x}
		= -\frac{\tau x^2 + 2x + (s - \theta_{m+1} + s\theta_{m+1}\tau)}{(x+s)^2(x-\theta_{m+1})^2}
		= - \frac{\tau x^2 + 2x + d}{(x+s)^2(x-\theta_{m+1})^2},
	\]
	where $d \equiv s - \theta_{m+1} + s\theta_{m+1}\tau = s(1+\theta_{m+1}\tau) - \theta_{m+1}$. Using \eqref{eq:deri_w}, we have 
	\begin{align}
		\frac{\partial P(\bm{x}, s)}{\partial z}
		& = \frac{\text{d}\tilde p(z)}{\text{d}z} + \frac{\text{d}\tilde p(w)}{\text{d}w} \frac{\partial w}{\partial z} \notag	\\
        & = 
        - \frac{\tau z^2 + 2z + d}{(z+s)^2(z-\theta_{m+1})^2} + \frac{\tau w^2 + 2w + d}{(w+s)^2(w-\theta_{m+1})^2} \frac{( w - \theta_{m+1})^2}{(z - \theta_{m+1})^2} \notag
        \\
		& = \frac{-(\tau z^2 + 2z + d)(w+s)^2 + (\tau w^2 + 2w + d)(z + s)^2}{(w+s)^2(z+s)^2(z-\theta_{m+1})^2}.	\label{eq:lem:6-4}
	\end{align}
	The numerator of \eqref{eq:lem:6-4} can be simplified as follows:
	\begin{align}
		& \quad \  
        -(\tau z^2 + 2z + d)(w+s)^2 + (\tau w^2 + 2w + d)(z + s)^2		\notag	\\
        & = (z-w)[ 2wz (1 - \tau s) - (w+z) (\tau s^2 - d) + 2ds - 2s^2 ]
		\notag	\\
		& = (z - w) \{ 2wz (1 - \tau s) - (w + z)[\tau s^2 - s(1 + \theta_{m+1} \tau) + \tm] 	
        \notag \\
        & \qquad + 2s^2 (1 + \theta_{m+1}\tau) - 2s \theta_{m+1} - 2s^2 \} 
        \notag \\
		& = (z - w) \{ \underbrace{(zw - \theta_{m+1}^2)}_{=A}[2(1-\tau s)] - \underbrace{(w + z - 2\theta_{m+1})}_{=B}[\tau s^2 - s(1+\theta_{m+1}\tau) + \theta_{m+1}] \notag	\\
    		& \qquad + \underbrace{2s^2 (1 + \theta_{m+1}\tau) - 2s\theta_{m+1} - 2s^2 + 2\theta_{m+1}^2 (1 - \tau s) - 2\theta_{m+1}[\tau s^2 - s(1 + \theta_{m+1}\tau) + \theta_{m+1}]}_{=0}\}	\notag	\\
		& = (z - w) \{ A[2(1-\tau s)] - B[\tau s^2 - s(1+\theta_{m+1}\tau) + \theta_{m+1} ] \}	\notag \\
		& = (z - w) \{A[2(1-\tau s)] 
				  - B(\tau s - 1)(s - \tm)\}	\notag	\\
		& = 2(z-w)(1 - \tau s)\left[
			A + \frac{B}{2}(s - \tm)
			\right],	\label{eq:lem:6-5}
	\end{align}
    where the definitions of $A$ and $B$ follow from the statement of the lemma. Next, substitute \eqref{eq:lem:6-5} into \eqref{eq:lem:6-4} and using the definition of $\Delta$, we have
	\begin{align*}
		\frac{\partial P(\bm{x},s)}{\partial z}
		& =
        \frac{2(z-w)(1 - \tau s)\left[
			A + \frac{B}{2}(s - \tm)
			\right]}{(w+s)^2(z+s)^2(z-\theta_{m+1})^2}
        =
        2\Delta(1-\tau s) \frac{[A  + \frac{B}{2}(s-\theta_{m+1})]}{(w + s)^2 (z + s)^2}.
	\end{align*}
    Thus, \eqref{eq:deri_P} holds. 
    
	\item {\bf Proof of \eqref{eq:deri_Q}.}  By the definition of $Q$, we have 
	\[
		Q(\bm{x}, s)
		= (w+s)(z+s) \underbrace{\prod^m_{i=3} (x_i + s)}_{\text{``Constant''}}.
	\]
	Using \eqref{eq:deri_w}, 
    we have 
    \begin{align}\label{eq:deri_Q_proof}
		\frac{\partial [(w+s)(z+s)]}{\partial z}
		& = (w + s) + (z + s) 	\frac{\partial w}{\partial z}		\notag 
        =  (w + s) - (z + s) \frac{( w - \theta_{m+1})^2}{(z - \theta_{m+1})^2}	\notag	\\
		& = \frac{(z - w)[(wz - \theta_{m+1}^2) + s(z + w - 2\theta_{m+1})]}{(z - \theta_{m+1})^2}
        \nonumber
        \\
		& = \Delta (A + Bs).
    	\end{align} 
    Thus, 
	\begin{align*}
		\frac{\partial Q(\bm{x}, s)}{\partial z}
		& = \Delta (A + Bs) \prod^m_{i=1} (x_i + s), 
    	\end{align*} 
       i.e., \eqref{eq:deri_Q} holds.
\end{itemize}
From the above, Lemma \ref{lem:deri_PQ} holds. 
\end{proof}
}

\begin{lem}	\label{lem:deri_U}
Consider the same setup and notations as in Lemmas \ref{lem:rej_prob_U} and \ref{lem:deri_PQ}. We have
\[
	\frac{\partial U(\bm{x},s)}{\partial z}
	=
	-\frac{\Delta \tilde h(\bm{x}, s) U(\bm{x}, s)}{2},
\]
where
\begin{align*}
	\tilde h(\bm{x}, s)
	& \equiv \tilde h_P(\bm{x}, s) + \tilde h_Q(\bm{x}, s),	\\
	\tilde h_P(\bm{x}, s)
	& \equiv 	
	\frac{2(1-\tau s)[A + \frac{B}{2}(s-\theta_{m+1})]}{(z+s)^2(w+s)^2 P(\bm{x},s)}	,	\\
	\tilde h_Q(\bm{x}, s)
	& \equiv 	\frac{A + Bs}{(w+s)(z+s)}	.
\end{align*}

\end{lem}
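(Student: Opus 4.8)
The plan is to apply logarithmic differentiation to $U(\bm{x},s)$, since $U$ is a simple product of powers of $s$, $P(\bm{x},s)$, and $Q(\bm{x},s)$. Throughout, I would hold $c$, $s$, and $x_3,\ldots,x_m$ fixed and differentiate with respect to $z=x_1$, keeping in mind that $w=x_2$ is an implicit function of $z$ determined by the negative-root constraint \eqref{eq:constraint_neg_root}, exactly as in Lemma \ref{lem:deri_PQ}. Because the factor $s^{\frac{m}{2}-1}$ does not depend on $z$, writing $\log U(\bm{x},s) = (\tfrac{m}{2}-1)\log s - \tfrac12 \log P(\bm{x},s) - \tfrac12 \log Q(\bm{x},s)$ and differentiating yields
\[
\frac{1}{U(\bm{x},s)}\frac{\partial U(\bm{x},s)}{\partial z} = -\frac{1}{2}\left[\frac{1}{P(\bm{x},s)}\frac{\partial P(\bm{x},s)}{\partial z} + \frac{1}{Q(\bm{x},s)}\frac{\partial Q(\bm{x},s)}{\partial z}\right].
\]

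First I would substitute the expression for $\partial P/\partial z$ from \eqref{eq:deri_P} of Lemma \ref{lem:deri_PQ}, which immediately gives
\[
\frac{1}{P(\bm{x},s)}\frac{\partial P(\bm{x},s)}{\partial z} = \Delta \cdot \frac{2(1-\tau s)[A + \frac{B}{2}(s-\theta_{m+1})]}{(z+s)^2(w+s)^2 P(\bm{x},s)} = \Delta\, \tilde h_P(\bm{x},s),
\]
matching the stated form of $\tilde h_P(\bm{x},s)$ up to the factor $\Delta$.

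Next I would handle the $Q$ term. Substituting \eqref{eq:deri_Q} and using the product structure $Q(\bm{x},s)=(z+s)(w+s)\prod_{i=3}^m(x_i+s)$, the common factor $\prod_{i=3}^m(x_i+s)$ cancels, leaving
\[
\frac{1}{Q(\bm{x},s)}\frac{\partial Q(\bm{x},s)}{\partial z} = \frac{\Delta (A+Bs)}{(z+s)(w+s)} = \Delta\, \tilde h_Q(\bm{x},s).
\]
Combining the two contributions then produces $\frac{\partial U(\bm{x},s)}{\partial z} = -\frac{U(\bm{x},s)}{2}\,\Delta\,(\tilde h_P(\bm{x},s) + \tilde h_Q(\bm{x},s)) = -\frac{\Delta \tilde h(\bm{x},s) U(\bm{x},s)}{2}$, as claimed.

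The computation is essentially routine given Lemma \ref{lem:deri_PQ}, so I do not anticipate a serious obstacle; the heavy algebra (in particular the simplification of the numerator of $\partial P/\partial z$ into the compact form involving $A$ and $B$) has already been carried out there. The only point requiring minor care is the cancellation of the $\prod_{i=3}^m(x_i+s)$ factor in the $Q$ term, where one must correctly expand $Q(\bm{x},s)$ so that $\frac{1}{Q}\frac{\partial Q}{\partial z}$ collapses to a ratio depending on $z$ and $w$ alone. Here it then suffices to read off the definitions of $\tilde h_P$ and $\tilde h_Q$ directly.
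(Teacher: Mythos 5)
Your proposal is correct and follows essentially the same route as the paper: the paper differentiates $U = s^{\frac{m}{2}-1}[P(\bm{x},s)Q(\bm{x},s)]^{-1/2}$ directly via the chain rule and arrives at the identical intermediate expression $-\frac{U(\bm{x},s)}{2}\big[\frac{1}{P}\frac{\partial P}{\partial z} + \frac{1}{Q}\frac{\partial Q}{\partial z}\big]$, then substitutes \eqref{eq:deri_P} and \eqref{eq:deri_Q} and cancels $\prod_{i=3}^m(x_i+s)$ in the $Q$ term exactly as you do. Your use of logarithmic differentiation (justified since $P$ and $Q$ are positive on $(0,|\theta_{m+1}|)$) is a cosmetic variation, not a different argument.
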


\hideif{
\begin{proof}[Proof of Lemma \ref{lem:deri_U}] By definition, 
	\begin{align}
		\frac{\partial U(\bm{x}, s)}{\partial z}
		& = -\frac{s^{\frac{m}{2} - 1}}{2[P(\bm{x},s)Q(\bm{x},s)]^{\frac{3}{2}}} \left[ 
			Q(\bm{x}, s) \frac{\partial P(\bm{x}, s)}{\partial z}
			+ P(\bm{x}, s) \frac{\partial Q(\bm{x}, s)}{\partial z}
			\right]	\notag 	\\
		& = -\frac{U(\bm{x}, s)}{2} \Big[
			\underbrace{
				\frac{1}{P(\bm{x}, s)} \frac{\partial P(\bm{x}, s)}{\partial z}
			}_{\equiv h_P(\bm{x},s)}
			+ 
			\underbrace{
				\frac{1}{Q(\bm{x}, s)} \frac{\partial Q(\bm{x}, s)}{\partial z}
			}_{\equiv h_Q(\bm{x},s)}
			\Big].	\label{eq:lem:6b-1}
	\end{align}
From Lemma \ref{lem:deri_PQ}, $h_P(\bm{x},s)$ and $h_Q(\bm{x}, s)$ have the following equivalent forms:
\begin{align*}
	h_P(\bm{x},s)
	& = \frac{2\Delta(1-\tau s)[A + \frac{B}{2}(s-\theta_{m+1})]}{(z+s)^2(w+s)^2P(\bm{x},s)}
	= \Delta \tilde h_P(\bs{x}, s),		\\
	h_Q(\bm{x}, s)
	& = \frac{\Delta (A + Bs) \prod^m_{i=3} (x_i + s)}{Q(\bm{x},s)}
	= \frac{\Delta(A + Bs)}{(w+s)(z+s)}
	= \Delta \tilde h_Q(\bs{x}, s), 
\end{align*}
where the last equality in each of the above two equations follows by definition. 
Substituting the above into \eqref{eq:lem:6b-1}, we then have
\begin{align*}
		\frac{\partial U(\bm{x}, s)}{\partial z}
		& = -\frac{U(\bm{x}, s)}{2} \left[
			h_P(\bm{x},s)
			+ 
			 h_Q(\bm{x},s)
			\right]
        \notag \\
        & = 
        -\frac{ U(\bm{x}, s)}{2}
        \Delta \left[
			\tilde h_P(\bm{x},s)
			+ 
			 \tilde h_Q(\bm{x},s)
			\right]
        \notag \\
        & = -\frac{ U(\bm{x}, s)}{2}
        \Delta\tilde h(\bm{x},s),
	\end{align*}
    where the last equality holds by definition. 
    Therefore, Lemma \ref{lem:deri_U} holds. 
\end{proof}
}

\begin{lem} \label{lem:deri_AB}
Consider the same setup and notations as in Lemmas \ref{lem:rej_prob_U} and \ref{lem:deri_PQ}. 
Let $\delta \equiv A + \frac{B}{2}(s-\theta_{m+1})$. Then, we have
	\begin{align}
		\frac{\partial A}{\partial z} & = \Delta A,	\label{eq:deri_A}	\\
		\frac{\partial B}{\partial z} & = \Delta B,	\label{eq:deri_B}	\\
		\frac{\partial \delta}{\partial z} & = \Delta \delta.	\label{eq:deri_delta}	
	\end{align}

\end{lem}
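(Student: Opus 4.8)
The plan is to treat each of the three identities as a direct differentiation of a simple polynomial in $z$ and $w$, feeding in the already-established formula \eqref{eq:deri_w} for $\partial w/\partial z$ from Lemma \ref{lem:deri_PQ}, and then to recognize that the resulting numerators factor so as to reproduce $A$, $B$, and $\delta$ up to the common multiplier $\Delta$. Throughout, $c, x_3, \ldots, x_m$, $s$, and $\theta_{m+1}$ are held fixed while only $z$ varies (with $w$ a function of $z$), so all of these act as constants under $\partial/\partial z$.

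For \eqref{eq:deri_A}, I would differentiate $A = wz - \theta_{m+1}^2$ to obtain $\partial A/\partial z = w + z\,\partial w/\partial z$, substitute $\partial w/\partial z = -(w-\theta_{m+1})^2/(z-\theta_{m+1})^2$, and place everything over the common denominator $(z-\theta_{m+1})^2$. The numerator $w(z-\theta_{m+1})^2 - z(w-\theta_{m+1})^2$ expands so that the cross terms $\mp 2wz\theta_{m+1}$ cancel, leaving $wz(z-w) - \theta_{m+1}^2(z-w) = (z-w)(wz-\theta_{m+1}^2) = (z-w)A$. Dividing by $(z-\theta_{m+1})^2$ and recognizing the definition of $\Delta$ gives $\partial A/\partial z = \Delta A$.

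For \eqref{eq:deri_B}, differentiating $B = z + w - 2\theta_{m+1}$ gives $\partial B/\partial z = 1 + \partial w/\partial z = 1 - (w-\theta_{m+1})^2/(z-\theta_{m+1})^2$. Over the same common denominator, the numerator is the difference of squares $(z-\theta_{m+1})^2 - (w-\theta_{m+1})^2$, which factors as $[(z-\theta_{m+1})-(w-\theta_{m+1})]\,[(z-\theta_{m+1})+(w-\theta_{m+1})] = (z-w)(z+w-2\theta_{m+1}) = (z-w)B$; hence $\partial B/\partial z = \Delta B$.

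Finally, \eqref{eq:deri_delta} follows at once. Since $\delta = A + \tfrac{B}{2}(s-\theta_{m+1})$ is linear in $A$ and $B$ with coefficients independent of $z$, linearity of the derivative together with \eqref{eq:deri_A} and \eqref{eq:deri_B} yields $\partial\delta/\partial z = \Delta A + \tfrac{s-\theta_{m+1}}{2}\Delta B = \Delta\big(A + \tfrac{B}{2}(s-\theta_{m+1})\big) = \Delta\delta$. There is no substantive obstacle here beyond bookkeeping; the only step requiring a moment's care is spotting the two factorizations $(z-w)A$ and $(z-w)B$ in the numerators, which is precisely the mechanism by which the common factor $\Delta$ emerges.
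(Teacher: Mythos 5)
Your proof is correct and follows essentially the same route as the paper's: differentiate $A$, $B$, and $\delta$ directly, substitute $\partial w/\partial z = -(w-\theta_{m+1})^2/(z-\theta_{m+1})^2$ from Lemma \ref{lem:deri_PQ}, and factor the resulting numerators as $(z-w)A$ and $(z-w)B$ to expose the common factor $\Delta$. The algebraic details (cancellation of the cross terms for $A$, difference of squares for $B$, linearity for $\delta$) all check out.
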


\hideif{
\begin{proof}[Proof of Lemma \ref{lem:deri_AB}] 
We prove the three derivatives in \eqref{eq:deri_A}--\eqref{eq:deri_delta} separately as follows. 
\begin{itemize}
	\item {\bf Proof of \eqref{eq:deri_A}:}
	\[
	\frac{\partial A}{\partial z}
	= w + z \frac{\partial w}{\partial z}
	= w - z \frac{( w - \theta_{m+1})^2}{(z - \theta_{m+1})^2} 
	= \Delta (wz - \theta_{m+1}^2)
	= \Delta A.
	\]
	\item {\bf Proof of \eqref{eq:deri_B}:}
	\[
	\frac{\partial B}{\partial z}
	= 1 +  \frac{\partial w}{\partial z}
	= 1 -  \frac{( w - \theta_{m+1})^2}{(z - \theta_{m+1})^2} 
	= \Delta B.
	\]
	\item {\bf Proof of \eqref{eq:deri_delta}:}
\[
	\frac{\partial \delta}{\partial z}
	= 
	\frac{\partial A}{\partial z} 
	+ 
	\frac{\partial B}{\partial z} \frac{(s-\theta_{m+1})}{2}
	= \Delta A + \frac{\Delta B}{2}(s - \tm)
	= \Delta \delta.
\]
\end{itemize}
From the above, Lemma \ref{lem:deri_AB} holds. 
\end{proof}
}

\begin{lem} \label{lem:deri_h_PQ}
Consider the same setup and notations as in Lemmas \ref{lem:rej_prob_U}--\ref{lem:deri_AB}. 
We have 
\begin{align}
		\frac{\partial \tilde h_P(\bm{x}, s)}{\partial z}
		& = \Delta \tilde  h_P(\bm{x}, s) [1 -\tilde h_P(\bm{x},s) - 2\tilde h_Q(\bm{x}, s) ]	,	
		\label{eq:deri_h_P}	\\
		\frac{\partial \tilde h_Q(\bm{x}, s)}{\partial z}
		& = \Delta [\tilde h_Q(\bm{x}, s) -\tilde h_Q(\bm{x}, s)^2]. 
		\label{eq:deri_h_Q}
\end{align}
\end{lem}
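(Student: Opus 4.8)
\emph{Proof proposal.} The plan is to fix $s$, $\tau$, $\theta_{m+1}$ and $x_3,\dots,x_m$, regard everything as a function of $z=x_1$ (with $w=x_2$ determined through \eqref{eq:constraint_neg_root}), and differentiate, suppressing the arguments $(\bm x,s)$ throughout. The two collections of elementary derivatives in Lemmas~\ref{lem:deri_PQ} and~\ref{lem:deri_AB} will supply everything needed, so the proof is essentially bookkeeping with the quotient rule and logarithmic differentiation.

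For \eqref{eq:deri_h_Q} I would write $\tilde h_Q = N/M$ with $N\equiv A+Bs$ and $M\equiv (w+s)(z+s)$. By \eqref{eq:deri_A} and \eqref{eq:deri_B}, $\partial_z N = \Delta A + s\Delta B = \Delta N$, while the intermediate step \eqref{eq:deri_Q_proof} in the proof of Lemma~\ref{lem:deri_PQ} already shows $\partial_z M = \Delta(A+Bs) = \Delta N$. Since the numerator and denominator have the \emph{same} derivative $\Delta N$, the quotient rule collapses to
\[
\frac{\partial \tilde h_Q}{\partial z} = \frac{\Delta N}{M} - \frac{N\,\Delta N}{M^2} = \Delta\tilde h_Q - \Delta\tilde h_Q^2,
\]
which is \eqref{eq:deri_h_Q}.

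For \eqref{eq:deri_h_P} I would apply logarithmic differentiation to $\tilde h_P = 2(1-\tau s)\,\delta\,(z+s)^{-2}(w+s)^{-2}P^{-1}$, where $\delta\equiv A+\tfrac{B}{2}(s-\theta_{m+1})$, obtaining
\[
\frac{1}{\tilde h_P}\frac{\partial \tilde h_P}{\partial z} = \frac{\partial_z\delta}{\delta} - \frac{2}{z+s} - \frac{2\,\partial_z w}{w+s} - \frac{\partial_z P}{P}.
\]
Two of the four terms are immediate: $\partial_z\delta/\delta = \Delta$ by \eqref{eq:deri_delta}, and comparing \eqref{eq:deri_P} with the definition of $\tilde h_P$ shows $\partial_z P = \Delta\,\tilde h_P\,P$, so $\partial_z P/P = \Delta\tilde h_P$, which is exactly what produces the $-\tilde h_P$ inside the bracket. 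The crux is the remaining pair $-\tfrac{2}{z+s} - \tfrac{2\partial_z w}{w+s}$. Substituting $\partial_z w = -(w-\theta_{m+1})^2/(z-\theta_{m+1})^2$ from \eqref{eq:deri_w} and placing over the common denominator $(z+s)(w+s)(z-\theta_{m+1})^2$, its numerator becomes $-2[(w+s)(z-\theta_{m+1})^2 - (z+s)(w-\theta_{m+1})^2]$, and I would invoke the identity $(w+s)(z-\theta_{m+1})^2 - (z+s)(w-\theta_{m+1})^2 = (z-w)(A+Bs)$ — which is precisely \eqref{eq:deri_Q_proof} multiplied through by $(z-\theta_{m+1})^2$ — to reduce this pair to $-2\Delta\tilde h_Q$. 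Combining the four contributions gives $\tilde h_P^{-1}\partial_z\tilde h_P = \Delta(1 - \tilde h_P - 2\tilde h_Q)$, i.e.\ \eqref{eq:deri_h_P}.

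The only genuinely nontrivial step is this last collapse of the two geometric terms into $-2\Delta\tilde h_Q$; every other piece is a direct substitution of a formula already proved. I expect the main care to lie in tracking the sign and the factor $A+Bs$ so that they assemble correctly into $\tilde h_Q$, but no new computation beyond \eqref{eq:deri_Q_proof} is required.
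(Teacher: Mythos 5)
Your proposal is correct and takes essentially the same route as the paper: \eqref{eq:deri_h_Q} is the identical quotient-rule computation, and for \eqref{eq:deri_h_P} your logarithmic differentiation is merely a reorganization of the paper's quotient-rule argument, invoking the same ingredients --- \eqref{eq:deri_delta} for $\partial_z \delta/\delta = \Delta$, \eqref{eq:deri_P} for $\partial_z P/P = \Delta \tilde h_P(\bm{x},s)$, and the identity \eqref{eq:deri_Q_proof} to collapse $-\frac{2}{z+s}-\frac{2\,\partial_z w}{w+s}$ into $-2\Delta\tilde h_Q(\bm{x},s)$. The algebra checks out, so no gap.
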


\hideif{
\begin{proof}[Proof of Lemma \ref{lem:deri_h_PQ}] 
We prove the two derivatives in \eqref{eq:deri_h_P} and \eqref{eq:deri_h_Q} separately as follows. 
\begin{itemize}
	\item {\bf Proof of \eqref{eq:deri_h_P}:} 
    Let $\tilde\delta \equiv 2(1-\tau s)\delta$. 
    By definition, 
    \begin{align*}
        \tilde h_P(\bm{x}, s)
        & = 	
	\frac{2(1-\tau s)[A + \frac{B}{2}(s-\theta_{m+1})]}{(z+s)^2(w+s)^2 P(\bm{x},s)}    \\
    & = \frac{2(1-\tau s)\delta}{(z+s)^2(w+s)^2 P(\bm{x},s)}    \\
    & = \frac{\tilde \delta}{(z+s)^2(w+s)^2 P(\bm{x},s)}.
    \end{align*}
    Thus, 
    \begin{align}
		\frac{\partial \tilde h_P(\bm{x}, s)}{\partial z}
		& = \frac{\partial}{\partial z} \frac{\tilde\delta}{(w+s)^2(z+s)^2 P(\bm{x}, s)}	\notag \\
		& = \frac{1
		}{(w+s)^4(z+s)^4 P(\bm{x}, s)^2
		} 
        \Big\{ 
        [(w+s)^2(z+s)^2 P(\bm{x}, s)] \frac{\partial \tilde\delta}{\partial z} \notag \\
        & \qquad \qquad \qquad \qquad \qquad \qquad \qquad 
        - \tilde\delta \frac{\partial[(w+s)^2(z+s)^2 P(\bm{x}, s)]}{\partial z}
        \Big\}.  \label{eq:lem:8-3c}
    \end{align}
    From Lemma \ref{lem:deri_AB}, we have
	\begin{equation}
		\label{eq:lem:8-3b}
		\frac{\partial \tilde \delta}{\partial z}
		= 2 (1 - \tau s) \frac{\partial \delta}{\partial z}
		= 2 \Delta  (1 - \tau s) \delta 
		= \Delta \tilde\delta.
	\end{equation}
     Next, using the definition of $\delta$ in Lemma \ref{lem:deri_AB} and using the expression of $\frac{\partial P(\bm{x}, s)}{\partial z}$ from Lemma \ref{lem:deri_PQ}, we obtain  
	\begin{align}
		& \quad \  \frac{\partial [(z+s)^2(w+s)^2P(\bm{x},s)]}{\partial z}	\notag\\
		& = (z+s)^2(w+s)^2 \frac{\partial P(\bm{x},s)}{\partial z}
    		+ P(\bm{x},s) \frac{\partial [ (z+s)^2(w+s)^2 ]}{\partial z}	\notag \\ 
            & = (z+s)^2(w+s)^2 \frac{\partial P(\bm{x},s)}{\partial z}
    		+ 2 P(\bm{x},s) (z+s) (w+s) \frac{\partial [ (z+s)(w+s) ]}{\partial z}	\notag \\ 
		& = (z+s)^2(w+s)^2 \frac{\Delta \tilde\delta}{(z+s)^2 (w+s)^2}
		+ 2 P(\bm{x},s) (z+s)(w+s) \Delta (A + B s) \notag \\
		& = \Delta [ \tilde\delta + 2P(\bm{x},s) (z+s)(w+s)  (A + B s) ], \label{eq:lem:8-3}
	\end{align}
    where the second last equality follows from \eqref{eq:deri_Q_proof}.     
    From \eqref{eq:lem:8-3c}--\eqref{eq:lem:8-3}, 
    we then have
    \begin{align*}
		& \quad \  \frac{\partial \tilde h_P(\bm{x}, s)}{\partial z}	\\
		& = \frac{\Delta \tilde\delta[(w+s)^2(z+s)^2 P(\bm{x}, s)]  - \Delta \tilde\delta [ \tilde\delta + 2P(\bm{x},s) (z+s)(w+s)  (A + B s) ]
		}{(w+s)^4(z+s)^4 P(\bm{x}, s)^2
		} \\
		& = \frac{\Delta \tilde\delta
			[(w+s)^2(z+s)^2 P(\bm{x}, s)
			-\tilde\delta 
			- 2P(\bm{x},s) (z+s)(w+s)  (A + B s) ]
		}{(w+s)^4(z+s)^4 P(\bm{x}, s)^2
		} 	\\
		& = \frac{\Delta \tilde\delta}{(w+s)^2(z+s)^2 P(\bm{x},s)}
		\left[
            1 
			- \frac{\tilde\delta}{(w+s)^2(z+s)^2 P(\bm{x},s)}
            - 
			\frac{2(A + B s)}{(w+s)(z+s)}
		\right] \\
		& = \Delta \tilde h_P(\bm{x}, s)
		\left[
			1 - \tilde h_P(\bm{x}, s) - 2\tilde h_Q(\bm{x}, s)
		\right],
	\end{align*}
        where the first equality follows from \eqref{eq:lem:8-3c}--\eqref{eq:lem:8-3}, the second and third equality follows from 
        rearranging and simplifying the fractions, 
        and the last equality follows from the definition of $\tilde h_P(\bm{x}, s)$ and $\tilde h_Q(\bm{x}, s)$.
	\item {\bf Proof of \eqref{eq:deri_h_Q}:} 
        From the definition of $\tilde h_Q(\bm{x}, s)$ in Lemma \ref{lem:deri_U}, we have 
        \begin{align*}
		\frac{\partial \tilde h_Q(\bm{x}, s)}{\partial z}
		& = \frac{\partial}{\partial z} \frac{(A + B s)}{(w + s)(z + s)} 	=  \frac{(w + s)(z + s) \frac{\partial(A + B s)}{\partial z}
		- (A + B s) \frac{\partial [(w + s)(z + s)]}{\partial z}}{(w + s)^2(z + s)^2} 	\\
		& = 
        \frac{(w + s)(z + s) \Delta(A + B s)
		- (A + B s) \Delta(A + B s)}{(w + s)^2(z + s)^2}
        \\
        & = 
        \frac{ \Delta(A + B s)}{(w + s)(z + s)}
        -
        \frac{\Delta(A + B s)^2}{(w + s)^2(z + s)^2}\\
        & = \Delta [\tilde h_Q(\bm{x}, s) - \tilde h_Q(\bm{x}, s)^2],
	\end{align*}
    where the third last equality follows from \eqref{eq:deri_Q_proof}, and the last equality follows from the definition of $\tilde h_Q(\bm{x}, s)$. 
\end{itemize}
From the above, Lemma \ref{lem:deri_h_PQ} holds. 
\end{proof}
}

\begin{lem} \label{lem:deri_rej_prob}
Consider the same setup and notations in Lemmas \ref{lem:rej_prob_U}--\ref{lem:deri_AB}.  
For any function  $r(\bm{x}, s)$ of $\bm{x}$ and $s$, define 
	\[
		L(r(\bm{x}, s)) \equiv \frac{1}{\pi}
			\int^{|\theta_{m+1}|}_0
			\frac{r(\bm{x},s) U(\bm{x},s)}{\sqrt{|\theta_{m+1}|-s}} \ \mathrm{d} s. 
	\]
    From Lemma \ref{lem:rej_prob_U},  $\bP[|T_m|>c]=L(1)$, where $1$ here denotes a constant function taking value one. 
    \begin{itemize}
        \item[(i)] 
        The first and second order derivatives of $\bP[|T_m|>c]$ over $z$ have the following equivalent forms:
        \begin{align}
		\frac{\partial \bP[|T_m|>c]}{\partial z}
		& = -\frac{\Delta}{2} L(\tilde h(\bm{x}, s)), \label{eq:deri_rej_prob_1}	\\
		\frac{\partial^2 \bP[|T_m|>c]}{\partial z^2}
		&
		= -\frac{L(\tilde h(\bm{x},s))}{2} \frac{\partial \Delta}{\partial z}
		- \frac{\Delta^2}{2} L ( \tilde h(\bm{x},s))
		+ \frac{3\Delta^2}{4} L(\tilde h(\bm{x},s)^2). \label{eq:deri_rej_prob_2}
	\end{align}

        \item[(ii)] If $c\ne m^{-1/2}$, $z\ne w$, and $\frac{\partial \bP[|T_m|>c]}{\partial z}=0$, then we must have $\frac{\partial^2 \bP[|T_m|>c]}{\partial z^2}>0$.
    \end{itemize}
\end{lem}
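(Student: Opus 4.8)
The plan is to treat part~(i) as a differentiation-under-the-integral computation and part~(ii) as a sign argument built on top of it. Throughout, $c,x_3,\dots,x_m$ and $\theta_{m+1}$ are held fixed (as in Lemmas~\ref{lem:rej_prob_U}--\ref{lem:deri_AB}), so the endpoints of the integral defining $L$ do not move with $z$ and I may differentiate under the integral sign. Since $\Delta=(z-w)/(z-\theta_{m+1})^2$ does not depend on $s$, Lemma~\ref{lem:deri_U} gives $\partial_z U=-\tfrac12\Delta\,\tilde h\,U$, and pulling $\Delta$ outside the integral yields $\partial_z\bP[|T_m|>c]=-\tfrac{\Delta}{2}L(\tilde h)$, which is \eqref{eq:deri_rej_prob_1}. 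Differentiating once more, the key intermediate identity is $\partial_z\tilde h=\Delta(\tilde h-\tilde h^2)$: summing the two formulas of Lemma~\ref{lem:deri_h_PQ} gives $\partial_z\tilde h=\Delta[\tilde h_P(1-\tilde h_P-2\tilde h_Q)+\tilde h_Q(1-\tilde h_Q)]$, and because $\tilde h=\tilde h_P+\tilde h_Q$ the bracket is exactly $\tilde h-\tilde h^2$. Combining with $\partial_z U=-\tfrac12\Delta\tilde h\,U$ gives $\partial_z(\tilde h U)=(\Delta\tilde h-\tfrac32\Delta\tilde h^2)U$, hence $\partial_z L(\tilde h)=\Delta L(\tilde h)-\tfrac32\Delta L(\tilde h^2)$; applying the product rule to $-\tfrac{\Delta}{2}L(\tilde h)$ then produces the three terms of \eqref{eq:deri_rej_prob_2}.

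For part~(ii), the hypothesis $z\neq w$ makes $\Delta\neq0$, so $\partial_z\bP=-\tfrac{\Delta}{2}L(\tilde h)=0$ forces $L(\tilde h)=0$. Substituting $L(\tilde h)=0$ into \eqref{eq:deri_rej_prob_2} annihilates its first two terms and leaves $\partial_z^2\bP=\tfrac{3\Delta^2}{4}L(\tilde h^2)$. Since $\Delta^2>0$, it remains to show $L(\tilde h^2)>0$. As $U>0$ and the weight $(|\theta_{m+1}|-s)^{-1/2}$ is positive on $(0,|\theta_{m+1}|)$ and $\tilde h$ is continuous there, the strict inequality $L(\tilde h^2)>0$ is equivalent to $\tilde h\not\equiv0$ on the interval.

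Establishing $\tilde h\not\equiv0$ is the main obstacle, and it is precisely where $c\neq m^{-1/2}$ must be used. I would first record two sign facts. The function $\delta=A+\tfrac{B}{2}(s-\theta_{m+1})$ is linear in $s$ with positive slope $B/2$ (since $B=z+w-2\theta_{m+1}>0$) and positive value $wz-\tfrac12(z+w)\theta_{m+1}$ at $s=0$ (since $\theta_{m+1}<0$), hence $\delta>0$ on $[0,\infty)$; and $P(\bm{x},s)>0$ on the interval. Therefore $\tilde h_P$ has the sign of $1-\tau s$, vanishing only at $s^\ast=\tau^{-1}$, which by Lemma~\ref{lem:roots} ($\theta_{m+1}<-\tau^{-1}$) lies strictly inside $(0,|\theta_{m+1}|)$. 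I then split on the sign of $A=wz-\theta_{m+1}^2$. If $A\ge0$, then $\tilde h_Q>0$ and $\tilde h_P>0$ on $(0,s^\ast)$, so $\tilde h>0$ there. If $A<0$, then $\tilde h_Q$ vanishes at $s_0=-A/B$, which lies in $(0,|\theta_{m+1}|)$ by a direct estimate, and unless $s_0=s^\ast$ we get $\tilde h(s_0)=\tilde h_P(s_0)\neq0$.

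The sole degenerate case is $s_0=s^\ast$, i.e.\ $A+B\tau^{-1}=0$, where both pieces share the factor $(s-\tau^{-1})$ and $\tilde h=(s-\tau^{-1})\cdot[\,\cdot\,]$. Asking the bracket to vanish identically reduces to the polynomial identity $B(z+s)(w+s)P(\bm{x},s)=2\tau\delta$; matching the leading coefficient in $s$, and using the negative-root constraint $\sum_i(1+\tau x_i)/(x_i-\theta_{m+1})=1$ from \eqref{eq:constraint_neg_root} to evaluate the leading behavior of $(z+s)(w+s)P$, forces $B=\tau B$, i.e.\ $\tau=1$, which is exactly $c=m^{-1/2}$. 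Thus for $c\neq m^{-1/2}$ the bracket is not identically zero, so $\tilde h\not\equiv0$, whence $L(\tilde h^2)>0$ and $\partial_z^2\bP>0$. The hardest part of the write-up is this degenerate case, in particular the bookkeeping of the poles of $P$ at $s=-x_i$ for $m>2$, which must either be shown to obstruct the identity outright or be folded into the leading-coefficient argument.
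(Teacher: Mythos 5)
Your part (i) and the opening of part (ii) — $\Delta\neq 0$ forces $L(\tilde h)=0$, leaving $\frac{3\Delta^2}{4}L(\tilde h^2)$, which is positive iff $\tilde h\not\equiv 0$ on $(0,|\theta_{m+1}|)$ — coincide with the paper's proof step for step, including the identity $\partial_z\tilde h=\Delta(\tilde h-\tilde h^2)$. Where you genuinely diverge is in how $\tilde h\not\equiv 0$ is established. The paper clears all denominators at once: $\tilde h(\bm{x},s)=0$ is equivalent to the vanishing of a single polynomial $\chi(s)$ of degree $m+2$ (see \eqref{eq:h_tilde_poly}), and the negative-root constraint \eqref{eq:constraint_neg_root} shows its leading coefficient equals $B(1-\tau)=\frac{B}{\kappa m}(mc^2-1)\neq 0$, so $\chi$ cannot vanish on an interval. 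You instead run a sign analysis ($\delta>0$ and $P>0$ make $\tilde h_P$ carry the sign of $1-\tau s$; then split on the sign of $A$), which disposes of every configuration except the degenerate one $-A/B=\tau^{-1}$, and only there do you invoke a leading-coefficient computation — which, once denominators are cleared, is exactly the paper's computation specialized to that subcase ($\tau B$ versus $B$, forcing $\tau=1$, i.e.\ $c=m^{-1/2}$). Both routes are correct and use the hypothesis $c\neq m^{-1/2}$ in the same algebraic place; the paper's is shorter and uniform, while yours yields some extra qualitative information (e.g.\ $\tilde h>0$ on $(0,\tau^{-1})$ when $A\ge 0$) at the cost of the case split. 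The ``pole bookkeeping'' you flag as the hardest part is in fact harmless: the poles of $P$ sit at $s=-x_i\le 0$, outside $(0,|\theta_{m+1}|)$, so equality of these rational functions on the interval extends by analyticity to an identity of rational functions, and multiplying through by $\prod_i(x_i+s)$ is legitimate — that product is precisely what turns your degenerate-case identity into the paper's polynomial $\chi(s)$.
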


\hideif{
\begin{proof}[Proof of Lemma \ref{lem:deri_rej_prob}]
We first prove (i). We prove the two derivatives in \eqref{eq:deri_rej_prob_1} and \eqref{eq:deri_rej_prob_2} separately as follows. 
\begin{itemize}
	\item {\bf Proof of \eqref{eq:deri_rej_prob_1}:} From Lemmas \ref{lem:rej_prob_U} and \ref{lem:deri_U}, the derivative of $\bP[|T_m|>c]$ with respect to $z$ is
	\begin{align}
		\frac{\partial \bP[|T_m|>c]}{\partial z}
		& = \frac{1}{\pi} \int^{|\theta_{m+1}|}_0
			\frac{\partial U(\bm{x},s)}{\partial z} \frac{1}{\sqrt{|\theta_{m+1}|-s}}
			\ \text{d}s	\notag	\\
		& = -\frac{\Delta}{2 \pi} \int^{|\theta_{m+1}|}_0
			\frac{\tilde h(\bm{x},s) U(\bm{x},s)}{\sqrt{|\theta_{m+1}|-s}} \ \text{d} s	\notag	\\
		& = -\frac{\Delta}{2} L(\tilde h(\bm{x}, s)). \notag
	\end{align}
	\item {\bf Proof of \eqref{eq:deri_rej_prob_2}:} Using \eqref{eq:deri_rej_prob_1} and recognizing that $\Delta$, $\tilde h$, and $U$ are functions of $z$, we have
	\begin{align}
		& \quad \ \frac{\partial^2 \bP[|T_m|>c]}{\partial z^2}
		\nonumber
        \\
        & = -\frac{L(\tilde h(\bm{x},s))}{2} \frac{\partial \Delta}{\partial z}
		- \frac{\Delta}{2\pi} \int^{|\theta_{m+1}|}_0 \left[\frac{\partial \tilde h(\bm{x},s)}{\partial z}U(\bm{x},s) + \frac{\partial U(\bm{x},s)}{\partial z} \tilde h(\bm{x},s) \right]\frac{1}{\sqrt{|\theta_{m+1}|-s}} \ \text{d} s
		\notag \\
		& = -\frac{L(\tilde h(\bm{x},s))}{2} \frac{\partial \Delta}{\partial z}
		- \frac{\Delta}{2\pi} \int^{|\theta_{m+1}|}_0 \frac{\partial \tilde h(\bm{x},s)}{\partial z} \frac{U(\bm{x},s)}{\sqrt{|\theta_{m+1}|-s}} \ \text{d} s		\notag \\
		& \qquad \qquad + \frac{\Delta^2}{4\pi} \int^{|\theta_{m+1}|}_0 \tilde h(\bm{x},s)^2\frac{U(\bm{x},s)}{\sqrt{|\theta_{m+1}|-s}} \ \text{d} s
		\notag \\
		& = -\frac{L(\tilde h(\bm{x},s))}{2} \frac{\partial \Delta}{\partial z}
		- \frac{\Delta}{2} L \left(\frac{\partial \tilde h(\bm{x},s)}{\partial z} \right)
		+ \frac{\Delta^2}{4} L(\tilde h(\bm{x},s)^2), 	\label{eq:lem:9-3}
	\end{align}
    where the second last equality follows from Lemma \ref{lem:deri_U}. 
    By the definition of $\tilde h(\bm{x},s)$ in Lemma \ref{lem:deri_U} and from Lemma \ref{lem:deri_h_PQ}, we have
	\begin{align}
		\frac{\partial \tilde h(\bm{x},s)}{\partial z}
		& 
        = \frac{\partial \tilde h_P(\bm{x},s)}{\partial z} + \frac{\partial \tilde h_Q(\bm{x},s)}{\partial z}
        \nonumber
        \\
        & = \Delta \tilde  h_P(\bm{x}, s) [1 -\tilde h_P(\bm{x},s) - 2\tilde h_Q(\bm{x}, s) ] + \Delta [\tilde h_Q(\bm{x}, s) -\tilde h_Q(\bm{x}, s)^2]
        \nonumber
        \\
        & = \Delta \left\{
			[\tilde h_P(\bm{x}, s) + \tilde h_Q(\bm{x},s)]
			- [\tilde h_P(\bm{x}, s) + \tilde h_Q(\bm{x},s)]^2
		\right\}		\notag	\\
		& = \Delta [\tilde h(\bm{x}, s) - \tilde h(\bm{x},s)^2].
		\label{eq:lem:9-4}
	\end{align}
	Substituting \eqref{eq:lem:9-4} into \eqref{eq:lem:9-3} gives
	\begin{align*}
		\frac{\partial^2 \bP[|T_m|>c]}{\partial z^2}
		= -\frac{L(\tilde h(\bm{x},s))}{2} \frac{\partial \Delta}{\partial z}
		- \frac{\Delta^2}{2} L ( \tilde h(\bm{x},s))
		+ \frac{3\Delta^2}{4} L(\tilde h(\bm{x},s)^2).		
	\end{align*}
\end{itemize}

We now prove (ii). Because $z\ne w$, $\Delta$ is nonzero. 
From \eqref{eq:deri_rej_prob_1} and the condition in (ii), $L(\tilde h(\bm{x}, s))$ must be zero. 
From \eqref{eq:deri_rej_prob_2}, we then have 
\begin{align*}
    \frac{\partial^2 \bP[|T_m|>c]}{\partial z^2}
    &
    = -\frac{L(\tilde h(\bm{x},s))}{2} \frac{\partial \Delta}{\partial z}
    - \frac{\Delta^2}{2} L ( \tilde h(\bm{x},s))
    + \frac{3\Delta^2}{4} L(\tilde h(\bm{x},s)^2)
    = \frac{3\Delta^2}{4} L(\tilde h(\bm{x},s)^2)\ge 0. 
\end{align*}
From the definition of $U(\bm{x},s)$ in Lemma \ref{lem:rej_prob_U}, 
we can know that $U(\bm{x},s)$ is positive for $s \in (0, |\tm|)$. 
Therefore, 
$L(\tilde h(\bm{x},s)^2) = 0$ if and only if $\tilde h(\bm{x},s)=0$ for all $s \in (0, |\tm|)$. 
Note that, by the definition of $\tilde h(\bm{x},s)$ in Lemma \ref{lem:deri_U}, for any $s\in (0, |\tm|)$, 
\begin{align*}
    & \tilde h(\bm{x},s) = 0    \\
    \Longleftrightarrow \quad &  
    \frac{2(1-\tau s)[A + \frac{B}{2}(s-\theta_{m+1})]}{(z+s)^2(w+s)^2 P(\bm{x},s)}
    + 
    \frac{A + Bs}{(w+s)(z+s)} = 0
    \\
      \Longleftrightarrow \quad & 
    2(1-\tau s)[A + \frac{B}{2}(s-\theta_{m+1})]
    + 
    (A + Bs) (z+s) (w+s) P(\bm{x},s) = 0
    \\
      \Longleftrightarrow \quad &  
      \chi(s) = 0,
\end{align*}
where
\begin{align}
    \label{eq:h_tilde_poly}
    \chi(s)
    & \equiv 2(1-\tau s)[A + \frac{B}{2}(s-\theta_{m+1})]
    \prod_{j=1}^m (x_i+s)   \notag \\
    & \qquad + 
    (A + Bs) (z+s) (w+s) \sum^m_{i=1} \left[ \prod_{j\ne i} (x_i+s) \cdot \frac{1 + \tau x_i}{x_i - \theta_{m+1}} \right]
\end{align}
is a polynomial function of $s$. 
Note that the coefficient of $s^{m+2}$ in \eqref{eq:h_tilde_poly} is 
\begin{align*}
    -B\tau + B \sum_{i=1}^m \frac{1 + \tau x_i}{x_i - \theta_{m+1}}
    & = 
    B(1-\tau)       \\
    & = 
    B\left( 1-\frac{\kappa+1}{\kappa m} \right)       \\
    & = 
    \frac{B}{\kappa m}\left[ \kappa (m-1) - 1 \right]       \\
    & = 
    \frac{B}{\kappa m} (mc^2 - 1) \ne 0,
\end{align*}
where the first equality follows from \eqref{eq:constraint_neg_root}, 
the second equality follows from the definition of $\tau$, 
the last equality follows from the definition of $\kappa$, 
and the last inequality holds because $B = z+w-2\tm > 0$ and $c\ne m^{-1/2}$. 
Thus, \eqref{eq:h_tilde_poly} cannot be zero for all $s\in (0, |\tm|)$. 
This implies that $L(\tilde h(\bm{x},s)^2)$ must be positive. 
Consequently, $\frac{\partial^2 \bP[|T_m|>c]}{\partial z^2}$ is also positive.

From the above, Lemma \ref{lem:deri_rej_prob} holds. 
\end{proof}
}

\subsection{Bounds on the derivative of the rejection probability}\label{sec:bound_rej_prob}

\subsubsection{More accurate bounds on the unique negative root $\tm$}

\begin{lem}\label{lem:neg_root_abs_lower}
    Consider the same setup and notations in Lemma \ref{lem:rej_prob_U}. 
	For any $x \ge 0$ and integer $1\le k \le m$, define 
		\begin{align*}
			h(\theta; x, k)= -(k-1) x - \left[ m - x + (m+1-k) \tau x \right] \theta + \theta^2.
		\end{align*}
	\begin{itemize}
		\item[(a)] 
		For any $x \ge 0$ and $1\le k \le m$,
		$h(\theta; x, k)$, viewed as a function of $\theta$, has a unique positive root, which is denoted as $\theta(x,k)$. 
		 
		\item[(b)] $\theta(x,k)\ge m$ for any $x \ge 0$ and $1\le k \le m$.
		 
		\item[(c)] 
		$\theta(0, k) = m$ for any $1\le k \le m$, and $\theta(x,1) = m + x/\kappa$ for any $x\ge 0$. 
		\item[(d)] For any $1\le k \le n$, $|\tm|\ge \theta(x_{(k)},k) = \theta(\kappa \gamma^2_{(k)},k)$,   
		where $\gamma_{(1)}\le \gamma_{(2)} \le \ldots \le \gamma_{(m)}$ are the sorted values of $\{\gamma_j\}_{j=1}^m$, 
		and $x_{(1)} \le x_{(2)} \le \ldots \le x_{(m)}$ are the sorted values of $\{x_j\}_{j=1}^m$. 
	\end{itemize}
\end{lem}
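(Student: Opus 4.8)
The plan is to dispose of parts (a)–(c), which are elementary facts about the quadratic $h(\cdot\,;x,k)$, and then prove the substantive part (d) by comparing $|\tm|$ with $\theta(x_{(k)},k)$ through the defining relation of $\tm$. Write $h(\theta;x,k)=\theta^2-b\theta-(k-1)x$ as a monic, upward-opening quadratic in $\theta$, where $b\equiv m-x+(m+1-k)\tau x$. Its constant term $-(k-1)x$ is the negative of the product of the two roots and is $\le 0$ since $x\ge 0$ and $k\ge 1$; hence the roots have product $\le 0$, giving at most one positive root, while $h(0;x,k)=-(k-1)x\le 0$ together with $h(\theta;x,k)\to+\infty$ forces exactly one nonnegative root. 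This gives (a). For (c), when $k=1$ or $x=0$ the constant term vanishes, so $0$ is a root and the other root equals $b$; using the identity $m\tau=\frac{\kappa+1}{\kappa}=1+\kappa^{-1}$, a direct computation gives $b=m+x/\kappa$ when $k=1$ and $b=m$ when $x=0$, both positive, yielding the stated values.

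For (b), since $h$ crosses zero from below at its positive root $\theta(x,k)$, and the other root is $\le 0<m$, it suffices to verify $h(m;x,k)\le 0$, because then $m$ lies in the interval where $h\le 0$ and so $m\le\theta(x,k)$. Substituting $\theta=m$ and cancelling the $m^2$ terms gives $h(m;x,k)=x(m+1-k)(1-\tau m)$, and since $\tau m=1+\kappa^{-1}>1$ this equals $-x(m+1-k)/\kappa\le 0$. Hence $\theta(x,k)\ge m$.

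The crux is (d). Set $t\equiv|\tm|=-\tm>0$ and $t_0\equiv\theta(x_{(k)},k)$. I would start from the defining relation \eqref{eq:constraint_neg_root} of the unique negative root, namely $\Phi(t)=1$ with $\Phi(s)\equiv\sum_{i=1}^m\frac{1+\tau x_i}{x_i+s}$. Since each $1+\tau x_i>0$, the function $\Phi$ is strictly decreasing, so $t\ge t_0$ is equivalent to $\Phi(t_0)\ge 1=\Phi(t)$. To bound $\Phi(t_0)$ from below, consider $\phi(x)\equiv\frac{1+\tau x}{x+t_0}$, whose derivative has the sign of $\tau t_0-1$; because $t_0\ge m$ by (b) and $\tau m>1$, the map $\phi$ is increasing. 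Sorting the $x_i$, the $m-k+1$ indices of rank $\ge k$ satisfy $x_{(i)}\ge x_{(k)}$, hence $\phi(x_{(i)})\ge\phi(x_{(k)})$, while each of the remaining $k-1$ terms is at least $\phi(0)=1/t_0$. This yields $\Phi(t_0)\ge (k-1)t_0^{-1}+(m-k+1)\phi(x_{(k)})$. The decisive step is to verify that this lower bound equals exactly $1$: clearing denominators in $(k-1)t_0^{-1}+(m-k+1)\frac{1+\tau x_{(k)}}{x_{(k)}+t_0}=1$ reproduces precisely the equation $h(t_0;x_{(k)},k)=0$, which holds by the definition $t_0=\theta(x_{(k)},k)$. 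Thus $\Phi(t_0)\ge 1$, and monotonicity of $\Phi$ gives $t\ge t_0$, i.e.\ $|\tm|\ge\theta(x_{(k)},k)$.

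The main obstacle is the algebraic verification in (d) that the one-sided lower bound on $\Phi(t_0)$ collapses to exactly $1$ at the root of $h$; the two monotonicity ingredients — decreasingness of $\Phi$ in $s$ and increasingness of $\phi$ in $x$ — are what make the comparison go through in the correct direction, and both rest on the elementary but essential inequality $\tau m>1$ that recurs throughout the argument.
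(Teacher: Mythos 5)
Your proposal is correct and follows essentially the same route as the paper: parts (a)--(c) via the sign of the constant term and the computation $h(m;x,k)=(m+1-k)x(1-\tau m)\le 0$, and part (d) by combining the defining equation of $\tm$ with a sorting/monotonicity bound on the summands (valid because $\tau m>1$ forces $\tau|\tm|>1$), which collapses to the quadratic relation $h(\cdot\,;x_{(k)},k)$. The only cosmetic difference is that you conclude via strict monotonicity of $\Phi(s)=\sum_i\frac{1+\tau x_i}{x_i+s}$ in $s$, whereas the paper concludes directly from $h(|\tm|;x_{(k)},k)\ge 0$ and the shape of the upward parabola; both are valid.
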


\hideif{
\begin{proof}[Proof of Lemma \ref{lem:neg_root_abs_lower}(a)--(c)]
First, we consider the case when $x=0$. In this case, $h(\theta; x, k)$ simplifies to $h(\theta; x, k)= - m   \theta + \theta^2 = \theta (\theta - m)$. 
Obviously, $h(\theta; x, k)$ has a unique positive root at $\theta = m$.

Second, we consider the case when $k=1$. In this case, $h(\theta; x, k)$ simplifies to $h(\theta; x, k)= - ( m - x + m \tau x ) \theta + \theta^2 = \theta \{\theta - m - (m\tau -1)x \}$.
Note that $m\tau -1 = m \cdot (\frac{\kappa+1}{\kappa m}) - 1 = \frac{1}{\kappa}$ by definition. Consequently, 
$h(\theta; x, k)= \theta ( \theta - m - \frac{x}{\kappa} )$.
Thus, 
$h(\theta; x, k)$ has a unique positive root at $\theta = m + \frac{x}{\kappa}$.

Third, we consider the case when $x>0$ and $k>1$. 
In this case, $h(0;x,k) = -(k-1)x < 0$. 
Note that $h(\theta;x,k)$ converges to infinity as $\theta$ goes to positive or negative infinity. 
Thus, $h(\theta; x, k)$ must have one positive root and one negative root.

From the above, (a) and (c) hold. Below we prove (b).
Note that, for any $x \ge 0$ and $1\le k \le m$, 
\begin{align*}
	h(m; x, k)& = -(k-1) x - \left[ m - x + (m+1-k) \tau x \right] m + m^2\\
	& = 
	-(k-1) x  + m x -  (m+1-k) \tau m x    \\
	& = (m+1-k) x (1- \tau m) \\
	& = (m+1-k) x \left( 1- \frac{\kappa+1}{\kappa} \right) \le 0,
\end{align*}
where the last equality follows from the definition of $\tau$. 
By the property of quadratic functions, we must have $\theta(x, k) \ge m$, i.e., (b) holds. 
\end{proof}
}

\hideif{
\begin{proof}[Proof of Lemma \ref{lem:neg_root_abs_lower}(d)]
From \eqref{eq:constraint_neg_root}, 
we know that 
\begin{align}\label{eq:x_x_tm}
	1 & = \sum^m_{i=1} \frac{1 + \tau x_i}{x_i - \theta_{m+1}}
	= \sum^m_{i=1} \frac{1 }{x_i - \theta_{m+1}} 
	+ 
	\sum^m_{i=1} \frac{ \tau x_i}{x_i - \theta_{m+1}}
	=
	\frac{1}{\theta_{m+1}}
	\sum^m_{i=1} \frac{\theta_{m+1} }{x_i - \theta_{m+1}} 
	+ 
	\tau \sum^m_{i=1} \frac{  x_i}{x_i - \theta_{m+1}}    \nonumber
    \\
	& = \frac{1}{\theta_{m+1}}
	\sum^m_{i=1} \left( \frac{x_i}{x_i - \theta_{m+1}} - 1 \right)
	+ 
	\tau \sum^m_{i=1} \frac{ x_i}{x_i - \theta_{m+1}}
	= 
	\left( \frac{1}{\theta_{m+1}} + \tau \right)  \sum^m_{i=1} \frac{ x_i}{x_i - \theta_{m+1}} - \frac{m}{\theta_{m+1}}. 
\end{align}
Consequently, 
\begin{align*}
	0 =  -(m + \tm) + (1 + \tau \tm) \sum^m_{j=1} \frac{x_j}{x_j - \tm}.
\end{align*}
From Lemma \ref{lem:roots}(c), $1 + \tau \tm < 0$. 
Because $x/(x-\theta_{m+1})$ is increasing in $x\ge 0$, 
we have, for any $1\le k \le m$, 
\begin{align*}
	0 & =  -(m + \tm) + (1 + \tau \tm) \sum^m_{j=1} \frac{x_j}{x_j - \tm} \\
	& = -(m + \tm) + (1 + \tau \tm) \sum^m_{j=1} \frac{x_{(j)}}{x_{(j)} - \tm}
	\\
	& \le 
	-(m + \tm) + (1 + \tau \tm) (m+1-k) \frac{x_{(k)}}{x_{(k)} - \tm}. 
\end{align*}
Multiplying both left-hand and right-hand sides by $x_{(k)} - \tm$, we have
\begin{align*}
	0 & \le 
	-(m + \tm)(x_{(k)} - \tm) + (1 + \tau \tm) (m+1-k) x_{(k)}
	\\
	& = 
	- m x_{(k)} + m \tm  - \tm x_{(k)} + \tm^2 
	+ (m+1-k) x_{(k)} + (m+1-k) x_{(k)} \tau \tm \\
	& = -(k-1) x_{(k)} + \left[ m - x_{(k)} + (m+1-k) \tau x_{(k)} \right] \tm + \tm^2\\
	& = -(k-1) x_{(k)} - \left[ m - x_{(k)} + (m+1-k) \tau x_{(k)} \right] |\tm| + |\tm|^2, 
\end{align*}
i.e., $h(|\tm|; x, k) \ge 0$. 
From Lemma \ref{lem:neg_root_abs_lower}(a) and the properties of quadratic functions, we know that $|\tm|>0$ must be no less than $\theta(x_{(k)}, k)$. 
\end{proof}
}

\begin{lem}\label{lem:neg_root_abs_upper}
	Consider the same setup and notations in Lemmas \ref{lem:rej_prob_U} and \ref{lem:neg_root_abs_lower}. 
	$|\tm| \le m + x_{(m)}/\kappa = m + \gamma^2_{(m)}$, 
	recalling that $x_{(m)} = \max_{1\le i \le m} x_i$ and $\gamma_{(m)}=\max_{1\le i \le m} \gamma_i$. 
\end{lem}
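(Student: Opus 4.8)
The plan is to adapt the argument used for Lemma~\ref{lem:neg_root_abs_lower}(d), but to bound the relevant sum \emph{from above} rather than from below. The starting point is the characterization of the negative root already obtained there: combining the constraint \eqref{eq:constraint_neg_root} with the rearrangement in \eqref{eq:x_x_tm}, the unique negative root $\tm$ satisfies
\begin{align*}
    0 = -(m + \tm) + (1 + \tau \tm) \sum_{j=1}^m \frac{x_j}{x_j - \tm}.
\end{align*}

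First I would record two sign facts. Since $\tm < 0$, the map $x \mapsto x/(x-\tm)$ is increasing on $[0,\infty)$, so every summand is bounded above by $x_{(m)}/(x_{(m)} - \tm)$, where $x_{(m)} = \max_{1\le i \le m} x_i$. Second, from Lemma~\ref{lem:roots}(c) we have $\tm < -\tau^{-1}$, hence $1 + \tau \tm < 0$. These are exactly the ingredients needed, and the step requiring the most care — the only genuine obstacle — is that multiplying the summand bound by the \emph{negative} factor $(1+\tau\tm)$ reverses the inequality.

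Carrying this out, I would substitute $x_j/(x_j-\tm) \le x_{(m)}/(x_{(m)}-\tm)$ into the displayed identity; because $1+\tau\tm < 0$, this yields
\begin{align*}
    0 \ge -(m+\tm) + m(1+\tau\tm)\frac{x_{(m)}}{x_{(m)} - \tm}.
\end{align*}
Multiplying through by $x_{(m)} - \tm > 0$ and expanding, the right-hand side factors as $\tm\bigl[\tm + m - x_{(m)} + m\tau x_{(m)}\bigr]$. Using $m\tau - 1 = 1/\kappa$ (which follows from $\tau = (\kappa+1)/(\kappa m)$), the bracket simplifies to $\tm + m + x_{(m)}/\kappa$, so that $0 \ge \tm\,(\tm + m + x_{(m)}/\kappa)$.

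Finally, dividing by $\tm < 0$ flips the inequality once more and gives $0 \le \tm + m + x_{(m)}/\kappa$, i.e.\ $|\tm| = -\tm \le m + x_{(m)}/\kappa$. Since $x_{(m)}/\kappa = \kappa\gamma_{(m)}^2/\kappa = \gamma_{(m)}^2$, this is precisely the claimed bound $|\tm| \le m + x_{(m)}/\kappa = m + \gamma_{(m)}^2$. Once the two sign reversals (from $1+\tau\tm<0$ and from $\tm<0$) are tracked correctly, the remaining computation is just a short factorization.
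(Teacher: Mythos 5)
Your proof is correct and follows essentially the same route as the paper's: both start from the identity $0 = -(m+\tm) + (1+\tau\tm)\sum_j x_j/(x_j-\tm)$ derived from \eqref{eq:constraint_neg_root}, bound the sum using the monotonicity of $x\mapsto x/(x-\tm)$ together with the sign fact $1+\tau\tm<0$, and then clear the denominator $x_{(m)}-\tm$ and simplify via $m\tau-1=1/\kappa$. The only cosmetic difference is that the paper rewrites everything in terms of $|\tm|$ up front (so the factor becomes $\tau|\tm|-1>0$ and no sign reversals need tracking), whereas you work with $\tm$ directly and flip inequalities twice; the algebra is identical.
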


\hideif{
\begin{proof}[Proof of Lemma \ref{lem:neg_root_abs_upper}]
	From \eqref{eq:constraint_neg_root} and by the same logic as the proof of Lemma \ref{lem:neg_root_abs_lower}, 
	\begin{align*}
		0 =  -(m + \tm) + (1 + \tau \tm) \sum^m_{j=1} \frac{x_j}{x_j - \tm}.
	\end{align*}
	From Lemma \ref{lem:roots}, $\tau |\tm| > 1$. 
	Because $\frac{x}{x+|\tm|}$ is increasing in $x$, 
	we then have
	\begin{align*}
		|\tm| -m = (\tau|\tm| - 1) \sum_{j=1}^m \frac{x_j}{x_j + |\tm|}
		\le 
		(\tau|\tm| - 1) \frac{m x_{(m)}}{x_{(m)} + |\tm|}, 
	\end{align*}
	which further implies that 
	\begin{align*}
		& (x_{(m)} + |\tm|)(|\tm| -m) \le 
		(\tau|\tm| - 1) m x_{(m)}
		\\
		\Longrightarrow \quad  
		& 
		x_{(m)} |\tm| + |\tm|^2 - m x_{(m)} - m |\tm| \le 
		\tau|\tm|m x_{(m)} - m x_{(m)} 
		\\
		\Longrightarrow \quad  
		&
		(x_{(m)} + |\tm| - m) |\tm|   \le 
		\tau|\tm|m x_{(m)}
		\\
		\Longrightarrow \quad  
		&
		x_{(m)} + |\tm| - m   \le 
		\tau m x_{(m)}
		\\
		\Longrightarrow \quad  
		&
		|\tm| \le m + (\tau m-1) x_{(m)}
		= m + \frac{x_{(m)}}{\kappa} = m + \gamma_{(m)}^2 ,
	\end{align*}
	where the last two equalities follow from definition, recalling that $\tau = \frac{\kappa+1}{\kappa m}.$ Therefore, Lemma \ref{lem:neg_root_abs_upper} holds. 
\end{proof}
}

\subsubsection{Lemmas for bounding the derivative of the rejection probability}

\begin{lem}
\label{lem:deriv_G}
Consider the same setup and notations as in Lemma \ref{lem:rej_prob_U}. 
Suppose that $z \equiv \max_{1 \leq i \leq m} x_i$ and $|\theta_{m+1}| \in [\underline{\theta}, \overline{\theta}]$ for some $\underline{\theta} \ge m$. 
For any given $\nu\in \mathbb{R}$, define  
\[
	G(s) \equiv  \frac{s^\nu U(\bm{x}, s)}{(w + s)(z + s)}.
\]
Then for any $s \in (0, |\theta_{m+1}|]$, 
\begin{align*}
    \frac{\text{d} \log G(s)}{\text{d} s}
    & \ge 
	\frac{1}{s} 
    \left\{
		\nu -
		\left[ 
		1
		+ \frac{\overline{\theta}}{2(z + \overline{\theta})}
		+ 
        \frac{\overline{\theta}}{w + \overline{\theta}} 
		-\frac{\underline{\theta} - m}{2(\tau \underline{\theta} - 1)}
		\right]
    \right\}.
\end{align*}
\end{lem}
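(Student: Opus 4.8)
The plan is to pass to logarithms, differentiate in $s$, and then bound the resulting terms in three groups, with the defining equation \eqref{eq:constraint_neg_root} for $\theta_{m+1}$ entering at the crucial step. Writing $\log G(s) = \nu\log s + \log U(\bm{x},s) - \log(w+s) - \log(z+s)$ and expanding $\log U(\bm{x},s) = (\tfrac{m}{2}-1)\log s - \tfrac12\log P(\bm{x},s) - \tfrac12\log Q(\bm{x},s)$ from Lemma \ref{lem:rej_prob_U}, differentiation gives, after multiplying by $s>0$,
\[
s\,\frac{\mathrm{d}\log G(s)}{\mathrm{d}s} = \nu - 1 + \Big[\tfrac{m}{2} - \tfrac{s}{2}\tfrac{\partial_s Q}{Q}\Big] + \tfrac{s}{2}\Big(-\tfrac{\partial_s P}{P}\Big) - \frac{s}{w+s} - \frac{s}{z+s},
\]
where $\partial_s Q/Q = \sum_{i=1}^m (x_i+s)^{-1}$ and $-\partial_s P/P = \big(\sum_i a_i (x_i+s)^{-1}\big)\big/\big(\sum_i a_i\big)$ with weights $a_i \equiv \frac{1+\tau x_i}{(x_i+s)(x_i-\theta_{m+1})} > 0$. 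I would lower-bound the bracketed $Q$-group, the combined $P$/$z$-group, and the $w$-term against the three summands of the target and then divide by $s$.

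For the $Q$-group I would first use the algebraic identity $\tfrac{m}{2}-\tfrac{s}{2}\tfrac{\partial_s Q}{Q} = \tfrac12\sum_{i=1}^m \frac{x_i}{x_i+s}$. Since each $\frac{x_i}{x_i+s}$ is nonincreasing in $s$ and $s\le|\theta_{m+1}|$, this is at least $\tfrac12\sum_i \frac{x_i}{x_i+|\theta_{m+1}|}$. The key step is to evaluate this last sum in closed form: rearranging \eqref{eq:constraint_neg_root} exactly as in the proof of Lemma \ref{lem:neg_root_abs_lower}(d) yields $\sum_i \frac{x_i}{x_i+|\theta_{m+1}|} = \frac{|\theta_{m+1}|-m}{\tau|\theta_{m+1}|-1}$. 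Finally, $t \mapsto \frac{t-m}{\tau t - 1}$ has derivative $\frac{m\tau-1}{(\tau t-1)^2} > 0$ (because $m\tau = 1+\kappa^{-1}>1$), hence is increasing for $t\ge\underline{\theta}$; combined with $|\theta_{m+1}| \ge \underline{\theta}$ and $\tau\underline{\theta} \ge \tau m > 1$, this gives $\tfrac12\sum_i\frac{x_i}{x_i+s} \ge \frac{\underline{\theta}-m}{2(\tau\underline{\theta}-1)}$, matching the corresponding summand.

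The remaining groups are elementary. For the $P$-group, since $z=\max_i x_i$ implies $\frac{1}{x_i+s}\ge\frac{1}{z+s}$ for every $i$, the weighted average $-\partial_s P/P$ is at least $\frac{1}{z+s}$; pairing with the $z$-term gives $\tfrac{s}{2}(-\partial_s P/P) - \frac{s}{z+s} \ge -\frac{s}{2(z+s)} \ge -\frac{\overline{\theta}}{2(z+\overline{\theta})}$, using that $\frac{s}{2(z+s)}$ is increasing in $s$ and $s\le|\theta_{m+1}|\le\overline{\theta}$. Likewise $\frac{s}{w+s}$ is increasing in $s$, so $-\frac{s}{w+s}\ge-\frac{\overline{\theta}}{w+\overline{\theta}}$. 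Summing the three bounds reproduces the claimed inequality. The main obstacle is spotting the two correct groupings—pairing the $P$-term with the $z$-term (which is what produces the factor $\tfrac12$ on $\frac{\overline{\theta}}{z+\overline{\theta}}$ but not on the $w$-term), and isolating $\tfrac{m}{2}-\tfrac{s}{2}\partial_s Q/Q$ as $\tfrac12\sum_i \frac{x_i}{x_i+s}$ so that \eqref{eq:constraint_neg_root} can be substituted; everything else is monotonicity in $s$ or in $t$.
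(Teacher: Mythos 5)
Your proposal is correct and follows essentially the same route as the paper's proof: the same logarithmic decomposition of $G$, the same bound $-\partial_s \log P \ge (z+s)^{-1}$ via $z=\max_i x_i$, the same identity $\tfrac{m}{2}-\tfrac{s}{2}\sum_i(x_i+s)^{-1}=\tfrac12\sum_i\frac{x_i}{x_i+s}$ combined with \eqref{eq:constraint_neg_root} to get $\frac{|\theta_{m+1}|-m}{\tau|\theta_{m+1}|-1}$, and the same monotonicity bounds in $s$ and in $|\theta_{m+1}|$ against $\underline{\theta}$ and $\overline{\theta}$. The only differences are cosmetic (multiplying through by $s$ up front rather than factoring out $\tfrac{1}{2s}$ at the end).
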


\hideif{
\begin{proof}[Proof of Lemma \ref{lem:deriv_G}]
Using the definition of $U(\bm{x}, s)$ in Lemma \ref{lem:rej_prob_U}, we have
\begin{align}
		\label{eq:lem:10-2}
        \begin{split}
		& \hspace{-15pt} \log G(s) \\
		& = 
		\left( \frac{m}{2} - 1 + \nu \right) \log s 
		- \frac{1}{2} \log P(\bm{x}, s) 
		- \frac{1}{2} \log Q(\bm{x}, s)
		-\log (w + s)
		- \log (z + s).
        \end{split}
\end{align}
	Since, by definition, 
	\[
		\frac{\partial P(\bm{x}, s)}{\partial s} = -\sum^m_{i=1} \frac{1 + \tau x_i}{(x_i + s)^2(x_i - \theta_{m+1})},
	\]
	we have
	\begin{align}
		\label{eq:lem:10-3}
		\frac{\partial \log P(\bm{x}, s)}{\partial s}
		& =
		\frac{1}{P(\bm{x}, s)} \frac{\partial P(\bm{x}, s)}{\partial s} 
		= -\frac{\sum^m_{i=1} \frac{1 + \tau x_i}{(x_i + s)^2(x_i - \theta_{m+1})}}{\sum^m_{i=1} \frac{1 + \tau x_i}{(x_i + s)(x_i - \theta_{m+1})}}.
	\end{align}
	Recall that $z = \max_{1\leq i\leq m} x_i$. It follows that $\frac{1}{z + s}  \leq \frac{1}{x_i + s}$ for any $1 \le i \le m$. Thus,
	\begin{align}
		\label{eq:lem:10-3b}
		-\frac{\partial \log P(\bm{x}, s)}{\partial s}
		& 
		= \frac{\sum^m_{i=1} \frac{1 + \tau x_i}{(x_i + s)^2(x_i - \theta_{m+1})}}{\sum^m_{i=1} \frac{1 + \tau x_i}{(x_i + s)(x_i - \theta_{m+1})}}
		\geq \frac{\frac{1}{z+s}\sum^m_{i=1} \frac{1 + \tau x_i}{(x_i + s)(x_i - \theta_{m+1})}}{\sum^m_{i=1} \frac{1 + \tau x_i}{(x_i + s)(x_i - \theta_{m+1})}}
		= \frac{1}{z+s}.
	\end{align}
Since, by definition, $\log Q(\bm{x}, s) = \sum^m_{i=1} \log (x_i + s)$, we have
	\begin{align}
		\label{eq:lem:10-4}
		\frac{\partial \log Q(\bm{x}, s)}{\partial s}
		& = \sum^m_{i=1} \frac{1}{x_i + s}.
	\end{align}
	\item Combining the results from \eqref{eq:lem:10-3}--\eqref{eq:lem:10-4} into \eqref{eq:lem:10-2}, we have
	\begin{align}\label{eq:lem:10b-5}
		\frac{\text{d} \log G(s)}{\text{d} s}
		& = 
		\left( \frac{m}{2} - 1 + \nu \right) \frac{1}{s}
		- \frac{1}{2} \frac{\partial \log P(\bm{x}, s)}{\partial s}
		- \frac{1}{2}\frac{\partial \log Q(\bm{x}, s)}{\partial s}
		- \frac{1}{w+s}
		- \frac{1}{z+s} \nonumber
		\\
		& \geq \left( \frac{m}{2} - 1 + \nu \right) \frac{1}{s}
		+ \frac{1}{2}\frac{1}{z + s}
		- \frac{1}{2}\sum^m_{i=1} \frac{1}{x_i + s}
		- \frac{1}{w+s}
		- \frac{1}{z+s} \notag	\\
		& =  \left( \frac{m}{2} - 1 + \nu \right) \frac{1}{s}
		- \frac{1}{2}\frac{1}{z + s}
		- \frac{1}{2}\sum^m_{i=1} \frac{1}{x_i + s}
		- \frac{1}{w+s}	\notag	\\
		& = \frac{1}{2} \sum^m_{i=1} \left( \frac{1}{s} - \frac{1}{x_i + s} \right)
		- \frac{1-\nu}{s}
		- \frac{1}{2(z+s)}
		- \frac{1}{w+s}	\notag	\\
		& = \frac{1}{2s} \sum^m_{i=1} \frac{x_i}{x_i + s}
		- \frac{1-\nu}{s}
		- \frac{1}{2(z+s)}
		- \frac{1}{w+s}
        \nonumber
        \\
        & = \frac{1}{2s} 
    \left[ \sum^m_{i=1} \frac{x_i}{x_i + s}
		- 2(1-\nu)
		- \frac{s}{z+s}
		- \frac{2s}{w+s}
    \right]. 
	\end{align} 
Below we bound the terms in \eqref{eq:lem:10b-5}. 

We first bound the first term in \eqref{eq:lem:10b-5}. 
Because $s \leq |\theta_{m+1}| = -\theta_{m+1}$, we have 
\begin{equation*}
	\sum^m_{i=1} \frac{x_i}{x_i + s}
	\geq 
	\sum^m_{i=1} \frac{x_i}{x_i - \theta_{m+1}}.
\end{equation*}
Next, from \eqref{eq:constraint_neg_root} and \eqref{eq:x_x_tm},
we know that 
\begin{align*}
	1 & 
	= 
	\left( \frac{1}{\theta_{m+1}} + \tau \right)  \sum^m_{i=1} \frac{ x_i}{x_i - \theta_{m+1}} - \frac{m}{\theta_{m+1}}. 
\end{align*}
Consequently, 
\begin{align*}
	\sum^m_{i=1} \frac{x_i}{x_i + s}
	& \geq \sum^m_{i=1} \frac{x_i}{x_i - \theta_{m+1}}
	= 
    \frac{ 1 +  \frac{m}{\theta_{m+1}}}{ \frac{1}{\theta_{m+1}} + \tau}
	= \frac{-\tm - m}{-\tau \tm - 1}		
	= \frac{|\tm| - m}{\tau |\tm| - 1}		
	\\
    & = \tau^{-1} - \tau^{-1} \frac{m\tau-1}{\tau |\tm| - 1}. 
\end{align*}
Because, by definition, $m \tau = \frac{\kappa+1}{\kappa}>1$,  using the condition that $\tau |\theta_{m+1}| \ge \tau \underline{\theta} \ge \tau m > 1$, we then have 
\begin{align*}
	\sum^m_{i=1} \frac{x_i}{x_i + s}
	& 
	\ge \tau^{-1} - \tau^{-1} \frac{m\tau-1}{\tau |\tm| - 1}
	\ge 
	\tau^{-1} - \tau^{-1} \frac{m\tau-1}{\tau \underline{\theta} - 1}
	= \frac{ \underline{\theta} - m}{\tau \underline{\theta} - 1}. 
\end{align*}

We then bound the third term in \eqref{eq:lem:10b-5}.
Because $\frac{s}{z+s}$ is increasing in $s>0$
and $s \le |\tm| \le \overline{\theta}$, we have
\begin{align*}
    \frac{s}{z+s}
    \leq  \frac{|\tm|}{z + |\tm|}		 \leq 
    \frac{\overline{\theta}}{z + \overline{\theta}}. 
\end{align*}

We finally bound the last term in \eqref{eq:lem:10b-5}. 
By the same logic as before, we have 
\begin{align*}
    \frac{s}{w+s}
    \leq \frac{|\tm|}{w + |\tm|}		 \leq 
    \frac{\overline{\theta}}{w + \overline{\theta}}.
\end{align*}

From the above, we have 
\begin{align*}
    \frac{\text{d} \log G(s)}{\text{d} s}
    & \ge 
    \frac{1}{2s} 
    \left[ \sum^m_{i=1} \frac{x_i}{x_i + s}
		- 2(1-\nu)
		- \frac{s}{z+s}
		- \frac{2s}{w+s}
    \right]\\
    & \ge 
    \frac{1}{2s} 
    \left[ \frac{ \underline{\theta} - m}{\tau \underline{\theta} - 1}
		- 2(1-\nu)
		- \frac{\overline{\theta}}{z + \overline{\theta}}
		- 
        \frac{2\overline{\theta}}{w + \overline{\theta}}
    \right]
	\\
	& = 
	\frac{1}{s} 
    \left[ 
		\frac{\underline{\theta} - m}{2(\tau \underline{\theta} - 1)}
		- 1+\nu
		- \frac{\overline{\theta}}{2(z + \overline{\theta})}
		- 
        \frac{\overline{\theta}}{w + \overline{\theta}}
    \right]
	\\
	& = 
	\frac{1}{s} 
    \left\{
		\nu -
		\left[ 
		1
		+ \frac{\overline{\theta}}{2(z + \overline{\theta})}
		+ 
        \frac{\overline{\theta}}{w + \overline{\theta}} 
		-\frac{\underline{\theta} - m}{2(\tau \underline{\theta} - 1)}
		\right]
    \right\}.
\end{align*}
Therefore, Lemma \ref{lem:deriv_G} holds. 
\end{proof}
}

\begin{lem}	\label{lem:18}
	For any $\nu < 1$ and $\zeta > 0$,
	\[
		\int^{\zeta}_0
			\frac{s^{-\nu}}{\sqrt{\zeta - s}} \ \mathrm{d}s
			= \sqrt{\pi} \zeta^{\frac{1}{2}-\nu} \frac{\Gamma(1-\nu)}{\Gamma (\frac{3}{2} - \nu)}.
	\]
	\end{lem}

\hideif{
	\begin{proof}[Proof of Lemma \ref{lem:18}]
	To begin with, using the properties of beta and gemma functions,
	\begin{equation}
		\label{eq:lem:18-1}
		\int^1_0 \frac{x^{-\nu}}{\sqrt{1-x}} \ \text{d}x
		= \frac{\Gamma(1-\nu) \Gamma(\frac{1}{2})}{\Gamma (\frac{3}{2} - \nu)}
		= \frac{\Gamma(1-\nu) \sqrt{\pi}}{\Gamma (\frac{3}{2} - \nu)}.
	\end{equation}
	The integral \eqref{eq:lem:18-1} is convergent if $1 - \nu > 0$, or equivalently, $\nu < 1$.  
	By change of variables, let $s = \zeta x$. Hence, $\text{d}s = \zeta\text{d}x$. Therefore,
	\begin{align*}
		\int^{\zeta}_0
			\frac{s^{-\nu}}{\sqrt{\zeta - s}} \ \text{d}s
			& = 
			\int^1_0	\frac{\zeta^{-\nu} x^{-\nu}}{\sqrt{\zeta - \zeta x}} \zeta \ \text{d}x	
			= \zeta^{\frac{1}{2}-\nu} \int^1_0 \frac{x^{-\nu}}{\sqrt{1-x}} \ \text{d} x
			=\sqrt{\pi} \zeta^{\frac{1}{2}-\nu} \frac{\Gamma(1-\nu)}{\Gamma (\frac{3}{2} - \nu)},
	\end{align*}
	where the last equality uses \eqref{eq:lem:18-1}.
    Therefore, Lemma \ref{lem:18} holds. 
	\end{proof}
}

\begin{lem}
\label{lem:F_bound_w_min}
Consider the same setup and notations in Lemma \ref{lem:rej_prob_U}, and adopt the notations from Lemmas \ref{lem:deri_PQ} and \ref{lem:deri_U}. 
Suppose that $z \equiv \max_{1 \leq i \leq m} x_i > 0$, $w \equiv \min_{1 \leq i \leq m} x_i$, $|\theta_{m+1}| \le \overline{\theta}$ for some positive $\overline{\theta}$, and $\tau \le 1/2$. 
Define 
\begin{align*}
    F(s) \equiv \frac{s^{-\nu} (w+s) (z+s) \tilde h(\bm{x}, s)}{\sqrt{|\tm| - s}}, 
\end{align*}
and 
\begin{align*}
	C_1 & \equiv 
	(1-2\tau) \frac{wz-\overline{\theta}^2}{(w+z+2\overline{\theta})\overline{\theta}} +  
    \frac{2 \frac{wz-\overline{\theta}^2}{(w+z+2\overline{\theta})\overline{\theta}} + 1}{z}
    - \tau.
\end{align*}

\begin{itemize}
	\item[(a)] 
	For any $s\in (0, |\tm|)$, 
	\begin{align*}
	F(s) & \ge
	\frac{Bs^{-\nu} }{\sqrt{|\tm| - s}} 
	\left[ |\tm| C_1
    + (1-\tau)s \right].
	\end{align*}

	\item[(b)] 
	If 
	\begin{align*}
		\nu < \min\left\{ \frac{3}{2} - \frac{1}{2(1+\frac{C_1}{1-\tau})}, \ 1 \right\}
	\end{align*}
	then 
	\begin{align*}
		\int_0^{|\tm|} 
		\frac{s^{-\nu} }{\sqrt{|\tm| - s}} 
		\left[ |\tm| C_1
    	+ (1-\tau)s \right] \ \text{d} s > 0. 
	\end{align*}
\end{itemize}
\end{lem}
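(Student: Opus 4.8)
The plan is to prove (a) by lower bounding the quantity $\Phi(s) \equiv (w+s)(z+s)\tilde h(\bm{x},s)$, since $F(s) = s^{-\nu}(|\theta_{m+1}|-s)^{-1/2}\Phi(s)$ and the prefactor $s^{-\nu}(|\theta_{m+1}|-s)^{-1/2}$ is positive on $(0,|\theta_{m+1}|)$; the target of (a) is exactly $\Phi(s) \ge B[\,|\theta_{m+1}|C_1 + (1-\tau)s\,]$. First I would combine the definitions of $\tilde h_P$ and $\tilde h_Q$ from Lemma \ref{lem:deri_U} with $\delta \equiv A + \frac{B}{2}(s-\theta_{m+1})$ from Lemma \ref{lem:deri_AB} to write $\Phi(s) = \frac{2(1-\tau s)\delta}{(z+s)(w+s)P(\bm{x},s)} + (A + Bs)$, reducing the task to lower bounding these two pieces and recombining.

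The argument rests on three clean estimates. (i) The sandwich $\frac{1}{z+s} \le P(\bm{x},s) \le \frac{1}{s}$, which follows from the normalization $\sum_{i=1}^m \frac{1+\tau x_i}{x_i-\theta_{m+1}} = 1$ in \eqref{eq:constraint_neg_root} together with $s \le x_i + s \le z+s$ (all summands positive since $x_i - \theta_{m+1} = x_i + |\theta_{m+1}| > 0$). (ii) The sandwich $\frac{B}{2}(w+s) \le \delta \le \frac{B}{2}(z+s)$, obtained from the identities $\delta - \frac{B}{2}(w+s) = \frac{(z-w)(w+|\theta_{m+1}|)}{2}$ and $\frac{B}{2}(z+s) - \delta = \frac{(z-w)(z+|\theta_{m+1}|)}{2}$, both nonnegative because $z \ge w$. (iii) The bound $A + Bs \ge B(\mu|\theta_{m+1}| + s)$, where $\mu \equiv \frac{wz-\overline{\theta}^2}{(w+z+2\overline{\theta})\overline{\theta}}$: setting $\phi(t) \equiv \frac{wz-t^2}{(w+z+2t)t}$ I would verify $\phi'(t) < 0$, so that $\phi$ is decreasing and $\frac{A}{B|\theta_{m+1}|} = \phi(|\theta_{m+1}|) \ge \phi(\overline{\theta}) = \mu$ by Lemma \ref{lem:neg_root_abs_upper}, whence $A \ge \mu B|\theta_{m+1}|$.

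With these in hand I would lower bound the first term $\frac{2(1-\tau s)\delta}{(z+s)(w+s)P}$ by splitting on the sign of $1-\tau s$, i.e. on whether $s \lessgtr \tau^{-1}$ (note $\tau \le 1/2$ and $\tau|\theta_{m+1}|>1$ by Lemma \ref{lem:roots}, so both regimes occur). When $1-\tau s \ge 0$ the numerator is nonnegative, so I use $P \le 1/s$ with $\delta \ge \frac{B}{2}(w+s)$; when $1-\tau s < 0$ I use $P \ge \frac{1}{z+s}$ with $\delta \le \frac{B}{2}(z+s)$. Combining each regime's estimate with (iii) and with $s < |\theta_{m+1}| \le \overline{\theta}$ collapses everything into the single affine bound $\Phi(s) \ge B[\,|\theta_{m+1}|C_1 + (1-\tau)s\,]$, with $C_1$ the constant assembling the worst-case contributions of $\mu$, $1/z$, and $\tau$. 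The main obstacle is precisely this bookkeeping: checking that the two-regime estimates assemble into one uniform affine bound and that the accumulated constant matches $C_1$ term by term, which is where the $\overline{\theta}$-dependence and the coefficients $(1-2\tau)$ and $\frac{1}{z}$ in $C_1$ originate.

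For (b), I would insert the affine integrand $|\theta_{m+1}|C_1 + (1-\tau)s$ and evaluate the resulting moments via Lemma \ref{lem:18} with $\zeta = |\theta_{m+1}|$: the $s^{-\nu}$ moment gives $\sqrt{\pi}\,\zeta^{1/2-\nu}\frac{\Gamma(1-\nu)}{\Gamma(3/2-\nu)}$ and the $s^{1-\nu}$ moment gives $\sqrt{\pi}\,\zeta^{3/2-\nu}\frac{\Gamma(2-\nu)}{\Gamma(5/2-\nu)}$. Using $\Gamma(2-\nu)=(1-\nu)\Gamma(1-\nu)$ and $\Gamma(5/2-\nu)=(3/2-\nu)\Gamma(3/2-\nu)$, the integral factors as $\sqrt{\pi}\,\zeta^{3/2-\nu}\frac{\Gamma(1-\nu)}{\Gamma(3/2-\nu)}\big[C_1 + (1-\tau)\frac{1-\nu}{3/2-\nu}\big]$, whose prefactor is positive exactly when $\nu < 1$. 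Hence positivity reduces to $C_1 + (1-\tau)\frac{1-\nu}{3/2-\nu} > 0$; since $\frac{1-\nu}{3/2-\nu}$ is strictly decreasing in $\nu$ and the stated threshold $\frac{3}{2} - \frac{1}{2(1+C_1/(1-\tau))}$ is precisely the root of $C_1 + (1-\tau)\frac{1-\nu}{3/2-\nu}=0$ (using $1-\tau>0$), the hypothesis $\nu < \min\{\frac{3}{2}-\frac{1}{2(1+C_1/(1-\tau))},1\}$ delivers the strict inequality and hence the claim.
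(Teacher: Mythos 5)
Your reduction of (a) to the pointwise bound $(w+s)(z+s)\tilde h(\bm{x},s) \ge B[\,|\tm|C_1+(1-\tau)s\,]$, your sandwich estimates (i)--(iii), and your entire argument for (b) match the paper's proof (for (b) you should also verify $C_1+1-\tau>0$, which the paper gets from $\frac{wz-\overline{\theta}^2}{(w+z+2\overline{\theta})\overline{\theta}}+\frac{1}{2}>0$; without it the division by $1+\frac{C_1}{1-\tau}$ and the identification of the threshold as the root are not justified). The genuine gap is in the assembly step of (a): the case split on the sign of $1-\tau s$, with the bounds you specify, provably cannot reach the stated $C_1$, so the deferred ``bookkeeping'' is not merely tedious but impossible. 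In the regime $1-\tau s<0$, your choices $P\ge\frac{1}{z+s}$ and $\delta\le\frac{B}{2}(z+s)$ force $\frac{2(1-\tau s)\delta}{(w+s)(z+s)P}\ge\frac{(1-\tau s)B(z+s)}{w+s}$, which carries the uncontrolled ratio $\frac{z+s}{w+s}$; taking $w=0$ and $z\gg|\tm|$, this is more negative near $s=|\tm|$ than the entire budget $B[\,|\tm|C_1+(1-\tau)s\,]$ allows (the negative part of $|\tm|C_1$ is of order $\tau|\tm|$, not of order $\tau z\cdot\frac{z}{|\tm|}$). In the regime $1-\tau s\ge 0$, the factor $s$ coming from $P\le\frac{1}{s}$ makes the positive term vanish as $s\to 0^+$, leaving only $A+Bs\ge B(\mu|\tm|+s)$ with $\mu\equiv\frac{wz-\overline{\theta}^2}{(w+z+2\overline{\theta})\overline{\theta}}$; since $C_1-\mu=(2\mu+1)(\frac{1}{z}-\tau)$ and $2\mu+1>0$, this falls strictly short of $B|\tm|C_1$ whenever $z<1/\tau$.

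The missing idea is an asymmetric, case-free treatment of the two pieces of the numerator, $2(1-\tau s)\delta=2\delta-2\tau s\delta$. For the negative piece the paper uses $(w+s)(z+s)P\ge w+s\ge s$, so the $s$ cancels and $-\frac{2\tau s\delta}{(w+s)(z+s)P}\ge-2\tau\delta$ exactly, with no $\frac{z+s}{w+s}$ loss. For the positive piece it uses $(w+s)(z+s)P\le z+s$ (this is where $w=\min_i x_i$ enters; your upper sandwich on $\delta$ is not needed), and then exploits that $s\mapsto\frac{\delta(s)}{z+s}=\frac{(A+\frac{|\tm|}{2}B)+\frac{B}{2}s}{z+s}$ is monotone, hence minorized by the smaller of its endpoint values at $s=0$ and $s=|\tm|$; the $s=0$ endpoint $\frac{2(A+\frac{|\tm|}{2}B)}{z}$ is exactly the origin of the $\frac{2\mu+1}{z}$ term in $C_1$, which your plan loses. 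Recombining $(1-2\tau)A-\tau|\tm|B+(1-\tau)Bs$ with this endpoint minimum and with your estimate $\frac{A}{B|\tm|}\ge\mu$ then produces the stated affine bound uniformly in $s$.
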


\hideif{
\begin{proof}[Proof of Lemma \ref{lem:F_bound_w_min}(a)]
From the definition of $\tilde{h}(\bm{x}, s)$ in Lemma \ref{lem:deri_U}, we have
	\begin{align}
		\label{eq:lem:11b-1}
		(w + s)(z + s) \tilde h(\bm{x}, s)
		& = 
		A + B s 
		+ \frac{2(A + \frac{s - \theta_{m+1}}{2} B)}{(w+s)(z+s)P(\bm{x}, s)}
		- \frac{2\tau s(A + \frac{s - \theta_{m+1}}{2} B)}{(w+s)(z+s)P(\bm{x}, s)}.
	\end{align}
In the following, we bound \eqref{eq:lem:11b-1} from below. 

First, because $z = \max_{1 \leq i \leq m} x_i$,
we have
\begin{align}
		(w+s)(z+s) P(\bm{x},s)
		& =(w+s)(z+s) \sum^m_{i=1} \frac{1 + \tau x_i}{(x_i + s)(x_i - \tm)}  \\
        & =(w+s) \sum^m_{i=1} \left( \frac{z+s}{x_i + s}\cdot\frac{1 + \tau x_i}{x_i - \tm} \right)
        \notag  \\
		& \geq (w+s) \sum^m_{i=1} \frac{1+\tau x_i}{x_i - \tm} \notag \\
        & = w + s	\geq s,     \label{eq:lem:11b-2}
\end{align}
where the last equality follows from \eqref{eq:constraint_neg_root} and the last inequality follows from the fact that $w \geq 0$.

Second, because $w = \min_{1 \leq i \leq m} x_i$, we have 
\begin{align}
		(w+s)(z+s) P(\bm{x},s)
		& =(w+s)(z+s) \sum^m_{i=1} \frac{1 + \tau x_i}{(x_i + s)(x_i - \tm)}  \notag \\
        & =
        (z+s) \sum^m_{i=1} 
        \left( \frac{w+s}{x_i + s}
        \cdot
        \frac{1 + \tau x_i}{x_i - \tm}\right)
        \notag  \\
		& \leq (z+s) \sum^m_{i=1} \frac{1+\tau x_i}{x_i - \tm}	= z + s,  \label{eq:lem:11b-3}
\end{align}   
where the last equality follows from \eqref{eq:constraint_neg_root}. 

Third, from the definition of $A$ and $B$ in Lemma \ref{lem:deri_PQ}, we know that 
\begin{align}\label{eq:lem:11b-5}
		A + \frac{s - \theta_{m+1}}{2} B
		&
		= wz - \tm^2 + \frac{s - \tm}{2} (z + w - 2\tm)	
        \nonumber
        \\
        & =
         wz - |\tm|^2 + \frac{s + |\tm|}{2} (z + w + 2|\tm|)
        \nonumber
        \\
        & = 
        s \left( \frac{z+w}{2} + |\theta_{m+1}| \right)
        +  wz - |\tm|^2 + |\tm|^2 + |\tm| \frac{z+w}{2}
        \nonumber
        \\
        & = s \left( \frac{z+w}{2} + |\theta_{m+1}| \right)
        +  z\left( w + \frac{|\tm|}{2} \right) +  \frac{w|\tm|}{2}
        \nonumber
        \\
        & \ge 0. 
\end{align}
In addition, 
because $\frac{A + \frac{s - \theta_{m+1}}{2} B}{z+s}
= 
\frac{(A + \frac{|\tm|}{2}B) + \frac{B}{2} s}{z+s}$ is monotone in $s$, we have, for any $s\in [0, |\tm|]$, 
\begin{align}\label{eq:A_B_z_s}
    \frac{A + \frac{s - \theta_{m+1}}{2} B}{z+s}
    & \ge 
    \min\left\{ 
    \frac{A + \frac{|\tm|}{2}B}{z}, 
    \frac{A + |\tm|B}{z+|\tm|}
    \right\}
	 \nonumber
	 \\
	 & 
	\ge 
	\min\left\{ 
    \frac{A + \frac{|\tm|}{2}B}{z}, 
    \frac{A + |\tm|B}{z+\overline{\theta}}
    \right\},
\end{align}
where the last inequality holds due to \eqref{eq:lem:11b-5} and that $|\tm| \le \overline{\theta}$. 

Fourth, from the previous three parts, we have 
\begin{align}
    \label{eq:wz_h_tilde_bound}
    & \quad \ (w + s)(z + s) \tilde h(\bm{x}, s)
    \nonumber
    \\
    & 
    = 
    A + B s 
    + \frac{2(A + \frac{s - \theta_{m+1}}{2} B)}{(w+s)(z+s)P(\bm{x}, s)}
    - \frac{2\tau s(A + \frac{s - \theta_{m+1}}{2} B)}{(w+s)(z+s)P(\bm{x}, s)}
    \nonumber
    \\
    & \geq 
    A + Bs 
    + \frac{2(A + \frac{s - \theta_{m+1}}{2} B)}{z + s}
    - 2 \tau \left(A + \frac{s - \theta_{m+1}}{2} B
		\right)
    \qquad \qquad \ \  \text{using \eqref{eq:lem:11b-2}, \eqref{eq:lem:11b-3}, and \eqref{eq:lem:11b-5}
    }
    \nonumber
    \\
    & \ge 
    A + Bs 
    + 
    2\cdot\min\left\{ 
    \frac{A + \frac{|\tm|}{2}B}{z}, 
    \frac{A + |\tm|B}{z+\overline{\theta}}
    \right\}
    - 2 \tau \left(A + \frac{s - \theta_{m+1}}{2} B
		\right)
    \nonumber
    \qquad
    \text{using \eqref{eq:A_B_z_s}}
    \\
    & = 
    (1-2\tau) A +  2\cdot\min\left\{ 
    \frac{A + \frac{|\tm|}{2}B}{z}, 
    \frac{A + |\tm|B}{z+\overline{\theta}}
    \right\}
    - \tau |\tm| B + (1-\tau)Bs
    \nonumber
    \\
    & = 
    |\tm| B \cdot
    \left[ 
    (1-2\tau) \frac{A}{B|\tm|} +  2\cdot\min\left\{ 
    \frac{\frac{A}{B|\tm|} + \frac{1}{2}}{z}, 
    \frac{\frac{A}{B|\tm|} + 1}{z+\overline{\theta}}
    \right\}
    - \tau 
    \right]
    + (1-\tau)Bs.
\end{align}
Note that 
\begin{align*}
    \frac{A}{B|\tm|}
    & = 
    \frac{wz-|\tm|^2}{(w+z+2|\tm|)|\tm|}
    = 
    \frac{wz}{(w+z+2|\tm|)|\tm|}
    - 
    \frac{|\tm|}{w+z+2|\tm|}
	\\
	& \ge 
	\frac{wz}{(w+z+2 \overline{\theta})\overline{\theta}}
    - 
    \frac{\overline{\theta}}{w+z+2\overline{\theta}}
	= 
	\frac{wz-\overline{\theta}^2}{(w+z+2\overline{\theta})\overline{\theta}}. 
\end{align*}
Because $\tau \le 1/2$, 
we then have 
\begin{align*}
	& \quad \ (1-2\tau) \frac{A}{B|\tm|} +  2\cdot\min\left\{ 
    \frac{\frac{A}{B|\tm|} + \frac{1}{2}}{z}, 
    \frac{\frac{A}{B|\tm|} + 1}{z+\overline{\theta}}
    \right\}
    - \tau 
	\\
	& \ge
	(1-2\tau) \frac{wz-\overline{\theta}^2}{(w+z+2\overline{\theta})\overline{\theta}} +  2\cdot\min\left\{ 
    \frac{\frac{wz-\overline{\theta}^2}{(w+z+2\overline{\theta})\overline{\theta}} + \frac{1}{2}}{z}, 
    \frac{\frac{wz-\overline{\theta}^2}{(w+z+2\overline{\theta})\overline{\theta}} + 1}{z+\overline{\theta}}
    \right\}
    - \tau. 
\end{align*}
Note that 
\begin{align*}
	& \frac{\frac{wz-\overline{\theta}^2}{(w+z+2\overline{\theta})\overline{\theta}} + \frac{1}{2}}{z} \le  
    \frac{\frac{wz-\overline{\theta}^2}{(w+z+2\overline{\theta})\overline{\theta}} + 1}{z+\overline{\theta}}
	\\
	\Longleftrightarrow \quad 
	& 
	\frac{(wz-\overline{\theta}^2) z}{(w+z+2\overline{\theta})\overline{\theta}}  + \frac{z}{2} 
	+
	\frac{(wz-\overline{\theta}^2)\overline{\theta}}{(w+z+2\overline{\theta})\overline{\theta}}  + \frac{\overline{\theta}}{2}
	\le 
	\frac{(wz-\overline{\theta}^2) z}{(w+z+2\overline{\theta})\overline{\theta}}  + z
	\\
	\Longleftrightarrow \quad 
	&   
	\frac{z}{2} \ge 
	\frac{wz-\overline{\theta}^2}{w+z+2\overline{\theta}}  + \frac{\overline{\theta}}{2}
	= 
	\frac{wz-\overline{\theta}^2 + (w+z+2\overline{\theta}) \frac{\overline{\theta}}{2}}{w+z+2\overline{\theta}}
	= 
	\frac{wz + (w+z) \frac{\overline{\theta}}{2} }{w+z+2\overline{\theta}}
	\\
	\Longleftrightarrow \quad 
	& 
	(w+z)z + 2 z \overline{\theta} \ge 2 wz + (w+z) \overline{\theta}
	\\
	\Longleftrightarrow \quad 
	&  z(z-w) + \overline{\theta}(z - w) \ge 0, 
\end{align*}
which must hold because $z \ge w \ge 0$ and $\overline{\theta} \ge 0$.
Thus, we have 
\begin{align*}
	& \quad \ (1-2\tau) \frac{A}{B|\tm|} +  2\cdot\min\left\{ 
    \frac{\frac{A}{B|\tm|} + \frac{1}{2}}{z}, 
    \frac{\frac{A}{B|\tm|} + 1}{z+\overline{\theta}}
    \right\}
    - \tau 
	\\
	& \ge
	(1-2\tau) \frac{wz-\overline{\theta}^2}{(w+z+2\overline{\theta})\overline{\theta}} +  
    \frac{2 \frac{wz-\overline{\theta}^2}{(w+z+2\overline{\theta})\overline{\theta}} + 1}{z}
    - \tau\\
	& = C_1, 
\end{align*}
where the last equality holds by definition.
From \eqref{eq:wz_h_tilde_bound}, we then have
\begin{align*}
(w + s)(z + s) \tilde h(\bm{x}, s)
    & 
    \ge 
    |\tm| B
   C_1
    + (1-\tau)Bs
\end{align*}
This immediately implies that 
\begin{align*}
    F(s) & \equiv \frac{s^{-\nu} (w+s) (z+s) \tilde h(\bm{x}, s)}{\sqrt{|\tm| - s}}
	\ge 
	\frac{s^{-\nu} }{\sqrt{|\tm| - s}} 
	\left[ |\tm| B C_1
    + (1-\tau)Bs \right]\\
	& = 
	\frac{Bs^{-\nu} }{\sqrt{|\tm| - s}} 
	\left[ |\tm| C_1
    + (1-\tau)s \right].
\end{align*}

From the above, Lemma \ref{lem:F_bound_w_min}(a) holds. 
\end{proof}
}

\hideif{
\begin{proof}[Proof of Lemma \ref{lem:F_bound_w_min}(b)]
From Lemma \ref{lem:18}, for $\nu < 1$, we have
\begin{align*}
	& \quad \ \int_0^{|\tm|} 
		\frac{s^{-\nu} }{\sqrt{|\tm| - s}} 
		\left[ |\tm| C_1
		+ (1-\tau)s \right] \ \text{d} s
	\\
	& = 
	|\tm| C_1 \cdot 
	\int_0^{|\tm|} 
	\frac{s^{-\nu}}{\sqrt{|\tm| - s}}\ \text{d} s
	+ (1-\tau)
	\int_0^{|\tm|} \frac{s^{-(\nu-1)}}{\sqrt{|\tm| - s}}\ \text{d} s
	\\
	& = 
	C_1 
	\sqrt{\pi} |\tm|^{\frac{3}{2}-\nu} \frac{\Gamma(1-\nu)}{\Gamma(\frac{3}{2}-\nu)}	
	+ (1-\tau)
	\sqrt{\pi} |\tm|^{\frac{3}{2}-\nu} \frac{\Gamma(2-\nu)}{\Gamma(\frac{5}{2}-\nu)}\\
	& = 
	\sqrt{\pi} |\tm|^{\frac{3}{2}-\nu}
	\frac{\Gamma(1-\nu)}{\Gamma(\frac{3}{2}-\nu)}
	\left[ 
	C_1 
	+ (1-\tau) \frac{1-\nu}{\frac{3}{2}-\nu}
	\right],
\end{align*}
where the last equality uses the property of gamma functions. 
Note that, by definition,  
\begin{align}\label{eq:ratio_wz_w_z}
	\frac{wz-\overline{\theta}^2}{(w+z+2\overline{\theta})\overline{\theta}}
	+ \frac{1}{2}
	& = 
	\frac{wz
	+ \frac{1}{2} (w+z)\overline{\theta}
	}{(w+z+2\overline{\theta})\overline{\theta}} > 0,
\end{align}
which immediately implies that 
\begin{align*}
	C_1 + (1-\tau) 
	& = 
	(1-2\tau) \frac{wz-\overline{\theta}^2}{(w+z+2\overline{\theta})\overline{\theta}} +  
    \frac{2 \frac{wz-\overline{\theta}^2}{(w+z+2\overline{\theta})\overline{\theta}} + 1}{z}
    - \tau + (1-\tau) \\
	& = 
	(1-2\tau) \left[ \frac{wz-\overline{\theta}^2}{(w+z+2\overline{\theta})\overline{\theta}} + 1\right] +  
    \frac{2 \frac{wz-\overline{\theta}^2}{(w+z+2\overline{\theta})\overline{\theta}} + 1}{z}>0.
\end{align*}
Consequently, when $\nu < 1$ and $\tau < \frac{1}{2}$, we have 
\begin{align*}
	& C_1 
	+ (1-\tau) \frac{1-\nu}{\frac{3}{2}-\nu} > 0   \\
	\Longleftrightarrow  \quad & 
	C_1 
	+ (1-\tau) \left( 1 - \frac{\frac{1}{2}}{\frac{3}{2}-\nu} \right) > 0  \\
	\Longleftrightarrow \quad & 
	\frac{C_1}{1-\tau} + 1 > \frac{\frac{1}{2}}{\frac{3}{2}-\nu}
	\\
	\Longleftrightarrow \quad 
	& 
	\frac{3}{2} - \nu > \frac{1}{2(1+\frac{C_1}{1-\tau})}
	\\
	\Longleftrightarrow \quad 
        &
	\nu < \frac{3}{2} - \frac{1}{2(1+\frac{C_1}{1-\tau})},
\end{align*}
where the second last equivalence holds because $\frac{C_1}{1-\tau} + 1 = \frac{C_1+1-\tau}{1-\tau} > 0$. 
Therefore, Lemma \ref{lem:F_bound_w_min}(b) holds. 
\end{proof}
	
}

\begin{lem}
\label{lem:F_bound_w_gen}
Consider the same setup and notations in Lemma \ref{lem:rej_prob_U}, and adopt the notations from Lemmas \ref{lem:deri_PQ} and \ref{lem:deri_U}. 
Suppose that $z \equiv \max_{1 \leq i \leq m} x_i > 0$, $|\theta_{m+1}| \le \overline{\theta}$ for some positive $\overline{\theta}$, and $\tau \le 1/2$. 
Define 
\begin{align*}
    F(s) \equiv \frac{s^{-\nu} (w+s) (z+s) \tilde h(\bm{x}, s)}{\sqrt{|\tm| - s}}, 
\end{align*}
and 
\begin{align*}
	C_2 & \equiv 
	(1-2\tau) \frac{wz-\overline{\theta}^2}{(w+z+2\overline{\theta})\overline{\theta}} 
    - \tau.
\end{align*}

\begin{itemize}
	\item[(a)] 
	For any $s\in (0, |\tm|)$, 
	\begin{align*}
	F(s) & \ge
	\frac{Bs^{-\nu} }{\sqrt{|\tm| - s}} 
	\left[ |\tm| C_2
    + (1-\tau)s \right].
	\end{align*}

	\item[(b)] 
	If 
	\begin{align*}
		\nu < \min\left\{ \frac{3}{2} - \frac{1}{2(1+\frac{C_2}{1-\tau})}, \ 1 \right\}
	\end{align*}
	then 
	\begin{align*}
		\int_0^{|\tm|} 
		\frac{s^{-\nu} }{\sqrt{|\tm| - s}} 
		\left[ |\tm| C_2
    	+ (1-\tau)s \right] \ \text{d} s > 0. 
	\end{align*}
\end{itemize}
\end{lem}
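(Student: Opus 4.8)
The plan is to mirror the proof of Lemma~\ref{lem:F_bound_w_min}, exploiting the fact that the present statement drops the assumption $w=\min_{1\le i\le m}x_i$ and correspondingly weakens the constant from $C_1$ to $C_2$. Indeed $C_2$ is exactly $C_1$ with the extra nonnegative summand $\big(2\tfrac{wz-\overline{\theta}^2}{(w+z+2\overline{\theta})\overline{\theta}}+1\big)/z$ removed, so I expect a shorter argument that no longer needs the upper bound on $(w+s)(z+s)P(\bm{x},s)$ which, in the previous lemma, relied on $w$ being the minimum of the $x_i$.

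For part (a) I would start from the exact decomposition established in \eqref{eq:lem:11b-1} (itself a consequence of the definition of $\tilde h(\bm{x},s)$ in Lemma~\ref{lem:deri_U}),
\begin{equation*}
(w+s)(z+s)\tilde h(\bm{x},s) = A+Bs + \frac{2\big(A+\tfrac{s-\tm}{2}B\big)}{(w+s)(z+s)P(\bm{x},s)} - \frac{2\tau s\big(A+\tfrac{s-\tm}{2}B\big)}{(w+s)(z+s)P(\bm{x},s)}.
\end{equation*}
Since $P(\bm{x},s)>0$ for $s\in(0,|\tm|)$ and $A+\tfrac{s-\tm}{2}B\ge 0$ by \eqref{eq:lem:11b-5}, the middle term is nonnegative and can simply be discarded; this is the one place where the general version is easier, as we make no attempt to bound it below by a strictly positive quantity. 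For the last, negative, term I would invoke the lower bound $(w+s)(z+s)P(\bm{x},s)\ge s$ from \eqref{eq:lem:11b-2}, whose derivation uses only $z=\max_i x_i$ and $w\ge 0$ and therefore remains valid here. This yields $(w+s)(z+s)\tilde h(\bm{x},s)\ge (1-2\tau)A-\tau|\tm|B+(1-\tau)Bs$. Finally, using $B=z+w+2|\tm|>0$, $1-2\tau\ge 0$, and the inequality $\tfrac{A}{B|\tm|}\ge \tfrac{wz-\overline{\theta}^2}{(w+z+2\overline{\theta})\overline{\theta}}$ already obtained within the proof of Lemma~\ref{lem:F_bound_w_min} (which also does not use $w=\min_i x_i$), I would factor out $B$ to reach $(w+s)(z+s)\tilde h(\bm{x},s)\ge B\big[|\tm|C_2+(1-\tau)s\big]$; dividing by the positive quantity $s^{\nu}\sqrt{|\tm|-s}$ gives the stated bound on $F(s)$.

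For part (b) I would evaluate the two integrals with Lemma~\ref{lem:18}, obtaining
\begin{equation*}
\int_0^{|\tm|}\frac{s^{-\nu}}{\sqrt{|\tm|-s}}\big[|\tm|C_2+(1-\tau)s\big]\,\mathrm{d}s = \sqrt{\pi}\,|\tm|^{\frac{3}{2}-\nu}\frac{\Gamma(1-\nu)}{\Gamma(\frac{3}{2}-\nu)}\Big[C_2+(1-\tau)\tfrac{1-\nu}{\frac{3}{2}-\nu}\Big],
\end{equation*}
valid for $\nu<1$, with a strictly positive prefactor. The key algebraic identity to verify is $C_2+(1-\tau)=(1-2\tau)\tfrac{(w+\overline{\theta})(z+\overline{\theta})}{(w+z+2\overline{\theta})\overline{\theta}}$, which is nonnegative because $\tau\le \tfrac12$ and strictly positive when $\tau<\tfrac12$; this guarantees $1+\tfrac{C_2}{1-\tau}>0$, so the threshold on $\nu$ is well-defined, and, by the same chain of equivalences as in Lemma~\ref{lem:F_bound_w_min}(b), the hypothesis $\nu<\tfrac{3}{2}-\tfrac{1}{2(1+C_2/(1-\tau))}$ is exactly what forces the bracketed factor $C_2+(1-\tau)\tfrac{1-\nu}{3/2-\nu}$ to be positive.

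The argument is essentially a simplification of Lemma~\ref{lem:F_bound_w_min}, so I anticipate no serious obstacle. The only points demanding care are checking that the discarded middle term is genuinely nonnegative (needing $P(\bm{x},s)>0$ together with \eqref{eq:lem:11b-5}) and tracking inequality directions when the factor $1-2\tau$ may vanish: at $\tau=\tfrac12$ one has $C_2+(1-\tau)=0$, so $1+C_2/(1-\tau)=0$ and the threshold on $\nu$ becomes $-\infty$, rendering the hypothesis of (b) vacuous, whereas the strict positivity in (b) genuinely uses $\tau<\tfrac12$. Part (a) remains valid for all $\tau\le\tfrac12$ since both sides of the factored inequality carry the common factor $1-2\tau$.
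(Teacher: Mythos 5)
Your proposal is correct and takes essentially the same route as the paper, which likewise treats this lemma as Lemma \ref{lem:F_bound_w_min} with two modifications: the middle term of \eqref{eq:lem:11b-1} is bounded below by zero (via \eqref{eq:lem:11b-5} and positivity of $P(\bm{x},s)$) instead of by the bound requiring $w=\min_{1\le i\le m}x_i$, and positivity of $C_2+(1-\tau)=(1-2\tau)\bigl[\frac{wz-\overline{\theta}^2}{(w+z+2\overline{\theta})\overline{\theta}}+1\bigr]$ is supplied by \eqref{eq:ratio_wz_w_z}. Your extra observations on the boundary case $\tau=\tfrac{1}{2}$ (part (b) becoming vacuous, part (a) still holding) are accurate and slightly more careful than the paper's own write-up.
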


\hideif{
\begin{proof}[Proof of Lemma \ref{lem:F_bound_w_gen}]
The proof of
Lemma \ref{lem:F_bound_w_gen}
follows almost the same steps as the proof of Lemma \ref{lem:F_bound_w_min}, except for a few differences discussed below. 
For (a), we no longer have \eqref{eq:lem:11b-3} and instead bound the second term in \eqref{eq:lem:11b-1} by zero. 
That is, 
$\frac{2(A + \frac{s - \theta_{m+1}}{2} B)}{(w+s)(z+s)P(\bm{x}, s)}\ge 0$, 
which follows immediately from \eqref{eq:lem:11b-5}.
For (b), we also have 
\begin{align*}
	C_2 + (1-\tau) 
	& = 
	(1-2\tau) \frac{wz-\overline{\theta}^2}{(w+z+2\overline{\theta})\overline{\theta}} 
    - \tau + (1-\tau) \\
        & = 
	(1-2\tau) \left[ \frac{wz-\overline{\theta}^2}{(w+z+2\overline{\theta})\overline{\theta}} + 1\right] >0,
\end{align*}
which follows immediately from \eqref{eq:ratio_wz_w_z}. 
\end{proof}

}

\subsubsection{Bounds on the derivative of the rejection probability}

\begin{lem}[Modified from Lemma 1 of \citet{bakirov1989jms}] \label{lem:integral_product_bound}
Let $F$ and $G$ be two functions of $x$ that satisfy the following properties, where $\tm$ can be any given negative number in $\mathbb{R}$: 
\begin{enumerate}
	\item[(i)] $G(x)$ is not identically equal to zero, continuous and nonnegative on $(0, |\tm|]$ and $\frac{\deri}{\deri x}G(x)$ is continuous and nonnegative 
    on 
    $(0, |\tm|)$. 
	\item[(ii)] $F(x)$ is continuous 
    on $(0, |\tm|)$, and 
    there exists some $x_0 < |\tm|$ such that 
    $F(x)(x-x_0) \geq 0$ and $F(x) > 0$ for $x_0 < x < |\tm|$. 
\end{enumerate}
If $\int^{|\tm|}_0 F(x) \ \text{d} x$ converges and is positive,
then $\int^{|\tm|}_0 F(x) G(x) \ \text{d} x$ converges, and 
\[
	\int^{|\tm|}_0 F(x) G(x) \ \text{d} x > 0.
\]
\end{lem}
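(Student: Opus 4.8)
The plan is to exploit the single sign change of $F$ against the monotone weight $G$, by comparing $G$ with the constant $G(x_0)$; this is a Chebyshev-type correlation argument. Before the main estimate I would record the two structural facts that make it work. Since $G$ is continuous and nondecreasing on $(0,|\tm|]$ (its derivative being continuous and nonnegative), it is bounded: $0\le G(x)\le G(|\tm|)<\infty$. And because $F$ is continuous on $(0,|\tm|)$ with $F\le 0$ on $(0,x_0)$ and $F>0$ on $(x_0,|\tm|)$ (the signs forced by $F(x)(x-x_0)\ge 0$), its only possible singularities are the endpoints $0$ and $|\tm|$; splitting the convergent integral $\int_0^{|\tm|}F$ at the regular interior point $x_0$ shows $\int_0^{x_0}|F|=-\int_0^{x_0}F$ and $\int_{x_0}^{|\tm|}|F|=\int_{x_0}^{|\tm|}F$ are both finite, so $\int_0^{|\tm|}|F|<\infty$. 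Hence $\int_0^{|\tm|}|FG|\le G(|\tm|)\int_0^{|\tm|}|F|<\infty$, which already yields the claimed (absolute) convergence.

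The core of the argument is the decomposition, valid whenever $x_0\in(0,|\tm|]$,
\[
\int_0^{|\tm|} F(x)G(x)\,\mathrm{d}x = \int_0^{|\tm|} F(x)\bigl[G(x)-G(x_0)\bigr]\,\mathrm{d}x + G(x_0)\int_0^{|\tm|}F(x)\,\mathrm{d}x .
\]
I would verify that the first integrand is pointwise nonnegative: for $x>x_0$ both $F(x)\ge 0$ and $G(x)-G(x_0)\ge 0$, while for $x<x_0$ both $F(x)\le 0$ and $G(x)-G(x_0)\le 0$ by monotonicity of $G$, so the product is $\ge 0$ in either case. Thus $\int_0^{|\tm|}F\bigl(G-G(x_0)\bigr)\ge 0$, and since $G(x_0)\ge 0$ and $\int_0^{|\tm|}F>0$ by hypothesis, we obtain $\int_0^{|\tm|}FG\ge G(x_0)\int_0^{|\tm|}F\ge 0$.

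It remains to upgrade this to a strict inequality, which is the only genuinely delicate point. If $G(x_0)>0$, the bound $\int FG\ge G(x_0)\int F$ is strictly positive and we are done. The hard case is $G(x_0)=0$: then monotonicity forces $G\equiv 0$ on $(0,x_0]$, the subtracted term vanishes, and $\int_0^{|\tm|}FG=\int_0^{|\tm|}F\bigl(G-G(x_0)\bigr)$ has nonnegative integrand, so the comparison only gives $\ge 0$. Here I would argue directly: since $G\not\equiv 0$ is continuous and nondecreasing, $G(|\tm|)>0$, so $x_1\equiv\sup\{x:G(x)=0\}$ satisfies $x_0\le x_1<|\tm|$ with $G>0$ on $(x_1,|\tm|]$; on that interval $F>0$ as well, so $FG$ is continuous and strictly positive there, forcing $\int_{x_1}^{|\tm|}FG>0$ and hence $\int_0^{|\tm|}FG>0$. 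The degenerate possibility $x_0\le 0$, where $F>0$ on all of $(0,|\tm|)$ and $G(x_0)$ is undefined, is handled by the same last step applied to the nonempty interval on which $G>0$.

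The main obstacle I anticipate is exactly the strictness when $G(x_0)=0$, since the clean comparison bound degenerates to $\ge 0$ there and one must instead locate an explicit subinterval $(x_1,|\tm|)$ where both factors are continuous and strictly positive. Everything else—the convergence and the pointwise nonnegativity—follows routinely from the boundedness of $G$ and the single-sign-change structure of $F$, and I would not expect it to require more than the sign bookkeeping sketched above.
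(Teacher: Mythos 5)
Your argument is correct, but it follows a genuinely different route from the paper's. The paper (following Bakirov, 1989) integrates by parts against the tail integral $\overline{F}(x) \equiv \int_x^{|\theta_{m+1}|} F(s)\,\mathrm{d}s$, obtaining the identity $\int_0^{|\theta_{m+1}|} FG = G(0)\overline{F}(0) + \int_0^{|\theta_{m+1}|} G'(x)\overline{F}(x)\,\mathrm{d}x$; after proving $\overline{F}>0$ on $[0,|\theta_{m+1}|)$ (again via the single sign change of $F$), both nonnegativity and strictness fall out of this one identity, since the right-hand side vanishes only if $G(0)=0$ and $G'\equiv 0$, i.e., $G\equiv 0$. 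You instead use the Chebyshev-type decomposition $\int FG = \int F\,[G-G(x_0)] + G(x_0)\int F$ with a pointwise sign check on the first integrand. What the paper's route buys is a uniform treatment of strictness at the price of justifying the boundary terms, the positivity of $\overline{F}$ everywhere, and the convergence of $\int G'\overline{F}$; what your route buys is the avoidance of integration by parts altogether (and you use only the monotonicity of $G$, never the continuity of $G'$, so your hypothesis is slightly weaker), at the price of the degenerate case $G(x_0)=0$, which you correctly identify and patch by locating the interval $(x_1,|\theta_{m+1}|)$ on which $F$ and $G$ are both strictly positive. Your convergence argument (absolute integrability of $F$ from the one-signed pieces, then boundedness of $G$) is also sound and somewhat more direct than the paper's case analysis.
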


\hideif{
\begin{proof}[Proof of Lemma \ref{lem:integral_product_bound}]
The result mainly follows from the proof of Lemma 1 in \citet{bakirov1989jms}, with $1$ in the integral limit replaced by $|\tm|$ and slightly revised conditions. 
For completeness purposes, we prove the lemma with the revised conditions below, although most of the proof are taken from \citet{bakirov1989jms}. \par

First, from condition (i), $G(\cdot)$ is nonnegative and monotone nondecreasing on $(0, |\tm|]$. Thus, $G(x)$ must have a nonnegative limit as $x \longrightarrow 0+$. 
Define the value of $G(\cdot)$ evaluated at $0$ as this limit, 
i.e., $G(0) \equiv \lim_{x \longrightarrow 0+} G(x) \ge 0$. 
Consequently, $G(\cdot)$ becomes a continuous and nonnegative function on $[0, |\tm|]$. 
In addition, there must exist a finite $M$ such that $0\le G(x)\le M$ for all $x\in [0, |\tm|]$.

Second, we prove that $\overline{F}(x) \equiv \int^{|\tm|}_x F(s) \ \text{d}s > 0$ for any $x \in [0, |\tm|)$. 
If condition (ii) holds for some $x_0\le 0$, then this holds obviously. 
Below we consider only the case where condition (ii) holds for some $x_0 \in (0, |\theta_{m+1}|)$. 
For any $x\ge x_0$, we obviously have $\int^{|\tm|}_x F(s) \ \text{d}s > 0$, since $F(x) > 0$ for $x>x_0$. 
For any $x<x_0$, we have 
\begin{align*}
    \int^{|\tm|}_x F(s) \ \text{d}s
    & = 
    \int^{|\tm|}_0 F(s) \ \text{d}s
    -
    \int^{x}_0 F(s) \ \text{d}s
    \ge \int^{|\tm|}_0 F(s) \ \text{d}s > 0, 
\end{align*}
where the second last inequality holds because $F(s)\le 0$ for $0< s \le x < x_0$. 

Third, we prove that $\int^{|\tm|}_0 F(x) G(x) \ \text{d} x$ converges. From condition (ii), there exists $0<x_1<|\tm|$ such that $F(x)\ge 0$ for all $x\in [x_1, |\tm|)$. From the first part, for any $x\in [x_1, |\tm|)$, we then have $0\le F(x)G(x) \le M F(x)$. This implies that 
$\int_{x_1}^{|\tm|} F(x)G(x) \ \deri x$ is bounded from the above by $ M\int_{x_1}^{|\tm|} F(x) \ \deri x$ and thus converges. 
We consider then two cases depending on the value of $x_0$ in condition (ii). 
\begin{itemize}
    \item Consider first the case where condition (ii) holds for some $x_0 \le 0$. 
    Let $x_2$ be any number in $(0, |\tm|)$.
    We then have $F(x) \ge 0$ for all $x \in (0, x_2]$.
    From the first part, for any $x \in (0, x_2]$, we then have 
    $0 \le F(x) G(x) \le M F(x)$.  This implies that 
$\int_{0}^{x_2} F(x)G(x) \ \deri x$ is bounded from the above by $ M\int_{0}^{x_2} F(x) \ \deri x$ and thus converges. 

    \item Consider then the case where condition (ii) holds for some $x_0 > 0$. Let $x_3 < x_0$ be a number in  $(0, |\tm|)$. We then have $F(x) \le 0$ for all $x \in (0, x_3]$.  From the first part, for any $x \in (0, x_3]$, we then have 
    $M F(x) \le F(x) G(x) \le 0$. This implies that 
$\int_{0}^{x_3} F(x)G(x) \ \deri x$ is bounded from the below by $ M\int_{0}^{x_3} F(x) \ \deri x$ and thus converges.
\end{itemize}
The above discussion then implies that $\int^{|\tm|}_0 F(x) G(x) \ \text{d} x$ converges.

Fourth, 
recalling the definition of $\overline{F}(x)$ in the second part, we have $\frac{\deri}{\deri x} \overline{F}(x) = - F(x)$. For any positive $\epsilon_1$ and $\epsilon_2$ such that $\epsilon_1+\epsilon_2 < |\tm|$,
using integration by parts,  we have 
\begin{align}\label{eq:FG_int_part}
    & \quad \int^{|\tm|-\epsilon_2}_{\epsilon_1} F(x) G(x)\ \text{d}x
    \nonumber
    \\
    & =
    - \int^{|\tm|-\epsilon_2}_{\epsilon_1} G(x) 
    \ \text{d} \overline{F}(x)   
    \nonumber
    \\
    & = 
    - \left. G(x) \overline{F}(x) \right|_{\epsilon_1}^{|\tm|-\epsilon_2}
    +  
    \int^{|\tm|-\epsilon_2}_{\epsilon_1} \overline{F}(x)
    \ \text{d} G(x)    \nonumber
    \\
    & =
    - G(|\tm|-\epsilon_2) \overline{F}(|\tm|-\epsilon_2) +  G(\epsilon_1) \overline{F}(\epsilon_1) 
    + 
    \int^{|\tm|-\epsilon_2}_{\epsilon_1} \frac{\deri}{\deri x} G(x) \overline{F}(x)
     \ \deri x.
\end{align}
From the first part and the condition that $\int^{|\tm|}_0 F(x) \ \text{d} x$ converges, we can know that  
\begin{align*}
    G(|\tm|-\epsilon_2) \overline{F}(|\tm|-\epsilon_2)
    = 
    G(|\tm|-\epsilon_2) \int^{|\tm|}_{|\tm|-\epsilon_2} F(s) \ \text{d}s
    \longrightarrow G(|\tm|) \cdot 0 = 0 
\end{align*}
as $\epsilon_2 \longrightarrow 0+$, and 
\begin{align*}
    G(\epsilon_1) \overline{F}(\epsilon_1)  
    \longrightarrow 
    G(0) \int^{|\tm|}_0 F(x) \ \text{d} x = G(0) \overline{F}(0) 
\end{align*}
as $\epsilon_1 \longrightarrow 0+$. From the third part, $\int^{|\tm|}_0 F(x) G(x) \ \text{d} x$ converges. 
These imply that $\int^{|\tm|}_{0} \frac{\deri}{\deri x} G(x) \overline{F}(x) \ \deri x$ converges. 
By letting $\epsilon_1$ and $\epsilon_2$ in \eqref{eq:FG_int_part} converge to zero from the right, we then have 
\begin{align}\label{eq:int_part0}
    \int^{|\tm|}_{0} F(x) G(x)\ \text{d}x
    & =
    G(0) \overline{F}(0)
    + 
    \int^{|\tm|}_{0} \frac{\deri}{\deri x} G(x) \overline{F}(x)
     \ \deri x.
\end{align}
From the second part, we know that $\overline{F}(x) > 0$ for any $x \in [0, |\tm|)$. 
From condition (i) and the first part, we know that $G(0)\ge 0$ and $\frac{\deri}{\deri x} G(x)$ is continuous and nonnegative for any $x \in (0, |\tm|)$. 
Thus, the right-hand side of \eqref{eq:int_part0} nonnegative, 
and it becomes zero if and only if $G(0)=0$ and $\frac{\deri}{\deri x} G(x) = 0$ for any $x \in (0, |\tm|)$, 
under which $G(\cdot)$ becomes a zero function on $[0, |\tm|]$. 
Because $G(x)$ is not identically equal to zero as in condition (i), the right-hand side of \eqref{eq:int_part0} must be positive. Therefore, $\int^{|\tm|}_{0} F(x) G(x)\ \text{d}x > 0$.

From the above, Lemma \ref{lem:integral_product_bound} holds. 
\end{proof}
}

\begin{lem}\label{lem:deri_rej_prob_bound}
	Consider the same setup and notations in Lemmas \ref{lem:rej_prob_U}--\ref{lem:deri_AB}. 
Let $z = x_{(m)}$, $w$ be any one of $(x_1, x_2, \ldots, x_m)$, $\overline{\theta} = m+z/\kappa$, 
$\underline{\theta} =  \max_{1\le k \le m} \theta(x_{(k)}, k)$, 
and 
\begin{align*}
	C \equiv 
	\begin{cases}
	(1-2\tau) \frac{wz-\overline{\theta}^2}{(w+z+2\overline{\theta})\overline{\theta}} +  
    \frac{2 \frac{wz-\overline{\theta}^2}{(w+z+2\overline{\theta})\overline{\theta}} + 1}{z}
    - \tau, & \text{if } w = x_{(1)}, \\
	(1-2\tau) \frac{wz-\overline{\theta}^2}{(w+z+2\overline{\theta})\overline{\theta}}
    - \tau, & \text{otherwise}.
\end{cases}
\end{align*}
Fix the value of $c$, the values of $\{x_i\}_{i=1}^m$ excluding $(z,w)$, and the value of $\tm$, 
and view $\bP[|T_m|>c]$ as a function of $z$. 
If $\tau \le 1/2$, $z>w > 0$, and 
\begin{align*}
	\frac{\overline{\theta}}{z + \overline{\theta}}
	+ 
	\frac{2\overline{\theta}}{w + \overline{\theta}} 
	-\frac{\underline{\theta} - m}{\tau \underline{\theta} - 1}
	+ \frac{1-\tau}{1-\tau+\min\{C, 0\}}
	<  
	1 
\end{align*}
then 
$
\frac{\partial \bP[|T_m|>c]}{\partial z} < 0. 
$
\end{lem}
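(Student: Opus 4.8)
The plan is to start from the derivative identity \eqref{eq:deri_rej_prob_1} in Lemma \ref{lem:deri_rej_prob}, namely $\frac{\partial \bP[|T_m|>c]}{\partial z} = -\frac{\Delta}{2} L(\tilde h(\bm{x}, s))$, where $\Delta = \frac{z-w}{(z-\theta_{m+1})^2} > 0$ since $z > w$. Hence it suffices to prove $L(\tilde h(\bm{x}, s)) > 0$. First I would record the two-sided control on the negative root: Lemma \ref{lem:neg_root_abs_upper} gives $|\theta_{m+1}| \le m + z/\kappa = \overline{\theta}$, while Lemma \ref{lem:neg_root_abs_lower}(b),(d) give $|\theta_{m+1}| \ge \underline{\theta} \ge m$, so $|\theta_{m+1}| \in [\underline{\theta}, \overline{\theta}]$, as required by the auxiliary lemmas; moreover $\tau\underline{\theta} \ge \tau m = \frac{\kappa+1}{\kappa} > 1$, so the denominators appearing below are positive.

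The key observation is that the integrand of $L(\tilde h)$ factors exactly into the two auxiliary functions already studied. Combining $U(\bm{x}, s)$ from Lemma \ref{lem:rej_prob_U}, $G$ from Lemma \ref{lem:deriv_G}, and $F$ from Lemmas \ref{lem:F_bound_w_min} and \ref{lem:F_bound_w_gen}, one checks directly that $\frac{\tilde h(\bm{x}, s)\, U(\bm{x}, s)}{\sqrt{|\theta_{m+1}|-s}} = F(s)\, G(s)$, so that $\pi\, L(\tilde h) = \int_0^{|\theta_{m+1}|} F(s)\, G(s)\, \mathrm{d}s$. I would carry a free parameter $\nu$ (shared between $F$ and $G$) together with the explicit lower bound $\underline{F}(s) \equiv \frac{B s^{-\nu}}{\sqrt{|\theta_{m+1}|-s}}\big[\,|\theta_{m+1}|\, C + (1-\tau)s\,\big]$ furnished by part (a) of Lemma \ref{lem:F_bound_w_min} (when $w = x_{(1)}$) or Lemma \ref{lem:F_bound_w_gen} (otherwise), with $C$ the matching constant; by construction $F(s) \ge \underline{F}(s)$ pointwise.

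The strategy is to apply the Bakirov-type Lemma \ref{lem:integral_product_bound} to the pair $(\underline{F}, G)$ and then upgrade to $(F, G)$. For this I must choose $\nu$ making both hypotheses hold at once. Monotonicity of $G$ (condition (i)) follows from Lemma \ref{lem:deriv_G}, which forces $\frac{\mathrm{d}}{\mathrm{d}s}\log G \ge 0$ whenever $\nu \ge \nu_{\min} \equiv 1 + \frac{\overline{\theta}}{2(z+\overline{\theta})} + \frac{\overline{\theta}}{w+\overline{\theta}} - \frac{\underline{\theta}-m}{2(\tau\underline{\theta}-1)}$; positivity of $\int_0^{|\theta_{m+1}|}\underline{F}$ follows from part (b) of the same two $F$-bound lemmas whenever $\nu < \nu_{\max} \equiv \min\{\frac{3}{2} - \frac{1}{2(1+C/(1-\tau))},\, 1\}$. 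The stated hypothesis is, after multiplying by $2$ and using $\frac{1}{1+C/(1-\tau)} = \frac{1-\tau}{1-\tau+C}$, exactly the feasibility condition $\nu_{\min} < \nu_{\max}$; the $\min\{C,0\}$ reflects splitting into the case $C \ge 0$ (where $\nu_{\max}=1$ and the last term is $1$) and the case $C < 0$. I would also verify condition (ii) for $\underline{F}$: its sign equals that of the linear factor $|\theta_{m+1}|\,C + (1-\tau)s$, which, using $1-\tau>0$ and $C + 1 - \tau > 0$ (established inside the $F$-bound lemmas), changes sign at most once, from nonpositive to positive, at a point strictly below $|\theta_{m+1}|$.

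With such a $\nu$ fixed, Lemma \ref{lem:integral_product_bound} yields $\int_0^{|\theta_{m+1}|}\underline{F}(s)\,G(s)\,\mathrm{d}s > 0$, and since $G \ge 0$ and $F \ge \underline{F}$ pointwise, $\int_0^{|\theta_{m+1}|} F(s)\,G(s)\,\mathrm{d}s \ge \int_0^{|\theta_{m+1}|}\underline{F}(s)\,G(s)\,\mathrm{d}s > 0$; both integrals converge, so the difference is a genuine nonnegative quantity. Therefore $L(\tilde h) > 0$ and $\frac{\partial \bP[|T_m|>c]}{\partial z} = -\frac{\Delta}{2} L(\tilde h) < 0$. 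The hard part will be the bookkeeping of the third paragraph: confirming that the product factorization is exact and that the single inequality in the hypothesis is equivalent to the simultaneous feasibility $\nu_{\min} < \nu_{\max}$ across both sign regimes of $C$, since this is precisely where the otherwise opaque form of the hypothesis (and the $\min\{C,0\}$ term) becomes transparent.
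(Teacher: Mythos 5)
Your proposal is correct and follows essentially the same route as the paper's proof: rewrite the derivative as $-\frac{\Delta}{2\pi}\int_0^{|\theta_{m+1}|}F(s)G(s)\,\mathrm{d}s$, use Lemma \ref{lem:deriv_G} for monotonicity of $G$, the lower bound $F\ge B\tilde F$ from Lemmas \ref{lem:F_bound_w_min}/\ref{lem:F_bound_w_gen}, and Lemma \ref{lem:integral_product_bound} applied to $(\tilde F, G)$, with the stated hypothesis being exactly the feasibility condition $\nu_{\min}<\nu_{\max}$ for the shared exponent $\nu$ (the paper simply fixes $\nu=\nu_{\min}$ rather than leaving it free). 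No gaps.
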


\hideif{
\begin{proof}[Proof of Lemma \ref{lem:deri_rej_prob_bound}]
	Let 
	\begin{align}\label{eq:nu}
		\nu = 1+\frac{\overline{\theta}}{2(z + \overline{\theta})}
		+ 
		\frac{\overline{\theta}}{w + \overline{\theta}} 
		-\frac{\underline{\theta} - m}{2(\tau \underline{\theta} - 1)}, 
	\end{align}
	and 
	define $G(s)$ and $F(s)$ in  the same say as in 
	Lemmas \ref{lem:deriv_G}--\ref{lem:F_bound_w_gen}:
	\begin{align*}
		G(s) & = 
		\frac{s^{\nu} U(\bm{x},s)}{(w+s)(z+s)} 
           , \\
		F(s) & = \frac{s^{-\nu}(w+s)(z+s)\tilde h(\bm{x}, s) }{\sqrt{|\theta_{m+1}|-s}} 
	\end{align*}
        for $s \in (0, |\tm|)$.
	From Lemma \ref{lem:deri_rej_prob}, we have 
	\begin{align*}
		\frac{\partial \bP[|T_m|>c]}{\partial z}
		& = -\frac{\Delta}{2} L(\tilde h(\bm{x}, s))
		= 
		-\frac{\Delta}{2} 
		\frac{1}{\pi}
			\int^{|\theta_{m+1}|}_0
			\frac{\tilde h(\bm{x}, s) U(\bm{x},s)}{\sqrt{|\theta_{m+1}|-s}} \ \mathrm{d} s
		\\
		& = 
		-\frac{z-w}{2(z - \theta_{m+1})^2}
		\frac{1}{\pi}
			\int^{|\theta_{m+1}|}_0
			\frac{s^{\nu} U(\bm{x},s)}{(w+s)(z+s)}
			\frac{s^{-\nu}(w+s)(z+s)\tilde h(\bm{x}, s) }{\sqrt{|\theta_{m+1}|-s}} \ \mathrm{d} s
		\\
		& = 
		-\frac{z-w}{2(z - \theta_{m+1})^2}
		\frac{1}{\pi}
			\int^{|\theta_{m+1}|}_0
			G(s)
			F(s) \ \mathrm{d} s. 
	\end{align*}
	Note that $z>w$. To prove Lemma \ref{lem:deri_rej_prob_bound}, 
	it suffices to prove that $\int^{|\theta_{m+1}|}_0
	G(s)
	F(s) \ \mathrm{d} s > 0$ under the conditions in Lemma \ref{lem:deri_rej_prob_bound}.  

	First, from Lemma \ref{lem:neg_root_abs_lower}, $\underline{\theta} = \max_{1\le k \le m} \theta(x_{(k)}, k) \ge m$. 
    From Lemmas \ref{lem:neg_root_abs_lower}
	and 
	\ref{lem:neg_root_abs_upper}, 
	we must have $m \le \underline{\theta} \le |\tm| \le \overline{\theta}$. 
	
	Second, 
	by definition, we can know that $G(s)$ is positive and continuous on $(0, |\tm|]$.  
	Moreover, by the definition in \eqref{eq:nu} and using Lemma \ref{lem:deriv_G}, we can know that  
	$\frac{\text{d} \log G(s)}{\text{d} s} \ge 0$
	for $s \in (0, |\theta_{m+1}|]$. 
	These then imply that $\frac{\text{d} G(s)}{\text{d} s} \ge 0$ for $s \in (0, |\theta_{m+1}|]$.

	Third, by the definition of $C$ and using Lemmas \ref{lem:F_bound_w_min} and \ref{lem:F_bound_w_gen} for $s\in (0, |\tm|)$, we have 
	$F(s) \ge B \tilde{F}(s)$, where 
    \begin{align*}
		\tilde{F}(s) & \equiv
		\frac{s^{-\nu} }{\sqrt{|\tm| - s}} 
		\left[ |\tm| C
		+ (1-\tau)s \right].
	\end{align*}
	By the definition in \eqref{eq:nu} and the conditions in Lemma \ref{lem:deri_rej_prob_bound}, we have
	\begin{align*}
		 & 2(\nu - 1) + \frac{1-\tau}{1-\tau+\min\{C, 0\}}
		<  
		1     \\
		\Longrightarrow
        \qquad & 
		\nu < \frac{3}{2} - \frac{1}{2(1+\frac{\min\{C, 0\}}{1-\tau})}
		= 
		\min\left\{ \frac{3}{2} - \frac{1}{2(1+\frac{C}{1-\tau})}, \ 1 \right\}.
	\end{align*}
	From Lemmas \ref{lem:F_bound_w_min} and \ref{lem:F_bound_w_gen}, 
	we then have 
	$
	\int_0^{|\tm|} 
	\tilde{F}(s) \ \text{d} s > 0. 
	$
	Moreover, because $1-\tau>0$, 
	there must exists $s_0 < |\tm|$ such that 
	$\tilde{F}(s)\cdot (s-s_0)\ge 0$ for $s\in (0, |\tm|)$ and $F(s) >0$ for $s_0 < s < |\tm|$.  

	From the above, Lemma \ref{lem:integral_product_bound} and the fact that $B>0$ from its definition, we then have
	\begin{align*}
		\int^{|\theta_{m+1}|}_0
	G(s)
	F(s) \ \mathrm{d} s 
	\ge 
	B \cdot \int^{|\theta_{m+1}|}_0
	G(s)
	\tilde{F}(s) \ \mathrm{d} s 
	> 0. 
	\end{align*}
	Therefore, Lemma \ref{lem:deri_rej_prob_bound} holds. 
\end{proof}
}

\section{Proof of theorems and lemmas}

\subsection{Proof of Theorem \ref{thm:size_asymp}}

To prove Theorem \ref{thm:size_asymp}, we need the following lemma. 

\begin{lem}\label{lem:rej_prob_point_mass}
Under Assumption \ref{assu:normal}, 
for any $\delta \in \mathbb{R}$ and $c>0$, 
if $c\ne m^{-1/2}$ and $\{\sigma^2_j\}_{j=1}^{m+1}$ are not all zero, then 
$\bP[ (\nob_{m+1} - \overline\nob_m + \delta)^2 =  c^2 S_m^2 ]=0$. 
\end{lem}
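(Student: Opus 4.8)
The plan is to view the event as the zero set of an explicit quadratic polynomial and to show that this polynomial is non-degenerate on the support of the Gaussian law. Set
\[
  W \equiv (\nob_{m+1}-\overline{\nob}_m+\delta)^2 - c^2 S_m^2,
\]
so that the event in question is exactly $\{W=0\}$. Viewing $W$ as a polynomial in $(\nob_1,\dots,\nob_{m+1})$, its degree-two homogeneous part is $(\nob_{m+1}-\overline{\nob}_m)^2 - c^2 S_m^2$, which by the algebra in the proof of Lemma \ref{lem:t_test_matrix} equals $-\tfrac1m\,\tilde{\bm{\nob}}^\top \bm V \tilde{\bm{\nob}}$, where $\tilde{\bm{\nob}}=(\nob_{m+1},\nob_1,\dots,\nob_m)^\top$ and $\bm V,\kappa$ are as in that lemma; the constant $+\delta$ contributes only to the degree-one and degree-zero parts.

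Next I would reduce to a statement about a single non-degenerate Gaussian. Let $\mathcal A\subseteq\{1,\dots,m+1\}$ collect the indices with $\sigma_j>0$; since the variances are not all zero, $\mathcal A\neq\emptyset$, and for $j\notin\mathcal A$ the variable $\nob_j$ is almost surely equal to its mean. Fixing these degenerate coordinates at their constant values turns $W$ into a polynomial $\widetilde W$ in the active variables $(\nob_j)_{j\in\mathcal A}$, whose joint law is a non-degenerate (diagonal, full-rank) Gaussian on $\mathbb R^{|\mathcal A|}$ and hence absolutely continuous with respect to Lebesgue measure. Because the zero set of a nonzero polynomial on $\mathbb R^{|\mathcal A|}$ has Lebesgue measure zero, it then suffices to show that $\widetilde W$ is not the zero polynomial, and for this it is enough that its degree-two part is nonzero, i.e. that the principal submatrix $M_{\mathcal A}$ of $M\equiv-\tfrac1m\bm V$ is a nonzero matrix (as $M$ is symmetric, the quadratic form $x^\top M_{\mathcal A}x$ vanishes identically iff $M_{\mathcal A}=0$).

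It then remains to verify $M_{\mathcal A}\neq 0$ in every case, a short computation from the explicit form of $\bm V$. The coefficient of $\nob_{m+1}^2$ in $W$ equals $1$, since $S_m^2$ does not involve $\nob_{m+1}$; hence whenever $\sigma_{m+1}>0$ the active submatrix already has a nonzero diagonal entry and we are done (equivalently, conditioning on the remaining coordinates, $W$ is a quadratic in $\nob_{m+1}$ with leading coefficient $1$, so it has at most two roots). If instead $\sigma_{m+1}=0$, then $\mathcal A=B$ for some nonempty $B\subseteq\{1,\dots,m\}$, and $M_B=-\tfrac\kappa m\,\bs{I}_{|B|}+\tfrac{\kappa+1}{m^2}\bs{1}\bs{1}^\top$. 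When $|B|\ge 2$ its off-diagonal entries equal $\tfrac{\kappa+1}{m^2}>0$ (since $\kappa=\tfrac{mc^2}{m-1}>0$), so $M_B\neq0$; when $|B|=1$ it is the scalar $-\tfrac\kappa m+\tfrac{\kappa+1}{m^2}=\tfrac{1-mc^2}{m^2}$, which is nonzero precisely because $c\neq m^{-1/2}$.

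The main obstacle is pinning down exactly where the two hypotheses are used. The assumption that the variances are not all zero guarantees $\mathcal A\neq\emptyset$, so that there is a genuine density to integrate against; and the exclusion $c\neq m^{-1/2}$ is needed only in the single borderline configuration $\sigma_{m+1}=0$ with exactly one active control variance, where the homogeneous quadratic part of $W$ collapses iff $mc^2=1$. This is precisely the configuration that would create a positive atom for $|T_m|$, which is why the same caveat appears in Theorem \ref{thm:size_asymp} and Remark \ref{re:c_greater_1_over_sq_m}. Once these cases are organized as above, the rest is the routine measure-zero argument together with the already-established matrix identity of Lemma \ref{lem:t_test_matrix}.
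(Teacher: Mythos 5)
Your proof is correct, and it takes a genuinely different route from the paper's in the only case that requires real work. For $\sigma_{m+1}>0$ the two arguments coincide: the paper conditions on $\nob_1,\ldots,\nob_m$ and uses that $\nob_{m+1}$ is continuous, which is the same observation as your remark that the coefficient of $\nob_{m+1}^2$ in $W$ equals $1$. The divergence is in the case $\sigma_{m+1}=0$. The paper rewrites the event as $\sum_i \lambda_i\zeta_i^2 - 2m\delta\,\bs{a}^\top\bs{D}\bs{\Gamma}\bs{\zeta} - m\delta^2=0$ via the eigendecomposition of $\bs{D}\bm{V}\bs{D}$, and then locates a nonzero eigenvalue by a sign analysis of the characteristic polynomial $f(\lambda)$ at $\kappa\sigma_{(m)}^2$ and $\kappa\sigma_{(m-1)}^2$, with the degenerate subcase $\sigma_{(m-1)}=0$ yielding the explicit nonzero root $(mc^2-1)\sigma_{(m)}^2/m$. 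You instead restrict the quadratic $W$ to the coordinates with positive variance, invoke the fact that the zero set of a nonzero polynomial is Lebesgue-null against an absolutely continuous Gaussian law, and verify non-vanishing of the degree-two part by reading off entries of the principal submatrix $M_{\mathcal{A}}$ of $-\frac{1}{m}\bm{V}$: off-diagonal entries $\frac{\kappa+1}{m^2}>0$ when at least two control variances are positive, and the scalar $\frac{1-mc^2}{m^2}$ when exactly one is. Since a symmetric matrix has a nonzero eigenvalue if and only if it is a nonzero matrix, your check and the paper's are logically equivalent, but yours is shorter and avoids the root-location argument entirely; the price is the appeal to the general measure-zero property of polynomial zero sets, which the paper sidesteps by conditioning on all but one coordinate of a quadratic with nonzero leading coefficient. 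Both proofs correctly isolate the same borderline configuration ($\sigma_{m+1}=0$ with exactly one positive control variance) as the only place where $c\ne m^{-1/2}$ is needed.
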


\hideif{
\begin{proof}[Proof of Lemma \ref{lem:rej_prob_point_mass}]
If $\sigma_{m+1}>0$, then, by the law of iterated expectation, we must have 
\begin{align}\label{eq:conditioning}
    \bP[ (\nob_{m+1} - \overline\nob_m+ \delta)^2 =  c^2 S_m^2 ]
    & = \bE \left\{ \bP\left[ (\nob_{m+1} - \overline\nob_m+ \delta)^2 =  c^2 S_m^2 \mid \nob_1, \ldots, \nob_m \right] \right\} \notag \\
    & = 0, 
\end{align} 
where the last equality holds because $\nob_{m+1}$ is a continuous random variable and has zero probability mass at any real value. 
Below it suffices to consider the case where $\sigma_{m+1}=0$.

From the proof of Lemma \ref{lem:t_test_matrix}, 
$mc^2 S_m^2 - m(\nob_{m+1} - \overline\nob_m)^2 = \tilde{\bm{\nob}}^\top \bm{V} \tilde{\bm{\nob}}$, 
and 
$\nob_{m+1} -  \overline\nob_m = \bs{a}^\top  \tilde{\bm{\nob}}$, 
where $\bm{V}$ and $\tilde{\bm{\nob}}$ are defined the same as in Lemma \ref{lem:t_test_matrix}, 
and $\bs{a}^\top = (1, -m^{-1} \bs{1}_m^\top)$. 
Consequently, 
\begin{align*}
     \bP[ (\nob_{m+1} - \overline\nob_m + \delta)^2 =  c^2 S_m^2 ]
    \nonumber
    & =
    \bP[ m(\nob_{m+1} - \overline\nob_m + \delta)^2 =  mc^2 S_m^2 ] \\ 
    &  =
    \bP[ mc^2 S_m^2  - m(\nob_{m+1} - \overline\nob_m)^2 - 2 m \delta (\nob_{m+1} - \overline\nob_m) - m\delta^2 = 0]
    \nonumber
    \\
    & = 
    \bP\left[ \tilde{\bm{\nob}}^\top \bm{V} \tilde{\bm{\nob}} - 2m \delta \bs{a}^\top \tilde{\bm{\nob}}  - m \delta^2 = 0 \right]   \\
    & =
    \bP\left[ \bs{\xi}^\top \bs{D}\bm{V} \bs{D} \bs{\xi} - 2m \delta \bs{a}^\top \bs{D} \bs{\xi} - m \delta^2 = 0  \right], 
\end{align*}
where $\bs{D}$ is defined as in Lemma \ref{lem:character_sigma}, $\bs{\xi} \equiv (\xi_1, \xi_2, \ldots, \xi_{m+1})^\top$, 
and $\{\xi_i\}^{m+1}_{i=1}$ are i.i.d. standard normal random variables.
Let $\bs{D}\bm{V} \bs{D} = \bs{\Gamma} \bs{\Lambda} \bs{\Gamma}^\top$ be the eigendecomposition of $\bs{D}\bm{V} \bs{D}$, where $\bs{\Gamma}$ is an orthogonal matrix, $\bs{\Lambda}$ is a diagonal matrix with diagonal elements $\{\lambda_i\}^{m+1}_{i=1}$, and $\{\lambda_i\}^{m+1}_{i=1}$ are the eigenvalues of $\bs{D}\bm{V} \bs{D}$, or equivalently the root of the characteristic polynomial $f(\lambda)$ in Lemma \ref{lem:character_sigma}. 
Let $\bs{\zeta} = (\zeta_1, \ldots, \zeta_{m+1})^\top = \bs{\Gamma}^\top \bs{\xi}$. 
We can verify that $\{\zeta_i\}^{m+1}_{i=1}$ are i.i.d. standard normal random variables, and $\bs{\xi} =\bs{\Gamma}\bs{\zeta}$. 
We then have 
\begin{align}\label{eq:prob_zero_lambda}
    & \quad \ \bP[ (\nob_{m+1} - \overline\nob_m + \delta)^2 =  c^2 S_m^2 ]
    \nonumber
    \\
    &
    =
    \bP\left[ \bs{\xi}^\top \bs{D}\bm{V} \bs{D} \bs{\xi} - 2m \delta \bs{a}^\top \bs{D} \bs{\xi} - m \delta^2 = 0  \right]
    =
    \bP\left[ \bs{\zeta}^\top \bs{\Lambda} \bs{\zeta} - 2m \delta \bs{a}^\top \bs{D} \bs{\Gamma}\bs{\zeta} - m \delta^2 = 0  \right]
    \nonumber
    \\
    & = 
    \bP\left[ \sum^{m+1}_{i=1} \lambda_i \zeta_i^2 - 2m \delta \bs{a}^\top \bs{D} \bs{\Gamma}\bs{\zeta} - m \delta^2 = 0  \right].
\end{align}
Because $\sigma_{m+1}=0$, the characteristic polynomial $f(\lambda)$ simplifies to 
\begin{align*}
    f(\lambda)
    & = 
    - \lambda
        \prod_{i=1}^m ( \kappa\sigma_i^2 - \lambda)  
        +
        \frac{\kappa+1}{m} \lambda
        \cdot 
        \sum_{i=1}^m 
        \left[
        \sigma_i^2 \prod_{j\ne i, 1\le j \le m} ( \kappa\sigma_j^2 - \lambda) 
        \right],
\end{align*}
where $\kappa = \frac{mc^2}{m-1}$. 
Let $\sigma_{(1)} \le \sigma_{(2)} \le \ldots \le \sigma_{(m)}$ be the sorted values of  $\{\sigma_i\}^{m}_{i=1}$. We then have 
\begin{align*}
    (-1)^{m-1} f(\kappa \sigma_{(m)}^2) 
    & =
    \frac{\kappa+1}{m} \lambda
    \cdot 
    \sigma_{(m)}^2 \kappa^{m-1} \prod_{j\ne m, 1\le j \le m} \{ \sigma_{(m)}^2 -\sigma_{(j))}^2 \} \ge 0, 
    \\
    (-1)^{m-1} f(\kappa \sigma_{(m-1)}^2) 
    & =
    \frac{\kappa+1}{m} \lambda
    \cdot 
    \sigma_{(m-1)}^2 \kappa^{m-1} \prod_{j\ne m-1, 1\le j \le m} \{ \sigma_{(m-1)}^2 -\sigma_{(j))}^2 \} \le 0,
\end{align*}
which imply that $f(\lambda)$ must have a root in $[\kappa \sigma_{(m-1)}^2, \kappa \sigma_{(m)}^2]$. 
If $\sigma_{(m-1)}^2>0$, then $f(\lambda)$ must have a positive root. 
By a conditioning argument similar to \eqref{eq:conditioning}, 
we can know that the quantity in \eqref{eq:prob_zero_lambda} must be zero. 
If $\sigma_{(m-1)}^2=0$, then we have must $\sigma_{(j)}^2=0$ for all $1\le j < m$, under which $f(\lambda)$ further simplifies to 
\begin{align*}
    f(\lambda)
    & = 
    (- \lambda)^m 
    \cdot ( \kappa\sigma_{(m)}^2 - \lambda)  
    +
    \frac{\kappa+1}{m} \lambda
    \cdot 
    \sigma_{(m)}^2 ( - \lambda)^{m-1}   \\ 
    & =
    (- \lambda)^m 
    \left( \kappa\sigma_{(m)}^2 - \lambda - \frac{\kappa+1}{m}
    \sigma_{(m)}^2  \right)\\
    & = 
    (- \lambda)^m 
    \frac{1}{m}
    \left\{ [ (m-1)\kappa - 1 ] \sigma_{(m)}^2 - m\lambda  \right\} \\
    & = 
    (- \lambda)^m 
    \frac{1}{m}
    \left[ ( mc^2 - 1 ) \sigma_{(m)}^2 - m\lambda  \right],
\end{align*}
where the last equality follows from the definition of $\kappa$. 
Because $c\ne m^{-1/2}$ and $\{\sigma^2_j\}_{j=1}^{m+1}$ are not all zero, 
we can know that $\frac{ (mc^2 - 1 )\sigma_{(m)}^2}{m}$ must be a nonzero root of $f(\lambda)$. By a conditioning argument similar to \eqref{eq:conditioning}, the quantity in \eqref{eq:prob_zero_lambda} must be zero. 

From the above, Lemma \ref{lem:rej_prob_point_mass} holds. 
\end{proof}
}

\begin{proof}[Proof of Theorem \ref{thm:size_asymp}]
Let $\tilde{\cp}_{n,j} \equiv \sqrt{n} (\widehat\cp_j - \mu_0)$ for $1\le j \le m$, $\tilde{\cp}_{n, m+1} \equiv \sqrt{n} (\widehat\cp_{m+1} - \mu_1)$, and $\delta_n \equiv \sqrt{n}(\mu_1 - \mu_0)$. 
By definition, we can verify that 
\begin{align*}
    |\widehat{T}_m| > c 
    & \ \ \Longleftrightarrow  \ \
    (\widehat\cp_{m+1} - \overline{\widehat{\cp}}_m)^2 
    - c^2 \widehat S_m^2 > 0
     \ \ \Longleftrightarrow  \ \
    (\tilde\cp_{n, m+1} - \overline{\tilde{\cp}}_{n,m} + \delta_n)^2 
    - c^2 \tilde S_{n,m}^2 > 0, 
\end{align*}
where $\overline{\tilde{\cp}}_{n,m} = m^{-1} \sum_{j=1}^m \tilde{\cp}_{n,j}$ and $\tilde S_{n,m}^2$ is the sample variance of $\{\tilde{\cp}_{n,j}\}_{j=1}^m$. 
From Assumption \ref{assu:CLT}, the condition that $\sqrt{n}(\mu_1 - \mu_0) \longrightarrow \delta$ as $n\longrightarrow \infty$, and using continuous mapping theorem, we have 
\begin{align*}
    (\tilde\cp_{n, m+1} - \overline{\tilde{\cp}}_{n,m} + \delta_n)^2 
    - c^2 \tilde S_{n,m}^2 
    \converged 
    (\nob_{m+1} - \overline\nob_m + \delta)^2 - c^2 S_m^2,
\end{align*}
where $\nob_i \sim \mathcal{N}(0, \sigma^2_j)$ for $1\le j \le m+1$, $\{\nob_i\}_{i=1}^{m+1}$ are mutually independent, 
and $\overline\nob_m$ and $S_m^2$ are the sample average and sample variance of $\{\nob_i\}_{i=1}^{m}$. 
From Lemma \ref{lem:rej_prob_point_mass}, when $c\ne m^{-1/2}$,  the distribution function of $(\nob_{m+1} - \overline\nob_m + \delta)^2 - c^2 S_m^2$ is continuous at $0$. Consequently, we must have 
\begin{align*}
    \bP[|\widehat{T}_m| > c ]
    & = 
    \bP \left[ (\tilde\cp_{n, m+1} - \overline{\tilde{\cp}}_{n,m} + \delta_n)^2 
    - c^2 \tilde S_{n,m}^2 > 0 \right]
    \\
    & \longrightarrow 
    \bP \left[ (\nob_{m+1} - \overline\nob_m + \delta)^2 - c^2 S_m^2 > 0 \right]
    = \bP[|T_m| > c ], 
\end{align*}
where the last equality holds by definition. 
From the above, we derive Theorem \ref{thm:size_asymp}. 
\end{proof}

\subsection{Proof of Lemmas \ref{lem:rej_prob_integral_main} and \ref{lem:first_second_deriv_gamma}
and Theorem \ref{thm:max_rej_prob}}

\begin{proof}[Proof of Lemma \ref{lem:rej_prob_integral_main}]
    Lemma \ref{lem:rej_prob_integral_main} follows directly from Lemmas \ref{lem:roots}, \ref{lem:rej_prob_integral}, \ref{lem:neg_root_abs_lower} and \ref{lem:neg_root_abs_upper}. 
\end{proof}

To prove Lemma \ref{lemma:max_at_finite}, we need the following lemma. 

\begin{lem}\label{lemma:sup_exist}
	Let $\{\xi_i\}^{m+1}_{i=1}$ be i.i.d.~standard normal random variables. For any $c>0$ and $(\sigma_{m+1}, \sigma_1, \ldots, \sigma_m) \in \mathbb{R}^{m+1}_{\ge 0}$, define 
	\begin{align*}
		D_c(\sigma_{m+1}, \sigma_1, \ldots, \sigma_m)
		& \equiv \Big( \sigma_{m+1} \xi_{m+1} - m^{-1}\sum^m_{i=1} \sigma_i \xi_{i} \Big)^2 \\ 
        & \qquad- c^2 \cdot (m-1)^{-1} \sum_{j=1}^m \Big( \sigma_j \xi_{j}  
         - m^{-1}\sum^m_{i=1} \sigma_i \xi_{i} \Big)^2,\\
         p_c(\sigma_{m+1}, \sigma_1, \ldots, \sigma_m)
	   & \equiv  
	\bP [D_c(\sigma_{m+1}, \sigma_1, \ldots, \sigma_m) > 0].
	\end{align*} 
    Consider any given $c>0, \rho>0$ and $1\le k \le m$, 
    and let $\tilde{p}$ denote 
    the supremum of $p_c(\sigma_{m+1}, \sigma_1, \ldots, \sigma_m)$ over all possible values of $(\sigma_{m+1}, \sigma_1, \ldots, \sigma_m)\in \mathbb{R}^{m+1}_{\ge 0}$ such that
    $\sigma_{m+1} \le \rho \sigma_{(k)}$, where $\sigma_{(1)} \le \sigma_{(2)} \le \ldots \le \sigma_{(m)}$ denote the sorted values of $\{\sigma_i\}_{i=1}^m$. 
    That is, 
    $$
    \tilde{p} \equiv \sup_{ (\sigma_{m+1}, \sigma_1, \ldots, \sigma_m)\in \mathbb{R}^{m+1}_{\ge 0}: \sigma_{m+1} \le \rho \sigma_{(k)}  } p_c(\sigma_{m+1}, \sigma_1, \ldots, \sigma_m).
    $$
    If $c\ne m^{-1/2}$, then one of the following two must hold:
    \begin{enumerate}[label=(\alph*)]
    \item $\tilde{p} = \sup_{(\sigma_1, \ldots, \sigma_m)\in \mathbb{R}^{m}_{\ge 0}} p_c(0, \sigma_1, \ldots, \sigma_m)$, 
    \item $\tilde{p} = p_c(\tilde{\sigma}_{m+1}, \tilde{\sigma}_1, \ldots, \tilde{\sigma}_m)$ for some $(\tilde{\sigma}_{m+1}, \tilde{\sigma}_1, \ldots, \tilde{\sigma}_m) \in \mathbb{R}_{+} \times \mathbb{R}_{\ge 0}^{m}$ such that 
    $\tilde{\sigma}_{m+1} \le \rho \tilde{\sigma}_{(k)}$, where $\tilde{\sigma}_{(1)} \le \tilde{\sigma}_{(2)} \le \ldots \le \tilde{\sigma}_{(m)}$ denote the sorted values of $\{\tilde{\sigma}_i\}_{i=1}^m$. 
    \end{enumerate} 
\end{lem}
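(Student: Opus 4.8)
The plan is to exploit the scale invariance of $p_c$ to replace the unbounded feasible cone by a compact set, and then deduce that the supremum is attained from continuity of $p_c$, so that a final case split on the sign of $\sigma^\star_{m+1}$ delivers exactly (a) or (b). First I would record that for any $t>0$ one has $D_c(t\sigma_{m+1},t\sigma_1,\ldots,t\sigma_m)=t^2 D_c(\sigma_{m+1},\sigma_1,\ldots,\sigma_m)$, so the event $\{D_c>0\}$ is unchanged and hence $p_c$ is invariant under the rescaling $\bsig\mapsto t\bsig$, where $\bsig\equiv(\sigma_{m+1},\sigma_1,\ldots,\sigma_m)$. Since the constraint $\sigma_{m+1}\le\rho\sigma_{(k)}$ is homogeneous of degree one, the feasible set $\cSS$ is a closed cone, and because $p_c(\bm 0)=0$ the origin is irrelevant to the supremum. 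I would therefore reduce the problem to $\tilde p=\sup_{\bsig\in K}p_c(\bsig)$, where $K\equiv\cSS\cap\{\bsig:\|\bsig\|=1\}$ is nonempty and compact (a closed subset of the unit sphere, the constraint being preserved by the continuous map $\bsig\mapsto\sigma_{(k)}$).

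The crux is to show that $p_c$ is continuous on $K$. Here I would use that $(\bsig,\xi)\mapsto D_c(\bsig)$ is a polynomial, so for a convergent sequence $\bsig_n\to\bsig_0$ in $K$ one has $D_c(\bsig_n)\to D_c(\bsig_0)$ pointwise in $\xi$, and hence $\mathbf 1\{D_c(\bsig_n)>0\}\to\mathbf 1\{D_c(\bsig_0)>0\}$ at every $\xi$ with $D_c(\bsig_0)\ne 0$. Because $\|\bsig_0\|=1$ its entries are not all zero, and $c\ne m^{-1/2}$ by hypothesis, so Lemma \ref{lem:rej_prob_point_mass} (taken with $\delta=0$) guarantees $\bP[D_c(\bsig_0)=0]=0$; the indicator convergence then holds $\bP$-almost surely and dominated convergence gives $p_c(\bsig_n)\to p_c(\bsig_0)$. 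With continuity in hand, the extreme value theorem yields a maximizer $\bsig^\star\in K$, so $\tilde p=p_c(\bsig^\star)$.

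Finally I would split on $\sigma^\star_{m+1}$. If $\sigma^\star_{m+1}>0$, then $\bsig^\star$ is a finite feasible point with a positive treated standard deviation, which is exactly case (b). If instead $\sigma^\star_{m+1}=0$, then $\tilde p=p_c(\bsig^\star)$ is bounded above by $\sup_{(\sigma_1,\ldots,\sigma_m)\in\mathbb{R}^m_{\ge 0}}p_c(0,\sigma_1,\ldots,\sigma_m)$, and since all of those configurations are themselves feasible the reverse bound is immediate; equality then gives case (a).

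The main obstacle I anticipate is establishing continuity precisely on the face $\{\sigma_{m+1}=0\}$ of $K$: in the ratio parametrization $\gamma_i\equiv\sigma_i/\sigma_{m+1}$ this face is exactly the regime in which some $\gamma_i\to\infty$, where the rejection probability could a priori jump. What rescues the argument is that the only possible source of discontinuity is an atom of $D_c$ at $0$, which Lemma \ref{lem:rej_prob_point_mass} rules out whenever $c\ne m^{-1/2}$ and the $\sigma_j$ are not all zero, both of which hold throughout $K$. This is also the point at which the excluded value $c=m^{-1/2}$ in the hypothesis is genuinely needed.
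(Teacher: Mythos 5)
Your proposal is correct, and it leans on the same essential ingredient as the paper's proof — Lemma \ref{lem:rej_prob_point_mass} with $\delta=0$, which rules out an atom of $D_c$ at zero whenever $c\ne m^{-1/2}$ and the standard deviations are not all zero — but you organize the compactness step differently. The paper works directly with a maximizing sequence and splits by hand: it first disposes of the case where $\sigma_{n,m+1}=0$ infinitely often, and then, when $\sigma_{n,m+1}>0$, it normalizes the sequence in two different ways (by $\sigma_{n,m+1}$ when the ratios $\sigma_{ni}/\sigma_{n,m+1}$ stay bounded, and by $\max_i\sigma_{ni}$ otherwise), applying Bolzano--Weierstrass and weak convergence in each branch to land in case (b) or case (a) respectively. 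You instead observe that $D_c$ is homogeneous of degree two in $\bsig$, so $p_c$ is scale-invariant and the problem restricts to the compact slice $K$ of the feasible cone; the no-atom lemma then upgrades pointwise convergence of the indicators to genuine continuity of $p_c$ on all of $K$ (including the face $\sigma_{m+1}=0$, which is where the paper's ``unbounded ratio'' branch lives), and the extreme value theorem produces a maximizer whose first coordinate decides between (a) and (b). The two arguments are mathematically equivalent — your single normalization onto the unit sphere subsumes the paper's two ad hoc normalizations — but your version is tidier: it isolates continuity as a standalone statement and replaces the two-way subsequence analysis with one application of the extreme value theorem, at the cost of having to verify (as you do) that $K$ is closed, which requires continuity of the order statistic $\bsig\mapsto\sigma_{(k)}$. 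One small point worth making explicit if you write this up: the supremum over the full cone equals the supremum over $K$ because $p_c(\bm 0)=0$ and $p_c\ge 0$ everywhere, so discarding the origin loses nothing.
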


\hideif{
\begin{proof}[Proof of Lemma \ref{lemma:sup_exist}]
    Below we consider any given $c>0, \rho>0$ and $1\le k \le m$. 
    Below we state two properties about $\tilde{p}$, followed from its definition.
    \begin{itemize}
        \item For any $(\sigma_{m+1}, \sigma_1, \ldots, \sigma_m)\in \mathbb{R}^{m+1}_{\ge 0}$, if $\sigma_{m+1} = 0$, then we must have $\sigma_{m+1} \le \rho \sigma_{(k)}$. 
    By the definition of $\tilde{p}$, this implies that $\tilde{p} \ge  \sup_{(\sigma_1, \ldots, \sigma_m)\in \mathbb{R}^{m}_{\ge 0}} p_c(0, \sigma_1, \ldots, \sigma_m)$. 

        \item By the definition of $\tilde{p}$ and note that the value of $p_c(\sigma_{m+1}, \sigma_1, \ldots, \sigma_m)$ is invariant under permutations of $(\sigma_1, \ldots, \sigma_m)$, there exists a sequence $\{ (\sigma_{n,m+1}, \sigma_{n1}, \ldots, \sigma_{nm})\}_{n=1}^\infty$ such that $p_n \equiv p_c(\sigma_{n,m+1}, \sigma_{n1}, \ldots, \sigma_{nm}) \converge \tilde{p}$ as $n\converge \infty$ and $\sigma_{n,m+1} \le \rho \sigma_{ni}$ for all $n$ and all $k\le i \le n$. 
    \end{itemize}

    First, we consider the case where there are infinitely many $n$ such that $\sigma_{n,m+1}^2 = 0$. In this case, we must have $\tilde{p} \le \sup_{(\sigma_1, \ldots, \sigma_m)\in \mathbb{R}^{m}_{\ge 0}} p_c(0, \sigma_1, \ldots, \sigma_m)$. 
    From the discussion before, we must have $\tilde{p} = \sup_{(\sigma_1, \ldots, \sigma_m)\in \mathbb{R}^{m}_{\ge 0}} p_c(0, \sigma_1, \ldots, \sigma_m)$, i.e., (a) in Lemma \ref{lemma:sup_exist} holds. 
    
    Second, we consider the case where there are only finitely many $n$ such that $\sigma_{n,m+1}^2 = 0$. We can therefore assume $\sigma_{n,m+1}^2 > 0$ for all $n$ without losing any generality.  In the following, we consider two cases, depending on whether the limit superior of $\max_{1\le i \le m} \frac{\sigma_{ni}}{\sigma_{n,m+1}}$ is finite. 
    \begin{itemize}
        \item[(i)] We consider the case where the limit superior of $\max_{1\le i \le m} \frac{\sigma_{ni}}{\sigma_{n,m+1}}$ is finite. 
	By the Bolzano--Weierstrass theorem, there exists a subsequence $\{ (\sigma_{n_{j},m+1}, \sigma_{n_{j}1}, \ldots, \sigma_{n_jm})\}_{j=1}^\infty$ such that $\frac{\sigma_{ni}}{\sigma_{n,m+1}}$ converges to some $a_i$ for each $1\le i \le m$. 
	This implies that, as $j\converge \infty$,  
	\begin{align*} 
        D_c(1, 
        \tfrac{\sigma_{n_j1}}{\sigma_{n_{j},m+1}}, \ldots, 
        \tfrac{\sigma_{n_jm}}{\sigma_{n_{j},m+1}}) 
        \convergeas D_c(1, a_1, \ldots, a_n).
	\end{align*}
    Because $c\ne m^{-1/2}$, Lemma \ref{lem:rej_prob_point_mass} implies that     
 
    the distribution function of $D_c(1, a_1, \ldots, a_n)$ is continuous at zero. 
    By the property of weak convergence, we then have
    \begin{align*}
        \tilde{p} & = \lim_{j\converge \infty} p_{n_j} =  \lim_{j\converge \infty} \bP[D_c(1, \tfrac{\sigma_{n_j1}}{\sigma_{n_{j},m+1}}, \ldots, 
        \tfrac{\sigma_{n_jm}}{\sigma_{n_{j},m+1}})>0] 
        = \bP[D_c(1, a_1, \ldots, a_n) > 0] \\
        & = p_c(1, a_1, \ldots, a_n).
    \end{align*}
    Furthermore, it is easy to verify that $\rho a_i = \rho \lim_{j\converge \infty} \frac{\sigma_{n_ji}}{\sigma_{n_{j},m+1}} \ge 1$ for $k\le i \le n$. 
    Consequently, (b) in Lemma \ref{lemma:sup_exist} holds. 
    
    \item[(ii)] We consider the case where the limit superior of $\max_{1\le i \le m} \frac{\sigma_{ni}}{ \sigma_{n,m+1}}$ is infinite. 
    There then exists a subsequence such that  $\frac{\sigma_{n,m+1}}{\max_{1\le i \le m} \sigma_{ni}} \converge 0$ along this subsequence. 
    By the Bolzano--Weierstrass theorem, there exists a subsequence $\{ (\sigma_{n_{j},m+1}, \sigma_{n_{j}1}, \ldots, \sigma_{n_jm})\}_{j=1}^\infty$ such that, as $j\converge \infty$, $\frac{\sigma_{n_j,m+1}}{\max_{1\le l \le m} \sigma_{n_j l}} \converge 0$ and 
    $\frac{\sigma_{n_j i}}{\max_{1\le l \le m} \sigma_{n_j l}} \converge b_i \le 1$ for all $1\le i\le m$. Moreover, at least one of $\{b_i\}_{i=1}^m$ is $1$. 
    
    These then imply that 
    \begin{align*}
        D_c(\tfrac{\sigma_{n_{j},m+1}}{\max_{1\le l \le m} \sigma_{n_j l}},\  \tfrac{\sigma_{n_j1}}{\max_{1\le l \le m} \sigma_{n_j l}},\  \ldots, \tfrac{\sigma_{n_jm}}{\max_{1\le l \le m} \sigma_{n_j l}}) 
        \convergeas D_c(0, b_1, \ldots, b_n).
    \end{align*}
    Since $c\ne m^{-1/2}$ holds, Lemma \ref{lem:rej_prob_point_mass} implies that     
    the distribution function of $D_c(0, b_1, \ldots, b_n)$ is continuous at zero. 
    By the property of weak convergence, we then have
    \begin{align*}
        \tilde{p} & = \lim_{j\converge \infty} p_{n_j} =  \lim_{j\converge \infty} \bP[ D_c(\tfrac{\sigma_{n_{j},m+1}}{\max_{1\le l \le m} \sigma_{n_j l}},\  \tfrac{\sigma_{n_j1}}{\max_{1\le l \le m} \sigma_{n_j l}},\  \ldots, \tfrac{\sigma_{n_jm}}{\max_{1\le l \le m} \sigma_{n_j l}}) >0  ] \\
        & = \bP[D_c(0, b_1, \ldots, b_n) > 0] = p_c(0, b_1, \ldots, b_n) \le  \sup_{(\sigma_1, \ldots, \sigma_m)\in \mathbb{R}^{m}_{\ge 0}} p_c(0, \sigma_1, \ldots, \sigma_m). 
    \end{align*}
    From the discussion before, this implies that $\tilde{p} = \sup_{(\sigma_1, \ldots, \sigma_m)\in \mathbb{R}^{m}_{\ge 0}} p_c(0, \sigma_1, \ldots, \sigma_m)$, i.e., (a) in Lemma \ref{lemma:sup_exist} holds. 
    \end{itemize}

    From the above, Lemma \ref{lemma:sup_exist} holds. 
\end{proof}
}

\begin{lem}
For any $1\le k \le m$ and $c\ne m^{-1/2}$,\footnote{Similar to the footnote for Theorem \ref{thm:size_asymp} and as discussed in Remark \ref{re:c_greater_1_over_sq_m}, we will consider values of $c$ greater than $m^{-1/2}$ for most conventional significance levels.} 
the maximum rejection probability $p_m(c; k, \rho)$ in \eqref{eq:max_rej_prob} must be obtained at some $(\sigma_1, \ldots, \sigma_m, \sigma_{m+1}) \in \cSS$. 
\end{lem}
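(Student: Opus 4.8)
The plan is to show that the supremum defining $p_m(c;k,\rho)$ in \eqref{eq:max_rej_prob} is in fact attained, by reducing the statement to Lemma \ref{lemma:sup_exist}. First I would identify the rejection probability with the function $p_c$ of that lemma: under the null $\delta = 0$ of \eqref{eq:null_normal}, writing $\nob_j = \sigma_j \xi_j$ for i.i.d.\ standard normals $\{\xi_j\}_{j=1}^{m+1}$, the event $|T_m| > c$ is exactly $\{D_c(\sigma_{m+1}, \sigma_1, \ldots, \sigma_m) > 0\}$, so that $\bP_0[|T_m|>c] = p_c(\sigma_{m+1}, \sigma_1, \ldots, \sigma_m)$ and hence $p_m(c;k,\rho) = \tilde p$ in the notation of Lemma \ref{lemma:sup_exist}.

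Next I would split on the value of $\rho$. When $\rho = 0$, Assumption \ref{assu:relative_heter} forces $\sigma_{m+1} = 0$, so $p_m(c;k,0) = p_{m,0}(c)$, and the closed form \eqref{eq:max_rej_prob_zero} from \citet{Bakirov:2006aa} shows this value is achieved at a finite configuration in which the control standard deviations are all $0$ or all equal to a common positive constant; such a point lies in $\cSS$. When $\rho > 0$, I would invoke Lemma \ref{lemma:sup_exist}, which (using $c \ne m^{-1/2}$) asserts that $\tilde p$ falls into one of two cases. In case (b) the supremum is attained at a finite $(\tilde\sigma_{m+1}, \tilde\sigma_1, \ldots, \tilde\sigma_m)$ with $\tilde\sigma_{m+1} \le \rho \tilde\sigma_{(k)}$, which is by definition a point of $\cSS$, so we are done. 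In case (a), $\tilde p$ equals $\sup_{(\sigma_1,\ldots,\sigma_m)} p_c(0,\sigma_1,\ldots,\sigma_m) = p_{m,0}(c)$; as in the $\rho=0$ case, the Bakirov maximizer has $\sigma_{m+1}=0 \le \rho\sigma_{(k)}$ and hence lies in $\cSS$, so the supremum is again attained.

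The step I expect to carry the real weight is Lemma \ref{lemma:sup_exist} itself (which I am assuming here). Its proof must handle a maximizing sequence whose configurations may degenerate: the natural obstruction is that along such a sequence the normalized ratios $\sigma_{ni}/\sigma_{n,m+1}$ can diverge, so mass could in principle escape to infinity and prevent attainment. I would control this by a compactness and rescaling argument, normalizing either by $\sigma_{n,m+1}$ when the ratios stay bounded or by $\max_{1\le i \le m}\sigma_{ni}$ when they blow up, and extracting a convergent subsequence via Bolzano--Weierstrass. The crucial analytic ingredient is that the limiting quadratic form $D_c$ has a distribution function continuous at $0$, which follows from Lemma \ref{lem:rej_prob_point_mass} precisely because $c \ne m^{-1/2}$; this continuity is what lets me pass the almost-sure convergence $D_c(\cdots) \to D_c(\text{limit})$ through to convergence of the rejection probabilities $p_n \to p_c(\text{limit})$. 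The degenerate blow-up branch is exactly the mechanism producing case (a), where the limit has $\sigma_{m+1}=0$, so that no probability is lost at infinity and the supremum remains attained within $\cSS$.
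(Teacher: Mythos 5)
Your proposal is correct and follows essentially the same route as the paper: reduce to Lemma \ref{lemma:sup_exist}, note that case (b) gives attainment immediately, and in case (a) use the result of \citet{Bakirov:2006aa} that the $\sigma_{m+1}=0$ supremum is achieved at a finite configuration (which lies in $\cSS$ since $0 \le \rho\sigma_{(k)}$); your sketch of the compactness/rescaling argument behind Lemma \ref{lemma:sup_exist}, including the role of Lemma \ref{lem:rej_prob_point_mass} and $c \ne m^{-1/2}$, also matches the paper's proof of that lemma. Your explicit separate treatment of $\rho = 0$ is a minor point of added care, not a different approach.
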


\begin{proof}[\bf Proof of Lemma \ref{lemma:max_at_finite}]
    From Lemma \ref{lemma:sup_exist}, we can know that one of the following two must hold: 
    \begin{enumerate}[label=(\alph*)]
    \item $p_m(c; k, \rho) = \sup_{(\sigma_1, \ldots, \sigma_m)\in \mathbb{R}^{m}_{\ge 0}} p_c(0, \sigma_1, \ldots, \sigma_m)$, 
    \item $p_m(c; k, \rho) = p_c(\tilde{\sigma}_{m+1}, \tilde{\sigma}_1, \ldots, \tilde{\sigma}_m)$ for some $(\tilde{\sigma}_{m+1}, \tilde{\sigma}_1, \ldots, \tilde{\sigma}_m) \in \mathbb{R}_{+} \times \mathbb{R}_{\ge 0}^{m}$ such that 
    $\tilde{\sigma}_{m+1} \le \rho \tilde{\sigma}_{(k)}$, where $\tilde{\sigma}_{(1)} \le \tilde{\sigma}_{(2)} \le \ldots \le \tilde{\sigma}_{(m)}$ denote the sorted values of $\{\tilde{\sigma}_i\}_{i=1}^m$. 
    
    \end{enumerate} 
    When (b) holds, Lemma \ref{lemma:max_at_finite} holds obviously. Below we consider only the case when (a) holds. 
    When $\sigma_{m+1}=0$, our $t$ statistic essentially reduces to a one-sample $t$ statistic, except for a constant scaling term of $\sqrt{m}$. 
    From \citet{Bakirov:2006aa}, the supremum of $p_c(0, \sigma_1, \ldots, \sigma_m)$ over $(\sigma_1, \ldots, \sigma_m)\in \mathbb{R}^{m}_{\ge 0}$ must be achieved when some of $\{\sigma_j\}_{j=1}^m$ are zero and the remaining take a some common positive value, such as 1. 
    From the above, Lemma \ref{lemma:max_at_finite} holds. 
\end{proof}

\begin{proof}[Proof of Lemma \ref{lem:first_second_deriv_gamma}]
    Lemma \ref{lem:first_second_deriv_gamma} follows directly from 
    Lemma \ref{lem:deri_rej_prob}, 
    noting that $z=\kappa\gamma_1^2$, where $\kappa$ is a constant depending only on $m$ and $c$. 
\end{proof}

\begin{proof}[Proof of Theorem \ref{thm:form_maximizer_main}]
    Consider any given $1\le k \le m$, $\rho \ge 0$, and $c>0$ with $c\ne m^{-1/2}$. 
When $\rho = 0$, 
Theorem \ref{thm:form_maximizer_main} follows immediately from \citet{Bakirov:2006aa}. 
We then consider the case where $\rho>0$. 
From Lemma \ref{lemma:max_at_finite}, the maximum rejection probability $p_m(c; k, \rho)$ is achieved at some finite $\{\sigma_j\}_{j=1}^m$. 
If $\sigma_{m+1} = 0$, then the maximizer must have the form in (i), as shown in \citet{Bakirov:2006aa}. 
Below we consider only the case where $\sigma_{m+1} > 0$. 

Because the rejection probability is unchanged when we scale all the variances $\{\sigma_j\}_{j=1}^m$ by a positive constant, we must have
$p_m(c; k, \rho)=\overline{p}_m(c; \gamma_1, \ldots, \gamma_m)$ for some $(\gamma_1, \ldots, \gamma_m) \in \mathbb{R}^m_{\ge 0}$, 
where $\gamma_{(k)} \ge \rho^{-1}$ and $\gamma_{(1)}\le \gamma_{(2)}\le \ldots \le \gamma_{(m)}$ are the sorted values of $\{\gamma_i\}_{i=1}^m$. 
We prove that, besides $0$ and $\rho^{-1}$, $\{\gamma_i\}_{i=1}^m$ cannot take more than one distinct values; 
equivalently, there exists $\gamma\ge 0$ such that $\gamma_i \in \{0, \rho^{-1}, \gamma\}$ for all $1\le i \le m$. 
We prove this by contradiction. Assume that, without loss of generality, $\gamma_1, \gamma_2 \notin \{0, \rho^{-1}\}$, and $\gamma_1 \ne \gamma_2$. 
Fix the values of $c$, $\gamma_3, \ldots, \gamma_m$ and $\theta_{m+1}$, and view $\overline{p}_m(c; \gamma_1, \gamma_2, \gamma_3, \ldots, \gamma_m)$ as a function of only $\gamma_1$, where $\gamma_2$ is uniquely determined by $\gamma_1, c, \gamma_3, \ldots, \gamma_m$ and $\theta_{m+1}$. 
Note that the condition $\gamma_{(k)} \ge \rho^{-1}$ still holds when we slightly change the values of $\gamma_1$ and consequently $\gamma_2$. 
Thus, $\gamma_1$ is at least a local maximizer of $\overline{p}_m(c; \gamma_1, \gamma_2, \gamma_3, \ldots, \gamma_m)$ over a sufficiently small neighborhood of $\gamma_1$. 
This implies that the first order derivative of $\overline{p}_m(c; \gamma_1, \gamma_2, \gamma_3, \ldots, \gamma_m)$ over $\gamma_1^2$ is zero, and the second order derivative of $\overline{p}_m(c; \gamma_1, \gamma_2, \gamma_3, \ldots, \gamma_m)$ over $\gamma_1^2$ is less than or equal to zero. 
This, however, contradicts Lemma \ref{lem:first_second_deriv_gamma}.

Therefore, we must have 
$p_m(c; k, \rho)=\overline{p}_m(c; \gamma_1, \ldots, \gamma_m)$ for some $\{\gamma_j\}_{j=1}^m$ such that 
$\gamma_i \in \{0, \rho^{-1}, \gamma\}$ for $1\le i \le m$ and some $\gamma \ge 0$. 
Again, because the rejection probability is unchanged when we scale all the variances $\{\sigma_j\}_{j=1}^m$ by a positive constant, the maximum rejection probability must be obtained at some  $\{\sigma_j\}_{j=1}^{m+1}$ such that $\sigma_{m+1} = \rho$, and $\sigma_i \in \{0,1, \gamma\}$ for $1\le i \le m$ and some $\gamma \ge 0$.

From the above, Theorem \ref{thm:form_maximizer_main} holds. 
\end{proof}

\begin{proof}[Proof of Theorem \ref{thm:max_rej_prob}]
Consider any given $1\le k \le m$, $\rho > 0$, and $c>0$ with $c\ne m^{-1/2}$. 
From Theorem \ref{thm:form_maximizer_main}, we have either 
$p_m(c; k, \rho)=p_{m,0}(c)$, or 
$p_m(c; k, \rho)=\overline{p}_m(c; \gamma_1, \ldots, \gamma_m)$ for some $(\gamma_1, \ldots, \gamma_m) \in \mathbb{R}^m_{\ge 0}$ such that $\gamma_i \in \{0, \rho^{-1}, \gamma\}$ for some $\gamma\in \mathbb{R}$ and $\gamma_{(k)} \ge \rho^{-1}$, where $\gamma_{(1)}\le \gamma_{(2)}\le \ldots \le \gamma_{(m)}$ are the sorted values of $\{\gamma_i\}_{i=1}^m$.

Now suppose that the latter holds, i.e., $p_m(c; k, \rho)=\overline{p}_m(c; \gamma_1, \ldots, \gamma_m)$  some $(\gamma_1, \ldots, \gamma_m) \in \mathbb{R}^m_{\ge 0}$ satisfying the conditions discussed before. 

Let $m_1$ and $m_0$ denote the numbers of $\gamma_i$s that take values $\rho^{-1}$ and $0$, respectively. 
That is, $m_1 = \sum_{i=1}^m \I(\gamma_i = \rho^{-1})$ and $m_0 = \sum_{i=1}^m \I(\gamma_i = 0)$. 
Consequently, the number of  $\gamma_i$s that take the value $\gamma$ is $m-m_1-m_0$, 
and $\overline{p}_m(c; \gamma_1, \gamma_2, \gamma_3, \ldots, \gamma_m)$ simplifies to $\overline{p}_m(c; \rho, \gamma; m_1, m_0)$ defined in \eqref{eq:p_m_given_num_gamma}. 
Note that $\{\gamma_i\}_{i=1}^m$ needs to satisfy the constraint that $\gamma_{(k)} \ge \rho^{-1}$. 
Thus, we must have $0\le m_0\le k-1$, $m_1 \le m-m_0$, and 
$\gamma \in \mathbb{R}_{\ge 0}$ if $m_1 \ge m-k+1$ and $\gamma \in [\rho^{-1}, \infty)$ if $m_1 < m-k+1.$
In sum, $p_m(c; k, \rho)$ equals to $\overline{p}_m(c; \rho, \gamma; m_1, m_0)$ for some $0\le m_0\le k-1$, $m_1 \le m-m_0$, and 
$\gamma \in \mathbb{R}_{\ge 0}$ if $m_1 \ge m-k+1$ and $\gamma \in [\rho^{-1}, \infty)$ if $m_1 < m-k+1.$

From the above, we list possible cases where the rejection probability obtains its supremum value $p_m(c; k, \rho)$. 
Thus, $p_m(c; k, \rho)$ must be the supremum over all the cases we discussed above. 
We can therefore derive Theorem \ref{thm:max_rej_prob}. 
\end{proof}

\subsection{Proof of Lemma \ref{lem:sign_deriv_gamma}}

Below we first give the form of $H_m(c; \gamma_1, \ldots, \gamma_m)$ in Lemma \ref{lem:sign_deriv_gamma}. 
For any $c>0$ and $(\gamma_1, \gamma_2, \ldots, \gamma_m) \in \mathbb{R}^m_{\ge 0}$, 
let $\kappa \equiv \frac{mc^2}{m-1}$, $\tau \equiv \frac{\kappa+1}{\kappa m}$, 
$x_{i} \equiv \kappa \gamma_{i}^2$ for $1\le i \le m$, 
$x_{(1)}\le x_{(2)} \le \ldots \le x_{(m)}$ be the sorted values of $\{x_i\}_{i=1}^m$, 
and 
\begin{align}\label{eq:H_prime}
    \begin{split}
    & \hspace{-15pt} H_{m}'(w; c; \gamma_1, \ldots, \gamma_m) \\
    & \equiv \frac{\overline{\theta}}{x_{(m)} + \overline{\theta}}
	+ 
	\frac{2\overline{\theta}}{w + \overline{\theta}} 
	-\frac{\underline{\theta} - m}{\tau \underline{\theta} - 1}
	+ \frac{1-\tau}{1-\tau+\min\{C_m(w; c; \gamma_1, \ldots, \gamma_m), 0\}}
    -
	1, 
    \end{split}
\end{align}
where $\overline{\theta} \equiv m+ \frac{x_{(m)}}{\kappa}$, $\underline{\theta} \equiv \max_{1\le j \le m} \theta(x_{(j)}, j)$ with $\theta(\cdot)$ defined as in Lemma \ref{lem:neg_root_abs_lower}, 
and 
\begin{align*}
    C_m(w; c; \gamma_1, \ldots, \gamma_m) \equiv 
    \begin{cases}
    \left(1-2\tau+\frac{2}{x_{(m)}}\right) \frac{w x_{(m)}-\overline{\theta}^2}{(w+x_{(m)}+2\overline{\theta})\overline{\theta}} 
    + \frac{1}{x_{(m)}}
    - \tau, & \text{if } w = x_{(1)}, \\
	(1-2\tau) \frac{wz-\overline{\theta}^2}{(w+z+2\overline{\theta})\overline{\theta}}
    - \tau, & \text{if } w > x_{(1)}.
\end{cases}
\end{align*}
Then $H_m(c; \gamma_1, \ldots, \gamma_m)$ is defined as 
\begin{align}\label{eq:H_H_prime}
    H_m(c; \gamma_1, \ldots, \gamma_m)
    & = 
    H_{m}'(x_2; c; \gamma_1, \ldots, \gamma_m).
\end{align}

\begin{proof}[Proof of Lemma \ref{lem:sign_deriv_gamma}]
Note that if $c\ge \sqrt{2(m-1)/[m(m-2)]}$, then $\tau$ defined as in Lemma \ref{lem:deri_rej_prob_bound} satisfies that 
\begin{align*}
    \tau = \frac{\kappa+1}{\kappa m} = \frac{1}{m} + \frac{1}{\kappa m} 
    = \frac{1}{m} + \frac{m-1}{m^2 c^2}
    \le \frac{1}{m} + \frac{(m-1)m(m-2)}{m^2 \cdot 2(m-1)}
    = \frac{1}{m} + \frac{m-2}{2m} = \frac{1}{2},
\end{align*}
where $\kappa$ is defined as in  Lemma \ref{lem:deri_rej_prob_bound}. 
Lemma \ref{lem:sign_deriv_gamma} follows immediately from Lemma \ref{lem:deri_rej_prob_bound}. 
\end{proof}

\subsection{Simplifying the optimization under general relative heterogeneity assumption}
Below we consider simplification on optimization under general relative heterogeneity constraint. 
Specifically,  Lemma \ref{lem:sign_deriv_gamma} can help to simplify the optimization in \eqref{eq:max_rej_m1_m0}. 
Consider any $1\le m_1\le m-1$, $0 \le m_0 \le m-m_1-1$, and the optimization of $ \overline{p}_m(c; \rho, \gamma; m_1, m_0)$ over either $\gamma \in [0, \infty)$ or $\gamma \in [\rho^{-1}, \infty)$. 
Let 
\begin{align*}
	(\gamma_1, \gamma_2, \ldots, \gamma_m)
	& = 
	(\underbrace{\gamma, \ldots, \gamma}_{m-m_1-m_0}, \underbrace{\rho^{-1}, \ldots, \rho^{-1}}_{m_1}, \underbrace{0, \ldots, 0}_{m_0}). 
\end{align*}
Define 
\begin{align*}
	\check{H}_{m}(c; \rho, \gamma; m_1, m_0)
    & = H_{m}'(\rho^{-1}; c; \gamma_1, \ldots, \gamma_m) \}
\end{align*}
For any $\rho>0$ and $\gamma > \rho^{-1}$,  if $\check{H}_{m}(c; \rho, \gamma; m_1, m_0)<0$,  then we can strictly increase the rejection probability $\overline{p}_m(\gamma_1, \gamma_2, \ldots, \gamma_m)$ by  slightly decreasing one of $\{\gamma_j\}^m_{j=1}$ that is equal to $\gamma$ and slightly increasing one of $\{\gamma_j\}^m_{j=1}$ that is equal to
$\rho^{-1}$.  
Importantly, if the original $\{\gamma_j\}^m_{j=1}$ satisfy the relative heterogeneity assumption for some $k$, then the slight changes of the $\{\gamma_j\}^m_{j=1}$ will maintain this relative heterogeneity assumption. 
Consequently,  $\overline{p}_m(\gamma_1, \gamma_2, \ldots, \gamma_m)$ cannot be the maximum rejection probability under the relative heterogeneity assumption.

From the above, 
if $\check{H}_m(c; \rho, \gamma; m_1, m_0)<0$ for all $\gamma > \rho^{-1}$, 
then $\sup_{\gamma \in [\rho^{-1}, \infty)}\overline{p}_m(c; \rho, \gamma; m_1, m_0)$ must be obtained at $\gamma = \rho^{-1}$, 
and $\sup_{\gamma \in \mathbb{R}_{\ge 0}}\overline{p}_m(c; \rho, \gamma; m_1, m_0)$ must be obtained at some $\gamma \in [0, \rho^{-1}]$. 
In other words, we can either obtain a closed-form solution for the optimization or restrict the optimization to a smaller range.

\subsection{Simplifying the optimization under relative heterogeneity assumption with $k=1$}

\begin{theorem}\label{thm:max_rej_prob_k_1_closed_form}
For any given $m \ge 4$, $\rho > 0$, $c > \sqrt{\frac{3(m-1)}{m(m-3)}}$, define the following as a function of $\gamma$:
\begin{align*}
	\widetilde{H}_m(\gamma; c, \rho) 
	& \equiv \frac{\overline{\theta}}{\kappa \gamma^2 + \overline{\theta}}
	+ 
	\frac{2\overline{\theta}}{\kappa \rho^{-2} + \overline{\theta}} 
	-\frac{\underline{\theta} - m}{\tau \underline{\theta} - 1}
	+ \frac{1-\tau}{1-\tau+\min\{C, 0\}}
	- 
	1,
\end{align*}
where $\overline{\theta} \equiv m+\gamma^2$, $\underline{\theta} \equiv m + \rho^{-2}$, $\kappa \equiv \frac{m c^2}{m-1}$, $\tau \equiv \frac{\kappa+1}{m\kappa}$, and 
\begin{align*}
	C 
	& \equiv
	\left( 1-2\tau + \frac{2}{\kappa \gamma^2} \right) \frac{\kappa^2 \gamma^2 \rho^{-2} -\overline{\theta}^2}{(\kappa \gamma^2 + \kappa \rho^{-2}+2\overline{\theta})\overline{\theta}} +  
    \frac{1}{\kappa \gamma^2}
    - \tau.
\end{align*}
Suppose that
\[
	\widetilde{H}_m(\gamma; c, \rho)  < 0 \ 
    \text{ for all } \gamma > \rho^{-1}. 
\]
Then, the maximum rejection probability $p_m(c; 1, \rho)$ under Assumption \ref{assu:relative_heter} with $k=1$ and the given $\rho$, $m$, and $c$ has the following equivalent form:
	\begin{align*}
		p_m(c; 1, \rho) = 
		\bP \left[ |t_{m-1}|\sqrt{\rho^2+\frac{1}{m}} > c \right].
	\end{align*}
\end{theorem}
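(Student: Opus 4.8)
The plan is to combine the reduction in Theorem \ref{thm:max_rej_prob} with the monotonicity supplied by Lemma \ref{lem:sign_deriv_gamma}, and then to identify the single worst-case configuration explicitly. First I would invoke Theorem \ref{thm:max_rej_prob} specialized to $k=1$: since $k=1$ forces $m_0=0$, the maximum rejection probability reads
\[
 p_m(c;1,\rho)=\max\Big\{p_{m,0}(c),\ \max_{0\le m_1\le m}\widetilde{p}_m(c;1,\rho;m_1,0)\Big\}.
\]
The goal is to show that every term on the right is dominated by the value obtained at the configuration $\gamma_1=\cdots=\gamma_m=\rho^{-1}$ (equivalently $\sigma_1=\cdots=\sigma_m=\rho^{-1}$, $\sigma_{m+1}=1$), denoted $\overline{p}_m(c;\rho^{-1}\bs{1}_m^\top)$, and that this value equals the claimed $t$-distribution tail probability.

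The core step is to prove that, for each fixed $0\le m_1\le m-1$, the map $\gamma\mapsto\overline{p}_m(c;\rho,\gamma;m_1,0)$ is strictly decreasing on $(\rho^{-1},\infty)$. Here I would pair the common value $\gamma$ (playing the role of $\gamma_1=\max_i\gamma_i$) against a coordinate fixed at $\rho^{-1}$ (playing the role of $\gamma_2$) and apply Lemma \ref{lem:sign_deriv_gamma}: whenever the associated quantity $H_m(\cdot)<0$, the derivative of the rejection probability with respect to $\gamma_1^2$ is negative. The hypothesis $\widetilde{H}_m(\gamma;c,\rho)<0$ for all $\gamma>\rho^{-1}$ is precisely the specialization of this sign condition to the present family of configurations, so it yields the claimed monotonicity; I would also record that $c>\sqrt{3(m-1)/(m(m-3))}$ implies the milder requirement $c\ge\sqrt{2(m-1)/(m(m-2))}$ of Lemma \ref{lem:sign_deriv_gamma}, since $3/(m-3)>2/(m-2)$ for $m\ge4$. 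Strict monotonicity then gives, for each $m_1<m$, that the supremum over $\gamma\ge\rho^{-1}$ in \eqref{eq:max_rej_m1_m0} is attained at $\gamma=\rho^{-1}$, which collapses the configuration to the all-$\rho^{-1}$ one; the case $m_1=m$ already is that configuration. Hence $\max_{0\le m_1\le m}\widetilde{p}_m(c;1,\rho;m_1,0)=\overline{p}_m(c;\rho^{-1}\bs{1}_m^\top)$.

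The same monotonicity disposes of the $\sigma_{m+1}=0$ corner $p_{m,0}(c)$ without a separate estimate. Letting $\gamma\to\infty$ in the $m_1$-configuration sends the $m-m_1$ large control standard deviations to dominate, while the $m_1$ coordinates at $\rho^{-1}$ and the treated cluster become negligible, so the limit equals the $j=m-m_1$ term in the Bakirov formula \eqref{eq:max_rej_prob_zero}. Since $\overline{p}_m(c;\rho,\gamma;m_1,0)$ is decreasing, the all-$\rho^{-1}$ value bounds each such limit, and as $m_1$ ranges over $0,\ldots,m-1$ the index $j=m-m_1$ ranges over all of $1,\ldots,m$, covering every term appearing in the maximum defining $p_{m,0}(c)$; thus $\overline{p}_m(c;\rho^{-1}\bs{1}_m^\top)\ge p_{m,0}(c)$. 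Finally I would compute the surviving value directly: with $\nob_{m+1}\sim\cN(0,1)$ and $\nob_j\sim\cN(0,\rho^{-2})$ independent, $\nob_{m+1}-\overline{\nob}_m\sim\cN(0,1+\rho^{-2}/m)$ is independent of $(m-1)S_m^2/\rho^{-2}\sim\chi^2_{m-1}$, whence $(\nob_{m+1}-\overline{\nob}_m)^2/S_m^2=(\rho^2+m^{-1})\,t_{m-1}^2$ and therefore $p_m(c;1,\rho)=\bP[\,|t_{m-1}|\sqrt{\rho^2+1/m}>c\,]$.

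I expect the main obstacle to be the second paragraph: verifying that the single hypothesis $\widetilde{H}_m(\gamma;c,\rho)<0$ genuinely forces strict monotonicity for every $m_1$. This requires checking that $\widetilde{H}_m$ majorizes the configuration-specific quantity $\check{H}_m(c;\rho,\gamma;m_1,0)$ uniformly in $m_1$ — in particular that the lower bound $\underline{\theta}=m+\rho^{-2}$ used in $\widetilde{H}_m$ (coming from the $k=1$ root estimate $\theta(x_{(1)},1)$ in Lemma \ref{lem:neg_root_abs_lower}) and the displayed constant $C$ remain valid majorants as the number $m_1$ of boundary coordinates and the free value $\gamma$ vary. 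Once this uniform domination is established, the remaining steps are routine applications of the quoted lemmas together with the elementary normal/chi-squared computation of the last paragraph.
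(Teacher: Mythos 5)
Your overall architecture tracks the paper's: reduce via Theorem \ref{thm:max_rej_prob} to the one-parameter families $(m_1, m_0=0,\gamma)$, use Lemma \ref{lem:sign_deriv_gamma} to eliminate $\gamma>\rho^{-1}$, and finish with the normal/chi-squared computation at the all-$\rho^{-1}$ configuration. The uniform-majorization worry you flag at the end is exactly the paper's first step and does go through: $H_m$ for the configuration $(\gamma,\rho^{-1},\gamma,\dots,\gamma,\rho^{-1},\dots,\rho^{-1})$ is bounded above by $\widetilde H_m(\gamma;c,\rho)$ because $(\theta-m)/(\tau\theta-1)$ is increasing and $\underline{\theta}'=\max_j\theta(x_{(j)},j)\ge\theta(x_{(1)},1)=m+\rho^{-2}$. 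However, there is a genuine gap in your core step. Lemma \ref{lem:sign_deriv_gamma} does \emph{not} say that $\gamma\mapsto\overline p_m(c;\rho,\gamma;m_1,0)$ is strictly decreasing: it gives the sign of a directional derivative in which only $\gamma_1$ moves while $\gamma_2$ adjusts so that $\theta_{m+1}$ stays fixed. Moving the common value of all $m-m_1$ free coordinates simultaneously is a different derivative, about whose sign the lemma is silent. What the lemma actually delivers (and what the paper uses) is that any configuration with $\gamma>\rho^{-1}$ admits a feasible two-coordinate perturbation that strictly increases the rejection probability, hence cannot attain the supremum. That suffices to collapse $\max_{m_1}\widetilde p_m$ to the all-$\rho^{-1}$ value, but it is strictly weaker than monotonicity of the one-dimensional profile.

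This matters because your disposal of the $\sigma_{m+1}=0$ corner rests entirely on that unproven monotonicity: the chain ``all-$\rho^{-1}$ value $\ge$ value at $\gamma$ $\ge$ limit as $\gamma\to\infty$ $=$ Bakirov term $j=m-m_1$'' has no first two links without it. This is also precisely where the hypothesis $c>\sqrt{3(m-1)/(m(m-3))}$ earns its keep: the paper handles $p_{m,0}(c)$ by a separate direct estimate, using that for such $c$ the maximum in \eqref{eq:max_rej_prob_zero} is attained at $j=m$, so $p_{m,0}(c)=\bP[|t_{m-1}|>\sqrt m\, c]\le\bP[|t_{m-1}|>c/\sqrt{\rho^2+m^{-1}}]$. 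In your write-up the stronger threshold on $c$ is used only to verify $c\ge\sqrt{2(m-1)/(m(m-2))}$, which is a sign the hypothesis is not being used where it is actually needed. A minor additional point: your pairing argument has no coordinate at $\rho^{-1}$ when $m_1=0$, so that family must be treated separately; the explicit identity $\overline p_m(c;\gamma\bs{1}_m^\top)=\bP[|t_{m-1}|>c/\sqrt{\gamma^{-2}+m^{-1}}]$, which you derive only at $\gamma=\rho^{-1}$, handles it and shows that particular family is indeed decreasing in $\gamma$.
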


\begin{proof}[Proof of Theorem \ref{thm:max_rej_prob_k_1_closed_form}]
First, for any given $1\le m_1\le m-1$, $\rho>0$ and $\gamma > \rho^{-1}$, we consider 
\begin{align}\label{eq:gamma_rho_m1}
    (\gamma_1, \gamma_2, \ldots, \gamma_m)
    & = 
    (\gamma, \rho^{-1}, \underbrace{\gamma, \ldots, \gamma}_{m-m_1-1}, \underbrace{\rho^{-1}, \ldots, \rho^{-1}}_{m_1-1}). 
\end{align}
Recall the definition in \eqref{eq:H_prime} and \eqref{eq:H_H_prime}, we have 
\begin{align*}
    & \hspace{-15pt} H_m(c; \gamma_1, \ldots, \gamma_m)
    \\
    & = 
    H_{m}'(x_2; c; \gamma_1, \ldots, \gamma_m)
    \\ 
    & = 
    \frac{\overline{\theta}}{x_{(m)} + \overline{\theta}}
	+ 
	\frac{2\overline{\theta}}{x_2 + \overline{\theta}} 
	-\frac{\underline{\theta}' - m}{\tau \underline{\theta}' - 1}
	+ \frac{1-\tau}{1-\tau+\min\{C_m(x_2; c; \gamma_1, \ldots, \gamma_m), 0\}}
    -
	1,
\end{align*}
where $\kappa \equiv \frac{mc^2}{m-1}$, $\tau \equiv \frac{\kappa+1}{\kappa m}$, 
$x_{i} \equiv \kappa \gamma_{i}^2$ for $1\le i \le m$, 
$x_{(1)}\le x_{(2)} \le \ldots \le x_{(m)}$ are the sorted values of $\{x_i\}_{i=1}^m$, 
$\overline{\theta} = m+x_{(m)}/\kappa$, $\underline{\theta}' = \max_{1\le j \le m} \theta(x_{(j)}, j)$ with $\theta(\cdot)$ defined as in Lemma \ref{lem:neg_root_abs_lower}, 
and 
\begin{align*}
    C_m(x_2; c; \gamma_1, \ldots, \gamma_m)
    = 
     \left(1-2\tau+\frac{2}{x_{(m)}}\right) \frac{x_2 x_{(m)}-\overline{\theta}^2}{(x_2+x_{(m)}+2\overline{\theta})\overline{\theta}} 
    + \frac{1}{x_{(m)}}
    - \tau. 
\end{align*}
Note that $(\theta - m)/(\tau \theta - 1)$ is increasing in $\theta$, due to the fact that $-1 + m\tau = 1/\kappa >0$, 
and $\underline{\theta}' = \max_{1\le j \le m} \theta(x_{(j)}, j) \ge \theta(x_{(1)}, 1) = m+\rho^{-2} = \underline{\theta}$, where the last equality follows from Lemma \ref{lem:neg_root_abs_lower}(c). 
We then have 
\begin{align*}
    H_m(c; \gamma_1, \ldots, \gamma_m)
    & = 
    \frac{\overline{\theta}}{x_{(m)} + \overline{\theta}}
	+ 
	\frac{2\overline{\theta}}{x_2 + \overline{\theta}} 
	-\frac{\underline{\theta}' - m}{\tau \underline{\theta}' - 1}
	+ \frac{1-\tau}{1-\tau+\min\{C_m(x_2; c; \gamma_1, \ldots, \gamma_m), 0\}}
    -
	1\\
    & \le
    \frac{\overline{\theta}}{x_{(m)} + \overline{\theta}}
	+ 
	\frac{2\overline{\theta}}{x_2 + \overline{\theta}} 
	-\frac{\underline{\theta} - m}{\tau \underline{\theta} - 1}
	+ \frac{1-\tau}{1-\tau+\min\{C_m(x_2; c; \gamma_1, \ldots, \gamma_m), 0\}}
    -
	1\\
    & = 
    \widetilde{H}_m(c; \rho, \gamma),
\end{align*}
where the last equality follows by definition. 

Second, from the first part and the condition that $\widetilde{H}_m(c; \rho, \gamma) < 0$ for all $\gamma > \rho^{-1}$, 
we know that $H_m(c; \gamma_1, \ldots, \gamma_m)<0$, for all $1\le m_1 \le m$, $\rho>0$, $\gamma>\rho^{-1}$ and $(\gamma_1, \ldots, \gamma_m)$ defined as in \eqref{eq:gamma_rho_m1}. 
In addition, we have 
\begin{align*}
    \frac{c}{\sqrt{\frac{2(m-1)}{m(m-2)}}} > 
    \frac{\sqrt{\frac{3(m-1)}{m(m-3)}} }{\sqrt{\frac{2(m-1)}{m(m-2)}}} 
    = 
    \sqrt{
    \frac{3(m-2)}{2(m-3)}
    }
    = 
    \sqrt{
    \frac{3m-6}{2m-6}
    }
    \ge 1. 
\end{align*}
From Lemma \ref{lem:sign_deriv_gamma}, we know that for $1\le m_1\le m-1$ and any $c, \rho>0$, 
$\overline{p}_m(c; \rho, \gamma; m_1, 0)$ cannot be $p_m(c;1,\rho)$ for any $\gamma > \rho^{-1}$, since we can strictly increase the rejection probability $\overline{p}_{m}(\gamma_1, \ldots, \gamma_m)$ by  slightly decreasing one of $\{\gamma_i\}_{i=1}^m$ that is equal to $\gamma$ and slightly increasing one of $\{\gamma_i\}_{i=1}^m$ that is equal to $\rho^{-1}$.

Third, we consider $\overline{p}_m(c; \gamma \bs{1}_{m}^\top)$ for any $\gamma > 0$. In this case, $\overline{p}_m(c; \gamma \bs{1}_{m}^\top) = \bP[|T_m|>c]$, where $T_m$ is defined as in \eqref{eq:pop-test-1} with $\nob_{m+1} \sim \mathcal{N}(0, \gamma^{-2})$ and $\nob_i \sim \mathcal{N}(0, 1)$ for $1\le i \le m$. We can verify that $T_m \sim \sqrt{\gamma^{-2}+m^{-1}} t_{m-1}$, where $t_{m-1}$ is a $t$-distributed random variable with $m-1$ degrees of freedom.
Consequently, 
\begin{align*}
    \overline{p}_m(c; \gamma \bs{1}_{m}^\top) = 
    \bP[|T_m|>c]
    = \bP\left[ |t_{m-1}|> \frac{c}{\sqrt{\gamma^{-2}+m^{-1}}} \right],
\end{align*}
which is decreasing in $\gamma$. 
This implies that 
$\sup_{\gamma \in [\rho^{-1}, \infty)}\overline{p}_m(c; \gamma \bs{1}_{m}^\top) =  \overline{p}_m(c; \rho^{-1} \bs{1}_{m}^\top)$. 

Fourth, from the second and third parts, we know that $p_m(c; 1, \rho)$ defined as in Theorem \ref{thm:max_rej_prob} has the following equivalent forms:
\begin{align*}
	p_m(c; 1, \rho) & 
	= \max\Big\{ \max_{0\le m_1 \le m} \tilde{p}_m(c; k, \rho; m_1, 0), \ p_{m,0}(c) \Big\} \\
    & = 
    \max\Big\{ \overline{p}_m(c; \rho^{-1} \bs{1}_{m}^\top), \ p_{m,0}(c) \Big\} \\
    & = 
    \max\left\{ \bP\left[ |t_{m-1}|> \frac{c}{\sqrt{\rho^2+m^{-1}}} \right], \ p_{m,0}(c) \right\}.
\end{align*}

Fifth, from \citet{Bakirov:2006aa}, we can know that, when $c > \sqrt{3(m-1)/[m(m-3)]}$, 
$$
p_{m,0}(c) = \bP\left[ |t_{m-1}|> \sqrt{m}c \right]
\le \bP\left[ |t_{m-1}|> \frac{c}{\sqrt{\rho^2+m^{-1}}} \right].
$$
This implies that 
\begin{align*}
	p_m(c; 1, \rho) & =
    \max\left\{ \bP\left[ |t_{m-1}|> \frac{c}{\sqrt{\rho^2+m^{-1}}} \right], \ p_{m,0}(c) \right\}
    = \bP\left[ |t_{m-1}|> \frac{c}{\sqrt{\rho^2+m^{-1}}} \right].
\end{align*}

Therefore, Theorem \ref{thm:max_rej_prob_k_1_closed_form} holds.
\end{proof}

\subsection{Proof of Theorem \ref{thm:max_rej_prob_k_1_simple_closed_form}}

We will prove Theorem \ref{thm:max_rej_prob_k_1_simple_closed_form} using Theorem \ref{thm:max_rej_prob_k_1_closed_form}. 
Define 
\begin{align}\label{eq:H_tilde_123}
    \begin{split}
    \tha 
    & \equiv \frac{\overline{\theta}}{\kappa \gamma^2 + \overline{\theta}} + \frac{2\overline{\theta}}{\kappa \rho^{-2} + \overline{\theta}}, 
    \\ 
    \thb 
    &\equiv  \frac{1-\tau}{1-\tau+\min\{C, 0\}}, 
    \\
    \thc 
    &\equiv - \frac{\underline{\theta} - m}{\tau \underline{\theta} - 1}, 
    \end{split}
\end{align}
where $\kappa, \overline{\theta}, \underline{\theta}, \tau$ and $C$ are defined the same as in Theorem \ref{thm:max_rej_prob_k_1_closed_form}:

\begin{align}\label{eq:kappa_theta_over_under_tau_C}
\kappa & \equiv \frac{m c^2}{m-1}, \quad 
\overline{\theta} \equiv m+\gamma^2, \quad 
\underline{\theta} \equiv m + \rho^{-2}, \quad 
\tau \equiv \frac{\kappa+1}{m\kappa}, 
\nonumber
\\
	C 
	& \equiv
	\left( 1-2\tau + \frac{2}{\kappa \gamma^2} \right) \frac{\kappa^2 \gamma^2 \rho^{-2} -\overline{\theta}^2}{(\kappa \gamma^2 + \kappa \rho^{-2}+2\overline{\theta})\overline{\theta}} +  
    \frac{1}{\kappa \gamma^2}
    - \tau.
\end{align}
By definition, we can then write $\thh$ in Theorem \ref{thm:max_rej_prob_k_1_closed_form} as
\begin{align*}
    \thh 
    =
    \tha
    + \thb
    + \thc
    - 1.
\end{align*}

In the following Lemmas \ref{lem:lb_h1a}--\ref{lemma:lb_h2}, we construct upper bounds $\oha$ and $\ohb$ for $\tha$ and $\thb$, respectively, over all $\gamma \in (\rho^{-1}, \infty)$; see \eqref{eq:H_m1_overline} and \eqref{eq:H_m2_overline} for their expressions. 
We can thus bound $\thh$ over all $\gamma > \rho^{-1}$ from the above by 
\[
	\oha
	+ \ohb
	+ \thc
	- 1.
\]
We then establish the monotonicity of $\oha$, $\ohb$, and $\thc$ with respect to $c$ in Lemmas \ref{lem:mon_h1}--\ref{lem:mon_h3}, respectively. 
Finally, we prove Theorem \ref{thm:max_rej_prob_k_1_simple_closed_form}.

\begin{lem} \label{lem:lb_h1a}
Define $\tha$
as in \eqref{eq:H_tilde_123} for $\gamma > \rho^{-1}$. 
One of the following statements is true:
\begin{enumerate}
	\item[(a)] There exists at most one finite $\gamma_0$ such that $\gamma_0 > \rho^{-1}$ and $\left.\frac{\partial\tha}{\partial \gamma}\right|_{\gamma = \gamma_0} = 0$.
	\item[(b)] $\frac{\partial\tha}{\partial \gamma} = 0$ for any $\gamma > \rho^{-1}$.
\end{enumerate}
\end{lem}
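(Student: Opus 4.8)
The plan is to reduce the stationarity condition $\partial \tha/\partial\gamma = 0$ to a scalar equation in the variable $u \equiv \gamma^2$ and then show that, on the relevant range $u > \rho^{-2}$, this equation is in fact \emph{linear}, so it has at most one solution unless it holds identically. First I would substitute $u = \gamma^2$; since $\partial/\partial\gamma = 2\gamma\,\partial/\partial u$ and $2\gamma > 0$ for $\gamma > \rho^{-1}$, the zeros of $\partial\tha/\partial\gamma$ on $(\rho^{-1},\infty)$ coincide exactly with the zeros of $\partial\tha/\partial u$ on $(\rho^{-2},\infty)$. Recalling $\overline\theta = m + \gamma^2$, I would write $\tha = f_1(u) + f_2(u)$ with $f_1(u) = \frac{m+u}{(\kappa+1)u+m}$ and $f_2(u) = \frac{2(m+u)}{u+a}$, where $a \equiv m + \kappa\rho^{-2}$. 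A direct differentiation gives $f_1'(u) = -\kappa m\,[(\kappa+1)u+m]^{-2}$ and $f_2'(u) = 2\kappa\rho^{-2}(u+a)^{-2}$, both computations being routine.

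Setting $\partial\tha/\partial u = f_1'(u) + f_2'(u) = 0$ and cancelling the common positive factor $\kappa$, the condition becomes
\[
	\frac{2\rho^{-2}}{(u+a)^2} = \frac{m}{[(\kappa+1)u+m]^2}.
\]
The key step — and the only one requiring care — is the square-root reduction. For $u > \rho^{-2} > 0$ both $u + a$ and $(\kappa+1)u + m$ are strictly positive, so taking positive square roots yields the equivalent identity $\sqrt{2}\,\rho^{-1}\bigl[(\kappa+1)u + m\bigr] = \sqrt{m}\,(u+a)$. The negative branch is infeasible precisely because the two sides would then carry opposite signs on this range; this is what makes the apparent quadratic collapse to a genuinely linear equation in $u$. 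I would emphasize this sign bookkeeping, since it is exactly where a spurious second root could otherwise be introduced.

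Finally, I would read off the dichotomy from the linear equation $\bigl[\sqrt{2}\rho^{-1}(\kappa+1) - \sqrt{m}\bigr]u = \sqrt{m}\,a - \sqrt{2}\rho^{-1}m$. If its leading coefficient $\sqrt{2}\rho^{-1}(\kappa+1) - \sqrt{m}$ is nonzero, there is at most one $u$, hence at most one $\gamma_0 > \rho^{-1}$, giving statement (a). If the leading coefficient vanishes, then either the right-hand constant is nonzero, so there is no stationary point at all (still consistent with (a)), or the constant also vanishes, in which case $\partial\tha/\partial u \equiv 0$ and therefore $\partial\tha/\partial\gamma = 0$ for every $\gamma > \rho^{-1}$, which is statement (b). These cases are mutually exhaustive, so the lemma follows. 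I do not expect any deeper analytic obstacle: once the square-root reduction is in place, the rest is elementary algebra.
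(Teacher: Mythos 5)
Your proposal is correct and follows essentially the same route as the paper: reduce to the variable $u=\gamma^2$, observe that the stationarity condition $2\rho^{-2}[(\kappa+1)u+m]^2 = m(u+a)^2$ collapses to a linear equation because both bracketed quantities are positive on $u>\rho^{-2}$ (the paper factors the difference of squares and discards the positive sum factor, which is the same sign bookkeeping as your square-root reduction), and then read off the at-most-one-root versus identically-zero dichotomy from the linear coefficient. No gaps.
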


\hideif{
\begin{proof}[Proof of Lemma \ref{lem:lb_h1a}] Note that $\tha$ can be written explicitly in terms of $\gamma^2$ as follows:
\begin{equation}
	\label{eq:htilde-1-1}
	\tha \equiv 
	\frac{m + \gamma^2}{(\kappa + 1)\gamma^2 + m}
	+
	2 \left[
		\frac{m + \gamma^2}{\gamma^2 + (\kappa \rho^{-2} + m)}
	\right].
\end{equation}
The derivatives of the above two fractions with respect to $\gamma^2$ are, respectively, 
\begin{align}
	\label{eq:htilde-1-d1}
	\frac{\partial [\frac{m + \gamma^2}{(\kappa + 1)\gamma^2 + m}]}{\partial (\gamma^2)}
	= \frac{-m \kappa}{[(\kappa + 1)\gamma^2 + m]^2},
\end{align}
and
\begin{align}
	\label{eq:htilde-1-d2}
	\frac{\partial [\frac{m + \gamma^2}{\gamma^2 + (\kappa \rho^{-2} + m)}]}{\partial (\gamma^2)}
	= \frac{ \kappa \rho^{-2}}{[\gamma^2 + (\kappa \rho^{-2} + m)]^2}.
\end{align}
Combining \eqref{eq:htilde-1-d1} and \eqref{eq:htilde-1-d2} gives
\begin{align}\label{eq:htilde-1-2}
	& \quad \ \frac{\partial \tha}{\partial (\gamma^2)}
    \nonumber
    \\
	& = 	\frac{\partial [\frac{m + \gamma^2}{(\kappa + 1)\gamma^2 + m}]}{\partial (\gamma^2)}
	+
		2\frac{\partial [\frac{m + \gamma^2}{\gamma^2 + (\kappa \rho^{-2} + m)}]}{\partial (\gamma^2)}		\notag	
    =  \kappa 
	\left\{
		\frac{ 2\rho^{-2}}{[\gamma^2 + (\kappa \rho^{-2} + m)]^2}
		- \frac{m }{[(\kappa + 1)\gamma^2 + m]^2}
	\right\}
    \nonumber
    \\
    & = 
    \frac{\kappa
    \{
    2\rho^{-2}[(\kappa + 1)\gamma^2 + m]^2 - m [\gamma^2 + (\kappa \rho^{-2} + m)]^2
    \}
    }{[\gamma^2 + (\kappa \rho^{-2} + m)]^2\cdot [(\kappa + 1)\gamma^2 + m]^2}
    \nonumber
    \\
    & = 
    \frac{\kappa
    \{
    \sqrt{2\rho^{-2}} [(\kappa + 1)\gamma^2 + m]
    + \sqrt{m} [\gamma^2 + (\kappa \rho^{-2} + m)]
    \}
    }{[\gamma^2 + (\kappa \rho^{-2} + m)]^2\cdot [(\kappa + 1)\gamma^2 + m]^2}
    \times 
    \nonumber
    \\
    & \quad \quad \quad 
    \left\{ 
    \sqrt{2\rho^{-2}} [(\kappa + 1)\gamma^2 + m]
    - \sqrt{m} [\gamma^2 + (\kappa \rho^{-2} + m)]
    \right\}
    \nonumber
    \\
    & = 
    \frac{\kappa
    \{
    \sqrt{2\rho^{-2}} [(\kappa + 1)\gamma^2 + m]
    + \sqrt{m} [\gamma^2 + (\kappa \rho^{-2} + m)]
    \}
    }{[\gamma^2 + (\kappa \rho^{-2} + m)]^2\cdot [(\kappa + 1)\gamma^2 + m]^2}
    \times 
    \nonumber
    \\
    & \quad \quad \quad 
    \left\{ 
    \left[ \sqrt{2\rho^{-2}} (\kappa + 1) - \sqrt{m} \right] \gamma^2 -
    \left[ \sqrt{m} (\kappa \rho^{-2} + m) - \sqrt{2\rho^{-2}} m \right]
    \right\}.
\end{align}

The goal of this lemma is to show either there exists at most one $\gamma_0>\rho^{-1}$ that satisfy statement (a), or the partial derivative is constant for all $\gamma$.
Since $\frac{\partial \tha}{\partial (\gamma^2)} = 2\gamma \frac{\partial \tha}{\partial \gamma}$ and we require $\gamma > \rho^{-1} > 0$, finding a $\gamma_0$ such that $\left.\frac{\partial\tha}{\partial \gamma}\right|_{\gamma = \gamma_0} = 0$ is equivalent to finding $\gamma_0$ that satisfy $\left.\frac{\partial\tha}{\partial (\gamma^2)}\right|_{\gamma^2 = \gamma_0^2} = 0$. To this end, it suffices to consider the scenario such that 
\eqref{eq:htilde-1-2} equals 0. 
Because $\rho$, $\kappa$, $\gamma$ and $m$ are all positive, 
\eqref{eq:htilde-1-2} equals 0 if and only if 
\begin{align}
	\label{eq:htilde-1-3-1a}
     \left[ \sqrt{2\rho^{-2}} (\kappa + 1) - \sqrt{m} \right] \gamma^2 = \sqrt{m} (\kappa \rho^{-2} + m) - \sqrt{2\rho^{-2}} m. 
\end{align} 
There are two scenarios to consider. 

First, consider the case where $\sqrt{2\rho^{-2}}(\kappa + 1) - \sqrt{m} \neq 0$. Then, \eqref{eq:htilde-1-3-1a} implies that
\begin{align}
	\label{eq:htilde-1-3-1b}
	\gamma^2 = 
	\frac{\sqrt{m}(\kappa \rho^{-2} + m) - \sqrt{2\rho^{-2}}m}{\sqrt{2\rho^{-2}}(\kappa + 1) - \sqrt{m}}.
\end{align}
In \eqref{eq:htilde-1-3-1b}, if the RHS is positive, then there is one positive $\gamma$ that satisfy the equation. If the RHS equals 0, then this cannot be the $\gamma_0$ we wish to find because we require $\gamma_0 > \rho^{-1} > 0$. If the RHS is negative, then there is no real solution. Thus, 
in this case, there is at most one $\gamma_0 > \rho^{-1}$ 
such that $\left.\frac{\partial\tha}{\partial \gamma}\right|_{\gamma = \gamma_0} = 0$
. \par 

Now, consider the second scenario where $\sqrt{2\rho^{-2}}(\kappa + 1) - \sqrt{m} = 0$ under the first case. Then, the derivative is either always positive, negative, or zero for all $\gamma > \rho^{-1}$. Thus, either statement (a) or (b) is true. \par

From the above, Lemma \ref{lem:lb_h1a} holds. 
\end{proof}
}

\begin{lem} \label{lemma:lb_h1}
For any given values of positive $c$, $\rho$, and $m$, 
define $\tha$
as in \eqref{eq:H_tilde_123} for $\gamma > \rho^{-1}$. 
For any $\gamma > \rho^{-1}$, we have 
$
\tha \le 
\oha,
$
where 
\begin{align}\label{eq:H_m1_overline}
    \oha & \equiv \max\left\{\widetilde H_{m,1}(\rho^{-1}; c, \rho), \lim_{\gamma \to \infty} \tha \right\}
    \notag \\
    & = 
    \max\left\{
    \frac{3(m + \rho^{-2})}{(\kappa + 1)\rho^{-2} + m}, 
    \frac{2\kappa + 3}{\kappa + 1}
    \right\}. 
\end{align}
\end{lem}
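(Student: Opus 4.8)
The plan is to regard $\tha$ as a function of $u \equiv \gamma^2$ on the interval $(\rho^{-2}, \infty)$ and to show that its supremum is attained only at the left endpoint $u = \rho^{-2}$ (i.e.\ $\gamma = \rho^{-1}$) or in the limit $u \to \infty$, so that the stated upper bound is simply the larger of the two corresponding values. The starting point is the explicit form
\begin{equation*}
	\tha = \frac{m + u}{(\kappa + 1) u + m} + 2 \cdot \frac{m + u}{u + (\kappa \rho^{-2} + m)},
\end{equation*}
obtained by substituting $\overline{\theta} = m + u$ into the definition, together with Lemma \ref{lem:lb_h1a}: on $(\rho^{-1}, \infty)$ the function either has derivative identically zero (case (b)) or has at most one interior critical point (case (a)). Case (b) gives a constant, for which the claimed equality is immediate, so all the work is in case (a).

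First I would reduce the problem to a sign analysis of the derivative. Differentiating in $u$ and clearing the (positive) denominators, $\partial \tha / \partial u$ has the same sign as
\begin{equation*}
	2\rho^{-2}\big[(\kappa+1)u + m\big]^2 - m\,(u + a)^2, \qquad a \equiv \kappa\rho^{-2} + m,
\end{equation*}
which, being a difference of squares, factors as the product of $\phi(u) \equiv \sqrt{2\rho^{-2}}\big[(\kappa+1)u+m\big] - \sqrt{m}\,(u+a)$ and a strictly positive factor. Hence the sign of $\partial \tha/\partial u$ equals that of the affine function $\phi$, whose slope is $s \equiv \sqrt{2\rho^{-2}}(\kappa+1) - \sqrt{m}$. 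If $s = 0$ the derivative keeps a constant sign and $\tha$ is monotone; if $s \ne 0$ the unique root $u_0$ of $\phi$ is the unique critical point, and it is a minimum when $s > 0$ and a maximum when $s < 0$.

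The main obstacle is the case $s < 0$, in which the single critical point would be an interior maximum and could a priori exceed both endpoint values. I would resolve this by showing that $u_0$ then lies \emph{outside} the feasible interval. The key computation is that, evaluating $\phi$ at $u = \rho^{-2}$ (where $\rho^{-2} + a = (\kappa+1)\rho^{-2} + m$) and using $\phi(\rho^{-2}) + s(u_0 - \rho^{-2}) = 0$,
\begin{equation*}
	u_0 - \rho^{-2} = \frac{\big[(\kappa+1)\rho^{-2} + m\big]\big(\sqrt{m} - \sqrt{2\rho^{-2}}\big)}{s}.
\end{equation*}
When $s < 0$ one has $\sqrt{2\rho^{-2}}(\kappa+1) < \sqrt{m}$, which, since $\kappa + 1 > 1$, forces $\sqrt{m} - \sqrt{2\rho^{-2}} > 0$; the right-hand side is therefore negative, so $u_0 < \rho^{-2}$ and $\phi$ stays negative throughout $(\rho^{-2}, \infty)$, making $\tha$ strictly decreasing there. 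When $s > 0$, the critical point (if it lies in the interval at all) is a minimum. In every case $\tha$ is thus either monotone or $\cup$-shaped on the closed interval, so its supremum is attained at $u = \rho^{-2}$ or as $u \to \infty$.

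It then remains to evaluate these two boundary values. Substituting $\gamma = \rho^{-1}$ makes both fractions share the denominator $(\kappa+1)\rho^{-2} + m$, giving $\widetilde H_{m,1}(\rho^{-1}; c, \rho) = 3(m+\rho^{-2})/[(\kappa+1)\rho^{-2}+m]$, while letting $u \to \infty$ sends the first term to $(\kappa+1)^{-1}$ and the second to $2$, for a limit of $(2\kappa+3)/(\kappa+1)$. Taking the maximum of the two reproduces exactly the expression for $\oha$ in \eqref{eq:H_m1_overline}, completing the argument.
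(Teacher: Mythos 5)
Your proof is correct and follows essentially the same route as the paper's: rewrite $\tha$ as a function of $u=\gamma^2$, reduce the sign of its derivative to that of an affine factor via the difference-of-squares factorization, rule out an interior maximum so the function is monotone or $\cup$-shaped, and take the maximum of the value at $\gamma=\rho^{-1}$ and the limit as $\gamma\to\infty$. The only (equally valid) difference is in how the interior-maximum case is excluded: you locate the critical point $u_0$ and show $u_0<\rho^{-2}$ when the slope $s$ is negative, whereas the paper derives a contradiction from requiring the numerator and denominator of $\gamma_0^2$ to be simultaneously negative.
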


\hideif{
\begin{proof}[Proof of Lemma \ref{lemma:lb_h1}]

Following Lemma \ref{lem:lb_h1a}, we consider the following three scenarios. 

First, suppose there is no $\gamma_0 > \rho^{-1}$ such that $\left.\frac{\partial \tha}{\partial \gamma}\right|_{\gamma = \gamma_0} = 0$. This means $\frac{\partial \tha}{\partial \gamma}$ is either positive or negative for all $\gamma > \rho^{-1}$ because $\frac{\partial \tha}{\partial \gamma}$ is continuous for all real $\gamma$ from \eqref{eq:htilde-1-2}. Thus, $\tha$ is monotone in $\gamma$ for all $\gamma > \rho^{-1}$.

Second, suppose that $\frac{\partial \tha}{\partial \gamma} = 0$ for any $\gamma > \rho^{-1}$. This means $\tha$ is constant in $\gamma$ for all $\gamma > \rho^{-1}$.

Third, suppose there is one $\gamma_0 > \rho^{-1}$ such that $\left.\frac{\partial \tha}{\partial \gamma}\right|_{\gamma = \gamma_0} = 0$. For this $\gamma_0$, it has to satisfy \eqref{eq:htilde-1-3-1b}. There are two scenarios to consider as follows:
\begin{enumerate}
	\item[(i)] $\sqrt{m}(\kappa \rho^{-2} + m) - \sqrt{2\rho^{-2}}m > 0$ and $\sqrt{2\rho^{-2}}(\kappa + 1) - \sqrt{m} > 0$.
	\item[(ii)] $\sqrt{m}(\kappa \rho^{-2} + m) - \sqrt{2\rho^{-2}}m < 0$ and $\sqrt{2\rho^{-2}}(\kappa + 1) - \sqrt{m} < 0$.
\end{enumerate}
We first show that scenario (ii) is impossible. To see this, assume to the contrary that $\sqrt{m}(\kappa \rho^{-2} + m) - \sqrt{2\rho^{-2}}m < 0$ and $\sqrt{2\rho^{-2}}(\kappa + 1) - \sqrt{m} < 0$. The first inequality implies that 
\begin{equation}
	\label{eq:htilde-1-shape-1}
	\kappa \rho^{-2} + m < \sqrt{2 \rho^{-2}m},
\end{equation}
whereas the second inequality implies that 
\begin{equation}
	\label{eq:htilde-1-shape-2}
	\sqrt{2\rho^{-2}} < \frac{\sqrt{m}}{\kappa + 1}.
\end{equation}
Combining inequalities \eqref{eq:htilde-1-shape-1} and \eqref{eq:htilde-1-shape-2} gives
\begin{equation*}
	\kappa \rho^{-2} + m
	< 
    \sqrt{m} \cdot \frac{\sqrt{m}}{\kappa + 1}
    =
    \frac{m}{\kappa + 1} < m,
\end{equation*}
leading to a contradiction. 
Therefore, it remains to consider scenario (i), i.e., $\sqrt{m}(\kappa \rho^{-2} + m) - \sqrt{2\rho^{-2}}m > 0$ and $\sqrt{2\rho^{-2}}(\kappa + 1) - \sqrt{m} > 0$. 
From \eqref{eq:htilde-1-2}, we can know that $\frac{\partial \tha}{\partial (\gamma^2)}$ is negative when $\gamma< \gamma_0$ and positive when $\gamma > \gamma_0$. 
Consequently, 
$\tha$ is decreasing for $\gamma < \gamma_0$ and increasing for $\gamma > \gamma_0$.  \par 

From the above, we have:
\begin{enumerate}
	\item[(a)] in the first and second scenarios, $\tha$ is either increasing, decreasing, or constant in $\gamma \in (\rho^{-1}, \infty)$; 
	\item[(b)] in the third scenario, $\tha$ is decreasing for $\gamma \in (\rho^{-1}, \gamma_0)$ and increasing for $\gamma \in (\gamma_0, \infty)$, for some $\gamma_0 > \rho^{-1}$. 
\end{enumerate}
This means $\tha$ is bounded from above by its endpoints, i.e., it is bounded from  above by 
\[
	\oha \equiv \max\left\{\widetilde H_{m,1}(\rho^{-1}; c, \rho), \lim_{\gamma \to \infty} \tha \right\}, 
\]
where
\begin{align*}
	\widetilde H_{m,1}(\rho^{-1}; c, \rho)
	& = \frac{m + \rho^{-2}}{(\kappa + 1)\rho^{-2} + m}
	+
	\frac{2(m + \rho^{-2})}{(\kappa + 1)\rho^{-2} + m}
	=
	\frac{3(m + \rho^{-2})}{(\kappa + 1)\rho^{-2} + m},	\\
	\lim_{\gamma \to \infty} \tha
	& = \lim_{\gamma \to \infty}\frac{m + \gamma^2}{(\kappa + 1)\gamma^2 + m}
	+
	2 \lim_{\gamma \to \infty} \left[
		\frac{m + \gamma^2}{\gamma^2 + (\kappa \rho^{-2} + m)}
	\right]    \\
	& = \frac{1}{\kappa + 1} + 2 = \frac{2\kappa + 3}{\kappa + 1}.
\end{align*}
Therefore, we have proved Lemma \ref{lemma:lb_h1}. 
\end{proof}

}

\begin{lem} \label{lem:lb_h2a}
For any given positive $c$, $\rho$, $m$ and any  $\gamma > \rho^{-1}$, define
\begin{align*}
	D_1(\gamma) 
	& = \frac{1}{2} \gamma^4 + \left(\kappa \rho^{-2} + \frac{1}{2}\rho^{-2} + \frac{1}{2} m\right)\gamma^2 + \frac{1}{2} \rho^{-2} m,	\\
	D_2(\gamma)
	& = (\kappa + 2)\gamma^4 + (\kappa m + \kappa \rho^{-2} + 4m)\gamma^2 + m(\kappa \rho^{-2} + 2m), \\
    Z(\gamma) & = \frac{D_1(\gamma) }{D_2(\gamma)}. 
\end{align*}
For all $\gamma > \rho^{-1}$, we have $Z(\gamma) \ge \underline Z$, where 
\begin{align}\label{eq:Z_underline}
    \underline Z
	\equiv
	\min \left\{
		\frac{1}{2(m\rho^2 + 1)},
		\frac{1}{2(\kappa + 2)}
	\right\}
    = \frac{1}{2 \cdot \max\{m\rho^2 + 1, \kappa + 2\}}
    > 0.
\end{align}
\end{lem}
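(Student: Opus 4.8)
The plan is to reduce the whole statement to a fact about a ratio of two quadratics. Writing $u \equiv \gamma^2$, both $D_1$ and $D_2$ become quadratics in $u$ with strictly positive coefficients (since $\kappa, m, \rho > 0$), so $Z(\gamma) = D_1/D_2$ is a well-defined positive function of $u$ on $(\rho^{-2}, \infty)$, and the claim $Z(\gamma) \ge \underline{Z}$ is equivalent to $D_1(u) - \underline{Z}\, D_2(u) \ge 0$ for all $u \ge \rho^{-2}$. I would therefore work entirely with this polynomial inequality.

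First I would identify the two natural candidate values of the bound by evaluating $Z$ at the two ends of the interval. Reading off the leading coefficients gives $\lim_{u\to\infty} Z = \frac{1/2}{\kappa+2} = \frac{1}{2(\kappa+2)}$. A direct computation at the left endpoint gives $D_1(\rho^{-2}) = \rho^{-2}[(\kappa+1)\rho^{-2}+m]$ and $D_2(\rho^{-2}) = 2[(\kappa+1)\rho^{-4}+m(\kappa+2)\rho^{-2}+m^2]$, and one checks the identity $D_2(\rho^{-2}) = 2(m\rho^2+1)\,D_1(\rho^{-2})$, so $Z(\rho^{-2}) = \frac{1}{2(m\rho^2+1)}$. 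Hence the two endpoint values are exactly the two quantities whose minimum defines $\underline{Z} = \frac{1}{2\max\{m\rho^2+1,\ \kappa+2\}}$, and the lemma amounts to saying that $Z$ never dips below the smaller of its two endpoint values.

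The heart of the argument is a case split according to which endpoint value is smaller, i.e. according to the sign of $m\rho^2 - (\kappa+1)$. In the case $\kappa+1 \ge m\rho^2$ (so $\underline{Z} = \frac{1}{2(\kappa+2)}$) I would expand $2(\kappa+2)D_1(u) - D_2(u)$; the $u^2$ terms cancel and the expression collapses to the linear function $[2(\kappa+1)^2\rho^{-2} - 2m]\,u + 2m(\rho^{-2}-m)$. At $u = \rho^{-2}$ this equals $2[(\kappa+1)\rho^{-2}-m][(\kappa+1)\rho^{-2}+m]$, nonnegative precisely because $(\kappa+1)\rho^{-2} \ge m$ here, while its slope $2(\kappa+1)^2\rho^{-2} - 2m$ is nonnegative because $(\kappa+1)^2 \ge \kappa+1 \ge m\rho^2$; a linear function with nonnegative value at $\rho^{-2}$ and nonnegative slope stays nonnegative on $[\rho^{-2},\infty)$. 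In the complementary case $m\rho^2 \ge \kappa+1$ (so $\underline{Z} = \frac{1}{2(m\rho^2+1)}$) I would study $Q(u) \equiv 2(m\rho^2+1)D_1(u) - D_2(u)$, a quadratic with leading coefficient $m\rho^2 - (\kappa+1) \ge 0$ that, by the endpoint identity above, has $u = \rho^{-2}$ as a root. For an upward-opening quadratic with a root at $\rho^{-2}$, nonnegativity on $[\rho^{-2},\infty)$ is equivalent to $Q'(\rho^{-2}) \ge 0$; I would compute $Q'(\rho^{-2}) = \kappa m + m^2\rho^2 - (\kappa+1)\rho^{-2}$ and verify it is nonnegative using $m\rho^2 \ge \kappa+1$, which forces both $(\kappa+1)\rho^{-2} \le m$ and $\kappa m + m^2\rho^2 \ge m$.

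The main obstacle is not conceptual but bookkeeping: the two endpoint evaluations, the cancellation of the $u^2$ term in each subtraction, and the factorization $Q(\rho^{-2}) = 0$ must be carried out exactly, since they are what let me replace a global polynomial inequality by a single derivative condition in the second case and by a slope-plus-value check in the first. I would also note the degenerate boundary $m\rho^2 = \kappa+1$, where $Q$ reduces to a linear function, and observe that it is covered by either case through the very same inequality. Once these algebraic identities are in place, the case hypotheses deliver the required sign conditions immediately, and positivity of $\underline{Z}$ is trivial.
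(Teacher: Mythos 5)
Your proposal is correct, and it takes a genuinely different route from the paper's. The paper proves the bound by a shape argument: it computes $\frac{\partial Z}{\partial \gamma^2}$ as a ratio whose numerator is a quadratic $f(\gamma^2) = Z_1\gamma^4 + Z_2\gamma^2 + Z_3$, then runs a three-way case analysis on the sign of $Z_1 = m - \rho^{-2}(\kappa+1)^2$ (using Vieta's formulas to locate the roots of $f$) to conclude that $f$ changes sign at most once, from positive to negative; hence $Z$ is monotone or increasing-then-decreasing on $(\rho^{-1},\infty)$, so its infimum is attained at an endpoint, and the two endpoint values are exactly $\frac{1}{2(m\rho^2+1)}$ and $\frac{1}{2(\kappa+2)}$. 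You instead bypass the derivative entirely and verify the polynomial inequality $D_1(u) - \underline{Z}\,D_2(u) \ge 0$ directly, splitting on the sign of $m\rho^2 - (\kappa+1)$ (equivalently, on which endpoint value is the smaller one): in one regime the $u^2$ terms cancel and you are left with a linear function whose value at $u=\rho^{-2}$ and slope are both nonnegative under the case hypothesis, and in the other regime you get an upward-opening quadratic with a root at $u=\rho^{-2}$ and $Q'(\rho^{-2}) = \kappa m + m^2\rho^2 - (\kappa+1)\rho^{-2}\ge 0$. I checked the key identities — $D_1(\rho^{-2}) = \rho^{-2}[(\kappa+1)\rho^{-2}+m]$, $D_2(\rho^{-2}) = 2(m\rho^2+1)D_1(\rho^{-2})$, the coefficient $2(\kappa+1)^2\rho^{-2}-2m$, and the value of $Q'(\rho^{-2})$ — and they are all correct. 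Your argument is more elementary (no quotient-rule computation, no discriminant/Vieta reasoning) and pins down exactly which endpoint is binding in each regime, at the cost of having to guess the candidate bound $\underline{Z}$ in advance; the paper's unimodality argument is less algebra-heavy per case but yields strictly more (the full shape of $Z$), which it does not need for the lemma. Either proof is acceptable.
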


\hideif{
\begin{proof}[Proof of Lemma \ref{lem:lb_h2a}]
To derive the lower bound on $Z(\gamma)$, we will study $\frac{\partial Z(\gamma)}{\partial \gamma^2}$. 
The 
derivatives of $D_1(\gamma)$ and $D_2(\gamma)$ with respect to $\gamma^2$ are, respectively:
\begin{align*}
	\frac{\partial D_1(\gamma)}{\partial \gamma^2} 
	& = 
    \gamma^2 + \kappa \rho^{-2} + \frac{1}{2} (m + \rho^{-2}), 
    \\
	\frac{\partial D_2(\gamma)}{\partial \gamma^2}
	& = 2(\kappa + 2)\gamma^2 + (\kappa m + \kappa \rho^{-2} + 4m) .
\end{align*}

Using the above, we have
\begin{align}
	\frac{\partial Z(\gamma)}{\partial \gamma^2}
	& = \frac{D_2(\gamma) \frac{\partial D_1(\gamma)}{\partial (\gamma^2)} - D_1(\gamma) \frac{\partial D_2(\gamma)}{\partial (\gamma^2)}}{\{D_2(\gamma)\}^2}		\notag	\\
	& = 
	\frac{[m - \rho^{-2} (\kappa + 1)^2] \gamma^4 + 2m (m - \rho^{-2})\gamma^2 + m [\kappa^2 \rho^{-4} + m^2 + m\rho^{-2} (2 \kappa - 1)]}{\{D_2(\gamma)\}^2}		\notag	\\
	& 
	=
	\frac{Z_1 \gamma^4 + Z_2 \gamma^2 + Z_3}{\{D_2(\gamma)\}^2},
	\label{eq:ineq-h-f-1}
\end{align}
where $Z_1 \equiv m - \rho^{-2} (\kappa + 1)^2$, $Z_2 \equiv 2m(m - \rho^{-2})$, and $Z_3 \equiv m[\kappa^2 \rho^{-4} + m^2 + m \rho^{-2}(2\kappa - 1)]$. 
Let $f(\gamma^2) \equiv Z_1 \gamma^4 + Z_2 \gamma^2 + Z_3$ be a quadratic function of $\gamma^2$. In the following, we consider 
three different cases depending on the sign of $Z_1$.

First, suppose that $Z_1 > 0$. Since $\kappa > 0$, it follows that 
\begin{equation}
	Z_2 =
	2m(m - \rho^{-2})
	> 2m [m - \rho^{-2}(\kappa + 1)^2 ]
	= 2m Z_1 > 0.
	\label{eq:ineq-h-f-2}
\end{equation}
In addition, $Z_1 > 0$ implies that $m\rho^2 > (\kappa + 1)^2$, or equivalently, 
\begin{equation}
	\label{eq:kappa-range-z1-1}
	\kappa < \rho\sqrt{m} - 1.
\end{equation}
Note that $Z_3$ can be written as
\begin{align}
	Z_3
	& = m [\kappa^2 \rho^{-4} + 2m\kappa \rho^{-2} + m^2 - m\rho^{-2}]	\notag
        = m [(\kappa \rho^{-2} + m)^2 - m\rho^{-2}]	\notag \\
	& = m (\kappa \rho^{-2} + m + \sqrt{m} \rho^{-1}) (\kappa \rho^{-2} + m - \sqrt{m}\rho^{-1})	\notag \\
	& = m \rho^{-2} (\kappa \rho^{-2} + m + \sqrt{m} \rho^{-1}) (\kappa + m\rho^2 - \sqrt{m}\rho) \notag \\
	& = m \rho^{-2} (\kappa \rho^{-2} + m + \sqrt{m} \rho^{-1}) [\kappa + \sqrt{m}\rho(\sqrt{m}\rho-1)].
	\label{eq:kappa-range-z1-2}
\end{align}
Combining \eqref{eq:kappa-range-z1-1} and \eqref{eq:kappa-range-z1-2}, we have
\begin{align}\label{eq:Z_3_case_1}
    Z_3 > m \rho^{-2} (\kappa \rho^{-2} + m + \sqrt{m} \rho^{-1}) (\kappa + \sqrt{m}\rho\kappa) > 0.
\end{align}
\eqref{eq:ineq-h-f-2} and \eqref{eq:Z_3_case_1} then imply that 
$-\frac{Z_2}{Z_1} < 0$ and $\frac{Z_3}{Z_1} > 0$. 
If the quadratic function $f(\cdot)$ have real roots, then, by Vieta's formulas, the sum of the two roots are negative while the product of roots is positive, implying that both roots of $f(\gamma^2) = 0$ are negative. 
Otherwise, the quadratic function $f(\cdot)$ takes positive values on the whole real line. 
In both cases, we have $f(\gamma^2) > 0$ for all $\gamma > 0$. \par 

Second, consider the case where $Z_1 = 0$.  In this case, $f(\gamma^2) = Z_2 \gamma^2 + Z_3$.  By the same argument as in \eqref{eq:ineq-h-f-2}, we have $Z_2 > 0$. Note that $Z_1 = 0$ implies $\kappa =\sqrt{m} \rho-1$. Using \eqref{eq:kappa-range-z1-2}, we have $Z_3 > 0$. Therefore, $f(\gamma^2) > 0$ for all $\gamma > 0$.\par 

Third, consider the case where $Z_1 < 0$. We consider the following cases. 
\begin{itemize}
    \item Suppose that $f(\cdot)$ has real roots and $Z_2<0$. Then the sum of the two roots of $f(\cdot)$ equals $-\frac{Z_2}{Z_1} < 0$. Thus, there is at most one positive root of $f(\cdot)$.   
    If $f(\cdot)$ has a positive root $\gamma_0^2$, i.e., $f(\gamma_0^2) = 0$, then $f(\gamma^2)$ must be positive when $0<\gamma<\gamma_0$ and negative when $\gamma>\gamma_0$. 
    Otherwise, $f(\gamma^2)<0$ for all $\gamma>0$. 
    
    \item Suppose that $f(\cdot)$ has real roots and $Z_2\ge 0$. 
    By definition, this means $\sqrt{m} \geq \rho^{-1}$, or equivalently, $\sqrt{m}\rho - 1 \geq 0$. From \eqref{eq:kappa-range-z1-2}, because $\kappa > 0$, this implies that $Z_3 > 0$.
    Thus, the product of the two roots of $f(\cdot)$ equals $\frac{Z_3}{Z_1} < 0$. Consequently, $f(\cdot)$ has one positive root and one negative root.  
    Let $\gamma_0^2$ be the positive root of $f(\cdot)$, i.e., $f(\gamma_0^2) = 0$. Then $f(\gamma^2)$ must be positive when $0<\gamma<\gamma_0$ and negative when $\gamma>\gamma_0$. 

    \item Suppose that $f(\cdot)$ does not have real roots. 
    Then the quadratic function $f(\cdot)$ takes negative values on the whole real line. 
\end{itemize}

From the above, one of the following must hold for $f(\cdot)$: 
\begin{itemize}
    \item[(i)] $f(\gamma^2) > 0$ for all $\gamma>0$; 
    \item[(ii)] $f(\gamma^2) < 0$ for all $\gamma>0$;
    \item[(iii)] for some $\gamma_0>0$, $f(\gamma^2) > 0$ for $0<\gamma<\gamma_0$, $f(\gamma^2_0) = 0$, and $f(\gamma^2) < 0$ for $\gamma>\gamma_0$. 
\end{itemize}
From \eqref{eq:ineq-h-f-1}, we can know that 
\begin{itemize}
    \item[(i)] $Z(\gamma)$ is either increasing or decreasing in $\gamma\in (\rho^{-1}, \infty)$; 
    \item[(ii)] for some $\gamma_0 > \rho^{-1}$, $Z(\gamma)$ is increasing for $\gamma \in (\rho^{-1}, \gamma_0)$ and decreasing for $\gamma \in (\gamma_0, \infty)$
\end{itemize}
Thus, the infimum of $Z(\gamma)$ over $\gamma\in  (\rho^{-1}, \infty)$ must be obtained at the endpoints, which immediately implies that, for all $\gamma>\rho^{-1}$, 
\begin{equation}
	\label{eq:ineq-h-f-3}
	Z(\gamma) \geq 
	\min \left\{
		Z(\rho^{-1}),
		\ \lim_{\gamma \to \infty} Z(\gamma)
	\right\},
\end{equation}
where
\begin{align*}
	Z(\rho^{-1})
	& 
	= 
	\frac{\rho^{-4}(\kappa + \rho^2m+1)}{2\rho^{-4}(m\rho^2 + 1)(\kappa + m\rho^2 + 1)}	
	= \frac{1}{2(m\rho^2 + 1)},	\\
	\lim_{\gamma \to \infty} Z(\gamma)
	& =  \frac{1}{2(\kappa + 2)}.
\end{align*}
By definition, this then implies that $Z(\gamma)\ge \underline{Z}$ for all $\gamma>\rho^{-1}$. 
Moreover, we have $\underline Z > 0$ because $\kappa, m, \rho > 0$. 
Therefore, we derive Lemma \ref{lem:lb_h2a}. 
\end{proof}
}

\begin{lem} \label{lemma:lb_h2}
For any given values of positive $m\ge 4$, $c> \sqrt{\frac{2(m-1)}{m(m-2)}}$ and $\rho$, 
define $\thb$ as in \eqref{eq:H_tilde_123} for $\gamma>\rho^{-1}$. 
For any $\gamma>\rho^{-1}$, we have 
$\thb \leq \ohb$, where 
\begin{align}\label{eq:H_m2_overline}
	\ohb
	\equiv
\frac{1-\tau}{1-\tau +\min\{\left[(1 - 2\tau)\kappa \underline Z - \frac{1}{2} \right] , 0\}}, 
\end{align}
and $\underline Z$ is defined as in \eqref{eq:Z_underline}. 
\end{lem}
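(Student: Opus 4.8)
The plan is to write $\widetilde H_{m,2}(\gamma; c, \rho)$ in \eqref{eq:H_tilde_123} as a monotone function of the quantity $C$ in \eqref{eq:kappa_theta_over_under_tau_C}, and then to lower-bound $C$ uniformly over $\gamma > \rho^{-1}$ using the estimate $Z(\gamma) \ge \underline Z$ already established in Lemma \ref{lem:lb_h2a}. First I would record that the hypothesis $c > \sqrt{2(m-1)/[m(m-2)]}$ forces $\tau \le \frac{1}{2}$, exactly as in the proof of Lemma \ref{lem:sign_deriv_gamma}; hence $1-\tau > 0$ and $1 - 2\tau \ge 0$. With $1-\tau>0$, the map $x \mapsto \frac{1-\tau}{1-\tau+\min\{x,0\}}$ is well defined and nonincreasing on $(-(1-\tau), \infty)$: it equals $1$ for $x\ge 0$ and increases toward $+\infty$ as $x$ falls to $-(1-\tau)$. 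Consequently, any lower bound $C \ge \underline C$ with $\underline C > -(1-\tau)$ yields $\widetilde H_{m,2}(\gamma; c, \rho) \le \frac{1-\tau}{1-\tau + \min\{\underline C, 0\}}$.

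The crux is an algebraic identity linking $C$ to $Z(\gamma) = D_1(\gamma)/D_2(\gamma)$ from Lemma \ref{lem:lb_h2a}. Writing $R \equiv \frac{\kappa^2\gamma^2\rho^{-2} - \overline\theta^2}{(\kappa\gamma^2 + \kappa\rho^{-2} + 2\overline\theta)\overline\theta}$, I would first check that the denominator of $R$ equals $D_2(\gamma)$ by expanding $(\kappa\gamma^2 + \kappa\rho^{-2} + 2\overline\theta)\overline\theta$ with $\overline\theta = m+\gamma^2$. The key computation is then the polynomial identity
\begin{equation*}
2\left(\kappa^2\gamma^2\rho^{-2} - \overline\theta^2\right) + D_2(\gamma) = 2\kappa\, D_1(\gamma),
\end{equation*}
verified by matching the coefficients of $\gamma^4$, $\gamma^2$, and the constant term (each reduces to $\kappa$ times the corresponding coefficient of $2D_1$). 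Equivalently, $2R + 1 = 2\kappa Z(\gamma)$, i.e. $R = \kappa Z(\gamma) - \frac{1}{2}$ and $\frac{2R+1}{\kappa\gamma^2} = \frac{2Z(\gamma)}{\gamma^2}$.

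Substituting these into $C = (1-2\tau)R + \frac{2R+1}{\kappa\gamma^2} - \tau$, the $\tau$ terms cancel and I obtain the transparent expression
\begin{equation*}
C = (1-2\tau)\,\kappa\, Z(\gamma) - \frac{1}{2} + \frac{2\, Z(\gamma)}{\gamma^2}.
\end{equation*}
Since $D_1(\gamma)$ and $D_2(\gamma)$ have strictly positive coefficients, $Z(\gamma) > 0$ and the last term is positive, so $C \ge (1-2\tau)\kappa Z(\gamma) - \frac{1}{2}$. Invoking $1-2\tau \ge 0$ together with $Z(\gamma) \ge \underline Z$ from Lemma \ref{lem:lb_h2a} and \eqref{eq:Z_underline} gives $C \ge (1-2\tau)\kappa\underline Z - \frac{1}{2}$; denote this lower bound by $\underline C$. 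Because $\underline C \ge -\frac{1}{2} > -(1-\tau)$ (using $\tau < \frac{1}{2}$, which follows from the strict inequality on $c$), the monotonicity step of the first paragraph applies and delivers $\widetilde H_{m,2}(\gamma; c, \rho) \le \frac{1-\tau}{1-\tau + \min\{\underline C, 0\}} = \overline H_{m,2}(c, \rho)$, which is exactly \eqref{eq:H_m2_overline}.

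The main obstacle is discovering and verifying the identity $2R+1 = 2\kappa Z(\gamma)$: it is what converts the awkward rational expression for $C$ into an affine function of $Z(\gamma)$ plus a manifestly positive remainder, thereby reducing the uniform bound over $\gamma > \rho^{-1}$ to the single inequality $Z(\gamma)\ge \underline Z$ proved earlier. The remaining work is purely bookkeeping — confirming the denominator of $R$ is $D_2(\gamma)$, and checking the domain condition $\underline C > -(1-\tau)$ so that the monotone map is evaluated where it is finite and positive.
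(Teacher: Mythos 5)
Your proposal is correct and follows essentially the same route as the paper: the identity $2R+1 = 2\kappa Z(\gamma)$ is exactly the paper's observation that $(\kappa^2\gamma^2\rho^{-2}-\overline\theta^2)+\tfrac{1}{2}D_2(\gamma)=\kappa D_1(\gamma)$, after which both arguments rewrite $C$ as $(1-2\tau)\kappa Z(\gamma)-\tfrac{1}{2}$ plus a manifestly nonnegative remainder, invoke $Z(\gamma)\ge\underline Z$ from Lemma \ref{lem:lb_h2a}, and conclude via the monotonicity of $x\mapsto\frac{1-\tau}{1-\tau+\min\{x,0\}}$. The only cosmetic difference is that you expand the remainder as $\frac{2Z(\gamma)}{\gamma^2}$ while the paper keeps it as $\frac{2}{\kappa\gamma^2}\cdot\frac{\kappa D_1(\gamma)}{D_2(\gamma)}$; these are identical.
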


\hideif{
\begin{proof}[Proof of Lemma \ref{lemma:lb_h2}]
Recall that $\thb \equiv  \frac{1-\tau}{1-\tau+\min\{C, 0\}}$ with 
\begin{align*}
	C 
	& \equiv 
	\left( 1-2\tau + \frac{2}{\kappa \gamma^2} \right) \frac{\kappa^2 \gamma^2 \rho^{-2} -\overline{\theta}^2}{(\kappa \gamma^2 + \kappa \rho^{-2}+2\overline{\theta})\overline{\theta}} +  
    \frac{1}{\kappa \gamma^2}
    - \tau
    \ \ 
    \text{ and } \ \ 
    \overline{\theta} = m+\gamma^2. 
\end{align*}
Define $D_1(\gamma)$ and $D_2(\gamma)$ as in \eqref{lem:lb_h2a}. 
We have 
\begin{align*}
    (\kappa \gamma^2 + \kappa \rho^{-2}+2\overline{\theta})\overline{\theta}
    & = 
    [(\kappa+2)\gamma^2 + 2m + \kappa \rho^{-2}](\gamma^2+m)
    \\
    & = D_2(\gamma), 
\end{align*}
and 
\begin{align}
    (\kappa^2 \gamma^2 \rho^{-2} - \overline\theta^2)
    + \frac{1}{2}(\kappa \gamma^2 + \kappa \rho^{-2} + 2\overline\theta)\overline\theta 
    & = \kappa^2 \gamma^2 \rho^{-2} + \frac{1}{2}\kappa \gamma^2 \overline\theta + \frac{1}{2}\kappa \rho^{-2} \overline\theta
    \equiv D_1(\gamma) \cdot \kappa > 0.
    \label{eq:ineq-h-c-1}
\end{align}
By the same logic as the proof of Lemma \ref{lem:sign_deriv_gamma}, 
because $c\ge \sqrt{\frac{2(m-1)}{m(m-2)}}$, we have $\tau < 1/2$.
Hence, we have
\begin{align}
	C 
	& = 
	\left( 1-2\tau + \frac{2}{\kappa \gamma^2} \right) \frac{D_1(\gamma)\kappa - \frac{1}{2}D_2(\gamma)}{D_2(\gamma)} +  
    \frac{1}{\kappa \gamma^2} 
    - \tau \notag   = \left( 1-2\tau + \frac{2}{\kappa \gamma^2} \right) 
 \frac{D_1(\gamma) \kappa}{D_2(\gamma)}
 - \frac{1}{2} \notag   \\
 & \geq (1 - 2\tau)  \frac{D_1(\gamma) \kappa}{D_2(\gamma)}
 - \frac{1}{2}
 = (1 - 2\tau)\kappa  Z(\gamma)
 - \frac{1}{2}
 \ge (1 - 2\tau)\kappa  \underline{Z} - \frac{1}{2},  \label{eq:ineq-h-c-1b}
\end{align}
where the last inequality follows from Lemma \ref{lem:lb_h2a} and that $\tau < 1/2$.
This then implies that 
\begin{align*}
    1-\tau + \min\{C, 0\} \ge 1 - \tau + \min\left\{ \left[(1 - 2\tau)\kappa \underline Z - \frac{1}{2} \right] , 0 \right\} > 0, 
\end{align*}
and consequently
\begin{equation}
	\label{eq:ineq-h-f-5}
	\thb
	=
	\frac{1-\tau}{1-\tau + \min\{C, 0\}} 
	\leq \frac{1-\tau}{1-\tau +\min\{\left[(1 - 2\tau)\kappa \underline Z - \frac{1}{2} \right] , 0\}}
	=
	\ohb.
\end{equation}
Therefore, Lemma \ref{lemma:lb_h2} holds. 
\end{proof}

}

\begin{lem}\label{lem:mon_h1}
For any given values of positive $c$, $\rho$, and $m\ge 2$, define $\oha$ as in \eqref{eq:H_m1_overline}. 
We have that $\oha$ is decreasing in $c>0$.
\end{lem}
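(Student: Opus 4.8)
The plan is to exploit the fact that $\oha$ depends on $c$ only through $\kappa \equiv \frac{mc^2}{m-1}$, which is strictly increasing in $c > 0$. Hence it suffices to show that $\oha$, viewed as a function of $\kappa$, is decreasing, and then compose with the monotone map $c \mapsto \kappa$. Since $\oha$ is defined in \eqref{eq:H_m1_overline} as the maximum of the two quantities
\[
    A(\kappa) \equiv \frac{3(m + \rho^{-2})}{(\kappa + 1)\rho^{-2} + m}
    \quad \text{and} \quad
    B(\kappa) \equiv \frac{2\kappa + 3}{\kappa + 1},
\]
and the pointwise maximum of two decreasing functions is itself decreasing, I would reduce the claim to showing that $A$ and $B$ are each decreasing in $\kappa$.

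For $A(\kappa)$, the numerator $3(m+\rho^{-2})$ is a positive constant independent of $\kappa$, while the denominator $(\kappa+1)\rho^{-2} + m$ is strictly increasing in $\kappa$ because $\rho^{-2} > 0$. A positive constant divided by a strictly increasing positive function is strictly decreasing, so $A$ is decreasing in $\kappa$.

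For $B(\kappa)$, I would rewrite
\[
    B(\kappa) = \frac{2(\kappa + 1) + 1}{\kappa + 1} = 2 + \frac{1}{\kappa + 1},
\]
which makes it immediate that $B$ is decreasing in $\kappa$, since $\frac{1}{\kappa+1}$ is. Combining the two monotonicity statements with the fact that the maximum of decreasing functions is decreasing shows that $\oha$ is decreasing in $\kappa$; composing with the strict monotonicity of $\kappa = \frac{mc^2}{m-1}$ in $c>0$ then yields that $\oha$ is decreasing in $c$. There is no substantive obstacle here: the argument is elementary, and the only points requiring mild care are the closure of decreasing functions under pointwise maximum and the preservation of (reverse) monotonicity along the chain $c \mapsto \kappa \mapsto \oha$.
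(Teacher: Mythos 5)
Your proposal is correct and follows essentially the same route as the paper's own proof: both reduce to the observation that $\kappa = \frac{mc^2}{m-1}$ is increasing in $c$, that each of the two terms inside the maximum in \eqref{eq:H_m1_overline} is decreasing in $\kappa$ (with the same rewriting $\frac{2\kappa+3}{\kappa+1} = 2 + \frac{1}{\kappa+1}$), and that the pointwise maximum of decreasing functions is decreasing. No gaps.
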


\hideif{
\begin{proof}[Proof of Lemma \ref{lem:mon_h1}]
By definition, 
\[
	\oha
	= \max
	\left\{ \frac{3(m + \rho^{-2})}{(\kappa + 1)\rho^{-2} + m},
	\frac{2\kappa + 3}{\kappa + 1}\right\},
\]
where $\kappa = \frac{m c^2}{m-1}$ is increasing in $c>0$. 
Note that both $\frac{3(m + \rho^{-2})}{(\kappa + 1)\rho^{-2} + m}$ and $\frac{2\kappa + 3}{\kappa + 1} = 2 + \frac{1}{\kappa + 1}$ are decreasing in $\kappa$. 
These then imply that $\oha$ is decreasing in $\kappa$. 
Consequently, $\oha$ is decreasing in $c>0$. 
Therefore, Lemma \ref{lem:mon_h1} holds. 
\end{proof}
}

\begin{lem} \label{lem:mon_h2}
For any given values of positive $m\ge 4$, $c> \sqrt{\frac{2(m-1)}{m(m-2)}}$ and $\rho$, define $\ohb$ as in \eqref{eq:H_m2_overline}. 
We have that $\ohb$ is nonincreasing in $c$. 
\end{lem}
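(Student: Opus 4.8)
The plan is to reduce the claim to monotonicity in $\kappa \equiv mc^2/(m-1)$, which is strictly increasing in $c>0$, so it suffices to show that $\ohb$ is nonincreasing in $\kappa$. Recalling \eqref{eq:H_m2_overline}, I would abbreviate $A \equiv (1-2\tau)\kappa\underline{Z} - \tfrac12$, so that $\ohb = (1-\tau)/\bigl(1-\tau+\min\{A,0\}\bigr)$. Under the standing hypothesis $c>\sqrt{2(m-1)/[m(m-2)]}$ one has $\tau<1/2$ (exactly as in the proof of Lemma \ref{lem:sign_deriv_gamma}), equivalently $\kappa>2/(m-2)$; since $\tau = \tfrac1m + \tfrac1{m\kappa}$, this shows that $u\equiv 1-\tau$ is positive, bounded below by $1/2$, and strictly increasing in $\kappa$. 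I would also record the identity $(1-2\tau)\kappa = [(m-2)\kappa-2]/m$, which is positive and increasing in $\kappa$ on this domain.

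The main step, and the main obstacle, is to show that $A$ is nondecreasing in $\kappa$, the difficulty being that $\underline{Z}$ itself depends on $\kappa$ through the maximum in \eqref{eq:Z_underline} and can be decreasing there. I would split according to which term attains $\max\{m\rho^2+1,\ \kappa+2\}$. On $\kappa \le m\rho^2-1$ we have $\underline{Z} = 1/[2(m\rho^2+1)]$, a constant, so $(1-2\tau)\kappa\,\underline{Z}$ is a positive multiple of $(m-2)\kappa-2$ and hence increasing. On $\kappa > m\rho^2-1$ we have $\underline{Z}=1/[2(\kappa+2)]$, so $(1-2\tau)\kappa\,\underline{Z} = [(m-2)\kappa-2]/[2m(\kappa+2)]$; differentiating the quotient $[(m-2)\kappa-2]/(\kappa+2)$ yields numerator $2(m-1)>0$, so this piece is increasing as well. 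Because $\underline{Z}$ is continuous at the crossover $\kappa=m\rho^2-1$, the two increasing pieces join continuously, and $A$ is increasing in $\kappa$ throughout the domain (with the convention that whichever region is empty is simply omitted).

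Finally I would assemble the conclusion. Since $A$ is increasing, $|w| \equiv \max\{-A,0\}$ is nonnegative and nonincreasing in $\kappa$, while $u = 1-\tau$ is positive and increasing, so $1/u$ is positive and decreasing; therefore $|w|/u$, a product of two nonnegative nonincreasing functions, is nonincreasing. The denominator $1-\tau+\min\{A,0\}$ is strictly positive: if $A\ge0$ it equals $1-\tau>1/2$, and if $A<0$ it equals $\tfrac12-\tau+(1-2\tau)\kappa\underline{Z}>0$ because $\tau<1/2$ and $(1-2\tau)\kappa\underline Z>0$. Writing $\ohb = \bigl(1-|w|/u\bigr)^{-1}$ with $1-|w|/u>0$ and nondecreasing, its reciprocal $\ohb$ is nonincreasing in $\kappa$, hence in $c$, which is exactly the claim. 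I expect the only delicate point to be the two-regime case analysis for $A$ in the second paragraph; the remaining steps are routine monotonicity bookkeeping.
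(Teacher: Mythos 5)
Your proof is correct and takes essentially the same route as the paper's: reduce to monotonicity in $\kappa$, establish that $(1-2\tau)\kappa\underline{Z}-\tfrac12$ is increasing on the domain where $\tau<\tfrac12$, and then conclude. The only differences are organizational — you verify the increase of $(1-2\tau)\kappa\underline{Z}$ by an explicit case split on which term attains the maximum in \eqref{eq:Z_underline} (the paper instead notes $\kappa\underline Z$ is a minimum of increasing functions and multiplies by the increasing positive factor $1-2\tau$), and you absorb the $\min\{A,0\}$ via the $\bigl(1-|w|/u\bigr)^{-1}$ rewriting rather than splitting on the sign of $A$ as the paper does; both are sound.
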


\hideif{
\begin{proof}[Proof of Lemma \ref{lem:mon_h2}]
Because $\kappa$ is increasing in $c>0$, as shown in the proof of Lemma \ref{lem:mon_h1}, 
it suffices to prove that $\ohb$ is nonincreasing in $\kappa$. 
By definition, 
\[
	\ohb
	= 
	\frac{1-\tau}{1-\tau +\min\{(1 - 2\tau) \kappa \underline Z - \frac{1}{2} , 0\}},
\]
where
\begin{equation*}
    \kappa = \frac{m c^2}{m-1}, \qquad 
    \tau = \frac{\kappa+1}{m\kappa}, \qquad 
	\underline Z
	=
	\min \left\{
		\frac{1}{2(m\rho^2 + 1)},  \frac{1}{2(\kappa + 2)}
	\right\}. 
\end{equation*}
Note that 
\[
	(1 - 2\tau)\kappa \underline Z - \frac{1}{2}
	=
	(1 - 2\tau)\min \left\{
		\frac{\kappa}{2(m\rho^2 + 1)},  \frac{\kappa}{2(\kappa + 2)}
	\right\}
	-\frac{1}{2}.
\]
Obviously, $\frac{\kappa}{2(m\rho^2 + 1)}$ and $\frac{\kappa}{2(\kappa + 2)}$ are increasing in $\kappa$, which implies that $\kappa \underline{Z}$ is increasing in $\kappa$. 
In addition, $\tau = \frac{\kappa+1}{m\kappa} = \frac{1}{m}+\frac{1}{m\kappa}$ is decreasing in $\kappa$. 
Thus, $(1 - 2\tau)\kappa \underline Z - \frac{1}{2}$ is increasing in $\kappa$. 
Below we consider two cases 
depending on whether there exists $\kappa_0>0$ such that $(1 - 2\tau) \kappa \underline Z - \frac{1}{2} = 0$ when evaluated at $\kappa = \kappa_0$. 

First, assume that such $\kappa_0>0$ exists. Since $(1 - 2\tau)\kappa \underline Z - \frac{1}{2}$ is increasing in $\kappa$, we can know that $(1 - 2\tau) \kappa \underline Z - \frac{1}{2} < 0$ for $\kappa < \kappa_0$ and $(1 - 2\tau) \kappa \underline Z - \frac{1}{2} > 0$ for $\kappa > \kappa_0$. Thus,
for $\kappa\ge \kappa_0$, 
\begin{equation}
	\label{eq:ohb-k0-1}
	\ohb
	= \frac{1-\tau}{1-\tau}  = 1,
\end{equation}
which is a constant over all $\kappa \geq \kappa_0$. 
For $\kappa < \kappa_0$, 
\begin{equation}
	\label{eq:ohb-k0-2}
	\ohb
	=
	\frac{1-\tau}{1-\tau +\left[(1 - 2\tau) \kappa \underline Z - \frac{1}{2} \right]} 
    = \frac{2(1-\tau)}{(1-2\tau)(1+2\kappa\underline{Z})}
    = 
    \frac{1}{1+2\kappa \underline{Z}} \left( 1 + \frac{1}{1-2\tau} \right).
\end{equation}
From the discussion before, $\kappa \underline{Z}$ is increasing in $\kappa$, and $\tau$ is decreasing in $\kappa$. 
These imply that \eqref{eq:ohb-k0-2} must be decreasing in $\kappa \in (0, \kappa_0)$. 
Therefore, $\ohb$  is nonincreasing in $\kappa$.

Second, assume that such $\kappa$ does not exist. 
Then $(1 - 2\tau) \kappa \underline Z - \frac{1}{2}$ is either positive for any $\kappa > 0$, or negative for any $\kappa > 0$. 
Below we consider the two cases, separately. 
\begin{itemize}
    \item Suppose that $(1 - 2\tau) \kappa \underline Z - \frac{1}{2} > 0$ for any $\kappa > 0$. Then $\ohb$ has the form in \eqref{eq:ohb-k0-1}. This implies that $\ohb$ is constant over all $\kappa>0$.

    \item Suppose that $(1 - 2\tau) \kappa \underline Z - \frac{1}{2} < 0$ for any $\kappa > 0$. 
    Then $\ohb$ has the form in \eqref{eq:ohb-k0-2}. For the same reason as in the first case, $\ohb$ is decreasing in $\kappa>0$. 
\end{itemize}

From the above, Lemma \ref{lem:mon_h2} holds. 
\end{proof}

}

\begin{lem} \label{lem:mon_h3}
For given values of positive $c$, $\rho$, and $m\ge 2$,
define $\thc$ as in \eqref{eq:H_tilde_123}. 
We have that $\thc$ is decreasing in $c>0$
\end{lem}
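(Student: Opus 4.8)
The plan is to exploit the fact that, among the three quantities entering $\thc = -\dfrac{\underline{\theta} - m}{\tau \underline{\theta} - 1}$ as defined in \eqref{eq:H_tilde_123}, only $\tau$ depends on $c$, and its dependence is monotone. First I would record the elementary monotonicities that follow directly from the definitions in \eqref{eq:kappa_theta_over_under_tau_C}. The quantity $\kappa = \frac{mc^2}{m-1}$ is strictly increasing in $c > 0$, and hence $\tau = \frac{\kappa+1}{m\kappa} = \frac{1}{m} + \frac{1}{m\kappa}$ is strictly decreasing in $c$. By contrast, $\underline{\theta} = m + \rho^{-2}$ does not depend on $c$ at all, and $\underline{\theta} - m = \rho^{-2} > 0$ is a fixed positive constant. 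This already isolates the only $c$-dependent piece as the denominator $\tau\underline{\theta} - 1$.

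The one point requiring care -- and the only genuine obstacle -- is controlling the sign of this denominator, since the monotonicity of a ratio is clean only once the denominator is known to keep a fixed sign throughout. Here I would use $m\tau = \frac{\kappa+1}{\kappa} > 1$, which holds because $\kappa > 0$, together with $\underline{\theta} = m + \rho^{-2} > m$, to conclude that $\tau\underline{\theta} > \tau m = \frac{\kappa+1}{\kappa} > 1$, so that $\tau\underline{\theta} - 1 > 0$ for every $c > 0$. This is the same positivity fact already exploited in the proof of Theorem \ref{thm:max_rej_prob_k_1_closed_form}, where it is noted that $-1 + m\tau = 1/\kappa > 0$.

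With positivity of the denominator established, the conclusion is immediate. Writing $\thc = -\dfrac{\rho^{-2}}{\tau\underline{\theta} - 1}$, the numerator $\rho^{-2}$ is a positive constant, while the denominator $\tau\underline{\theta} - 1$ is positive and strictly decreasing in $c$: indeed the map $\tau \mapsto \tau\underline{\theta} - 1$ is affine with positive slope $\underline{\theta} > 0$, hence increasing in $\tau$, and $\tau$ is itself strictly decreasing in $c$. Therefore $\dfrac{\rho^{-2}}{\tau\underline{\theta} - 1}$ is strictly increasing in $c$, and $\thc$, being its negative, is strictly decreasing in $c$, which proves the lemma.
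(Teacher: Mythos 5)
Your proof is correct and follows essentially the same route as the paper's: both reduce the problem to the monotonicity of $\tau$ in $c$ (via $\kappa$ increasing in $c$) while observing that $\underline{\theta}=m+\rho^{-2}$ and the numerator $\underline{\theta}-m=\rho^{-2}>0$ are free of $c$, the only difference being that the paper computes $\frac{\partial}{\partial\kappa}\thc$ directly whereas you argue via the sign and monotonicity of the denominator $\tau\underline{\theta}-1$. Your explicit verification that $\tau\underline{\theta}-1>0$ is a welcome bit of extra care (it rules out the denominator crossing zero, which the paper's derivative computation implicitly relies on), but it does not change the substance of the argument.
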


\hideif{
\begin{proof}[Proof of Lemma \ref{lem:mon_h3}]
Because $\kappa$ is increasing in $c>0$, as shown in the proof of Lemma \ref{lem:mon_h1}, 
it suffices to prove that $\thc$ is decreasing in $\kappa$.
By definition, 
\begin{align*}
    \frac{\partial}{\partial \kappa} \thc \equiv - (\underline{\theta} - m) \cdot \frac{\partial}{\partial \kappa} \left( \frac{1}{\tau \underline{\theta} - 1} \right)
    = 
    (\underline{\theta} - m) \cdot \frac{\underline{\theta}}{(\tau \underline{\theta} - 1)^2} \frac{\partial \tau}{\partial \kappa}, 
\end{align*}
where $\underline \theta = m + \rho^{-2}$ and $\tau = \frac{\kappa+1}{m\kappa}$. 
Because $\underline\theta - m = \rho^{-2} > 0$, $\underline\theta > 0$, and $\tau = \frac{1}{m}+\frac{1}{m\kappa}$ is decreasing in $\kappa$, 
we can know that $\frac{\partial}{\partial \kappa} \thc < 0$. 
Consequently,  $\thc$ is decreasing in $\kappa$. 
Therefore, Lemma \ref{lem:mon_h3} holds. 
\end{proof}
}

\begin{proof}[Proof of Theorem \ref{thm:max_rej_prob_k_1_simple_closed_form}]
Define $\thh$ as in Theorem \ref{thm:max_rej_prob_k_1_closed_form}, $\tha$, $\thb$ and $\thc$ as in \eqref{eq:H_tilde_123}, $\oha$ as in Lemma \ref{lemma:lb_h1}, and $\ohb$ as in Lemma \ref{lemma:lb_h2}. 
By definition and from Lemmas \ref{lemma:lb_h1} and \ref{lemma:lb_h2}, for any $c> \sqrt{\frac{2(m-1)}{m(m-2)}}$, 
\begin{align}\label{eq:bound_H_tilde_gamma}
    \thh
    & =
    \tha
    + \thb
    + \thc
    - 1
    \nonumber
    \\
    & \le \oha
	+ \ohb
	+ \thc
	- 1. 
\end{align}
In addition, by definition, 
\begin{align}\label{eq:H_overline}
    & \quad \ \oha
	+ \ohb
	+ \thc
	- 1
    \nonumber
    \\
    & = 
    \max\left\{
    \frac{3(m + \rho^{-2})}{(\kappa + 1)\rho^{-2} + m}, 
    \frac{2\kappa + 3}{\kappa + 1}
    \right\}
    + 
    \frac{1-\tau}{1-\tau +\min\{(1 - 2\tau)\kappa \underline Z - \frac{1}{2} , 0\}}
    - \frac{\underline{\theta} - m}{\tau \underline{\theta} - 1}
    - 1
    \nonumber
    \\
    & = 
    \max\left\{
    \frac{3(m\rho^2 + 1)}{m\rho^{2} + \kappa + 1}, 
    \frac{2\kappa + 3}{\kappa + 1}
    \right\}
    + 
    \frac{1-\tau}{1-\tau +\min\{(1 - 2\tau)\kappa \underline Z - \frac{1}{2}, 0\}}
    - \frac{\rho^{-2}}{\tau m + \tau \rho^{-2}  - 1}
    - 1
    \nonumber
    \\
    & = 
    \max\left\{
    \frac{3(m\rho^2 + 1)}{m\rho^{2} + \kappa + 1}, 
    \frac{2\kappa + 3}{\kappa + 1}
    \right\}
    + 
    \frac{1-\tau}{1-\tau +\min\{(1 - 2\tau)\kappa \underline Z - \frac{1}{2}, 0\}}
    - \frac{m\kappa}{m\rho^{2} + \kappa+1}
    - 1
    \nonumber
    \\
    & = \ohh, 
\end{align}
where $\kappa$, $\tau$, $\underline{\theta}$ is defined as in \eqref{eq:kappa_theta_over_under_tau_C}, 
$\underline{Z}$ is defined as in \eqref{eq:Z_underline}, 
the first and last equalities follow by definition, 
and  
the second and third equalities follow by some algebra.

First, from Lemmas \ref{lem:mon_h1}--\ref{lem:mon_h3}, 
$\ohh = \oha
	+ \ohb
	+ \thc
	- 1$
is decreasing in $c$. 
Moreover, as $c\longrightarrow \infty$, 
we have $\kappa = mc^2/(m-1) \longrightarrow \infty$, 
$\tau = (\kappa+1)/(m\kappa) \longrightarrow 1/m$, 
\begin{align*}
    \kappa \underline{Z} & = \frac{\kappa}{2 \cdot \max\{m\rho^2 + 1, \kappa + 2\}}
    \longrightarrow \frac{1}{2}, 
\end{align*}
and consequently 
\begin{align}\label{eq:limit_H_overline}
    & \hspace{-10pt} \ohh 
    \notag \\ 
    & = 
    \max\left\{
    \frac{3(m\rho^2 + 1)}{m\rho^{2} + \kappa + 1}, 
    \frac{2\kappa + 3}{\kappa + 1}
    \right\}
    + 
    \frac{1-\tau}{1-\tau +\min\{(1 - 2\tau)\kappa \underline Z - \frac{1}{2}, 0\}}
    - \frac{m\kappa}{m\rho^{2} + \kappa+1}
    - 1
    \nonumber
    \\
    & 
    \longrightarrow 
    \max\left\{
    0, 
    2
    \right\}
    + 
    \frac{1-m^{-1}}{1-m^{-1} +\min\{(1 - 2m^{-1}) \frac{1}{2} - \frac{1}{2}, 0\}}
    - m
    - 1
    \nonumber
    \\
    & = 1 - m + \frac{1-m^{-1}}{1-m^{-1} - m^{-1}}
    = 1 - m + \frac{m-1}{m-2}
    = \frac{(1-m)(m-3)}{m-2} < 0, 
\end{align}
where the last inequality holds because $m\ge 4$.

Second, by definition and from \eqref{eq:limit_H_overline}, we know that $\underline{c}_{m, \rho} = \inf\{c > \sqrt{\frac{3(m-1)}{m(m-3)}}: \ohh \le 0 \}$ must be finite.
Because $\ohh$ is decreasing in $c$, we must have $\ohh < 0$ for any $c> \underline{c}_{m, \rho}$. 
From \eqref{eq:bound_H_tilde_gamma} and \eqref{eq:H_overline}, we then have, for any given $m\ge 4$, $\rho>0$,  $c> \underline{c}_{m, \rho} \ge \sqrt{\frac{3(m-1)}{m(m-3)}} > \sqrt{\frac{2(m-1)}{m(m-2)}}$,  
\begin{align*}
    \thh \le \ohh < 0 \ \ \text{ for all } \gamma > \rho^{-1}. 
\end{align*}
From Theorem \ref{thm:max_rej_prob_k_1_closed_form}, this implies that, for any given $m\ge 4$, $\rho>0$,  $c> \underline{c}_{m, \rho}$, the maximum rejection probability $p_m(c; 1, \rho)$ under Assumption \ref{assu:relative_heter} with $k=1$ and the given $\rho$, $m$, and $c$ has the following equivalent form:
\begin{align*}
    p_m(c; 1, \rho) = 
    \bP \left[ |t_{m-1}|\sqrt{\rho^2+\frac{1}{m}} > c \right].
\end{align*}

Third, we consider the case when $c = \underline{c}_{m, \rho}$. 
From Lemma \ref{lem:first_second_deriv_gamma}, there exist $\{\sigma_1, \ldots, \sigma_{m+1}\} \in \mathcal{S}_m(1, \rho)$ such that 
$\bP_0[|T_m| > \underline{c}_{m, \rho}] = p_m(\underline{c}_{m, \rho}; 1, \rho)$. 
By the right continuity of distribution functions, we can know 
\begin{align}\label{eq:max_rej_c_m_rho}
    p_m(\underline{c}_{m, \rho}; 1, \rho)
    & = 
    \bP_0[|T_m| > \underline{c}_{m, \rho}]
    \notag \\
    & = 
    \lim_{\epsilon \rightarrow 0+} \bP_0[|T_m| > \underline{c}_{m, \rho}+\epsilon]
   \notag  \\
    & \le 
    \lim_{\epsilon \rightarrow 0+} 
    \bP \left[ |t_{m-1}|\sqrt{\rho^2+\frac{1}{m}} > \underline{c}_{m, \rho}+\epsilon \right]
    \nonumber
    \\
    & = \bP \left[ |t_{m-1}|\sqrt{\rho^2+\frac{1}{m}} > \underline{c}_{m, \rho}\right],
\end{align}
where the last inequality follows from the second part. 
Moreover, the rejection probability can reach the upper bound on the right-hand side of \eqref{eq:max_rej_c_m_rho} when $\sigma_{m+1} = \rho$ and $\sigma_1 = \ldots = \sigma_m = 1$. 
Thus, \eqref{eq:max_rej_c_m_rho} must hold with equality. 

From the above, Theorem \ref{thm:max_rej_prob_k_1_simple_closed_form} holds. 
\end{proof}

\subsection{Proof of Theorem \ref{thm:power}}

\begin{proof}[Proof of Theorem \ref{thm:power}]
Because $c>0$, we obviously have $\bP[|T_m|>c] \ge \bP[T_m>c] = 1 - \bP[T_m \le c]$. 
Note that 
\begin{align*}
    T_m \le c 
    & \quad \Longleftrightarrow \quad
    (\psi_{m+1} - \delta) 
    + \delta  - \overline{\psi} \le cS_m \\
    & \quad \Longleftrightarrow \quad
    cS_m + \overline{\psi} - (\psi_{m+1} - \delta) \ge \delta\\
    &  \quad\Longrightarrow  \quad
    \{ cS_m + \overline{\psi} - (\psi_{m+1} - \delta) \}^2 \ge \delta^2, 
\end{align*}
where the last step holds because $\delta>0$. 
By the Markov inequality, we then have 
\begin{align*}
    \bP[T_m \le c ] \le 
    \bP[
    \{ cS_m + \overline{\psi} - (\psi_{m+1} - \delta) \}^2 \ge \delta^2    
    ]
    \le \delta^{-2} 
    \bE [
    \{ cS_m + \overline{\psi} - (\psi_{m+1} - \delta) \}^2
    ]. 
\end{align*}
Under Assumption \ref{assu:normal}, 
we have 
\begin{align*}
    \bE [
    \{ cS_m + \overline{\psi} - (\psi_{m+1} - \delta) \}^2
    ]
    & = \bE [
    ( cS_m + \overline{\psi} )^2
    ]
    + 
    \bE[ (\psi_{m+1} - \delta)^2 ]
    \\
    & \le 2 \bE [
    c^2 S_m^2
    ]
    + 2 \bE [
    \overline{\psi}^2
    ]
    + 
    \bE[ (\psi_{m+1} - \delta)^2 ],
\end{align*}
where the first equality holds because $\psi_{m+1} - \delta\sim \mathcal{N}(0, \sigma^2_{m+1})$ and is independent of $\{\psi_j\}_{j=1}^m$, 
and the second equality follows from the Cauchy–Schwarz inequality.  
By some algebra, 
$\bE[ (\psi_{m+1} - \delta)^2 ] = \Var(\psi_{m+1}) = \sigma^2_{m+1}$, 
$\bE [
    \overline{\psi}^2
    ]
    =
\Var(\overline{\psi}) = \frac{1}{m^2} \sum_{j=1}^m \sigma_j^2, 
    $
\begin{align*}
    \bE [
    S_m^2
    ]
    & = \frac{1}{m-1} \sum_{j=1}^m \bE[\psi_j^2] - \frac{m}{m-1} \bE[ \overline{\psi}^2 ]
    = \frac{1}{m-1} \sum_{j=1}^m \sigma_j^2 - \frac{m}{m-1} \frac{1}{m^2} \sum_{j=1}^m \sigma_j^2
    = \frac{1}{m} \sum_{j=1}^m \sigma_j^2. 
\end{align*}
These then imply that 
\begin{align*}
    \bE [
    \{ cS_m + \overline{\psi} - (\psi_{m+1} - \delta) \}^2
    ]
    & \le 2 \bE [
    c^2 S_m^2
    ]
    + 2 \bE [
    \overline{\psi}^2
    ]
    + 
    \bE[ (\psi_{m+1} - \delta)^2 ]\\
    & = 2c^2  \frac{1}{m} \sum_{j=1}^m \sigma_j^2
    + 
    \frac{2}{m^2} \sum_{j=1}^m \sigma_j^2 + \sigma^2_{m+1}
    \\
    & = 
    \sigma^2_{m+1} + \frac{2(c^2 + m^{-1})}{m} \sum_{j=1}^m \sigma_j^2.  
\end{align*}
From the above, we have 
\begin{align*}
    \bP[T_m \le c ] 
    \le \delta^{-2} 
    \bE [
    \{ cS_m + \overline{\psi} - (\psi_{m+1} - \delta) \}^2
    ]
    \le 
    \frac{1}{\delta^2} \left[ \sigma^2_{m+1} + \frac{2(c^2 + m^{-1})}{m} \sum_{j=1}^m \sigma_j^2 \right], 
\end{align*}
and consequently 
\begin{align*}
    \bP[|T_m|>c] \ge \bP[T_m>c] = 1 - \bP[T_m \le c]
    \ge 
    1 - \frac{1}{\delta^2} \left[ \sigma^2_{m+1} + \frac{2(c^2 + m^{-1})}{m} \sum_{j=1}^m \sigma_j^2 \right]. 
\end{align*}
Therefore, Theorem \ref{thm:power} holds. 
\end{proof}

\begin{proof}[Comment on the number of one-dimensional optimizations needed]
We first count the number of one-dimensional optimizations needed at a given $k$. 
From Theorem \ref{thm:max_rej_prob}, we need to consider the one-dimensional optimization in \eqref{eq:max_rej_m1_m0} for $0\le m_0 \le k-1$ and $0\le m_1 \le m-m_0-1$; note that when $m_1 = m-m_0$ no optimization is needed. Thus, the number of  one-dimensional optimization needed is 
\begin{align*}
    \sum_{m_0 = 0}^{k-1} (m-m_0) 
    & = \sum_{j=m-k+1}^{m} j = \frac{1}{2}k(2m+1-k). 
\end{align*}

We then count the total number of one-dimensional optimizations needed to compute $p_m(c; k, \rho)$ for all $1 \le k \le m$. For each combination of $(m_1, m_0, m_{\text{c}})$ satisfying $m_1 \ge 0$, $m_0 \ge 0$, $m_{\text{c}} \ge 1$, and $m_1 + m_0 + m_{\text{c}} = m$, we solve the one-dimensional optimization in \eqref{eq:max_rej_m1_m0} at most twice. Intuitively, $m_1$, $m_0$, and $m_{\text{c}}$ correspond to the numbers of ${\gamma_j}_{j=1}^m$ that are equal to $\rho^{-1}$, $0$, and a common value $\gamma$, respectively. The optimization is solved at most twice because $\gamma$ may vary from $0$ or $\rho^{-1}$ to infinity. By some algebra, the total number of one-dimensional optimizations needed is at most 
\begin{align*}
    2 \cdot \binom{m+1}{2} = m(m+1).
\end{align*}
\end{proof}

\subsection{Proof of Theorem \ref{thm:simu_ci_rho}}

\begin{proof}[Proof of Theorem \ref{thm:simu_ci_rho}]
By the definition of $p_m(c; k, \rho)$ in \eqref{eq:max_rej_prob}, we know that it is nondecreasing in $\rho$. Thus, the set $\mathcal{I}_{m, \alpha, k}$ in \eqref{eq:interval_rho_k} must be an one-sided interval. 
To prove Theorem \ref{thm:simu_ci_rho}, it suffices to prove the simultaneous validity of the confidence intervals $\mathcal{I}_{m, \alpha, k}$s as stated in (ii). 
Let $G_0(c) = \bP_0[|T_m|>c]$ denote the true tail probability of $|T_m|$ at the true values of $\{\sigma_j\}_{j=1}^{m+1}$. 

Suppose that $\rho^\star_k \notin \mathcal{I}_{m, \alpha, k}$ for some $k\in \{1,2,\ldots, m\}$. 
By definition, we then have $p_m(|T_m|; k, \rho^\star_k) \le \alpha$ and $\rho^\star_k < \infty$. 
By the definition of $\rho^\star_k$ in \eqref{eq:rho_star}, 
we must have $\sigma_{m+1} \le \rho_k^\star \sigma_{(k)}$
for the true standard deviation of the treated cluster and that of the control cluster at rank $k$. 
By the definition in \eqref{eq:max_rej_prob}, this then implies that  
$
p_m(c; k, \rho^\star_k) \ge G_0(c) 
$
for any $c\ge 0$. 
Consequently, we must have 
$\alpha \ge p_m(|T_m|; k, \rho^\star_k) \ge G_0(|T_m|)$. 

From the discussion before, we then have 
\begin{align*}
    \bP\left[ \rho^\star_k \notin \mathcal{I}_{m, \alpha, k} \text{ for some } 1\le k \le m \right] 
    \le 
    \bP\left[ G_0(|T_m|) \le \alpha \right] \le \alpha,  
\end{align*}
where the last inequality holds because $G_0(\cdot)$ is the tail probability function of $|T_m|$ (see, e.g., Lemma A4 in \citet{wuli2025jasa} for a proof). 
This then implies that  
\begin{align*}
    \bP\left[ \rho^\star_k \in \mathcal{I}_{m, \alpha, k} \text{ for all } 1\le k \le m \right]
    & = 1 - \bP\left[ \rho^\star_k \notin \mathcal{I}_{m, \alpha, k} \text{ for some } 1\le k \le m \right] 
    \ge 1-\alpha,  
\end{align*}
Therefore, we derive Theorem \ref{thm:simu_ci_rho}. 
\end{proof}

\subsection{Proof of Theorem \ref{thm:limit_T_m_large}}

\begin{proof}[Proof of Theorem \ref{thm:limit_T_m_large}]
From Condition \ref{cond:large_m}, as $m\longrightarrow \infty$,  
\begin{align*}
    \Var(\overline{\psi}) & = \frac{1}{m^2} \sum_{j=1}^m \sigma^2_j = \frac{1}{m} \cdot \frac{1}{m} \sum_{j=1}^m \sigma^2_j \longrightarrow 0, 
\end{align*}
and 
\begin{align*}
    \Var\left( \frac{1}{m} \sum_{j=1}^m \psi_j^2 \right)
    & = \frac{1}{m^2} \sum_{j=1}^m \Var (\psi_j^2) = \frac{1}{m^2} \sum_{j=1}^m (2 \sigma^4) = \frac{2}{m^2} \sum_{j=1}^m \sigma^4 \longrightarrow 0, 
\end{align*}
where the second last equality uses the fact that the variance of a chi-squared random variable with degree of freedom 1 is 2. 
By Chebyshev's inequality, these then imply that 
\begin{align*}
    \overline{\psi} = o_{\bP}(1), 
    \quad 
    \frac{1}{m} \sum_{j=1}^m \psi_j^2 = \bE \left[ \frac{1}{m} \sum_{j=1}^m \psi_j^2 \right] + o_{\bP}(1) =  \frac{1}{m} \sum_{j=1}^m \sigma_j^2+ o_{\bP}(1). 
\end{align*}
and consequently
\begin{align*}
    S_m^2 = \frac{m}{m-1} \frac{1}{m} \sum_{j=1}^m \psi_j^2 - \frac{m}{m-1} \overline{\psi}^2 
    = \frac{1}{m} \sum_{j=1}^m \sigma_j^2+ o_{\bP}(1). 
\end{align*}
These further imply that 
\begin{align*}
    T_m = \frac{\psi_{m+1} - o_{\bP}(1)}{\sqrt{m^{-1}\sum_{j=1}^m \sigma_j^2+o_{\bP}(1)}} = 
    \frac{\psi_{m+1}}{\sqrt{m^{-1}\sum_{j=1}^m \sigma_j^2}} + o_{\bP}(1), 
\end{align*}
where the last equality holds under Condition \ref{cond:large_m}. 
Let $\varepsilon \equiv \frac{\psi_{m+1} - \delta}{\sigma_{m+1}} \sim \mathcal{N}(0,1)$. Note that 
\begin{align*}
    \frac{\psi_{m+1}}{\sqrt{m^{-1}\sum_{j=1}^m \sigma_j^2}} 
    = 
    \frac{\sigma_{m+1} \varepsilon + \delta}{\sqrt{m^{-1}\sum_{j=1}^m \sigma_j^2}}
    = \frac{\sigma_{m+1} \varepsilon}{\sqrt{m^{-1}\sum_{j=1}^m \sigma_j^2}}
    + 
    \frac{\delta}{\sqrt{m^{-1}\sum_{j=1}^m \sigma_j^2}}.
\end{align*}
Using \citet[Lemma A27][]{wang2022rerandomization}, we can then derive that, as $m\longrightarrow \infty$, 
\begin{align*}
    \sup_{c\in \mathbb{R}}\left| \bP[T_m \le c] - 
    \bP\Bigg[ \frac{\sigma_{m+1} \varepsilon + \delta}{\sqrt{m^{-1}\sum_{j=1}^m \sigma_j^2}} \le c \Bigg] \right|
    \longrightarrow 0.
\end{align*}
Therefore, Theorem \ref{thm:limit_T_m_large} holds. 

\end{proof}

\section{Supplemental results for Section \ref{sec:sims}} \label{sec:app:sims}

\subsection{Supplemental results for Section \ref{sec:sims-1}} \label{sec:app:sims-1}
This section reports additional results from simulations on the first simulation design with $k = 1$ and $k = 2$ at various significance levels.

\subsubsection{Results for $k = 1$}
This subsection reports the simulation results for $k = 1$ that are not contained in the main paper. Figures \ref{fig:sim1-rho-a01} and \ref{fig:sim1-rho-a10} report the results at the 1\% and 10\% levels respectively for \textbf{DGP 1}. Figures \ref{fig:sim1-dgp2-rho-a01} and \ref{fig:sim1-dgp2-rho-a10} report the results at the 1\% and 10\% levels respectively for \textbf{DGP 2}.

\begin{figure}
    \centering
    \includegraphics[scale=1]{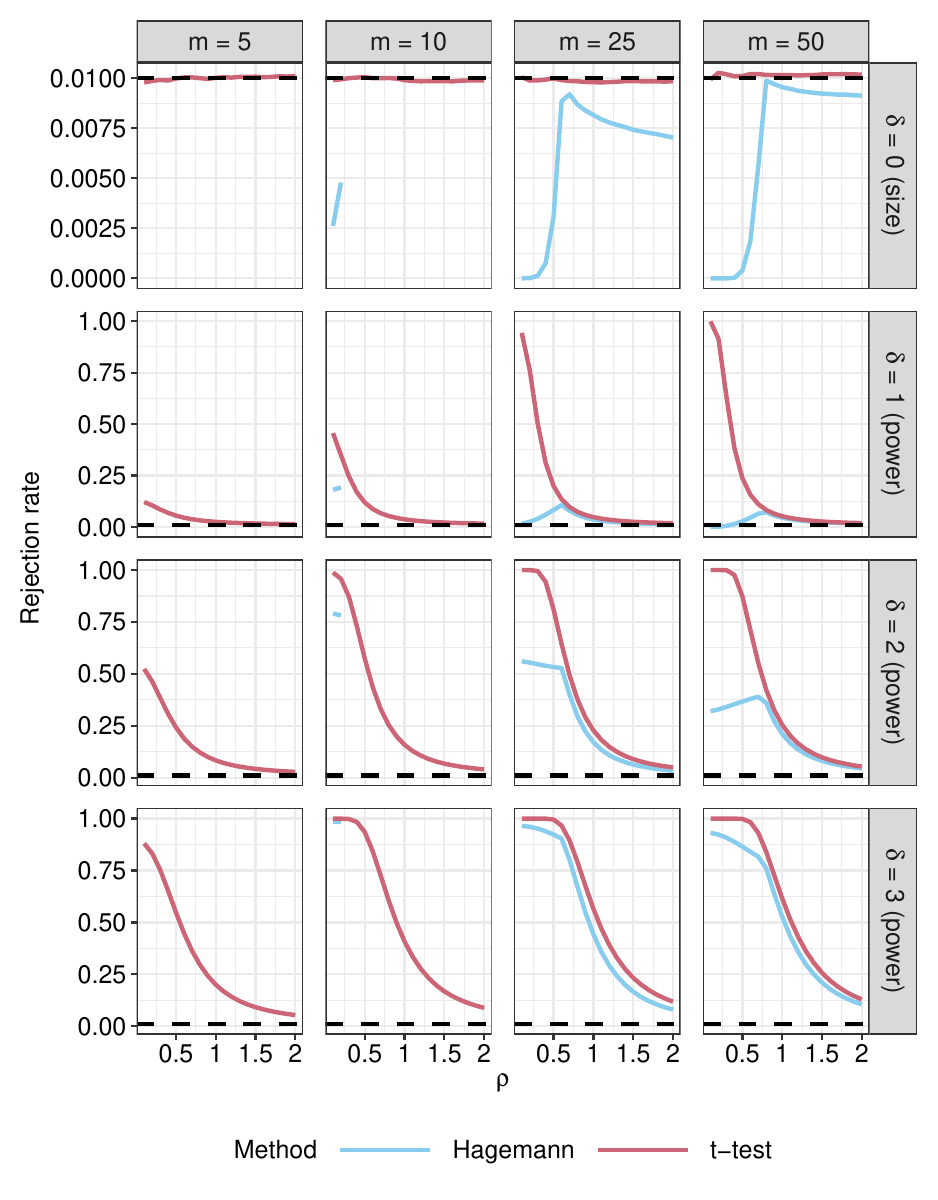}
    \caption{Probability of rejection against heterogeneity parameter $\rho$ for various cluster size $m$ and alternatives $\delta$ at $\alpha = 0.01$ for \textbf{DGP 1} of simulation design 1 with $k = 1$.}
    \label{fig:sim1-rho-a01}
\end{figure}

\begin{figure}
    \centering
    \includegraphics[scale=1]{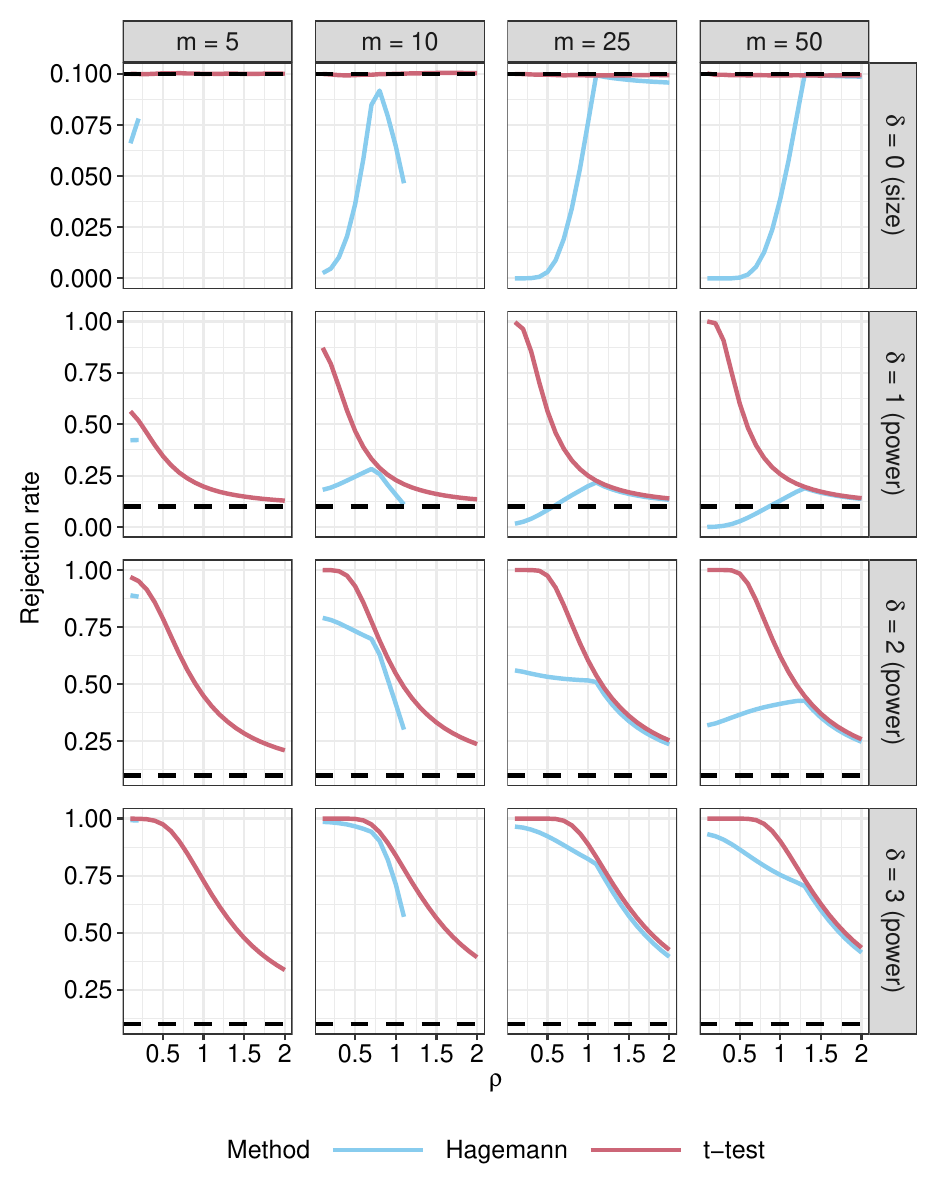}
    \caption{Probability of rejection against heterogeneity parameter $\rho$ for various cluster size $m$ and alternatives $\delta$ at $\alpha = 0.1$ for \textbf{DGP 1} of simulation design 1 with $k = 1$.}
    \label{fig:sim1-rho-a10}
\end{figure}

\begin{figure}
    \centering
    \includegraphics[scale=1]{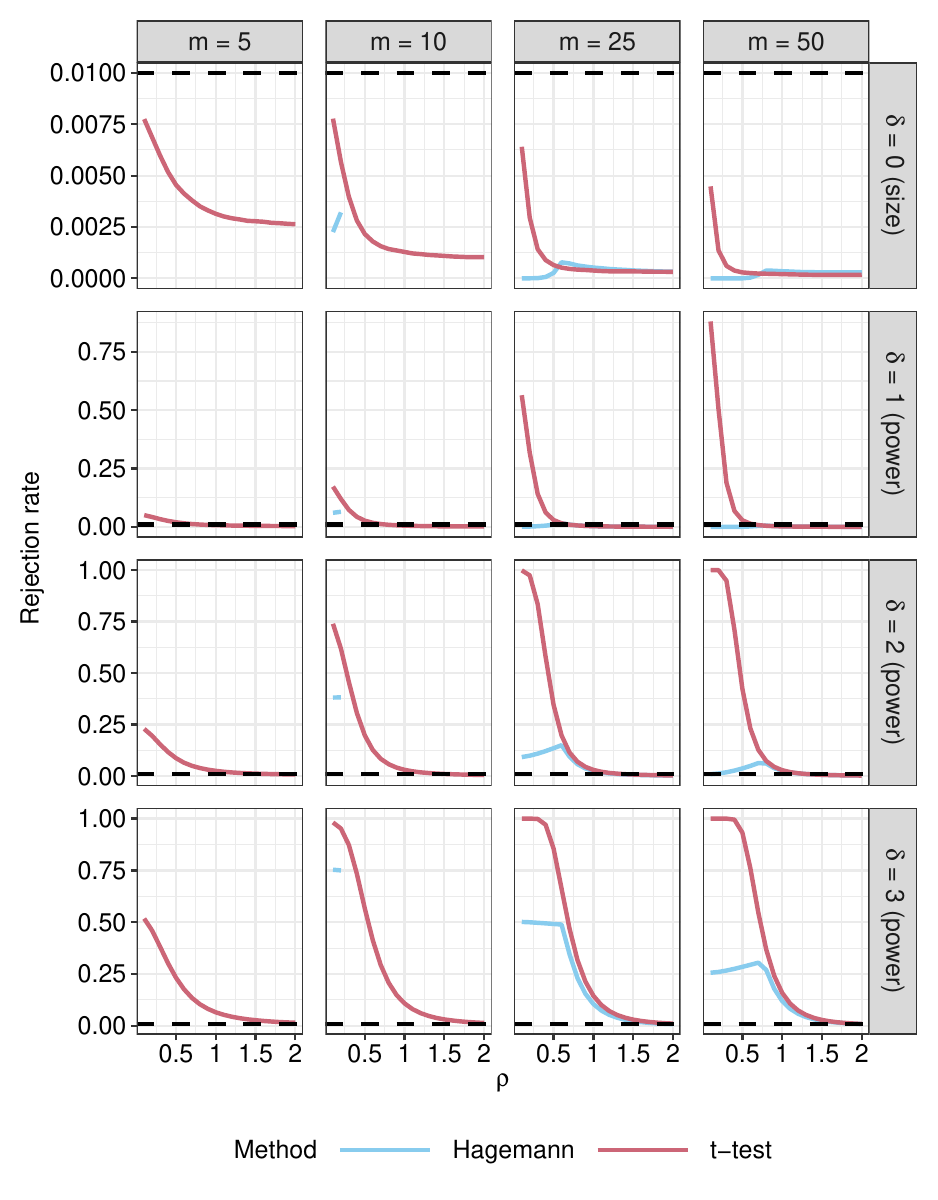}
     \caption{Probability of rejection against heterogeneity parameter $\rho$ for various cluster size $m$ and alternatives $\cp$ at $\alpha = 0.01$ for \textbf{DGP 2} of simulation design 1 with $k = 1$.}
    \label{fig:sim1-dgp2-rho-a01}
\end{figure}

\begin{figure}
    \centering
    \includegraphics[scale=1]{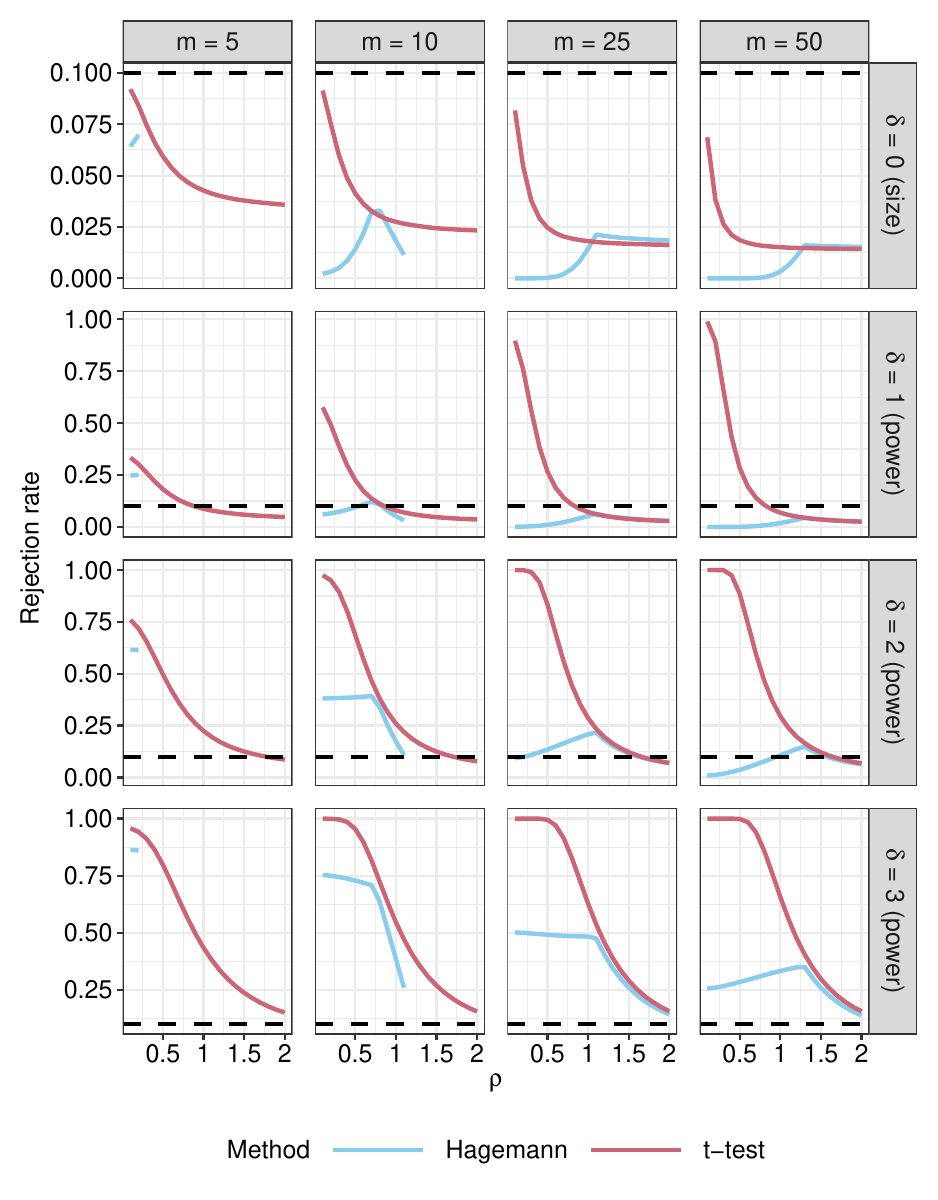}
    \caption{Probability of rejection against heterogeneity parameter $\rho$ for various cluster size $m$ and alternatives $\cp$ at $\alpha = 0.1$ for \textbf{DGP 2} of simulation design 1 with $k = 1$.}
    \label{fig:sim1-dgp2-rho-a10}
\end{figure}

\subsubsection{Results for $k = 2$}
This subsection reports the results for $k = 2$. Figures \ref{fig:sim1-dgp1-k2-rho-a01} to \ref{fig:sim1-dgp1-k2-rho-a10} report the results at the 1\%, 5\% and 10\% levels respectively for \textbf{DGP 1}. Figures \ref{fig:sim1-dgp2-k2-rho-a01} to \ref{fig:sim1-dgp2-k2-rho-a10} report the results at the 1\%, 5\% and 10\% levels respectively for \textbf{DGP 2}. The figures show that the $t$-test continue to perform favorably in other significance levels.

\begin{figure}
    \centering
    \includegraphics[scale=1]{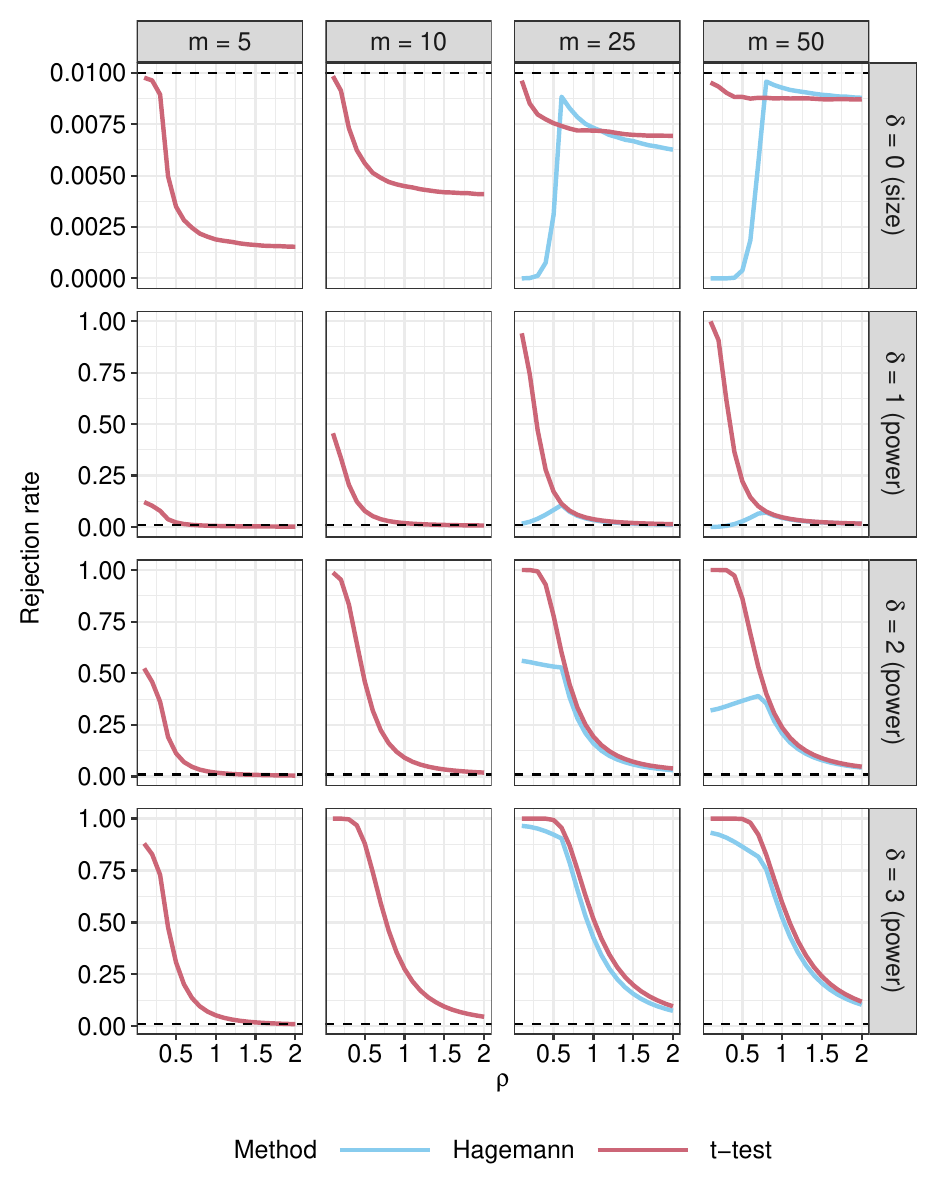}
    \caption{Probability of rejection against heterogeneity parameter $\rho$ for various cluster size $m$ and alternatives $\delta$ at $\alpha = 0.01$ for \textbf{DGP 1} of simulation design 1 with $k = 2$.}
    \label{fig:sim1-dgp1-k2-rho-a01}
\end{figure}

\begin{figure}
    \centering
    \includegraphics[scale=1]{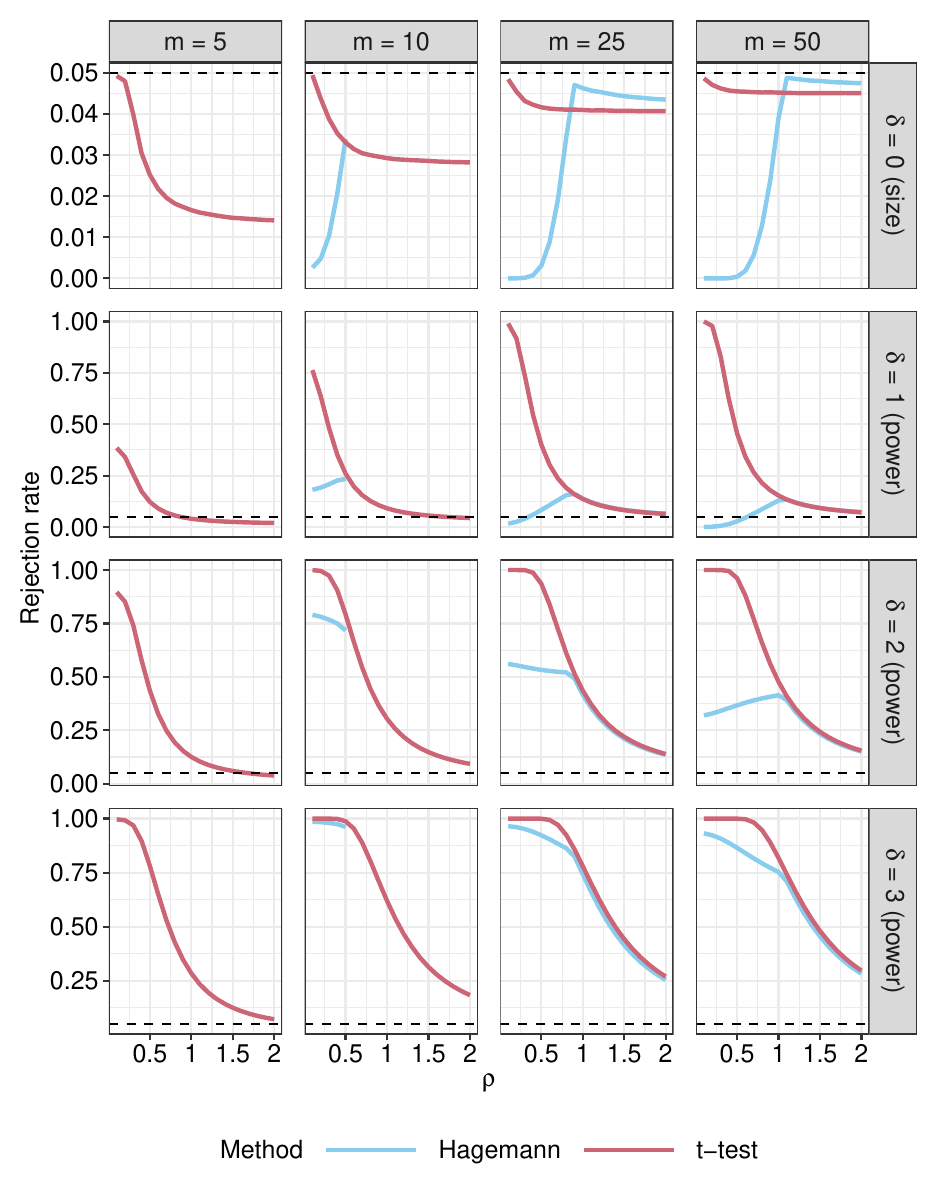}
    \caption{Probability of rejection against heterogeneity parameter $\rho$ for various cluster size $m$ and alternatives $\delta$ at $\alpha = 0.05$ for \textbf{DGP 1} of simulation design 1 with $k = 2$.}
    \label{fig:sim1-dgp1-k2-rho-a05}
\end{figure}

\begin{figure}
    \centering
    \includegraphics[scale=1]{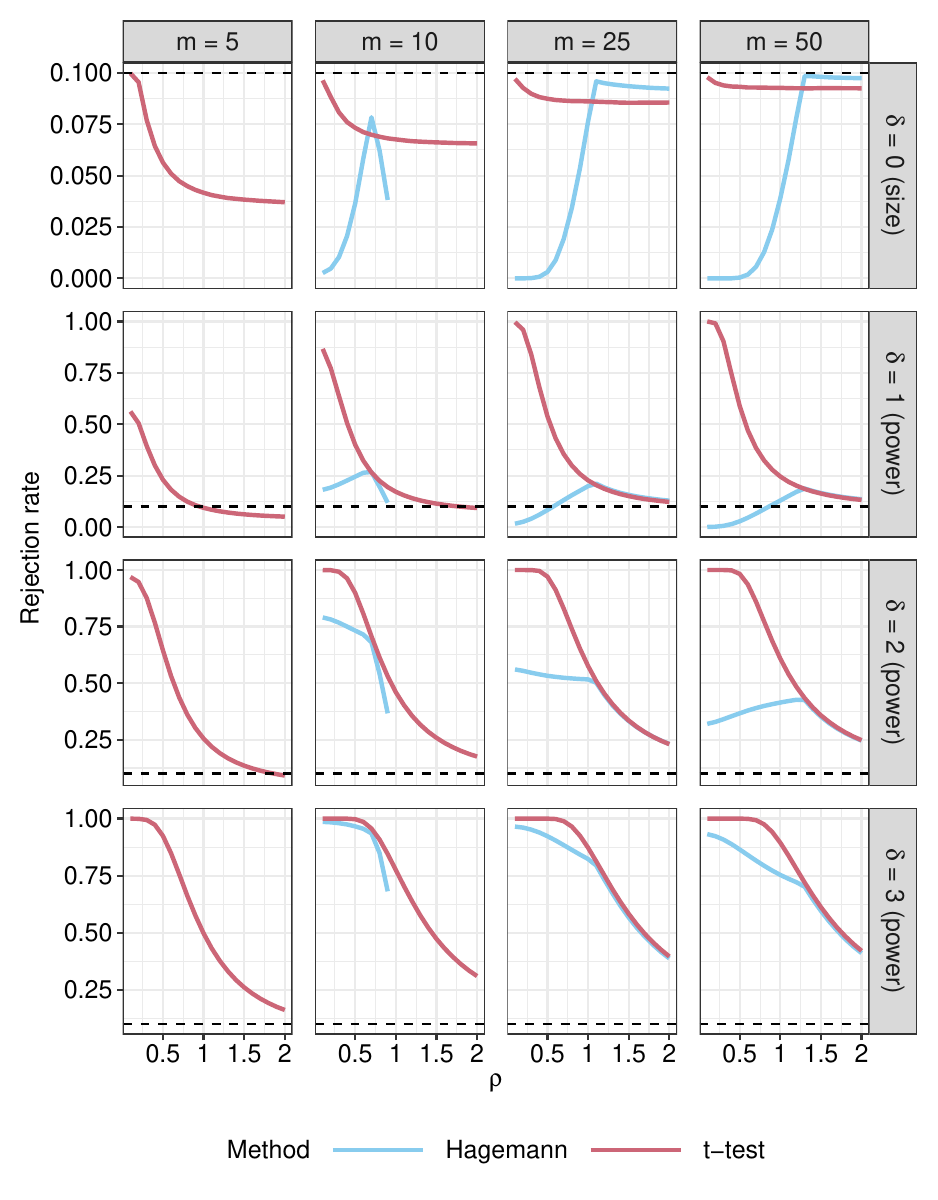}
    \caption{Probability of rejection against heterogeneity parameter $\rho$ for various cluster size $m$ and alternatives $\delta$ at $\alpha = 0.1$ for \textbf{DGP 1} of simulation design 1 with $k = 2$.}
    \label{fig:sim1-dgp1-k2-rho-a10}
\end{figure}

\begin{figure}
    \centering
    \includegraphics[scale=1]{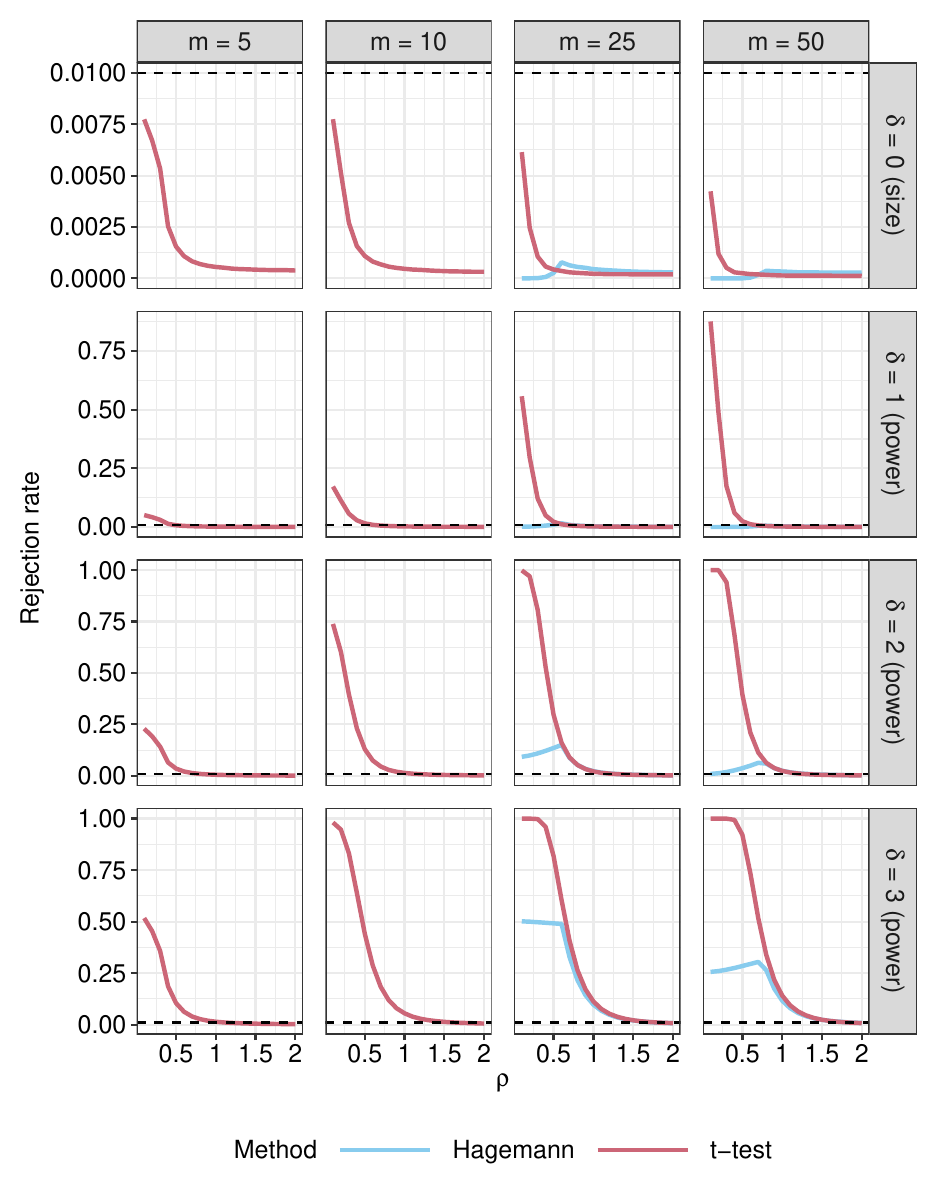}
    \caption{Probability of rejection against heterogeneity parameter $\rho$ for various cluster size $m$ and alternatives $\delta$ at $\alpha = 0.01$ for \textbf{DGP 2} of simulation design 1 with $k = 2$.}
    \label{fig:sim1-dgp2-k2-rho-a01}
\end{figure}

\begin{figure}
    \centering
    \includegraphics[scale=1]{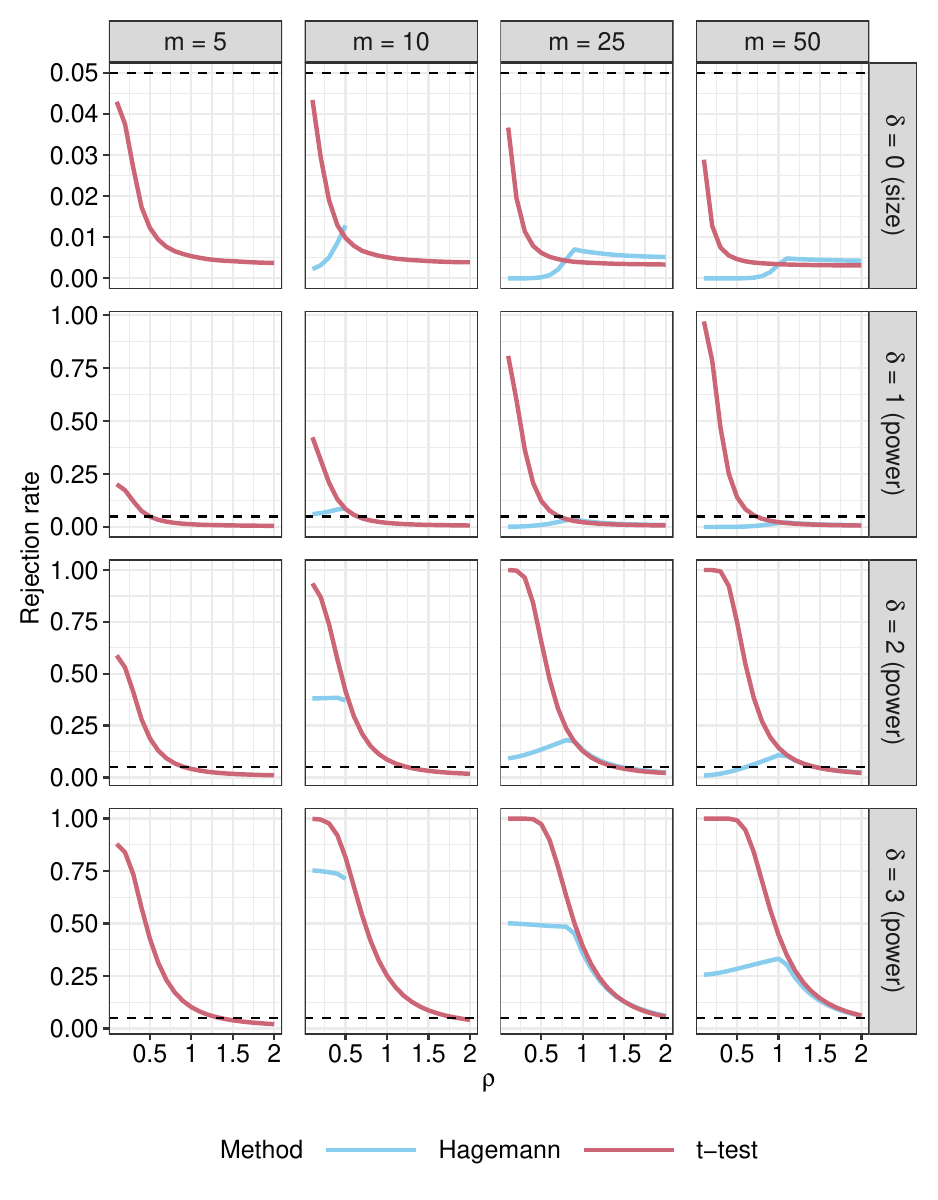}
    \caption{Probability of rejection against heterogeneity parameter $\rho$ for various cluster size $m$ and alternatives $\delta$ at $\alpha = 0.05$ for \textbf{DGP 2} of simulation design 1 with $k = 2$.}
    \label{fig:sim1-dgp2-k2-rho-a05}
\end{figure}

\begin{figure}
    \centering
    \includegraphics[scale=1]{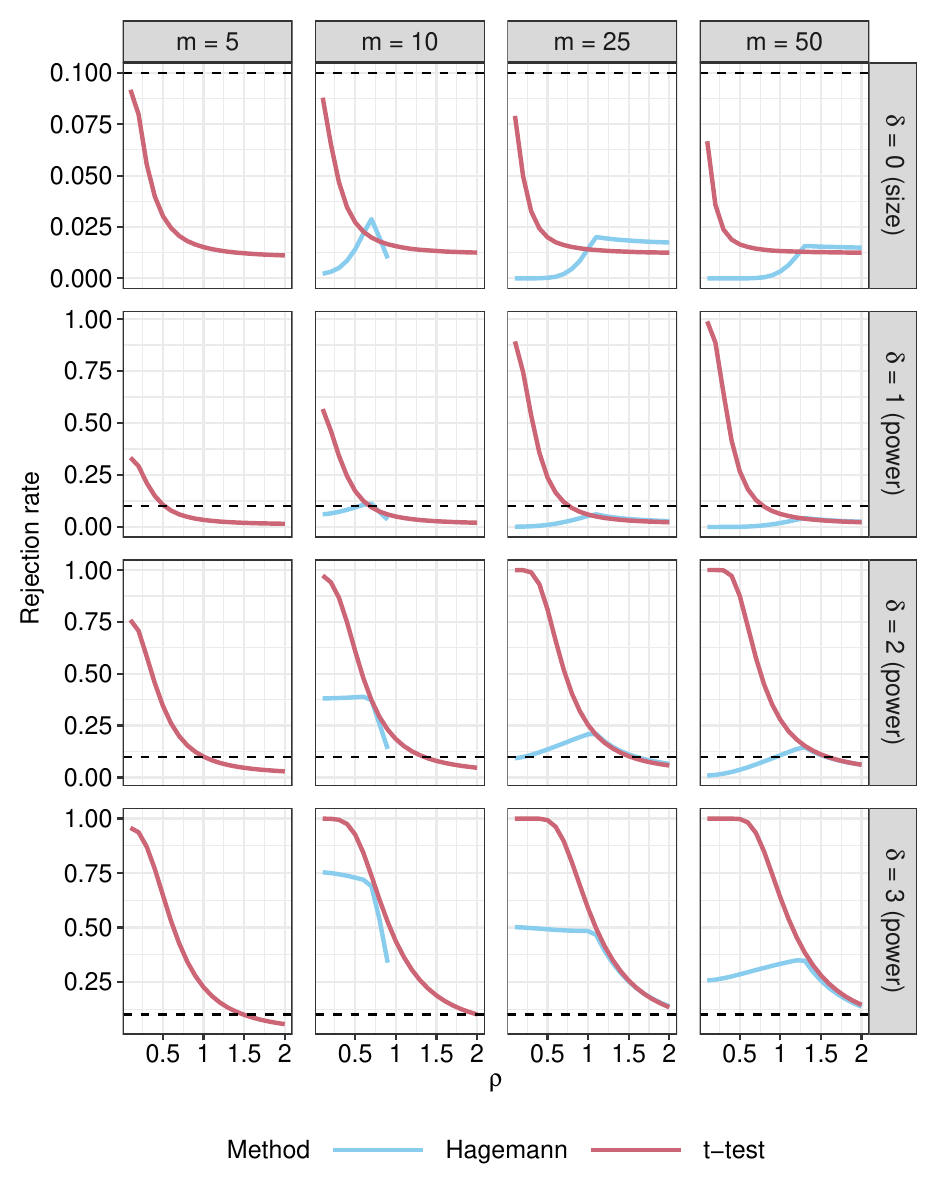}
    \caption{Probability of rejection against heterogeneity parameter $\rho$ for various cluster size $m$ and alternatives $\delta$ at $\alpha = 0.1$ for \textbf{DGP 2} of simulation design 1 with $k = 2$.}
    \label{fig:sim1-dgp2-k2-rho-a10}
\end{figure}

\subsection{Supplemental results for Section \ref{sec:sims-2}} \label{sec:app:sims-2}

Figures \ref{fig:sim2-dgp3-a05} to \ref{fig:sim2-dgp5-a05} report the results for DGPs 3 to 5 for various $m$ and $\rho$ at $\alpha = 0.05$.

\begin{figure}[!ht]
    \centering
    \includegraphics[scale=1]{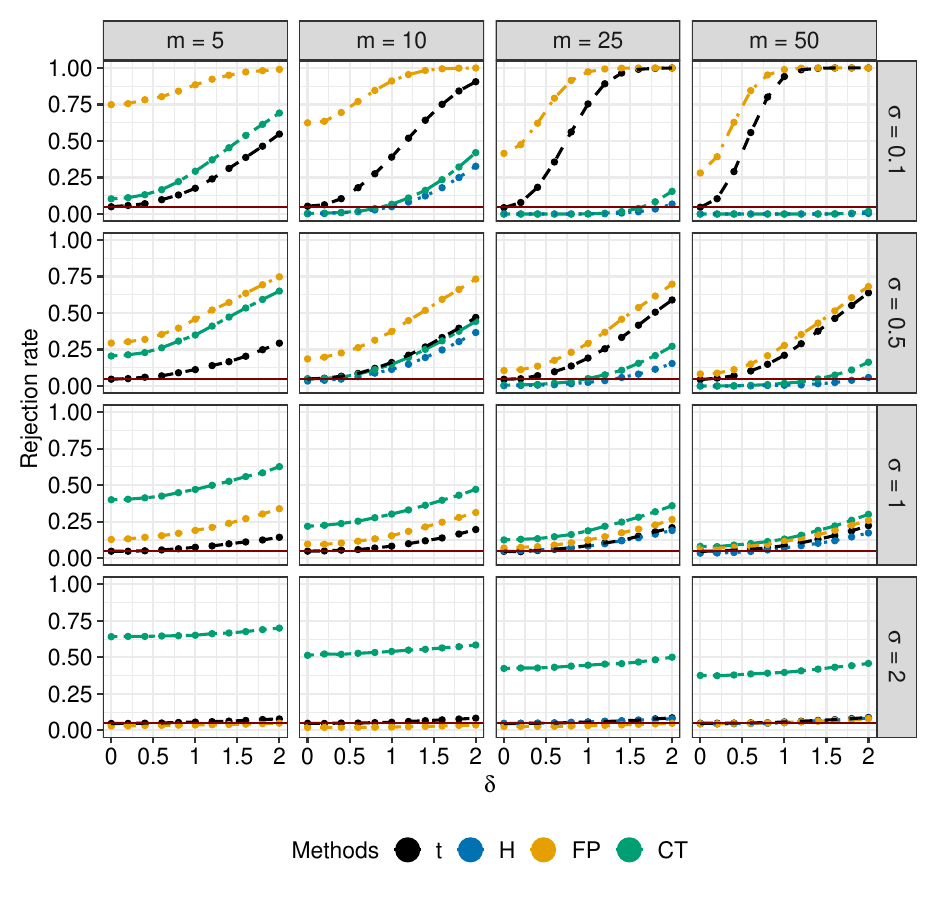}
    \caption{Simulation results for \textbf{DGP 3} of simulation design 2 at $\alpha = 0.05$.}
    \label{fig:sim2-dgp3-a05}
\end{figure}

\begin{figure}[!ht]
    \centering
    \includegraphics[scale=1]{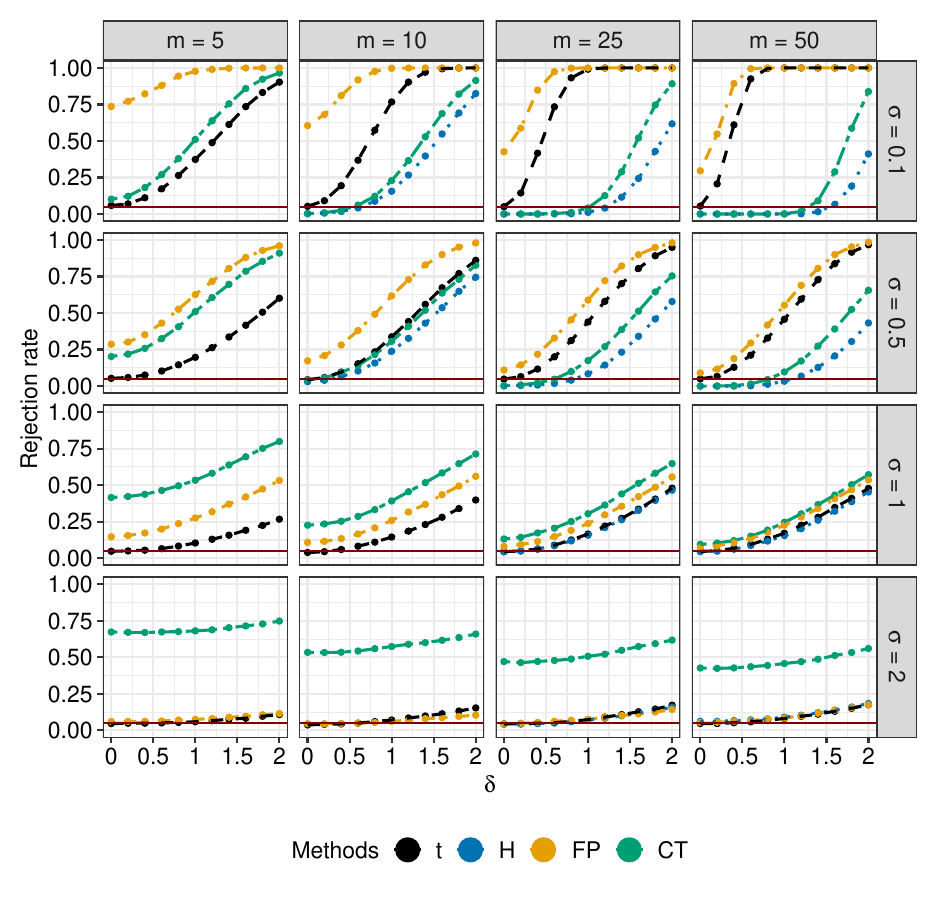}
    \caption{Simulation results for \textbf{DGP 4} of simulation design 2 at $\alpha = 0.05$.}
    \label{fig:sim2-dgp4-a05}
\end{figure}

\begin{figure}[!ht]
    \centering
    \includegraphics[scale=1]{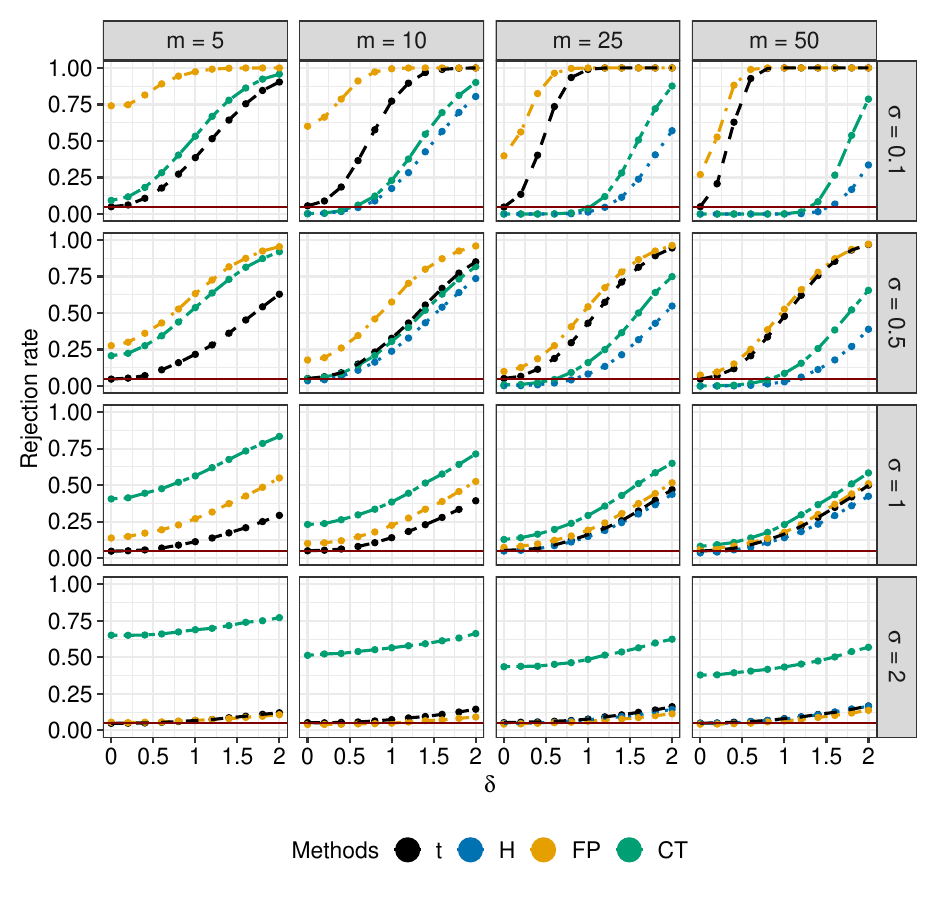}
    \caption{Simulation results for \textbf{DGP 5} of simulation design 2 at $\alpha = 0.05$.}
    \label{fig:sim2-dgp5-a05}
\end{figure}

\section{Supplemental tables: critical values for $k = 2$} \label{sec:cv-k2-app}
Table \ref{tab:cv-k2} reports the critical values for different numbers of clusters $m$ and heterogeneity parameters $\rho$ for $\alpha = 0.01$ and $\alpha = 0.05$ when $k = 2$.

\begin{table}[!ht]
    \centering
    \caption{Critical values for different values of $\alpha$, $m$ and $\rho$ for $k = 2$.}
    \label{tab:cv-k2}
    {\small \setlength\tabcolsep{3.9pt}
\begin{tabular}[t]{lrrrrrrrrrrrr}
\toprule
\multicolumn{1}{c}{ } & \multicolumn{6}{c}{$\alpha = 0.01$} & \multicolumn{6}{c}{$\alpha = 0.05$} \\
\cmidrule(l{3pt}r{3pt}){2-7} \cmidrule(l{3pt}r{3pt}){8-13}
$\rho \backslash m$ & 5 & 10 & 15 & 20 & 25 & 50 & 5 & 10 & 15 & 20 & 25 & 50\\
\midrule
0.2 & 2.260 & 1.224 & 0.984 & 0.869 & 0.800 & 0.661 & 1.365 & 0.860 & 0.711 & 0.636 & 0.590 & 0.496\\
0.4 & 2.820 & 1.722 & 1.460 & 1.341 & 1.274 & 1.148 & 1.729 & 1.200 & 1.051 & 0.981 & 0.940 & 0.861\\
0.6 & 3.652 & 2.328 & 2.017 & 1.882 & 1.807 & 1.669 & 2.264 & 1.616 & 1.450 & 1.375 & 1.332 & 1.251\\
0.8 & 4.668 & 2.977 & 2.604 & 2.445 & 2.358 & 2.200 & 2.849 & 2.061 & 1.870 & 1.786 & 1.738 & 1.649\\
1.0 & 5.724 & 3.644 & 3.204 & 3.019 & 2.918 & 2.736 & 3.459 & 2.521 & 2.301 & 2.205 & 2.151 & 2.051\\
\addlinespace
1.2 & 6.794 & 4.323 & 3.811 & 3.598 & 3.482 & 3.273 & 4.082 & 2.988 & 2.737 & 2.627 & 2.566 & 2.454\\
1.4 & 7.874 & 5.007 & 4.423 & 4.180 & 4.049 & 3.812 & 4.713 & 3.460 & 3.176 & 3.052 & 2.984 & 2.858\\
1.6 & 8.960 & 5.696 & 5.037 & 4.765 & 4.617 & 4.351 & 5.350 & 3.935 & 3.616 & 3.479 & 3.403 & 3.262\\
1.8 & 10.049 & 6.387 & 5.653 & 5.350 & 5.186 & 4.892 & 5.990 & 4.411 & 4.058 & 3.906 & 3.822 & 3.667\\
2.0 & 11.142 & 7.080 & 6.270 & 5.936 & 5.756 & 5.432 & 6.633 & 4.889 & 4.501 & 4.334 & 4.242 & 4.072\\
\addlinespace
2.2 & 12.236 & 7.775 & 6.888 & 6.524 & 6.326 & 5.973 & 7.278 & 5.368 & 4.945 & 4.763 & 4.662 & 4.477\\
2.4 & 13.332 & 8.470 & 7.507 & 7.111 & 6.897 & 6.513 & 7.924 & 5.848 & 5.389 & 5.192 & 5.083 & 4.883\\
2.6 & 14.429 & 9.166 & 8.126 & 7.699 & 7.468 & 7.054 & 8.572 & 6.329 & 5.834 & 5.621 & 5.504 & 5.288\\
2.8 & 15.527 & 9.864 & 8.746 & 8.288 & 8.039 & 7.596 & 9.220 & 6.810 & 6.279 & 6.051 & 5.925 & 5.694\\
3.0 & 16.626 & 10.561 & 9.366 & 8.876 & 8.611 & 8.137 & 9.869 & 7.291 & 6.724 & 6.481 & 6.346 & 6.100\\
\addlinespace
3.2 & 17.726 & 11.259 & 9.987 & 9.465 & 9.183 & 8.678 & 10.519 & 7.773 & 7.169 & 6.911 & 6.767 & 6.505\\
3.4 & 18.826 & 11.957 & 10.607 & 10.054 & 9.754 & 9.220 & 11.169 & 8.254 & 7.614 & 7.340 & 7.189 & 6.911\\
3.6 & 19.926 & 12.656 & 11.228 & 10.643 & 10.326 & 9.761 & 11.819 & 8.736 & 8.060 & 7.771 & 7.610 & 7.317\\
3.8 & 21.027 & 13.355 & 11.849 & 11.232 & 10.898 & 10.302 & 12.470 & 9.219 & 8.506 & 8.201 & 8.032 & 7.723\\
4.0 & 22.128 & 14.054 & 12.470 & 11.821 & 11.471 & 10.844 & 13.121 & 9.701 & 8.952 & 8.631 & 8.454 & 8.129\\
\addlinespace
4.2 & 23.229 & 14.753 & 13.092 & 12.411 & 12.043 & 11.386 & 13.772 & 10.184 & 9.398 & 9.061 & 8.875 & 8.535\\
4.4 & 24.331 & 15.452 & 13.713 & 13.000 & 12.615 & 11.927 & 14.424 & 10.666 & 9.844 & 9.492 & 9.297 & 8.941\\
4.6 & 25.432 & 16.152 & 14.334 & 13.590 & 13.187 & 12.469 & 15.075 & 11.149 & 10.290 & 9.922 & 9.719 & 9.347\\
4.8 & 26.534 & 16.851 & 14.956 & 14.180 & 13.760 & 13.010 & 15.727 & 11.632 & 10.736 & 10.353 & 10.141 & 9.753\\
5.0 & 27.636 & 17.551 & 15.578 & 14.769 & 14.332 & 13.552 & 16.379 & 12.115 & 11.182 & 10.783 & 10.562 & 10.159\\
\bottomrule
\end{tabular}
}
\end{table}

\end{document}